\newcolumntype{C}[1]{>{\centering\arraybackslash}p{#1}}
\newcommand{\skipitems}[1]{%
  \addtocounter{\@enumctr}{#1}%
}
\declaretheorem[numberwithin=section]{theorem}
\declaretheorem[sibling=theorem, style=definition]{definition}
\declaretheorem[sibling=theorem]{lemma}
\declaretheorem[sibling=theorem]{claim}
\declaretheorem[sibling=theorem]{corollary}
\declaretheorem[sibling=theorem, style=definition]{remark}
\declaretheorem[sibling=theorem]{proposition}
\declaretheorem[sibling=theorem, style=definition]{example}
\patchcmd{\ALG@step}{\addtocounter{ALG@line}{1}}{\refstepcounter{ALG@line}}{}{}
\newcommand{\ALG@lineautorefname}{Line}
\newcommand{\R}{\mathbb{R}}
\newcommand{\N}{\mathbb{N}}
\DeclareMathOperator*{\argmax}{arg\,max}
\newcommand{\Pl}[1]{\mathds{P}\left[{#1}\right]}
\newcommand{\1}[1]{\mathds{1}\left[{#1}\right]}
\newcommand\Ps@textstyle[2]{\mathbb{P}_{#1}\left[{#2}\right]}
\newcommand\Es@textstyle[2]{\mathbb{E}_{#1}\left[{#2}\right]}
\newcommand\Ps[2]{%
  \mathchoice %
  {\underset{{#1}}{\mathbb{P}}\left[{#2}\right]}
  {\Ps@textstyle{#1}{#2}}
  {\Ps@textstyle{#1}{#2}}
  {\Ps@textstyle{#1}{#2}}
}
\newcommand\Es[2]{%
  \mathchoice %
  {\underset{{#1}}{\mathbb{E}}\left[{#2}\right]}
  {\Es@textstyle{#1}{#2}}{\Es@textstyle{#1}{#2}}{\Es@textstyle{#1}{#2}}
}
\newcommand{\D}{\mathcal{D}}
\newcommand{\T}{\mathcal{T}}
\newcommand{\Alloc}{\mathcal{X}}
\newcommand{\coal}{\mathsf{coal}}
\newcommand{\adv}{a_{\mathsf{mi}}}
\newcommand{\bBuild}{\mathcal{B}}
\newcommand{\onCG}{\mathcal{C}}
\newcommand{\offCG}{\mathcal{D}}
\newcommand{\mi}{{\mathsf{mi}}}
\newcommand{\usr}{{\mathsf{usr}}}
\newcommand{\cmark}{\ding{51}}%
\newcommand{\xmark}{\ding{55}}%
\newcommand{\Distr}{\mathcal{T}}
\newcommand{\RegDist}{\mathsf{T}}
\newcommand{\supp}{\mathsf{supp}}
\newcommand{\vv}{\varphi}
\newcommand{\Threshold}{\Phi}
\newcommand{\MargThreshold}{\Delta \Threshold}
\newcommand{\AllocRule}{X}
\newcommand{\Pay}{P}
\newcommand{\Rev}{\mathsf{Rev}}
\newcommand{\Burn}{\mathsf{Burn}}
\newcommand{\BurnB}{B}
\let\oldtheequation\theequation
\renewcommand\tagform@[1]{\maketag@@@{\ignorespaces#1\unskip\@@italiccorr}}
\renewcommand\theequation{(\oldtheequation)}
\let\save@mathaccent\mathaccent
\newcommand*\if@single[3]{%
  \setbox0\hbox{${\mathaccent"0362{#1}}^H$}%
  \setbox2\hbox{${\mathaccent"0362{\kern0pt#1}}^H$}%
  \ifdim\ht0=\ht2 #3\else #2\fi
  }
\newcommand*\rel@kern[1]{\kern#1\dimexpr\macc@kerna}
\newcommand*\widebar[1]{\@ifnextchar^{{\wide@bar{#1}{0}}}{\wide@bar{#1}{1}}}
\newcommand*\wide@bar[2]{\if@single{#1}{\wide@bar@{#1}{#2}{1}}{\wide@bar@{#1}{#2}{2}}}
\newcommand*\wide@bar@[3]{%
  \begingroup
  \def\mathaccent##1##2{%
    \let\mathaccent\save@mathaccent
    \if#32 \let\macc@nucleus\first@char \fi
    \setbox\z@\hbox{$\macc@style{\macc@nucleus}_{}$}%
    \setbox\tw@\hbox{$\macc@style{\macc@nucleus}{}_{}$}%
    \dimen@\wd\tw@
    \advance\dimen@-\wd\z@
    \divide\dimen@ 3
    \@tempdima\wd\tw@
    \advance\@tempdima-\scriptspace
    \divide\@tempdima 10
    \advance\dimen@-\@tempdima
    \ifdim\dimen@>\z@ \dimen@0pt\fi
    \rel@kern{0.6}\kern-\dimen@
    \if#31
      \overline{\rel@kern{-0.6}\kern\dimen@\macc@nucleus\rel@kern{0.4}\kern\dimen@}%
      \advance\dimen@0.4\dimexpr\macc@kerna
      \let\final@kern#2%
      \ifdim\dimen@<\z@ \let\final@kern1\fi
      \if\final@kern1 \kern-\dimen@\fi
    \else
      \overline{\rel@kern{-0.6}\kern\dimen@#1}%
    \fi
  }%
  \macc@depth\@ne
  \let\math@bgroup\@empty \let\math@egroup\macc@set@skewchar
  \mathsurround\z@ \frozen@everymath{\mathgroup\macc@group\relax}%
  \macc@set@skewchar\relax
  \let\mathaccentV\macc@nested@a
  \if#31
    \macc@nested@a\relax111{#1}%
  \else
    \def\gobble@till@marker##1\endmarker{}%
    \futurelet\first@char\gobble@till@marker#1\endmarker
    \ifcat\noexpand\first@char A\else
      \def\first@char{}%
    \fi
    \macc@nested@a\relax111{\first@char}%
  \fi
  \endgroup
}
\patchcmd{\hyper@makecurrent}{%
    \ifx\Hy@param\Hy@chapterstring
        \let\Hy@param\Hy@chapapp
    \fi
}{%
    \iftoggle{inappendix}{%
        \@checkappendixparam{chapter}%
        \@checkappendixparam{section}%
        \@checkappendixparam{subsection}%
        \@checkappendixparam{subsubsection}%
        \@checkappendixparam{paragraph}%
        \@checkappendixparam{subparagraph}%
    }{}%
}{}{\errmessage{failed to patch}}
\newcommand*{\@checkappendixparam}[1]{%
    \def\@checkappendixparamtmp{#1}%
    \ifx\Hy@param\@checkappendixparamtmp
        \let\Hy@param\Hy@appendixstring
    \fi
}
\apptocmd{\appendix}{\toggletrue{inappendix}}{}{\errmessage{failed to patch}}
\newcommand{\ICSet}{\Gamma}
\newcommand{\OnCAlloc}{\overline{X}}
\newcommand{\OnCPay}{\overline{P}}
\newcommand{\OnCBurn}{\overline{B}}
\newcommand{\OnCUtil}{\overline{U}}
\newcommand{\DirReveal}{(\OnCAlloc, \OnCPay, \OnCBurn)}
\newcommand{\QPay}{Q}
\newcommand{\seq}[2]{\big(#1\big)_{#2}}
\newcommand{\Moff}{\mathcal{M}_{\mathsf{off}}}
\newcommand{\Feasibility}{\mathcal{F}_{\bBuild}}
\newcommand{\FeasAlloc}{\mathcal{X}}
\newcommand{\Outcomes}{\mathcal{Y}}
\newcommand{\TypeDistr}{\mathcal{T}}
\newcommand{\vectr}[1]{\vec{{#1}}}
\begin{document}

\title{Characterizing Off-Chain Influence Proof Transaction Fee Mechanisms}

\author{
  Aadityan Ganesh\thanks{Princeton University | \emph{E-mail}: \href{mailto:}{aadityanganesh@princeton.edu}. Supported by an Ethereum Foundation Academic Grant.} 
  \and
  Clayton Thomas\thanks{Yale University | \emph{E-mail}: \href{mailto:}{thomas.clay95@gmail.com}.}
  \and
  S. Matthew Weinberg\thanks{Princeton University | \emph{E-mail}: \href{mailto:}{smweinberg@princeton.edu}. Supported by an Ethereum Foundation Academic Grant and NSF CAREER Award CCF-1942497.}
}
\date{}

\begin{titlepage}
\maketitle
\begin{abstract}

\citet{Roughgarden20} initiates the study of Transaction Fee Mechanisms (TFMs), and posits that the on-chain game of a ``good'' TFM should be on-chain simple (OnC-S), i.e., incentive compatible for both the users and the miner. 
Recent work of \citet{GaneshTW24} posit that they should additionally be Off-Chain Influence-Proof (OffC-IP), which means that the miner cannot achieve any additional revenue by separately conducting an off-chain auction to determine on-chain inclusion.
They observe that a cryptographic second-price auction satisfies both properties, but leave open the question of whether other mechanisms (such as those not dependent on cryptography) satisfy these properties.

In this paper, we characterize OffC-IP TFMs: They are those satisfying a \emph{burn identity} relating the burn rule to the allocation rule. 
In particular, we show that auction is OffC-IP if and only if its (induced direct-revelation) allocation rule $\widebar{X}(\cdot)$ and burn rule $\widebar{B}(\cdot)$ (both of which take as input users' values $v_1, \dots, v_n$) are truthful when viewing $\bigl(\widebar{X}(\cdot), \widebar{B}(\cdot)\bigr)$ as the allocation and pricing rule of a multi-item auction for a single additive buyer with values $\bigl(\varphi(v_1),\ldots, \varphi(v_n)\bigr)$ equal to the users' virtual values.

Building on this burn identity, we characterize OffC-IP and OnC-S TFMs that are deterministic and do not use cryptography: They are posted-price mechanisms with specially-tuned burns.
As a corollary, we show that such TFMs can only exist with infinite supply and prior-dependence.
However, we show that for \emph{randomized} TFMs, there are additional OnC-S and OffC-IP auctions that do not use cryptography (even when there is {finite} supply, under prior-dependence with a bounded prior distribution).
Holistically, our results show that while OffC-IP is a fairly stringent requirement, families of OffC-IP mechanisms can be found for a variety of settings.

\end{abstract}
\thispagestyle{empty}
\end{titlepage}

\maketitle

\tableofcontents
\thispagestyle{empty}

\clearpage
\pagenumbering{arabic}

\section{Introduction}

In increasingly-important and high-demand blockchain applications, miners use Transaction Fee Mechanisms (TFMs) to allocate block space based on user-submitted bids.
TFMs are subject to a host of novel auction design constraints in this distributed trustless setting.
Recent work of \citet{GaneshTW24} highlights in particular the need for TFMs to be resistant to miners conducting an off-chain auctions in order to determine behavior on-chain, and call such a TFM off-chain influence-proof (OffC-IP).
They show that EIP-1559 (the existing state-of-the-art TFM) is not OffC-IP, but that a cryptographic variant of a second-price auction satisfies OffC-IP along with other desirable simplicity properties.
However, they leave open the question of precisely which auctions satisfy OffC-IP; for instance, whether there are any practical OffC-IP TFMs that do not use cryptography.

In this paper, we characterize OffC-IP TFMs, and delineate precisely how this property is compatible with other desirable properties considered by prior work. 
Our findings specify how the burn rule of the TFM is uniquely determined by its allocation rule, and highlight posted-price mechanisms as the \emph{unique} deterministic TFMs satisfying our desirable desiderata while avoiding cryptography.

\paragraph{TFM Design Thus-Far.}

Originally, blockchains such as bitcoin used a first-price auction as their TFM, requiring strategic bidding from the users.
Interest grew in simplifying the TFM for users, leading to a new proposed TFM for the Ethereum blockchain known as EIP-1559.
Motivated by these concerns, \citet{Roughgarden20, Roughgarden21} studies the theory of TFM design, and proposes three desiderata for a desirable TFM.
First, the on-chain TFM should be ``simple for users'' (OnC-US). 
That is, \emph{assuming the miner honestly implements the proposed TFM}, it should be a dominant strategy for users to simply bid their value. 
Second, the on-chain TFM should be ``simple for miners'' (OnC-MS).
That is, \emph{assuming users simply bid their value},  the miner should have no incentive to deviate from the proposed TFM (e.g., censor users or include shill bids), even after seeing all users' bids.
Third, the TFM should be ``robust to collusion between the miner and users'' (SCP). That is, the miner and users together cannot jointly profit by submitting anything other than their true bids.\footnote{
    Multiple variants of ``robust to collusions'' have been proposed.
    See \citet{ChungRS24,GafniY24} and \citet{GaneshTW24} for discussions.}
\citet{Roughgarden21} shows that EIP-1559 satisfies all three desirable properties.

EIP-1559 is essentially a `posted-price and burn' mechanism. The protocol exogenously sets a `base fee' $p$. Any user who wishes to include their transaction can pay $p$ to do so, and their payment is \emph{burned} (that is, the miner receives no revenue).\footnote{
    This describes EIP-1559 in the `ideal' case where the base fee is high enough to reduce demand below the maximum block size. In case the base fee is too low, EIP-1559 devolves into a first-price auction.}
Prior analysis suggests EIP-1559 is a very strategically-simple mechanism.
For instance, since there is no amount of censoring or fabricating bids that can leave the miner with a positive revenue, EIP-1559 is ``simple for miners''.

Despite EIP-1559's theoretical appeal and practical success, recent work of~\citet{GaneshTW24} observes the following challenge.
Suppose the miner publicly threatens to censor all users who do not pay her an entry fee of \$5 off-chain.\footnote{
  In fact, in the real deployment of EIP-1559 in Ethereum, users are able to submit an optional on-chain tip to the miner, allowing miners to conduct this attack entirely on-chain.
} 
So long as \emph{any} user capitulates and pays this extra \$5 to be included, the miner increases her revenue by making this threat.
They therefore propose that a TFM should be ``off-chain influence-proof'' (OffC-IP). 
That is, even if the miner were willing to act like a Bayesian monopolist off-chain, she should optimizes her expected revenue by simply following the prescribed TFM on-chain with no off-chain behavior. 
They formally observe that EIP-1559 is not OffC-IP.

Furthermore, \citet{GaneshTW24} establish that no TFM can satisfy all four desirable properties (OnC-US, OnC-MS, OffC-IP, and SCP), even with cryptography.
On the other hand, a cryptographic second-price auction\footnote{
    In a cryptographic second-price auction, all users submit encrypted bids. Then,  without decrypting the bids, the miner chooses which bids to censor, whether to insert fake bids, and how to set a reserve. Then, all bids are decrypted.}
is OnC-US, OnC-MS, and OffC-IP;
they argue that this strongly suggests that users can simply bid their values in this TFM, without worry.
However, the full possibilities and limitations imposed by the OffC-IP constraint remain highly unclear.
This is exactly the gap our paper fills.\\

\noindent\textbf{Main Contribution I: A Burn Identity for OffC-IP TFMs.} %
Our first main result characterizes OffC-IP TFMs by establishing a reduction to a canonical auction problem with a monopsonist and many items.
To explain this result concretely, consider a TFM as a function that takes as input a single bid from each of $n$ bidders $\vec{v} = (v_1,\ldots, v_n)$, and outputs an allocation vector $\AllocRule_1(\vec{v}),\ldots, \AllocRule_n(\vec{v})$, a payment vector $\Pay_1(\vec{v}),\ldots, \Pay_n(\vec{v})$ and a burn amount $\Burn(\vec{v})$. 
On input $\vec{v}$, each bidder $i$ receives the item with probability $\AllocRule_i(\vec{v})$, pays $\Pay_i(\vec{v})$, and the miner receives a revenue $\sum_i \Pay_i(\vec{v}) - \Burn(\vec{v})$.

Now, given that user values are drawn from a prior with virtual value function $\varphi$,\footnote{
  See \autoref{sec:MechDesign} for an overview of virtual values.}  
define  $\AllocRule^{\varphi}_i(\vec{\beta}):=\AllocRule_i(\varphi^{-1}(\beta_1),\ldots, \varphi^{-1}(\beta_n))$
to be the allocation rule that takes as input $\vec{\beta}$ and allocated to bidder $i$ according to $\AllocRule_i$ \emph{as if the submitted bids were $\vec{v} = \varphi^{-1}(\vec{\beta})$ instead of $\vec{\beta}$}.
Now view $\AllocRule^{\varphi}(\cdot)$ as a function that takes as input $n$ values $\vv(\vec{v}) = (\vv(v_1), \dots, \vv(v_n))$ from a monopsonist with an additive valuation, awards that monopsonist the item $i$ with probability $\AllocRule_i^{\varphi}(\vv(\vec{v})) = \AllocRule_i(\vec{v})$ for all $i$, and charges her a payment $\Burn^{\vv}(\vv(\vec{v})) = \Burn(\vec{v})$.
The monoposonist's utility is given by $\sum_{i} \vv(v_i) \AllocRule^\varphi_i(\vv(v_i)) - \Burn^\varphi(\vv(v_i))$, which in expectation (by Myerson's lemma), equals the net revenue received by the miner in equilibrium.
~\autoref{thm:UtilityVersionMainReduction} and \autoref{thm:VirtualPartialConverse} establish that $\bigl(\AllocRule(\cdot), \Pay(\cdot), \Burn(\cdot)\bigr)$ is OffC-IP as a TFM for $n$ single-dimensional bidders if and only if $\bigl(\AllocRule^{\varphi}(\cdot), \Burn(\cdot)\bigr)$ is DSIC as an auction for the multi-dimensional monopsonist, thus
\emph{fully} characterizing the allocation and burn rules of OffC-IP TFMs.

\vspace{0.5em}
\noindent\textbf{Main Contribution II: Posted Price Mechanisms are (nearly) the only plaintext OnC-S and OffC-IP TFMs.}
After establishing our burn identity for OffC-IP mechanisms, we apply this result to OnC-S (i.e., OnC-US and OnC-MS) mechanisms to investigate the limits of these desirable properties.
In particular, we ask whether any such TFMs are \emph{plaintext}, i.e., allowing users to submit un-encrypted bids without resorting to the strong use of cryptography employed by the second-price auction considered by \citet{GaneshTW24}.

On this front, we first show that any nontrivial plaintext OnC-S and OffC-IP mechanism must be \emph{prior-dependent}. That is, the TFM must somehow `know' the same prior as the miner and set prices and burns based on this prior.\footnote{
  This may be an entirely realistic assumption in case the miner forms their prior exclusively from on-chain data, and an entirely unrealistic assumption in case the miner forms their prior primarily based on off-chain data (such as a tweet announcing an NFT drop). We leave as an important direction for future work to understand the fraction of blocks for which on-chain data suffices to form a reasonably accurate prior, and how the magnitude of inaccuracy impacts OffC-IP.}
At a high level, this is because our burn identity shows that any OffC-IP mechanism is dependent on the virtual value $\vv$, which in turn, depends on the prior.
If the miner herself is the source of how the blockchain learns the prior, this input is provided by the miner after seeing the users' plain-text bids, and we show that this cannot be OnC-MS. Therefore, the protocol's knowledge of the prior must be independent of the miner.

Given the above, we ask: are there any desirable prior-dependent mechanisms satisfying these properties?
Under the additional assumption that there is infinite supply (i.e., any number of users can be included), there is a natural deterministic OnC-S and OffC-IP mechanism: The revenue-optimal posted-price mechanism that charges each included user the Myerson reserve $p$ and transfers all charged prices to the miner (i.e., burning nothing).
This mechanism is indeed OnC-S and OffC-IP. 
In fact, we show that the \emph{only} deterministic plaintext TFMs that are OnC-S and OffC-IP are posted price auctions; specifically, they are generalizations of the revenue-optimal mechanism in which a specially-tuned fraction of each price $p$ is burnt (depending on $p$ and on the prior distribution).
Since posted-price mechanisms are not feasible for a finite supply, this implies that there exists \emph{no} deterministic plaintext TFM with finite supply that is OnC-S and OffC-IP.

Given the above, it is natural to ask whether there exist any \emph{randomized} plain-text, OnC-S, and OffC-IP TFMs that are meaningfully distinct from posted-price mechanisms. 
Surprisingly, we show this is indeed the case. 
In particular, we construct a plaintext, OnC-S, and OffC-CIP ``position auction'', where the $k$\textsuperscript{th} highest bidder is awarded an item with some probability $x^{(k)}$,
for any bounded regular distribution, even for a finite block capacity.
While our constructed auction is fairly complex and likely impractical, it shows that the space of OffC-IP and OnC-S mechanisms has an intricate boundary.
Additionally, our techniques in this construction extend to solve a related open problem in TFM design posted by \citet{ChungRS24,GafniY24Discrete}.\footnote{
  Specifically, 
  \citet{ChungRS24} and \citet{GafniY24Discrete} study the design of Global Strong Collusion Proof (GSCP) mechanisms: mechanisms that are resilient to collusion by the global coalition containing the miner and all users.
  They pose the question of designing OnC-S and GSCP mechanisms that also yield a positive revenue to the miner.
  We design such (randomized) mechanisms in \autoref{sec:MIRC}, which are also feasible for a finite block when the users' values are drawn from some bounded support.
}

\vspace{0.5em}
\noindent\textbf{Additional Discussions.} 
Our results highlight the importance of posted-price mechanisms in TFM design. 
There are several variants of posted-price mechanisms.\footnote{
    For example: is the payment burned or given to the miner? Is cryptography used? Can the protocol set prior-dependent reserves?} 
Due to~\citet{GaneshTW24}, none of these can be OnC-US, OnC-MS, OffC-IP, and SCP simultaneously. 
However, several variants satisfy three of the four desiderata; 
hence, we initiate a discussion of the key tradeoffs involved. 
\begin{itemize}
    \item \textbf{SCP or OffC-IP?} 
    In \autoref{sec:posted-zoo}, we address the tradeoff between OffC-IP and SCP: together with OnC-US and OnC-MS, it is possible to have one of them, but not both.
    OffC-IP is robust to miners acting as Bayesian monopolists, or colluding with users they do not trust.
    But, OffC-IP mechanisms are not robust to collusion between trusted parties who know each others' private values. SCP on the other hand is robust even to collusion between trusted parties who know each others' private values (i.e.~even Coinbase cannot profit in EIP-1559 by having its stakers collude with its users). But, SCP is not robust to miners acting as Bayesian monopolists. Which property is more desirable in practice, and whether either admit meaningful approximate variants, remain intriguing open directions.
    \item \textbf{Prior-dependence, cryptography, or communication?} 
    In \autoref{sec:discussion}, we give additional discussion on the merits and drawbacks of different implementations of a posted-price variant that are OnC-US, OnC-MS, and OffC-IP. 
    We discuss three such variants.
    \begin{itemize}
        \item  First, the protocol can set the price.
    This may be possible using on-chain data, but under what conditions and for what fraction of the blocks?
    \item Second, users can submit encrypted bids that will surely decrypt after the miner has finalized the input. 
    Striking modern advances in cryptography make this more realistic every day, but will intensive cryptography ever be able to match the throughput required?
    \item Third, and perhaps most-speculatively, miners can `commit' to a price in advance by, for example, posting it on a previous block.
    Can this cover realistic variations in demand over time, or do miners' prior change so quickly from block-to-block that a rigorous analysis of this sequential pricing game is necessary to understand its viability?
    \end{itemize}
\end{itemize}

Finally, we remark on the role of our paper within the literature on TFM design. 
One way to frame the main question of this literature is: How will a block-building protocol devolve into the actual mechanism faced by users? 
OnC-MS stipulates that miners may manipulate the protocol by myopically best-responding; thus, for example, a second-price auction will devolve into a first-price auction (since the miner will insert fake bids just below the winner's). 
On the other hand, OffC-IP stipulates that the miner will act as a monopolist running whatever mechanism she wants to determine the inputs to the protocol; thus, for example, EIP-1559 will devolve into a mechanism where users must tip the miner to be included.
More generally, under the lens of OffC-IP, the \emph{only} binding constraint on the miner is the set of all possible allocations that are supported by the TFM and the burns corresponding to these allocations, which act as a production cost.
Hence, we expect the miner to implement (by any means necessary) the revenue-optimal mechanism given these production costs, and the block-building process does not have the ability to determine the payment rule faced by users in the end.
However, when the mechanism is OffC-IP, users can (at least) reliably expect the entire mechanism to be conducted on-chain.
This motivates the search for a deep understanding on the burn rules supported by OffC-IP TFMs, so that other properties that enable better user experience, like OnC-MS, can be guaranteed in conjunction with OffC-IP.
This is precisely what our paper provides.

\subsection{Related Work}

Our work contributes to the rapidly-expanding literature on transaction fee mechanism design and more broadly, on game-theoretic applications to decentralized platforms.
While \citet{LaviSZ19,BasuEOS19} and \citet{Yao18} were amongst the first to view inclusion of transactions in a block as a mechanism design problem, a research agenda around TFMs started building after \citet{Roughgarden20, Roughgarden21} established basic desiderata that a ``good'' TFM must satisfy, and analyzed EIP-1559, the TFM adopted by Ethereum, through the lens of these desiderata.
However, \citet{ChungS23} argue that simultaneously satisfying the three properties proposed by \citet{Roughgarden20} --- truthfulness for users and the miner and collusion-resistance --- is impossible for any non-trivial mechanism that ever allocates some user with a positive probability.
Various notions of collusion-resistance have been considered \citep{ChungS23, FerreiraGR24} while \citet{GafniY24, ChungRS24} investigate the relationship between these collusion-resistance desiderata.
Further work has also considered relaxing the concept of truthfulness for users from DSIC to BIC \citep{GafniY22,ChenSZZ24}.
\citet{ShiCW23, WuSC24} consider expanding the space of TFMs by arming the mechanism with cryptography.
While all the aforementioned works treat TFMs as single-shot mechanisms, \citet{FMDPS21, BabaioffN24} and \citet{ADM24} investigate the long-term dynamics of the repeated game that arises from running the TFM in every block.  \citet{BahraniGR24} are the first to revisit the desiderata for TFMs in blockchains with MEV.

Beyond TFMs, a wide body of literature has been growing around economic design in blockchains, starting with \citet{BabaioffDOZ12} and \citet{EyalS14}.
More recent works range from designing credible auctions via cryptography \citep{FerreiraW20, EssaidiFW22, ChitraFK23, GZ25, ChungWS25} and persuasion mediated via blockchains \citep{DrakopoulosLM23}.
There has also been growing literature on cryptographic and consensus protocols in the presence of rational agents \citep{garay2013rational,halpern2004rational,asharov2011game,groce2012rationalBA, gong2025collusion, KGPBW25}.

Our paper is a follow-up to \citet{GaneshTW24}, which define off-chain influence-proofness (OffC-IP).
They construct a cryptographic OffC-IP and on-chain simple (OnC-S) TFM based on a second price auction, and prove that OffC-IP and OnC-S are incompatible with strong notions of collusion-proofness considered in previous papers.
Besides fully characterizing the allocation and burn rules of any OffC-IP mechanism, the present paper advances beyond the results of \citet{GaneshTW24} by (1) uncovering \emph{plaintext} OffC-IP and OnC-S TFMs, thus removing the strong use of cryptography, and (2) providing formal impossibility results regarding OffC-IP and OnC-S TFMs (regardless of whether the TFM is collusion-proof).

\subsection{Road-map}
\label{sec:roadmap}

After giving preliminaries, our paper proceeds along the lines of our two main contributions as follows:
\begin{enumerate}
  \item First, we characterize the allocation and burn rule of any OffC-IP TFM: they are exactly those satisfying the \emph{burn identity} we define in \autoref{thm:UtilityVersionMainReduction}. 
  \begin{itemize}
      \item \autoref{sec:PostedPrice} first provides a much-simpler warm-up by characterizing the burn rule of OffC-IP \emph{posted-price} TFMs.
      \item \autoref{sec:MonopsonistLens} presents the burn identity for general mechanisms.
  \end{itemize}
  \item Second, we examine settings under which OffC-IP and OnC-S TFMs exist, and show that posted-price mechanisms are the only deterministic plaintext OffC-IP and OnC-S TFMs.
  \begin{itemize}
    \item In \autoref{sec:PriorIndependent}, we consider prior-independent TFMs, and show that in order to be OffC-IP and OnC-S, such TFMs require \emph{both} cryptography and miner-advice (as leveraged by \citealp{GaneshTW24} to construct such a TFM).
    
    \item Motivated by the above, in \autoref{sec:ImpossibilitesandPossibilities} we consider prior-dependent mechanisms. We prove our second main result in \autoref{thm:NoDeterministicMechanisms}: that deterministic OffC-IP and OnC-S TFMs must \emph{necessarily} be the posted-price mechanisms of \autoref{sec:PostedPrice}. Then we examine various implications and limitations of this result.
    \begin{itemize}
        \item In particular, \autoref{thm:impossibility-deterministic-finite} observes that deterministic OffC-IP and OnC-S TFMs thus \emph{require infinite supply}, i.e., they do not exist with finite capacity.
        \item In contrast, \autoref{sec:rand-auction} shows that there exist \emph{randomized} OffC-IP and OnC-S TFMs with finite supply, so long as the prior distribution of values is bounded.
    \end{itemize}
  \end{itemize}
\end{enumerate}

\section{Preliminaries: Model of Transaction Fee Mechanisms} \label{sec:MOdel}

We begin with review of transaction fee mechanisms.
For preliminaries on Bayesian mechanism design generally, see \autoref{sec:MechDesign}.

We follow \citet{GaneshTW24}, and consider a model of TFMs which allows the miner to perform both on-chain manipulations (i.e., dropping bids or submitting fake bids) and off-chain manipulations (i.e., conducting an arbitrary off-chain mechanism to determine behavior on-chain).
In this model, the TFM is specified by a \emph{block-building process} $\bBuild$, which is an algorithm for constructing a block based on bids collected from all the users and an input from the miner termed the ``miner advice'' (for example, a reserve price).

Strategic play proceeds in either the \emph{on-chain game} $\onCG$ or the \emph{off-chain game} $\offCG$.
In $\onCG$, the miner submits inputs to $\bBuild$ by---as a function of the submitted bids---deciding on an advice, censoring any set of bids, and submitting her own fabricated bids.
In $\offCG$, the miner decides on an arbitrary \emph{off-chain mechanism} $\Moff$ which solicits inputs from users and dictates the on-chain strategies of all agents in $\onCG$.
In addition to the payments made in the on-chain game $\onCG$,
the miner can also direct off-chain transfers through $\Moff$ in $\offCG$.
We omit the full details of this model of TFMs, and refer the reader to \citet{GaneshTW24}.

We adopt the notation $\bigl(\AllocRule(\cdot), \Pay(\cdot), \Burn(\cdot)\bigr)$ for specifying the block-building process $\bBuild$,
where $\AllocRule_i(a_\mi,b_1,\ldots,b_n)$ and $\Pay_i(a_\mi,b_1,\ldots,b_n)$ specify user $i$'s allocation and payment, and $\Burn(a_\mi,b_1,\ldots,b_n)$ specifies the burn, for the miner's advice $a_{\mi}$ and bids $b_1, \dots, b_n$.
The on-chain revenue to the miner equals $\sum_{i=1}^n \Pay_i(a_\mi,b_1,\allowbreak \ldots, \allowbreak b_n) \allowbreak - \Burn(a_\mi,b_1,\ldots,b_n)$.
We summarize the main concepts and notation in \autoref{tab:main-notation}.

Most of our paper focuses on the \emph{plaintext} model in which bids are un-encrypted and visible to the miner, and hence her strategy can condition on the values of the bids.
\citet{GaneshTW24} also consider a \emph{cryptographic} model of TFMs, which differs only in that users submit encrypted bids, and hence the miner's action must be independent of the values of the bid.

We restrict attention to TFMs in which the block-building process $\bBuild$ is \emph{anonymous}.%
\footnote{
  Anonymity was also implicitly assumed in the impossibility result in \citet{GaneshTW24} (e.g., this assumption is needed to say that the outcome of the block-building process when a miner fabricates a bid of $0$ is the same as when a user submits a bit of $0$).  
} 
As in standard mechanism design, this means that each user is treated identically by the TFM, i.e., that if the users' $\{1,\ldots,n\}$ are relabeled with any permutation $\pi$, then the outcome is identical except that it is relabeled by $\pi$.

\newcommand{\mc}[1]{\multicolumn{2}{c}{{#1}}}
\newcommand{\mcf}[1]{\multicolumn{2}{c}{\makecell{\footnotesize{}({#1})}}}

\begin{table}[h]
\centering

\renewcommand{\arraystretch}{2.5}
\newcommand{\snug}{-0.8em}

\begin{tabular}{lll}
{Concept} 
& {Notation}
& {Brief Description}
\\
\hline
\hline
Block-building process 
& $\bBuild( a_{\mi}, \vectr{b} )$
&{\footnotesize{}\makecell[l]{
    Algorithm which determines outcome based on 
    \\miner advice $a_\mi$ and users' bids $\vectr{b}$} 
    }
\\
\makecell[l]{
Allocation, price, 
\\ \& burn rule of $\bBuild$
}
& $(\AllocRule, \Pay, \Burn)$
&{\footnotesize{}\makecell[l]{
    $\AllocRule_i(a_{\mi},\vectr{b})$ denotes $i$'s
    allocation, 
    \\ and $P_i(a_{\mi},\vectr{b})$ denotes $i$'s price.
    \\ The miner earns $\sum_i \Pay_i(a_\mi,\vectr{b}) - \Burn(a_\mi,\vectr{b})$. }
    }
\\
On-chain game 
& $\onCG\big( s^\onCG_\mi, s^\onCG_{\usr}(\vectr{v}) \big)$
&{\footnotesize{}\makecell[l]{
    Game in which each user $i$ bids $s^\onCG_{\usr,i}(v_i)$, the 
    \\miner sees their bids, then forwards some
    \\advice, fabricated bids, and subset of user bids to $\bBuild$} 
    }
\\
User BNE in $\onCG$
& $\sigma^\onCG = \big(s^{\onCG}_{\usr,i}(v_i)\big)_{i=1}^n$
&{\footnotesize{}\makecell[l]{
    Strategy profile in $\onCG$ such that, given 
    \\ that the miner plays some fixed $s^\onCG_\mi$, the users 
    \\are playing in a Bayes-Nash equilibrium}
    }
\\
Off-chain mechanism
& $\Moff(\vectr{u})$
&{\footnotesize{}\makecell[l]{
    An arbitrary mechanism, decided by the miner,
    \\that determines the users' and miner's 
    \\strategy in the on-chain game}
    }
\\ 
Off-chain game
& $\offCG\big(\Moff; s^{\D, \Moff}_{\usr}(\vectr{v})\big)$
&{\footnotesize{}\makecell[l]{
    Game in which the miner commits to $\Moff$, and
    \\users submit messages and payments to $\Moff$ to
    \\determine users' and the miner's strategies in $\onCG$}
    }
\\ 
User BNE in $\offCG$
& $\sigma^\offCG = \big(s^{\offCG,\Moff}_{\usr,i}(v_i)\big)_{i=1}^n$
&{\footnotesize{}\makecell[l]{
    Strategy profile in $\offCG$ such that, given
    \\ that the miner commits to some $\Moff$, the users 
    \\are playing in a Bayes-Nash equilibrium}
    }
\\ 
Revenue in $\offCG$
& $\Rev^\offCG(\Moff, \sigma^\offCG(\vectr{v}))$
&{\footnotesize{}\makecell[l]{
    The miner's total reward in $\offCG$, i.e., the
    \\sum of the on- and off-chain payments.}
    }
\end{tabular}
\caption{Main elements of the model}
\label{tab:main-notation}
\end{table}

Before proceeding, we highlight in concrete terms the main ``actions'' that a miner can take in the off-chain game, which we need to exploit in our proofs.
A miner can (1) censor bids, (2) fabricate and submit their own bids, and (3) ask bidders to submit some alternative bid.
Since the miner conducts an entirely separate off-chain mechanism to decide how users behave on-chain, each of actions (1)-(3) can depend on the entire profile of users' messages in the off-chain game.
We always assume the users play in a BNE in the off-chain game.
Hence, by applying the revelation principle to the off-chain mechanism chosen by the miner, we might as well assume that the off-chain equilibrium is truthful, and the miner learns the values of all the users.
It is notationally convenient to describe the total payments made by the users in the off-chain game (i.e, in the composite mechanism including both, the on-chain and off-chain components) as the off-chain payments.\footnote{For example, if the TFM is a posted-price mechanism where the users are expected to pay $p$ on-chain and nothing off-chain, we say that $\Moff$ charges an off-chain payment $p$ from its users.}
Thus, the off-chain payments made by users must satisfy the payment identity (\autoref{item:payment-identity} in \autoref{thm:myerson}).

\citet{GaneshTW24} consider three core desiderata for TFM design: On-chain user simplicity, on-chain miner simplicity, and off-chain influence proofness,
which each capture different ways in which agents (the users or the miner) will want to deviate from a specified strategy profile.
We now briefly recall these definitions, and again refer the reader to \citet{GaneshTW24} for full details.
\begin{itemize}
  \item An on-chain miner strategy and user BNE $\sigma^\onCG =  (s_\mi^\onCG, s_{\usr, 1}^\onCG,\allowbreak \dots, \allowbreak s_{\usr, n}^\onCG)$ is \emph{on-chain user simple} if all users bid their value in $\sigma^\onCG$, and moreover, the equilibrium is DSIC (i.e., each user best-responds by bidding his value for \emph{any} bids of the other users).

  \item An on-chain miner strategy and user BNE $\sigma^\onCG =  (s_\mi^\onCG, s_{\usr, 1}^\onCG,\allowbreak \dots, \allowbreak s_{\usr, n}^\onCG)$ is \emph{on-chain miner simple} if the miner is best-responding to the strategies of the users, and moreover, $s^\onCG_{\mi}$ never drops or fabricates any bids and always submits the same constant miner advice.
  In such a case, we say that the miner's strategy is \emph{compliant}.

  \item An on-chain miner strategy and user BNE $\sigma^\onCG =  (s_\mi^\onCG, s_{\usr, 1}^\onCG,\allowbreak \dots, \allowbreak s_{\usr, n}^\onCG)$ is \emph{off-chain influence proof} if the miner achieves her maximum revenue in $\sigma^\onCG$ over \emph{all off-chain mechanisms} $\Moff$ and all possible user BNE $\big( \widetilde{s}_{\usr, i}^{\offCG, {\mathcal{M}}_{\mathsf{off}}} \big)_{i=1}^n$ under $\Moff$.
\end{itemize}
We say that $\sigma^{\onCG}$ is \emph{on-chain simple} if it satisfies both on-chain user and miner simplicity.

\citet{GaneshTW24} also provide supplementary desiderata on collusion resistance, mirroring the definitions of \citet{ChungS23} and \citet{ChungRS24}:
\begin{itemize}
    \item An on-chain miner strategy and user BNE $\sigma^{\onCG} = (s_\mi^\onCG, s_{\usr, 1}^\onCG,\allowbreak \dots, \allowbreak s_{\usr, n}^\onCG)$ is \emph{$1$-$1$-strong collusion proof} if any coalition containing the miner and one other user cannot deviate to increase the coalition's joint utility.
    \item An on-chain miner strategy and user BNE $\sigma^{\onCG} = (s_\mi^\onCG, s_{\usr, 1}^\onCG,\allowbreak \dots, \allowbreak s_{\usr, n}^\onCG)$ is \emph{global strong collusion proof} if the global coalition containing the miner and all users cannot deviate to increase their joint utility.
\end{itemize}
The difference between coalitions and the off-chain mechanisms is the following.
While forming a coalition, the agents in the coalition entirely stop being strategic with each other: the users truthfully reveal their values to the miner, and the coalition is purely interested in maximizing their joint utility as if they were a single entity.
In contrast, in the off-cain mechanism  the users and the miner continue to be strategic with each other (i.e., they must be playing in an equilibrium).

\section{Warmup: Characterizing Off-Chain Influence Proof Posted-Price Mechanisms} \label{sec:PostedPrice}
In this section, we give exposition into our framework and our results by considering a simple sub-class of TFMs: posted-price mechanisms (with infinite supply, where the price and the burn are prior-dependent and determined by the blockchain).
We give a short proof characterizing the burn rule such that a posted-price mechanism is off-chain influence proof.
Our later sections use more involved arguments to generalize this characterization of the burn rule to all off-chain influence proof TFMs.

Formally, our goal is as follows.
Consider a posted-price mechanism with infinite supply.
The block-building process $\bBuild = (\AllocRule, \Pay, \Burn)$ does not receive any advice from the miner, and is determined by two fixed parameters: a price $\QPay$ and a burn $\BurnB$.
The mechanism includes all users with values larger than $\QPay$, charges each allocated user a payment $\QPay$ and burns $\BurnB$ per allocated user.
Consider the on-chain equilibrium $\sigma^{\onCG}_{\mathsf{honest}}$ where the users bid their values truthfully and the miner does not fabricate or censor bids.
We want to compute $\QPay$ and $\BurnB$ such that the equilibrium induced by $\sigma^{\onCG}_{\mathsf{honest}}$ in the off-chain game is off-chain influence proof.

\begin{proposition} \label{thm:PostedPrice}
    For a distribution $\Distr$ of user values with a continuous virtual value function $\vv$, suppose that the block-building process posts a price $\QPay$ and burns $\BurnB$ per allocated user.
    Then, the equilibrium $\sigma^{\onCG}_{\mathsf{honest}}$ is off-chain influence proof if and only if $\BurnB = \vv(\QPay)$.
\end{proposition}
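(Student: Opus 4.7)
The plan is to reduce the off-chain game to a standard Myerson auction with per-unit production cost $\BurnB$. First, I would compute the miner's revenue in $\sigma^{\onCG}_{\mathsf{honest}}$: each user $i$ with $v_i \geq \QPay$ pays $\QPay$ on-chain while $\BurnB$ is burned, giving per-user revenue $(\QPay - \BurnB)\Pl{v \geq \QPay}$. Using the classical posted-price identity $\QPay\,\Pl{v \geq \QPay} = \El{\vv(v)\,\mathbf{1}[v \geq \QPay]}$ (a consequence of Myerson's lemma), this rewrites as $\El{(\vv(v) - \BurnB)\,\mathbf{1}[v \geq \QPay]}$ per user.

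Next, I would characterize the supremum of the miner's revenue over all $(\Moff, \sigma^\offCG)$. By applying the revelation principle to $\Moff$, I may assume users truthfully report their values; the composite on-chain plus off-chain transfers then realize some BIC and IR direct mechanism $(\widetilde{X}, \widetilde{P})$ on users. Conversely, any BIC+IR direct mechanism can be realized: the miner uses $\Moff$ to solicit values, decides which users to include on-chain (so that they pay $\QPay$), and charges off-chain transfers equal to $\widetilde{P}_i(\vectr{v}) - \QPay\,\widetilde{X}_i(\vectr{v})$ (possibly negative, i.e.\ subsidies) to hit the target total payment. The miner's revenue is thus $\E{\sum_i \widetilde{P}_i(\vectr{v})} - \BurnB\,\E{\sum_i \widetilde{X}_i(\vectr{v})}$, which by the Myerson virtual welfare identity equals $\E{\sum_i (\vv(v_i) - \BurnB)\,\widetilde{X}_i(\vectr{v})}$. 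Fabricating bids is never profitable since each fake inclusion costs $\BurnB$ and returns nothing. Maximizing pointwise, the optimum sets $\widetilde{X}_i = 1$ exactly when $\vv(v_i) \geq \BurnB$, which by monotonicity/continuity of $\vv$ is the threshold rule $v_i \geq \vv^{-1}(\BurnB)$.

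Finally, the honest revenue matches this optimum if and only if $\mathbf{1}[v \geq \QPay]$ and $\mathbf{1}[v \geq \vv^{-1}(\BurnB)]$ agree $\Distr$-almost everywhere, which (since $\vv$ is continuous) happens exactly when $\QPay = \vv^{-1}(\BurnB)$, i.e.\ $\BurnB = \vv(\QPay)$. This yields both directions of the proposition: when $\BurnB = \vv(\QPay)$ the honest equilibrium attains the off-chain maximum and is therefore OffC-IP, while if $\BurnB \ne \vv(\QPay)$ the miner strictly profits by offering off-chain subsidies (if $\BurnB < \vv(\QPay)$) to include extra users, or by censoring users in $[\QPay, \vv^{-1}(\BurnB))$ (if $\BurnB > \vv(\QPay)$).

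I expect the main obstacle to be cleanly verifying the ``any BIC+IR direct mechanism is realizable as a composite of $\Moff$ and $\bBuild$'' claim underlying the off-chain maximum. I would check this against the formal $\offCG$ model from the preliminaries: since the miner can commit to any $\Moff$ directing arbitrary off-chain transfers, and users play a BNE in which (by the revelation principle) they can be assumed to report truthfully, the only structural constraint imposed by $\bBuild$ is the on-chain payment of $\QPay$ per included user, which is always absorbed into the off-chain transfer $\widetilde{P}_i - \QPay\,\widetilde{X}_i$.
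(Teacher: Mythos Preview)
Your proposal is correct and follows essentially the same route as the paper: both arguments apply Myerson's lemma to rewrite the miner's off-chain revenue as $\E\bigl[\sum_i (\vv(v_i)-\BurnB)\,\widetilde{X}_i(\vectr{v})\bigr]$, observe that the pointwise optimum is the threshold rule at $\vv^{-1}(\BurnB)$, and conclude that this coincides with the honest threshold $\QPay$ exactly when $\BurnB=\vv(\QPay)$. Your write-up is in fact a bit more careful than the paper's on the realizability step (the paper simply asserts the optimal allocation is achievable), and your explicit description of the profitable deviations when $\BurnB\neq\vv(\QPay)$ matches the paper's post-proof discussion.
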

\begin{proof}
Suppose that the miner runs an off-chain mechanism $\Moff$ and induces a BNE $\sigma^{\offCG} = (s^\offCG_\mi, s^\offCG_{\usr, 1}, \dots, s^\offCG_{\usr, n})$.
When users have a value profile $\Vec{v}$, let their on-chain allocation be $\OnCAlloc(\vec{v}) = \AllocRule(s^\offCG_\mi(\vec{v}), s^\offCG_{\usr, 1}(v_1), \dots, s^\offCG_{\usr, n}(v_n))$.
Let $\OnCPay(\vec{v})$ denote the equilibrium payments as dictated by the payment identity (\autoref{item:payment-identity} in \autoref{thm:myerson}) for the allocation rule $\OnCAlloc$.

Then, a Bayesian monopolist miner maximizes her net expected revenue equal to
\begin{align}
    \notag
    \Es{\vectr{v} \sim \Distr^n}{ \vphantom{\Big|} \Rev^\offCG(\Moff, \sigma^\offCG(\vectr{v}))} 
    &= \Es{\vec{v}\sim \Distr^n}{\sum_{i = 1}^n \OnCPay_i(\vec{v}) - \BurnB \cdot \OnCAlloc_i(\vec{v})} 
    \\ & \notag \qquad \text{(By Myerson's lemma, \autoref{item:revenue-equals-virtual-welfare} in \autoref{thm:myerson})}
    \\ \notag &= \Es{\vec{v}\sim \Distr^n}{\sum_{i = 1}^n \vv(v_i) \, \OnCAlloc_i(\vec{v}) - \BurnB \cdot \OnCAlloc_i(\vec{v})}
    \\ &= \Es{\vec{v}\sim \Distr^n}{\sum_{i = 1}^n \Big(\vv(v_i) - \BurnB \Big) \cdot \OnCAlloc_i(\vec{v})} \label{eqn:PostedPriceVV}
\end{align}

The miner optimizes her expected revenue precisely by allocating all users with a virtual value at least $\BurnB$, which corresponds to setting a price $\QPay$ such that $\BurnB = \vv(\QPay)$.
\end{proof}

\autoref{thm:PostedPrice} shows that, with a carefully-tuned burn rule, any posted price auction yields an off-chain influence proof TFM.
Additionally, it is not hard to see that these TFMs are on-chain simple too.
This highlights the sharp difference between the TFMs we consider and those of prior work; for instance, \citet{ChungS23} show that there is \emph{no} TFM with positive miner revenue that satisfies on-chain simplicity along with $1$-$1$-strong collusion proofness.

\autoref{eqn:PostedPriceVV} gives intuition towards our burn identity in \autoref{sec:MonopsonistLens}, 
which generalizes this result to apply to \emph{all} TFMs (and not only posted-price mechanisms).
Our reduction in \autoref{sec:MonopsonistLens} views the miner as a (single) bidder in a multi-item auction, where each item $i$ corresponds to including bidder $i$, and relates the total burn in the TFM to the price charged in the multi-item auction.
From here, we argue that the miner's objective is to maximize the expectation of $\sum_{i = 1}^n \vv(v_i) \, \AllocRule_i$ minus the total burn, generalizing \autoref{eqn:PostedPriceVV}.

Furthermore, \autoref{eqn:PostedPriceVV} can  be used to find the revenue-optimal equilibrium of a posted price TFM when the mechanism is not off-chain influence proof.
Specifically:
\begin{itemize}
\item 
When the burn $B$ satisfies $\vv(\QPay) > \BurnB$, the miner will want to include users with virtual values in the range $[\BurnB, \vv(\QPay)]$ which otherwise are not included in the on-chain equilibrium $\sigma^{\onCG}_{\mathsf{honest}}$.
The miner can ask the users with virtual values in the said interval to amplify their bids to $\QPay$ and get included in the block.
Since all users with a value at least $\vv^{-1}(\BurnB)$ are now included, the miner has to charge each user a payment $\vv^{-1}(\BurnB)$ in equilibrium.
Thus, the miner issues a rebate $\QPay - \vv^{-1}(\BurnB)$ to all users since they would have paid $\QPay$ on-chain in order to get included.

\item
On the other hand, when $\vv(\QPay) < \BurnB$, the miner would not want to include the users with virtual values $[\vv(\QPay), \BurnB]$.
Thus, the miner sets up an off-chain mechanism where it charges an entry fee $\vv^{-1}(\BurnB) - \QPay$ on top of the $\QPay$ charged on chain.
\end{itemize}

Before proceeding, we note that there are other variants of posted-price mechanisms beyond the variant considered here. In \autoref{sec:posted-zoo}, we discuss in detail the block-building processes of various posted-price mechanisms (for example, EIP-1559, prior-dependent and prior-independent posted-price mechanisms, etc) and summarize the properties satisfied by their respective equilibria.

\section{Myerson-in-Range Mechanisms and a Multi-Item Monopsonist Lens} \label{sec:MonopsonistLens}

In this section, we give a reduction relating off-chain influence proof mechanisms to mechanisms in the classical multi-item single-buyer environment.
This allows us to derive what we call the \emph{burn identity}, which relates the burn to the allocation rule (in pretty much the same way that the \emph{payment identity} relates the payments to the allocation rule). 

For an arbitrary block-building process $\bBuild$, we would like to check whether a given equilibrium $\sigma^{\offCG}$ in the off-chain game is off-chain influence proof, which entails showing that $\sigma^{\offCG}$ maximizes revenue over the set of all off-chain mechanisms $\Moff$ and all possible BNE supported by the mechanism $\Moff$.
We prove three easy-to-verify sufficient conditions for the equilibrium $\sigma^{\offCG}$ to be off-chain influence proof which will, in turn, be useful in deriving the burn identity.

\subsection{Myerson-in-Range Mechanisms and Off-Chain Influence Proofness}

As a first step towards deriving the burn identity, we re-interpret the revenue optimization problem faced by the miner in the off-chain game in terms of the equilibrium burn and the virtual values of the allocated users.

For a revenue optimal equilibrium $\sigma^{\offCG}$ and a corresponding mechanism $\Moff$ in the off-chain game, we can apply the revelation principle to $\sigma^{\offCG}$ to unpack the underlying truth-telling mechanism.
We say that the truth-telling mechanism $\DirReveal$ corresponding to a revenue optimal equilibrium $\sigma^{\offCG}$ is \emph{Myerson-in-Range}.

\begin{definition}[Direct-revelation mechanism] \label{def:DirectRevMechanism}
    For a block-building process $\bBuild = (\AllocRule, \Pay, \Burn)$, an off-chain mechanism $\Moff$, and a user BNE $\sigma^{\offCG} = (s^\offCG_\mi, s^\offCG_{\usr, 1}, \dots, s^\offCG_{\usr, n})$ with respect to $\Moff$, the \emph{direct-revelation mechanism} $(\OnCAlloc, \OnCPay, \OnCBurn)$ corresponding to $\Moff$ and $\sigma^\offCG$ is obtained by applying the revelation principle; i.e., 
    $$(\OnCAlloc(\vec{v}), \OnCBurn(\vec{v})) = \big(\AllocRule(\sigma^{\offCG}(\vec{v})), \Burn(\sigma^{\offCG}(\vec{v}))\big)$$
    and $\OnCPay$ from the payment identity (\autoref{item:payment-identity} in \autoref{thm:myerson}) for the allocation rule $\OnCAlloc$.
    We will say that $\OnCAlloc, \OnCPay$ and $\OnCBurn$ are the value allocation, payment and burn rules respectively.
\end{definition}

\begin{definition}[Myerson-in-Range]
    Fix a block-building process $\bBuild$ and a distribution of user values $\Distr$.
    Let $\Moff$ and $\sigma^{\offCG} = (s^\offCG_\mi, s^\offCG_{\usr, 1}, \dots, s^\offCG_{\usr, n})$ respectively be an off-chain mechanism and a user BNE with respect to $\Moff$ that maximizes the miner's expected net revenue $\Es{\vectr{v} \sim \Distr^n}{ \Rev^\offCG(\Moff, \sigma^\offCG(\vectr{v}))}$ for any number of users $n$.
    We say that $(\OnCAlloc, \OnCPay, \OnCBurn)$ is \emph{Myerson-in-Range} for $\bBuild$ and the distribution $\Distr$ if $(\OnCAlloc, \OnCPay, \OnCBurn)$ is the direct-revelation mechanism corresponding to such an $\Moff$ and $\sigma^\offCG$.
\end{definition}

It immediately follows from the definition that an off-chain equilibrium $\sigma^{\offCG}$ with a trivial off-chain component is off-chain influence proof if and only if the corresponding direct-revelation mechanism is Myerson-in-Range. 

\begin{lemma} \label{thm:MyersonEnvelopeIffOffCIP}
    A BNE $(s^\offCG_\mi, s^\offCG_{\usr, 1}, \dots, s^\offCG_{\usr, n})$ with a trivial off-chain component is off-chain influence proof for the block-building process $\bBuild = (\AllocRule, \Pay, \Burn)$ if and only if its direct-revelation mechanism $(\OnCAlloc, \OnCPay, \OnCBurn)$ is Myerson-in-Range.
\end{lemma}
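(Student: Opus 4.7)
The plan is to reduce the statement almost tautologically to the revelation principle combined with the observation that the expected revenue of any off-chain equilibrium is determined entirely by its direct-revelation mechanism. The first (and main) step is to establish the following key fact: for any off-chain mechanism $\Moff$ and corresponding user BNE $\sigma^{\offCG}$, the miner's expected net revenue equals
$$\Es{\vec{v}\sim\Distr^n}{\sum_{i=1}^n \vv(v_i)\,\OnCAlloc_i(\vec{v}) \;-\; \OnCBurn(\vec{v})},$$
where $(\OnCAlloc, \OnCPay, \OnCBurn)$ is the direct-revelation mechanism obtained from $(\Moff, \sigma^\offCG)$. The argument is a standard application of Myerson's lemma to the composite (on-chain plus off-chain) mechanism: since $\sigma^\offCG$ is a BNE, the total expected payment collected from each user is pinned down (up to the standard IR normalization) by $\OnCAlloc$ via the payment identity and equals $\Es{\vec{v}}{\sum_i \OnCPay_i(\vec{v})} = \Es{\vec{v}}{\sum_i \vv(v_i)\OnCAlloc_i(\vec{v})}$, and subtracting the expected burn gives the stated expression. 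In particular, two off-chain equilibria that share a direct-revelation mechanism produce the same expected revenue.

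With this in hand, both directions follow by definition chasing. For ($\Rightarrow$), suppose the trivial-off-chain BNE $\sigma^\offCG$ is off-chain influence proof, so by definition it already maximizes the miner's expected revenue over all choices of $(\Moff, \sigma^\offCG)$. Since $\DirReveal$ is by construction the direct-revelation mechanism of this very revenue-maximizing $(\Moff, \sigma^\offCG)$, it is Myerson-in-Range.

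For ($\Leftarrow$), suppose $\DirReveal$ is Myerson-in-Range, meaning there is some revenue-maximizing $(\Moff', \sigma^{\offCG'})$ whose direct-revelation mechanism is also $\DirReveal$. By the key fact, the miner's expected revenue is a function only of $\OnCAlloc$ and $\OnCBurn$; hence the revenue under $\sigma^\offCG$ equals the revenue under $\sigma^{\offCG'}$, namely the maximum. Therefore $\sigma^\offCG$ is off-chain influence proof.

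The main subtlety (and the only non-definitional content) is the revenue computation in the key step, where I would need to carefully verify that the BNE condition on $\sigma^{\offCG}$ together with the IR normalization implicit in the payment identity defining $\OnCPay$ really does pin down expected user-to-miner transfers across both on-chain and off-chain channels. Once this is handled, no further machinery is required; the assumption that $\sigma^\offCG$ has a trivial off-chain component is used only on the ($\Rightarrow$) direction to identify $(\bBuild\text{-with-identity-}\Moff, \sigma^\offCG)$ as a valid witness to $\DirReveal$ being Myerson-in-Range.
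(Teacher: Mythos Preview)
Your proposal is correct and matches the paper's approach. The paper states that the lemma ``immediately follows from the definition'' and gives no further argument; the revenue-equals-virtual-surplus-minus-burn computation you call the ``key fact'' is exactly what the paper derives in the paragraphs immediately following the lemma (en route to \autoref{thm:MyerImpliesVirtualUtilityOpt}), so you have simply folded that derivation into the lemma rather than deferring it. One small sharpening: the trivial-off-chain hypothesis is not really a witness-identification device for the $(\Rightarrow)$ direction (any $(\Moff,\sigma^{\offCG})$ pair, trivial or not, is already a valid candidate in the definition of Myerson-in-Range); rather, it is what lets the phrase ``$\sigma^{\offCG}$ is off-chain influence proof'' make sense at all, since OffC-IP is defined only for on-chain equilibria and the trivial off-chain component is what identifies $\sigma^{\offCG}$ with one.
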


It is notationally cumbersome to keep mentioning that the mechanism and the corresponding equilibrium satisfies some property (for eg. off-chain influence proofness).
We abuse notation to instead say that the corresponding direct-revelation mechanism $(\OnCAlloc, \OnCPay, \OnCBurn)$ satisfies the same property.

In the remainder of this section, we characterize all Myerson-in-Range mechanisms for a given block-building process $\bBuild$.
We argue that the direct-revelation mechanism $\DirReveal$ is Myerson-in-Range if and only if it optimizes the miner's expected \emph{virtual utility} --- the difference between her expected \emph{virtual surplus} $\Es{\vec{v} \sim \Distr^n}{\sum_{i = 1}^n \vv(v_i) \, \OnCAlloc(\vec{v})}$ and the burn $\Es{\vec{v} \sim \Distr^n}{\OnCBurn(\vec{v})}$ --- for all $n \in \N$.
In detail, for an equilibrium $\sigma^{\offCG}$ in the off-chain game with a corresponding direct-revelation mechanism $\DirReveal$, the miner's net expected revenue equals
\( 
\Es{\vec{v} \sim \Distr^n}{\sum_{i = 1}^n \OnCPay_i(\vec{v}) - \OnCBurn(\vec{v})}.
\)
Since $\OnCPay$ is the value payment rule of a BNE, we can apply Myerson's lemma (\autoref{item:revenue-equals-virtual-welfare} in \autoref{thm:myerson}) to replace the expected payments by the expected virtual surplus in the expected net revenue.
In other words, we can pretend that the miner receives a surplus equal to $\sum_{i = 1}^n \vv(v_i) \, \OnCAlloc_i(\vec{v})$ upon allocating users with values $\vec{v}$, but is charged $\OnCBurn(\vec{v})$ via burns.
Therefore, the miner optimizes her expected virtual utility $\Es{\vec{v} \sim \Distr^n}{\sum_{i = 1}^n \vv(v_i) \, \OnCAlloc(\vec{v}) - \OnCBurn(\vec{v})}$ over all possible BNE $\sigma^{\offCG}$ in the off-chain game.

In fact, for a regular distribution $\Distr$, the miner can pointwise optimize her virtual utility for $\vec{v} \sim \Distr^n$.
The value allocation and burn rules $(\OnCAlloc, \OnCBurn)$ can take any value from the set of all \emph{feasible outcomes} $\Feasibility$ of $\bBuild = (\AllocRule, \Pay, \Burn)$, i.e,
\[ \Feasibility
= \big\{ \bBuild(a_\mi; \vectr{b}) \big\}_{a_\mi, \vectr{b}} . \]
For any allocation rule $\OnCAlloc$ supported on $\Feasibility$, by the payment identity (\autoref{item:payment-identity} in \autoref{thm:myerson}), $\OnCAlloc$ is the value allocation rule of some BNE $\sigma^{\offCG}$ if and only if $\OnCAlloc_i$ is monotone non-decreasing in $v_i$ for all $1 \leq i \leq n$ ($\OnCAlloc_i$ is the allocation rule for user $i$).
For a regular distribution $\Distr$, it is fairly straightforward to see that there exists
\[\OnCAlloc(\vec{v}) = \arg \max_{\AllocRule : (\AllocRule, \BurnB) \in \Feasibility} \sum_{i = 1}^n \vv(v_i) \AllocRule_i - \BurnB\]
that is monotone non-decreasing in user values.
For that matter, any variant of $(\OnCAlloc, \OnCBurn)$ that is monotone, and pointwise optimizes for the virtual utility except with probability zero constitutes a Myerson-in-Range mechanism $\DirReveal$ for a suitably chosen value payment rule $\OnCPay$.
The miner's expected virtual utility is maximized even if there exists a set of value profiles with a probability measure zero for which the virtual utility optimal outcome is not implemented on-chain.

We summarize our discussion in the theorem below.
\begin{theorem} \label{thm:MyerImpliesVirtualUtilityOpt}
    Let $\Distr$ be a regular distribution.
    Then, the direct-revelation mechanism $(\OnCAlloc, \OnCPay, \OnCBurn)$ is Myerson-in-Range for the block-building process $\bBuild$ if and only if (a) $\OnCAlloc$ is monotone non-decreasing for all $\vec{v} \in \supp(\Distr^n)$ and $n \in \N$, (b) $\OnCPay$ satisfies the payment identity (\autoref{item:payment-identity} in \autoref{thm:myerson}) and (c) except with probability zero, $$\big(\OnCAlloc(\vec{v}), \OnCBurn(\vec{v})\big) = \arg \max_{(\AllocRule, \BurnB) \in \Feasibility} \sum_{i = 1}^n \vv(v_i) \AllocRule_i - \BurnB$$
    for $\vec{v} \sim \Distr^n$.\footnote{The term ``Myerson-in-Range'' is derived from ``Maximal-in-Range'' which originated in the combinatorial auctions literature. A mechanism is said to be maximal over some range $\Feasibility$ of feasible allocations if it implements the surplus optimal allocation over $\Feasibility$ (typically, $\Feasibility$ is a strict subset of the set of all feasible allocations). Similarly, a Myerson-in-Range mechanism maximizes the miner's virtual utility while restricting allocations to those permitted by the set $\Feasibility$, which could possibly pack much less users than the capacity of the block.}
\end{theorem}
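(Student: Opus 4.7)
The plan is to leverage Myerson's lemma to convert the miner's expected revenue into an expected virtual utility, and then reduce the question of Myerson-in-Range to a pointwise virtual welfare (minus burn) maximization over $\Feasibility$. Concretely, for any direct-revelation mechanism $(\OnCAlloc, \OnCPay, \OnCBurn)$ arising from an off-chain mechanism $\Moff$ and user BNE $\sigma^{\offCG}$, I would first note that $\OnCAlloc$ is necessarily monotone non-decreasing in each $v_i$ and $\OnCPay$ is necessarily pinned down by the payment identity: both are forced by the revelation principle applied to the BNE, so (a) and (b) are immediate necessary conditions. Then, by \autoref{item:revenue-equals-virtual-welfare} in \autoref{thm:myerson}, the miner's expected net revenue equals $\Es{\vec{v}\sim \Distr^n}{\sum_i \vv(v_i)\,\OnCAlloc_i(\vec{v}) - \OnCBurn(\vec{v})}$, so the Myerson-in-Range condition is equivalent to maximizing this expected virtual utility over all monotone allocation rules supported on $\Feasibility$ (for every $n$).

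For the forward direction, I would argue the contrapositive for (c): if, on a set of positive $\Distr^n$-measure and some $n$, there exists some feasible $(\AllocRule', \BurnB') \in \Feasibility$ with strictly greater pointwise virtual utility than $(\OnCAlloc(\vec{v}), \OnCBurn(\vec{v}))$, then I can construct a competing direct-revelation mechanism by swapping in the pointwise virtual utility maximizer over $\Feasibility$ everywhere. This new mechanism has strictly larger expected virtual utility, and hence (by Myerson's lemma) strictly larger expected net revenue, contradicting Myerson-in-Range. The nontrivial step here is ensuring that this pointwise maximizer yields a monotone allocation rule, so that it is realizable as a BNE outcome via the payment identity. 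This is where regularity is invoked: since $\vv$ is monotone non-decreasing, the virtual welfare $\sum_i \vv(v_i)\AllocRule_i$ is monotone in each $v_i$ for every fixed $\AllocRule \in \Feasibility$, and a standard tie-breaking/selection argument produces a monotone argmax selection from the finite-dimensional feasible set $\Feasibility$.

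For the reverse direction, I would take any alternative $\Moff'$ with BNE $\sigma^{\offCG'}$ and corresponding direct-revelation mechanism $(\OnCAlloc', \OnCPay', \OnCBurn')$. By construction, $(\OnCAlloc'(\vec{v}), \OnCBurn'(\vec{v})) \in \Feasibility$ almost surely, and by Myerson's lemma the miner's expected net revenue under $\Moff'$ equals $\Es{\vec{v}\sim\Distr^n}{\sum_i \vv(v_i)\,\OnCAlloc'_i(\vec{v}) - \OnCBurn'(\vec{v})}$. Condition (c) then implies that for almost every $\vec{v}$, the integrand under $(\OnCAlloc, \OnCBurn)$ is at least the integrand under $(\OnCAlloc', \OnCBurn')$, so the expectations compare the same way. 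Hence $(\OnCAlloc, \OnCPay, \OnCBurn)$ achieves at least the expected revenue of $(\OnCAlloc', \OnCPay', \OnCBurn')$, establishing Myerson-in-Range.

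The main obstacle I anticipate is the monotone-selection step in the forward direction: one has to show that among the (potentially many) pointwise virtual-utility maximizers over $\Feasibility$, at least one selection is jointly monotone in $\vec{v}$ over $\supp(\Distr^n)$ so that the payment identity yields a valid BNE mechanism. Anonymity of $\bBuild$ together with regularity of $\Distr$ makes this clean --- ties can be broken consistently (e.g.\ lexicographically over a suitably ordered enumeration of $\Feasibility$), and the ``except with probability zero'' clause in (c) absorbs the remaining pathological ties that do not affect expected revenue.
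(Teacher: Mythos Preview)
Your proposal is correct and follows essentially the same approach as the paper's own argument (which appears as the discussion immediately preceding the theorem statement): apply Myerson's lemma to rewrite expected net revenue as expected virtual utility, observe via the payment identity that monotone allocation rules supported on $\Feasibility$ are exactly those realizable by some BNE, and use regularity of $\Distr$ to guarantee a monotone selection from the pointwise $\arg\max$ over $\Feasibility$. You are slightly more explicit than the paper in separating the two directions and in flagging the monotone-selection step as the only nontrivial point; the paper dispatches that step with ``it is fairly straightforward to see,'' so your identification of it as the main obstacle is apt.
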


\subsection{Three Canonical Properties of Myerson-in-Range Mechanisms}

In our second step towards obtaining the burn identity, we will identify three intuitive properties satisfied by Myerson-in-Range mechanisms.

The first property (labeled \emph{optimal for $n$ users}) is an immediate special case of our characterization of Myerson-in-Range mechanisms.
While \autoref{thm:MyerImpliesVirtualUtilityOpt} suggests that $\OnCAlloc(\vec{v})$ pointwise optimizes $\sum_{i = 1}^n \vv(v_i) \AllocRule - \BurnB$ almost everywhere for $(\AllocRule, \BurnB) \in \Feasibility$, the first condition specifically observes that $\vv(\vec{v}) \, \OnCAlloc(\vec{v}) - \OnCBurn(\vec{v}) \geq \vv(\vec{v}) \, \OnCAlloc(\vec{w}) - \OnCBurn(\vec{w})$ with probability $1$ for $\vec{v} \sim \Distr^n$ and all $\vec{w} \in \supp(\Distr^n)$.

The second property (labeled \emph{Negative $\vv$'s are suboptimal}) observes that including users with value below the monopoly reserve is suboptimal for a virtual utility maximizing miner.
Suppose an allocation rule includes users with a negative virtual value with probability greater than zero.
Then, the miner can increase her virtual utility by first censoring such a user's bid and fabricating an identical bid of her own.
Since the same set of bids are submitted to the anonymous TFM, except for un-allocating the user with the negative virtual value, all other users are allocated the same and the burn remains unchanged too, thereby increasing the miner's expected virtual utility.

Finally, the third property (labeled \emph{No censoring or fabricating}) connects the allocation rule for $n$ users with the allocation rule for $\hat{n}$ users.
It suggests that the miner's virtual utility should remain invariant to censoring or fabricating bids with a non-positive virtual value.

\begin{theorem} \label{thm:MainReduction}
    Let $\Distr$ be a regular distribution with a continuous virtual value function $\vv$ and a probability mass function such that $\Pr_{v \sim \Distr}[\vv(v) \leq 0] > 0$.
    Further, let $\Distr_{\vv \leq 0}$ be the distribution $\Distr$ conditioned on the virtual value of a random draw having a non-positive virtual value.
    Then, a direct-revelation mechanism $(\OnCAlloc, \OnCPay, \OnCBurn)$ is Myerson-in-Range only if for any number $n$ of users:
    \begin{enumerate}[(A)]
        \item \label{Bul:1} (\emph{Optimal for $n$ users.}) For a value profile $\vec{v} \sim \Distr^n$ and $\vec{w} \in \supp(\Distr^n)$,
        $$\sum_{i = 1}^n \vv(v_i) \, \OnCAlloc_i(\vec{v}) - \OnCBurn(\vec{v}) \geq \sum_{i = 1}^n \vv(v_i) \, \OnCAlloc_i(\vec{w}) - \OnCBurn(\vec{w})$$
        with probability one.
        \item \label{Bul:2} (\emph{Negative $\vv$'s are suboptimal.}) For $\vec{v} \sim \Distr^n$, $\OnCAlloc_i(\vec{v}) = 0$ almost surely whenever $\vv(v_i) < 0$.
        \item \label{Bul:3} (\emph{No censoring or fabricating.}) For $\vec{v} = (v_1, \dots, v_n, v_{n+1}, \dots, v_{n+t}) \sim \Distr^{n} \times \Distr_{\vv \leq 0}^t$ and $\vec{w} = (v_1, \dots, v_{n})$, 
        $$\sum_{i = 1}^{n} \vv(v_i) \OnCAlloc_i(\vec{v}) \allowbreak - \OnCBurn(\vec{v}) \allowbreak = \sum_{i = 1}^{n} \vv(v_i) \OnCAlloc_i(\vec{w}) \allowbreak - \OnCBurn(\vec{w})$$
        almost surely.
    \end{enumerate}    
\end{theorem}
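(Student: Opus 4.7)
The plan is to derive each of the three properties from the pointwise virtual-utility maximization characterization of Myerson-in-Range mechanisms given in \autoref{thm:MyerImpliesVirtualUtilityOpt}. Since $\Distr$ is regular, that theorem guarantees that $(\OnCAlloc(\vec{v}), \OnCBurn(\vec{v}))$ almost surely maximizes $\sum_{i=1}^n \vv(v_i) \AllocRule_i - \BurnB$ over $(\AllocRule, \BurnB) \in \Feasibility$. For each of (A), (B), and (C) I will exhibit a concrete alternative outcome in $\Feasibility$ and compare its virtual utility to that of the Myerson-in-Range outcome.

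Property (A) is immediate: for any $\vec{w} \in \supp(\Distr^n)$, the pair $(\OnCAlloc(\vec{w}), \OnCBurn(\vec{w}))$ is itself an output of $\bBuild$ and hence lies in $\Feasibility$, so instantiating the pointwise maximization at $\vec{v}$ with this alternative yields the inequality in (A) directly. For (B), I would argue by contradiction: if $\OnCAlloc_i(\vec{v}) > 0$ and $\vv(v_i) < 0$ on a positive-measure event, then in the off-chain game the miner could censor user $i$'s bid while simultaneously submitting her own fabricated bid of the same value $v_i$. By anonymity of $\bBuild$ the block-builder's output is unchanged, so this produces an alternative outcome $(\AllocRule', \OnCBurn(\vec{v})) \in \Feasibility$ with $\AllocRule'_i = 0$ and $\AllocRule'_j = \OnCAlloc_j(\vec{v})$ for $j \neq i$. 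This alternative achieves virtual utility strictly greater by $-\vv(v_i)\OnCAlloc_i(\vec{v}) > 0$, contradicting pointwise optimality.

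For (C) I would use (B) together with a sandwich argument on the two restricted virtual utilities $V_1 := \sum_{i=1}^n \vv(v_i)\OnCAlloc_i(\vec{v}) - \OnCBurn(\vec{v})$ and $V_2 := \sum_{i=1}^n \vv(v_i)\OnCAlloc_i(\vec{w}) - \OnCBurn(\vec{w})$. First, (B) applied at $\vec{v}$ forces each term $\vv(v_{n+k})\OnCAlloc_{n+k}(\vec{v})$ to vanish almost surely (negative virtual values yield zero allocation, while zero virtual values contribute zero trivially), so the full $(n+t)$-user virtual utility at $\vec{v}$ equals $V_1$. To establish $V_2 \geq V_1$, the miner holding $n$ real users $\vec{w}$ can fabricate the $t$ additional bids $(v_{n+1}, \ldots, v_{n+t})$ and run exactly the off-chain mechanism realizing the $(n+t)$-user Myerson-in-Range outcome; by anonymity this produces a feasible $n$-user alternative whose real-user allocations are $(\OnCAlloc_1(\vec{v}), \ldots, \OnCAlloc_n(\vec{v}))$ and whose burn is $\OnCBurn(\vec{v})$, so pointwise optimality at $\vec{w}$ gives $V_2 \geq V_1$. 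Conversely, to establish $V_1 \geq V_2$, the miner holding $n+t$ real users $\vec{v}$ instead censors the last $t$ bids and runs the $n$-user Myerson-in-Range off-chain mechanism on the first $n$, yielding an alternative with full $(n+t)$-user virtual utility $V_2$; pointwise optimality at $\vec{v}$ then gives $V_1 \geq V_2$.

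The main obstacle is making rigorous that each constructed alternative genuinely lies in $\Feasibility$. This depends crucially on anonymity of $\bBuild$ together with the miner's off-chain flexibility to censor and fabricate bids: censoring a real bid and fabricating an identical one in its place leaves the block-builder's output invariant, and fabricating additional bids allows the miner on $n$ real users to exactly emulate an $(n+t)$-user execution. A secondary subtlety in (C) is the boundary case $\vv(v_{n+k}) = 0$, where (B) does not directly apply but where the contribution to virtual utility vanishes anyway.
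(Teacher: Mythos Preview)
Your proposal is correct and follows essentially the same approach as the paper's proof: (A) is immediate from \autoref{thm:MyerImpliesVirtualUtilityOpt}; (B) uses the censor-and-replace trick (drop user $i$'s bid, fabricate an identical one) together with anonymity; and (C) is the same sandwich argument, combining (B) with the two deviations of fabricating the extra low-virtual-value bids on one side and censoring them on the other. Your explicit handling of the boundary case $\vv(v_{n+k})=0$ and the remark about anonymity being what guarantees membership in $\Feasibility$ are exactly the right technical points to flag.
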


We defer the proof of \autoref{thm:MainReduction} to \autoref{sec:ProofMainReduction}.
For the remainder of our discussions, we solely focus on the class of regular distributions $\Distr$ meeting the requirements of \autoref{thm:MainReduction}, i.e., having a continuous virtual value function and a positive probability of a random draw having a non-positive virtual value.
Most natural distributions such as the uniform distribution, the normal distribution and the exponential distribution satisfy this condition.
We say that a distribution is \emph{smooth} if it satisfies the requirements from \autoref{thm:MainReduction}.

\subsection{The Burn Identity for Myerson-in-Range Mechanisms}

In this section, we derive a closed-form expression for the burn rule of a Myerson-in-Range mechanism.
At a high level, we will reduce calculating the burn rule of a Myerson-in-Range mechanism to computing equilibrium payments in multi-item single-buyer environments.

We show that the virtual utility maximization problem faced by the miner is similar to the utility maximization problem faced by a monopsonist in multi-item settings.
In \autoref{thm:MyerImpliesVirtualUtilityOpt}, we argued that a Bayesian miner chooses an outcome from the menu of user allocation probabilities and burns specified by $\Feasibility$ in the off-chain game to optimize her virtual utility.
In other words, the miner ``buys'' an outcome supported by $\Feasibility$ by making a ``payment'' equal to the burn corresponding to the outcome.
In return, she derives a ``surplus'' equal to the virtual surplus of the allocation.
Concretely, the items in the multi-item single-buyer environment are analogous to users in the TFM environment and the monopsonist's values for the items correspond to the virtual values of the users and are drawn iid from $\Distr^{\vv}$, the distribution of virtual values when the value is drawn from $\Distr$.

From the reduction discussed above, computing the equilibrium allocation and burn of a Myerson-in-Range mechanism corresponds to computing the equilibrium allocation and payments in the analogous multi-item single-buyer environment.
Since we apply the revelation principle in the TFM environment, the value allocation and burn rules $(\OnCAlloc, \OnCBurn)$ will correspond to DSIC mechanisms for the monopsonist.
However, there are some subtle differences between the two settings.
In the multi-item single-buyer environment, the monopsonist should be disincentivized to deviate from truth-telling for all value profiles over the items.
However, in the TFM environment, an off-chain equilibrium and the corresponding direct-revelation mechanism is Myerson-in-Range even if the miner would like to deviate from the equilibrium behaviour with probability zero.
We will address the discrepancy in the definitions between the two environments by arguing that the allocation and the burn rules of a Myerson-in-Range mechanism can be \emph{smoothened} over some measure zero collection of value profiles so that the miner pointwise optimizes her virtual utility for all value profiles $\vec{v}$ in the smoothened mechanism.\footnote{Note that the smoothened mechanism might not necessarily be feasible, i.e, be supported over $\Feasibility$. However, we will use the smoothened mechanism to derive a burn identity which holds almost surely for the original mechanism.}

\begin{theorem}
\label{thm:SmootheningVirtualUtilityMaximization}
    Let the users' values be drawn from a smooth regular distribution $\Distr$.
    Suppose that the direct-revelation mechanism $(\OnCAlloc, \OnCPay, \OnCBurn)$ is Myerson-in-Range for some block-building process $\bBuild$.    
    Then, there exists an allocation rule $\widetilde{\AllocRule}$ and a burn rule $\widetilde{\BurnB}$ such that $(\widetilde{\AllocRule}, \widetilde{\BurnB})$ is identical to $(\OnCAlloc, \OnCBurn)$ almost surely for any number $n$ of users and for any profile of values $\vec{v}, \vec{w} \in \supp(\Distr^n)$,
        $$\sum_{i = 1}^n \vv(v_i) \, \widetilde{\AllocRule}_i(\vec{v}) - \widetilde{\BurnB}(\vec{v}) \geq \sum_{i = 1}^n \vv(v_i) \, \widetilde{\AllocRule}_i(\vec{w}) - \widetilde{\BurnB}(\vec{w}).$$
\end{theorem}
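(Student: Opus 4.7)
The plan is to modify $(\OnCAlloc, \OnCBurn)$ on the measure-zero set where property (\ref{Bul:1}) of \autoref{thm:MainReduction} fails, using the set of outcomes that the mechanism already realizes elsewhere. Concretely, I would first define, for each fixed $n$, the \emph{essential outcome set}
\[ \Outcomes_n \defeq \bigl\{ (y, b) \in \FeasAlloc \times \Rgz : \text{for every open neighborhood } N \ni (y,b),\ \Pr_{\vec{v} \sim \Distr^n}[(\OnCAlloc(\vec{v}), \OnCBurn(\vec{v})) \in N] > 0 \bigr\}, \]
i.e., the topological support of the pushforward of $\Distr^n$ under $(\OnCAlloc, \OnCBurn)$. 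Since allocations lie in $[0,1]^n$ and the burn can be assumed bounded on the essential range (outcomes with unbounded burn would already be suboptimal by property (\ref{Bul:1})), $\Outcomes_n$ is a closed and bounded subset of $\R^{n+1}$, hence compact.

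Next, I would invoke property (\ref{Bul:1}) applied against $\vec{w}$ drawn from a countable dense subset of $\supp(\Distr^n)$ to conclude that, for almost every $\vec{v}$, the outcome $(\OnCAlloc(\vec{v}), \OnCBurn(\vec{v}))$ actually maximizes $\sum_i \vv(v_i) y_i - b$ over the whole compact set $\Outcomes_n$ (not just over the realized outcomes at other $\vec{w}$'s). The key point is that the objective $(y,b) \mapsto \sum_i \vv(v_i) y_i - b$ is continuous, so any supremum over the essential range of $(\OnCAlloc, \OnCBurn)$ equals the maximum over its closure $\Outcomes_n$, and the maximum is attained by compactness.

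Now I would define the smoothened mechanism by
\[ (\widetilde{\AllocRule}(\vec{v}), \widetilde{\BurnB}(\vec{v})) \defeq \argmax_{(y,b) \in \Outcomes_n} \Bigl( \sum_{i=1}^{n} \vv(v_i)\, y_i - b \Bigr), \]
breaking ties by a fixed lexicographic rule so that the argmax is a well-defined single outcome (which is measurable since $\Outcomes_n$ is compact and the objective is continuous in $\vec{v}$). By construction $(\widetilde{\AllocRule}, \widetilde{\BurnB})$ pointwise satisfies the required inequality for every $\vec{v}, \vec{w} \in \supp(\Distr^n)$ (since for every $\vec{w}$, the pair $(\OnCAlloc(\vec{w}), \OnCBurn(\vec{w}))$ either lies in $\Outcomes_n$ or can be approximated arbitrarily well by points in $\Outcomes_n$). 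Finally, by the previous paragraph, $(\widetilde{\AllocRule}(\vec{v}), \widetilde{\BurnB}(\vec{v}))$ agrees with $(\OnCAlloc(\vec{v}), \OnCBurn(\vec{v}))$ on the measure-one set where pointwise optimality over $\Outcomes_n$ already holds.

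The main obstacle will be the boundary case: ensuring that the true argmax over $\Outcomes_n$ is actually realized by the original mechanism almost surely, rather than only being approached in the limit. For this I would lean on the smoothness assumption on $\Distr$ (continuous virtual value function, so the coefficients $\vv(v_i)$ depend continuously on $\vec{v}$) combined with a standard argument that if the pushforward measure concentrates on a compact set, then with probability one, one lands at a point that globally maximizes any fixed continuous linear objective that it achieves in limit. A mild care must also be taken to rule out ``limit outcomes'' $(y^*,b^*) \in \Outcomes_n$ that strictly improve upon the essential range: this is precisely ruled out by applying property (\ref{Bul:1}) to value profiles approaching those that produce $(y^*,b^*)$, combined with continuity of $\vv$.
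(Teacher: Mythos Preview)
Your approach is genuinely different from the paper's, and in spirit cleaner: rather than fixing bad points one at a time via sequential limits (as the paper does), you define the global \emph{essential outcome set} $\Outcomes_n$ and take a pointwise argmax over it. The paper instead, for each $\vec{v}$ outside the good set $\ICSet_n$, picks a sequence $\vec{\upsilon}_i \to \vec{v}$ inside $\ICSet_n$, uses Bolzano--Weierstrass on the outcomes $(\OnCAlloc(\vec{\upsilon}_i), \OnCBurn(\vec{\upsilon}_i))$ (after establishing the bound $\OnCBurn(\emptyset) \le \OnCBurn(\vec{\upsilon}_i) \le \|\vec{\upsilon}_i\| + \OnCBurn(\emptyset)$), and defines the smoothening as the subsequential limit. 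Your global picture buys conceptual clarity; the paper's local construction sidesteps any global compactness claims about $\Outcomes_n$ (which, as you note, requires separate justification when $\supp(\Distr)$ is unbounded).

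There are, however, two genuine gaps in your outline. First, the tie-breaking: you define $(\widetilde{\AllocRule}, \widetilde{\BurnB})$ as the \emph{lexicographic} argmax over $\Outcomes_n$ and then claim it agrees with $(\OnCAlloc, \OnCBurn)$ on the measure-one set of pointwise optimality. But optimality of $(\OnCAlloc(\vec{v}), \OnCBurn(\vec{v}))$ over $\Outcomes_n$ only says it is \emph{an} argmax, not \emph{the} lexicographic one; if two distinct outcomes are tied on a positive-measure set (nothing in the hypotheses rules this out), your smoothening can disagree with the original there. The easy fix is to keep $(\OnCAlloc, \OnCBurn)$ wherever it is already optimal over $\Outcomes_n$, and invoke the argmax only on the complementary null set.

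Second, the density step: taking $\vec{w}$ from a countable dense subset of $\supp(\Distr^n)$ does not yield outcomes dense in $\Outcomes_n$, because $\vec{w} \mapsto (\OnCAlloc(\vec{w}), \OnCBurn(\vec{w}))$ need not be continuous. So beating the outcomes at those $\vec{w}$'s does not directly give optimality over all of $\Outcomes_n$. The fix is to instead pick a countable dense set $\{(y^{(k)}, b^{(k)})\}$ in $\Outcomes_n$, and for each $k$ choose a sequence $\{\vec{w}^{(k,m)}\}_m$ whose outcomes converge to $(y^{(k)}, b^{(k)})$ (possible by the definition of support); then property~\ref{Bul:1} applied to this countable family, together with continuity of the objective in $(y,b)$, gives what you want.
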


We will say that such a pair $(\widetilde{\AllocRule}, \widetilde{\BurnB})$ is a \emph{monopsonist smoothening} of $(\OnCAlloc, \OnCBurn)$.

\begin{definition}[Monopsonist smoothening]
    An allocation and burn rule $(\widetilde{\AllocRule}, \widetilde{\BurnB})$ is a \emph{monopsonist smoothening} of a Myerson-in-Range mechanism $(\OnCAlloc, \OnCBurn)$ for a distribution $\Distr$ and a block-building process $\bBuild$ if (a) $(\widetilde{\AllocRule}, \widetilde{\BurnB})$ is identical to $(\OnCAlloc, \OnCBurn)$ except on a set of probability measure zero for any number $n$ of users and (b) $\sum_{i = 1}^n \vv(v_i) \, \widetilde{\AllocRule}_i(\vec{v}) - \widetilde{\BurnB}(\vec{v}) \geq \sum_{i = 1}^n \vv(v_i) \, \widetilde{\AllocRule}_i(\vec{w}) - \widetilde{\BurnB}(\vec{w})$ for all $\vec v, \vec{w} \in \supp(\Distr^n)$.
\end{definition}

We defer the proof of \autoref{thm:SmootheningVirtualUtilityMaximization} to \autoref{sec:ProofofSmoothening}.

The monopsonist smoothening enables the application of results from existing literature on multi-item auctions to transaction fee mechanism design.
For instance, for a given utility function $U(\vec{v})$ mapping the buyer's value profile to her utility\footnote{Such a utility function can be derived even for non-truth-telling equilibria via the revelation principle (see \citealp{Rochet85}, for example).}, \citet{Rochet85} argues that the allocation rule and payment rule of the DSIC mechanism that induces the utility function $U$ is essentially unique.

\begin{theorem}[\citealp{Rochet85}] \label{thm:Rochet}
    A function $U : \R^n  \xrightarrow{} \R$ is the utility function of some DSIC mechanism in an $n$-item single-buyer environment if and only if $U$ is convex and non-decreasing.
    The allocation rule of such a DSIC mechanism is given by $\OnCAlloc(\vec{v}) = \nabla U(\vec{v})$ and the payment charged to the buyer equals $\Pay(\vec{v}) = \vec{v}^{\,\intercal} \, \nabla U(\vec{v}) - U(\vec{v})$.
\end{theorem}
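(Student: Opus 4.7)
The plan is to prove both directions of the characterization by exploiting a single envelope-type identity relating $U$, $X$, and $P$. The driving observation is that if $U(\vec{v}) := \vec{v}^{\,\intercal} X(\vec{v}) - P(\vec{v})$ denotes the truth-telling utility, then DSIC is equivalent to the family of inequalities
\[
U(\vec{v}) \;\geq\; \vec{v}^{\,\intercal} X(\vec{w}) - P(\vec{w}) \;=\; U(\vec{w}) + (\vec{v} - \vec{w})^{\intercal} X(\vec{w}) \qquad \text{for all } \vec{v}, \vec{w}.
\]
Both implications are essentially the translation of this inequality into the language of convex analysis.

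For the forward direction, I would start with a DSIC mechanism $(X,P)$ and define $U$ as above. The inequality exhibits $U$ as the pointwise supremum $\sup_{\vec{w}} \{ U(\vec{w}) + (\vec{v} - \vec{w})^{\intercal} X(\vec{w}) \}$ of affine functions of $\vec{v}$, so $U$ is convex; since each $X(\vec{w})$ has non-negative entries (allocation probabilities lie in the non-negative orthant), every supporting affine function is non-decreasing in $\vec{v}$, so the supremum is non-decreasing as well. Fixing $\vec{w}$ in the inequality, the same inequality shows $X(\vec{w}) \in \partial U(\vec{w})$; at points where $U$ is differentiable (and convex functions are differentiable almost everywhere, by Alexandrov's theorem), this subdifferential is the singleton $\{\nabla U(\vec{w})\}$, forcing $X(\vec{w}) = \nabla U(\vec{w})$ almost everywhere. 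Rearranging the definition of $U$ then yields $P(\vec{w}) = \vec{w}^{\,\intercal} \nabla U(\vec{w}) - U(\vec{w})$.

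For the converse, I would start with an arbitrary convex non-decreasing $U$, pick $X(\vec{v}) \in \partial U(\vec{v})$ (which equals $\nabla U(\vec{v})$ where $U$ is differentiable), and set $P(\vec{v}) := \vec{v}^{\,\intercal} X(\vec{v}) - U(\vec{v})$. The utility from reporting $\vec{w}$ when the true type is $\vec{v}$ unfolds as
\[
\vec{v}^{\,\intercal} X(\vec{w}) - P(\vec{w}) \;=\; U(\vec{w}) + (\vec{v} - \vec{w})^{\intercal} X(\vec{w}) \;\leq\; U(\vec{v}),
\]
where the final inequality is exactly the subgradient characterization of convex functions. Hence truthful reporting attains $U(\vec{v})$ and is (weakly) optimal, so $(X,P)$ is DSIC. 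Non-decreasingness of $U$ in each coordinate ensures every subgradient has non-negative entries, confirming $X$ is a feasible allocation.

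The main technical obstacle is the null set where $U$ fails to be differentiable: the clean formulas $X = \nabla U$ and $P = \vec{v}^{\,\intercal}\nabla U - U$ must be interpreted via subgradients there, and in the forward direction one must argue that any DSIC mechanism can, up to a measure-zero modification, be recovered by choosing $X(\vec{v}) = \nabla U(\vec{v})$ wherever the gradient exists. This is a standard application of almost-everywhere differentiability for convex functions, and modulo it the theorem falls out of the envelope inequality above.
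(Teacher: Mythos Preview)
The paper does not actually prove this theorem: it is stated as a classical result attributed to Rochet (1985) and invoked without proof, with only a brief remark afterward that convexity of $U$ guarantees $\nabla U$ exists almost everywhere and that at non-differentiable points any subgradient works as an allocation. So there is no ``paper's own proof'' to compare against.

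Your argument is the standard and correct one, and it is exactly the argument that underlies the result as used in the paper. The envelope inequality $U(\vec{v}) \ge U(\vec{w}) + (\vec{v}-\vec{w})^\intercal X(\vec{w})$ is the right pivot, and both directions follow from it as you describe. One small correction: the almost-everywhere differentiability of a convex function on $\R^n$ is usually attributed to Rademacher's theorem (convex functions are locally Lipschitz, hence differentiable a.e.), not Alexandrov's theorem, which concerns second-order differentiability. This does not affect the substance of your argument.
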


Note that when $U$ is convex, $\nabla U$ is defined almost everywhere, and the allocation and payment rules $\OnCAlloc$ and $\OnCPay$ are uniquely defined at all such points with a well-defined gradient.
At values where $\nabla U$ is not well-defined, setting $\OnCAlloc$ to be any sub-derivative of $U$ satisfies dominant-strategy incentive compatibility.

Applying \autoref{thm:SmootheningVirtualUtilityMaximization} in conjunction with \autoref{thm:Rochet} for a virtual utility optimizing miner yields the following.

\begin{corollary} \label{thm:VirtualCondition1}
    Consider a Myerson-in-Range mechanism $(\OnCAlloc, \OnCBurn)$ for a block-building process $\bBuild$ and a smooth regular distribution $\Distr$.
    If, $$\OnCUtil(\vv(\vec{v})) = \sum_{i = 1}^n \vv(v_i) \, \OnCAlloc_i(\vec{v}) - \OnCBurn(\vec{v}) \geq \sum_{i = 1}^n \vv(v_i) \, \OnCAlloc_i(\vec{w}) - \OnCBurn(\vec{w})$$
    almost surely for all $\vec{v} \sim \Distr^n$ and $\vec{w} \sim \supp(\Distr^n)$, then, there exists a function $U$ convex and non-decreasing in virtual values $\vv(\vec{v})$ such that 
    \begin{align*}
      U(\vv(\vec{v})) = \overline{U}(\vv(\vec{v})), \hspace{1cm} \OnCAlloc(\vec{v}) = \nabla^{\vv} U(\vv(\vec{v})) \hspace{0.5cm}\text{ and } \hspace{0.5cm} \OnCBurn(\vec{v}) = \sum_{i = 1}^n \vv(v_i) \, \nabla^{\vv}_i U(\vv(\vec{v})) - U(\vv(\vec{v}))
    \end{align*}
    with probability $1$ for $\vec{v} \in \supp(\Distr^n)$, where $\nabla^{\vv}$ is the gradient with respect to the virtual values $\vv(\vec{v})$.
\end{corollary}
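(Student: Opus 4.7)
The plan is to combine \autoref{thm:SmootheningVirtualUtilityMaximization} with Rochet's characterization (\autoref{thm:Rochet}) via a change of variables from values to virtual values. The assumption on the miner's virtual utility $\OnCUtil$ in the statement is precisely the hypothesis that $(\OnCAlloc, \OnCBurn)$ admits a monopsonist smoothening $(\widetilde{\AllocRule}, \widetilde{\BurnB})$ that agrees with $(\OnCAlloc, \OnCBurn)$ with probability one; so the first step is to fix such a smoothening, and argue the desired identities for $(\widetilde{\AllocRule}, \widetilde{\BurnB})$ at every point in $\supp(\Distr^n)$. Any identity that holds everywhere for the smoothening then holds almost surely for $(\OnCAlloc, \OnCBurn)$.

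Next I would perform the change of variables to virtual-value space. Since $\vv$ is continuous and strictly increasing (regularity), $\vv^{-1}$ is well-defined on $\vv(\supp(\Distr))$, so I can set $\beta_i = \vv(v_i)$ and define
\[
\widehat{\AllocRule}(\vec{\beta}) \defeq \widetilde{\AllocRule}(\vv^{-1}(\beta_1), \dots, \vv^{-1}(\beta_n)),
\qquad
\widehat{\BurnB}(\vec{\beta}) \defeq \widetilde{\BurnB}(\vv^{-1}(\beta_1), \dots, \vv^{-1}(\beta_n)).
\]
The pointwise virtual-utility optimality of the smoothening translates exactly to the statement that, for all $\vec{\beta}, \vec{\gamma}$ in the image of $\vv$,
\[
\sum_{i=1}^n \beta_i \, \widehat{\AllocRule}_i(\vec{\beta}) - \widehat{\BurnB}(\vec{\beta}) \;\geq\; \sum_{i=1}^n \beta_i \, \widehat{\AllocRule}_i(\vec{\gamma}) - \widehat{\BurnB}(\vec{\gamma}).
\]
That is, $(\widehat{\AllocRule}, \widehat{\BurnB})$ is a DSIC direct mechanism for a single additive buyer whose per-item values are the virtual values $\vec{\beta}$, with allocation $\widehat{\AllocRule}$ and payment $\widehat{\BurnB}$.

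Now I apply \autoref{thm:Rochet} directly in $\vec{\beta}$-space. Define
\[
U(\vec{\beta}) \defeq \sum_{i=1}^n \beta_i \, \widehat{\AllocRule}_i(\vec{\beta}) - \widehat{\BurnB}(\vec{\beta}),
\]
the indirect utility of this fictitious monopsonist. By Rochet's theorem, $U$ is convex and non-decreasing, and for any version of $(\widehat{\AllocRule}, \widehat{\BurnB})$ arising from this utility one has $\widehat{\AllocRule}(\vec{\beta}) = \nabla U(\vec{\beta})$ and $\widehat{\BurnB}(\vec{\beta}) = \vec{\beta}^\intercal \nabla U(\vec{\beta}) - U(\vec{\beta})$ wherever $U$ is differentiable, which is all of the interior of the domain except on a Lebesgue-null set (by convexity). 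Substituting back $\vec{\beta} = \vv(\vec{v})$ gives $\nabla^{\vv} U(\vv(\vec{v})) = \widehat{\AllocRule}(\vv(\vec{v})) = \widetilde{\AllocRule}(\vec{v})$ and the analogous identity for $\widetilde{\BurnB}$. Since $(\widetilde{\AllocRule}, \widetilde{\BurnB}) = (\OnCAlloc, \OnCBurn)$ almost surely by definition of smoothening, and the exceptional set from Rochet has $\vv(\vec{v})$-pushforward measure zero under the smoothness assumption on $\Distr$, the stated identities hold with probability $1$ for $\vec{v} \sim \Distr^n$, and by construction $U(\vv(\vec{v})) = \OnCUtil(\vv(\vec{v})) = \overline{U}(\vv(\vec{v}))$.

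The main subtlety I would need to handle carefully is the interaction between the two ``almost surely'' sets: the null set on which the smoothening disagrees with $(\OnCAlloc, \OnCBurn)$, and the null set on which $U$ fails to be differentiable. Both are genuinely negligible (the first by construction, the second because convex functions are differentiable outside a Lebesgue-null set and $\vv$ is a continuous monotone bijection so the pushforward of $\Distr^n$ has no atoms where needed). Verifying that the null set from Rochet pulls back under $\vv^{-1}$ to a $\Distr^n$-null set is the only genuine technical check; the rest is packaging together the earlier results.
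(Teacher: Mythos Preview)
Your proposal is correct and follows essentially the same route as the paper: invoke the monopsonist smoothening (\autoref{thm:SmootheningVirtualUtilityMaximization}) to upgrade the almost-sure inequality to a pointwise one, reinterpret the smoothened pair as a DSIC mechanism for an additive buyer in virtual-value coordinates, and then apply Rochet's characterization (\autoref{thm:Rochet}). The paper states this derivation in a single line; your write-up supplies the change of variables and the handling of the two null sets explicitly, which is a faithful elaboration rather than a different argument.
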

We say that the function $U$ is the \emph{smoothened virtual utility function} of the mechanism $\DirReveal$.

\autoref{thm:VirtualCondition1} expresses condition \ref{Bul:1} from \autoref{thm:MainReduction} in terms of the smoothened virtual utility.
We can also similarly rewrite conditions \ref{Bul:2} and \ref{Bul:3}. 
Condition \ref{Bul:2} states that the mechanism should almost never allocate a user with a virtual value smaller than zero.
Thus, the gradient $\nabla_i^{\vv} U(\vv(\vec{v})) = \OnCAlloc_i(\vec{v}) = 0$ almost surely whenever $\vv(v_i) < 0$.
Since $U$ is convex in $\vv(\vec{v})$, $\nabla^{\vv}_i U(\vv(\vec{v}))$ is increasing in $\vv(\vec{v})$ and thus, $\nabla_i^{\vv} U(\vv(\vec{v})) = 0$ for all $v_i$ with a strictly negative virtual value.

Condition \ref{Bul:3} suggests that fabricating or censoring a bid with a non-positive virtual value should almost never change the miner's virtual utility.
Thus, for value profiles $\vec{v} = (v_1, \dots, v_n, v_{n+1}, \dots, v_{n+t})$ and $\vec{w} = (v_1, \dots, v_n)$ such that $\vv(v_{n+i}) \leq 0$ for all $1 \leq i \leq t$, the smoothened virtual utility function at $\vec{v}$ and $\vec{w}$ must satisfy $U(\vec{v}) = U(\vec{w})$.

We summarize the discussions above to state the burn identity.

\begin{theorem}[Burn identity] \label{thm:UtilityVersionMainReduction}
    Let $(\OnCAlloc, \OnCBurn)$ be the value allocation and burn rule of a Myerson-in-Range mechanism for a smooth regular distribution $\Distr$.
    Then, there exists a family of smoothened virtual utility functions $\seq{U^n}{n \in \N}$, $U^n: \R^n \xrightarrow{} \R$, such that for all $n \in \N$,
    \begin{enumerate}
        \item $U^n$ is convex and non-decreasing as a function of the virtual values of the bids,
        \item $\nabla_i^{\vv} U^n (\vv(\vec{v})) = 0$ whenever $\vv(v_i) < 0$,
        \item $U^n(\vv(\vec{v})) = U^{n+1}(\vv(\vec{w}))$ for all $\vec{v}, \vec{w}$ such that
        $\vec{v} = (v_1, \dots, v_n, v_{n+1}, \dots, v_{n+t}) \in \supp(\Distr^{n+t})$ and $\vec{w} = (v_1, \dots, v_n)$ for $\vv(v_{n+i}) \leq 0$ for all $1 \leq i \leq t$, and,
        \item With probability $1$ for $\vec{v} \sim \Distr^n$,
        $$\OnCAlloc(\vec{v}) = \nabla^{\vv} U^n(\vv(\vec{v})) \text{ and } \OnCBurn(\vec{v}) = \sum_{i = 1}^n \vv(v_i) \, \nabla^{\vv}_i U(\vv(\vec{v})) - U(\vv(\vec{v})).$$

    \end{enumerate}
\end{theorem}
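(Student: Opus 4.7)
The plan is to construct the family $\seq{U^n}{n \in \N}$ via the smoothening result, then verify each of the four listed properties by combining Rochet's characterization with the three canonical conditions of \autoref{thm:MainReduction}.

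First, for each fixed $n$, I would invoke \autoref{thm:SmootheningVirtualUtilityMaximization} to obtain a monopsonist smoothening $(\widetilde{\AllocRule}^n, \widetilde{\BurnB}^n)$ that agrees with $(\OnCAlloc, \OnCBurn)$ outside a measure-zero set and pointwise maximizes $\sum_i \vv(v_i)\AllocRule_i - \BurnB$ over feasible outcomes for every $\vec{v} \in \supp(\Distr^n)$. View $(\widetilde{\AllocRule}^n, \widetilde{\BurnB}^n)$ as the allocation and payment rule of a DSIC multi-item mechanism for a single additive buyer whose values are the virtual values $\vv(\vec{v})$. Invoking \autoref{thm:Rochet} (equivalently the packaged form in \autoref{thm:VirtualCondition1}), I obtain a convex non-decreasing function $U^n$ of the virtual value vector such that $\widetilde{\AllocRule}^n(\vec{v}) = \nabla^{\vv} U^n(\vv(\vec{v}))$ and $\widetilde{\BurnB}^n(\vec{v}) = \sum_i \vv(v_i)\nabla_i^{\vv} U^n(\vv(\vec{v})) - U^n(\vv(\vec{v}))$ pointwise; since $(\widetilde{\AllocRule}^n, \widetilde{\BurnB}^n)$ equals $(\OnCAlloc, \OnCBurn)$ almost surely, this immediately gives property 1 and property 4.

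Next, for property 2, I would combine condition \ref{Bul:2} of \autoref{thm:MainReduction} --- which gives $\OnCAlloc_i(\vec{v}) = 0$ almost surely whenever $\vv(v_i) < 0$ --- with property 4 to conclude $\nabla^{\vv}_i U^n(\vv(\vec{v})) = 0$ on a full-measure subset of $\{\vv(v_i) < 0\}$. Since $U^n$ is convex and non-decreasing, $\nabla_i^{\vv} U^n$ is non-negative and monotone non-decreasing in the $i$-th virtual value coordinate; a non-negative monotone function vanishing on a positive-measure set of $\{\vv(v_i) < 0\}$ must vanish identically on $\{\vv(v_i) < 0\}$, promoting the statement from almost-everywhere to pointwise.

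For property 3, I would translate condition \ref{Bul:3} of \autoref{thm:MainReduction} into an almost-sure identity $U^{n+t}(\vv(v_1,\ldots,v_{n+t})) = U^n(\vv(v_1,\ldots,v_n))$ whenever the last $t$ virtual values are non-positive (the burn and allocation pieces combine into exactly the virtual utility). Promoting this to hold for all $\vec v,\vec w\in\supp(\Distr^{n+t})\times\supp(\Distr^n)$ is the main obstacle. Here I would use that convex functions on convex domains are automatically continuous in the interior, so $U^n$ and $U^{n+t}$ are continuous; combined with smoothness of $\Distr$ (which by \autoref{thm:MainReduction} ensures a dense set of ``good'' profiles both in the $n$-coordinate subspace and in the $\{\vv \leq 0\}$-fibers), the almost-sure identity extends by continuity to all of $\supp(\Distr^{n+t})$. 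If necessary, I would redefine $U^n$ on the boundary to be the unique continuous extension from the interior, preserving convexity and all other properties.

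The most delicate step is the continuity argument in property 3: one must check that the support of $\Distr^n$ is actually rich enough for density arguments, and that restricting to $\{\vv(v_{n+i}) \leq 0\}$-fibers still leaves a dense set on which the identity holds. The smoothness hypothesis on $\Distr$ (continuous $\vv$ and $\Pr[\vv(v)\le 0] > 0$) is designed precisely to enable this, so I expect the extension to go through cleanly once the a.s.\ identity is recast in terms of the continuous functions $U^n$.
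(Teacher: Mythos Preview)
Your proposal is correct and follows essentially the same route as the paper: obtain $U^n$ via \autoref{thm:SmootheningVirtualUtilityMaximization} and \autoref{thm:VirtualCondition1} (Rochet), read off properties 1 and 4, then translate conditions \ref{Bul:2} and \ref{Bul:3} of \autoref{thm:MainReduction} into properties 2 and 3 respectively, using convexity of $U^n$ (monotonicity of the gradient) to promote the almost-sure statement in \ref{Bul:2} to a pointwise one. Your continuity argument for property 3 is more careful than the paper, which simply asserts that condition \ref{Bul:3} yields $U^{n+t}(\vv(\vec v)) = U^n(\vv(\vec w))$ without spelling out the passage from almost-sure to pointwise; your use of interior continuity of convex functions together with the density guaranteed by smoothness of $\Distr$ is a reasonable way to fill that in.
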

Finally, we will combine the burn identity with the payment identity (\autoref{item:payment-identity} in \autoref{thm:myerson}) to derive a DSIC payment rule in terms of the smoothened virtual utility function.
From the payment identity,
$$\OnCPay_i(\vec{v}) = v_i \OnCAlloc_i(\vec{v}) - \int_0^{v_i} \OnCAlloc_i(\upsilon, \vec{v}_{-i}) \,d \upsilon.$$
The smoothened virtual utility function determines the allocation rule everywhere except at a measure zero set of points.
Further, since the distribution $\Distr$ is regular, all points between the infimum and supremum of $\supp(\Distr)$ have a positive probability density and as a consequence, even when $\OnCAlloc$ differs from $\nabla^{\vv} U$ at a measure zero set of points,
$\int_0^{v_i} \OnCAlloc_i(\upsilon, \vec{v}_{-i}) \,d \upsilon = \int_0^{v_i} \nabla^{\vv}_i U(\vv(\upsilon), \vv(\vec{v}_{-i})) \,d \upsilon$,
almost everywhere.
Thus, the payment rule of a Myerson-in-Range mechanism with a smoothened utility function $U$ equals
$$\OnCPay_i(\vec{v}) = v_i \nabla^{\vv}_i U(\vv(\vec{v})) - \int_0^{v_i} \nabla^{\vv}_i U(\vv(\upsilon), \vv(\vec{v}_{-i})) \,d \upsilon$$
with probability $1$.

To conclude our discussion on Myerson-in-Range mechanisms, by \autoref{thm:MyersonEnvelopeIffOffCIP}, observe that a direct-revelation mechanism is off-chain influence proof only if it has a trivial off-chain component and its burn rule satisfies the burn identity.

The reduction from TFMs to multi-item auctions characterizing off-chain influence proof mechanisms can be extended to also characterize global strong collusion proof mechanisms.
In \autoref{sec:MIRC}, we use our characterization to design a mechanism that is on-chain simple and global strong collusion proof which yields a positive revenue to the miner.\footnote{
  \cite{ChungRS24, GafniY24} show that no such deterministic mechanism exists that yields positive miner revenue.
  We construct a randomized mechanism with positive miner revenue.
  Our mechanism applies for both infinite and bounded block size, although for a finite block, our mechanism requires the prior distribution to have a known upper bound (and it remains an open question whether such mechanisms exist without a known upper bound).
}

\section[Prior-Independent OnC-MS and OffCIP TFMs Require Both Cryptography and Advice]{Prior-Independent On-Chain Miner Simple and Off-Chain Influence Proof Mechanisms Require Both Cryptography and Miner Advice} \label{sec:PriorIndependent}

We now use the burn identity derived in \autoref{sec:MonopsonistLens} to reason about desirable TFMs in various settings.
Going forward, we focus on TFMs that satisfy all the desirable properties highlighted by \citet{GaneshTW24}, and call these mechanisms \emph{simple to participate}.

\begin{definition}
    A direct-revelation mechanism $\DirReveal$ is \emph{simple to participate}
    if it is on-chain user simple, on-chain miner simple, and off-chain influence proof.
\end{definition}

In this section, we consider prior independent mechanisms, i.e., ones where the block-building process $\bBuild$ has no knowledge of the distribution of values $\Distr$. 
\citet{GaneshTW24} show that there exists a cryptographic variant of the second-price auction that is prior-independent and simple to participate for all regular distributions $\Distr$.
However, their auction requires two features: heavy-duty cryptography, and allowing the miner to inform the mechanism through an advice (namely, allowing the miner to set a reserve).
We show in this section that \emph{both} of these features are necessary to get a nontrivial prior independent simple-to-participate TFM.

First, we show that the only (possibly cryptographic) prior-independent mechanism that does not rely on any advice from the miner and is simple to participate for a sufficiently rich class of distributions is the trivial mechanism that allocates users with zero probability irrespective of the distribution.
In order to satisfy condition \ref{Bul:2} of \autoref{thm:MainReduction}, an off-chain influence proof mechanism should be able to discriminate between users with values smaller than the monopoly reserve and the rest only using information available to the block-building process $\bBuild$, so as to allocate the former with zero probability.
However, $\bBuild$ is prior-independent and is therefore agnostic to the monopoly reserve of $\Distr$.\footnote{The reason we require on-chain user simplicity for \autoref{thm:NoAdvice} is subtle. Suppose the mechanism is not on-chain user simple and the users adopt different strategies for different distributions. The mechanism can potentially infer the value distribution based on the bids submitted by the users. However, when the mechanism is on-chain user simple, there is truly no information regarding the distribution that can be inferred from the users' behaviour. The users will always bid their values in equilibrium. We leave as an open problem whether the same impossibility can be proved purely via off-chain influence proofness, without having to rely on on-chain user simplicity.}

\begin{theorem} \label{thm:NoAdvice}
    Consider the class $\RegDist$ of smooth regular distributions.
    Then, an on-chain user simple and off-chain influence proof mechanism that is both prior-independent and advice-independent for all $\Distr \in \RegDist$ must be the trivial mechanism that never allocates any user irrespective of the block capacity.
\end{theorem}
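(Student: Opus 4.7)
The plan is to exploit the tension between prior-independence and condition~\ref{Bul:2} of \autoref{thm:MainReduction}: off-chain influence proofness forces $\OnCAlloc_i$ to vanish whenever $\vv_\Distr(v_i) < 0$, yet prior-independence prevents the allocation rule from ``knowing'' the monopoly reserve of any particular $\Distr$. By ranging over a family of distributions whose monopoly reserves grow without bound, I will force the allocation rule to be zero on essentially all of $\R_{\geq 0}^n$.

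First I would set up the reduction: under prior-independence, advice-independence, and on-chain user simplicity, the block-building process collapses to a single function $\bBuild(\vec{b}) = (\AllocRule(\vec{b}), \Pay(\vec{b}), \Burn(\vec{b}))$ that does not depend on $\Distr$, and users bid truthfully in the honest equilibrium. Since this honest equilibrium has trivial off-chain component, \autoref{thm:MyersonEnvelopeIffOffCIP} combined with off-chain influence proofness makes $(\AllocRule, \Pay, \Burn)$ Myerson-in-Range for \emph{every} $\Distr \in \RegDist$ simultaneously. On-chain user simplicity (DSIC) additionally forces $\AllocRule_i(\cdot, \vec{v}_{-i})$ to be monotone non-decreasing in its first coordinate.

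Next I would instantiate an explicit family $\{\Distr_M\}_{M > 0} \subseteq \RegDist$ whose monopoly reserves diverge; the exponentials $\mathsf{Exp}(1/M)$ work, since their virtual value $v - M$ is continuous and monotone, their monopoly reserve is $M$, and they admit an everywhere-positive density on $\R_{\geq 0}$. Applying condition~\ref{Bul:2} of \autoref{thm:MainReduction} at $\Distr_M$ and using that $\Distr_M^n$ is absolutely continuous with respect to Lebesgue measure with strictly positive density gives $\AllocRule_i(\vec{v}) = 0$ for Lebesgue-almost every $\vec{v}$ with $v_i < M$. A countable union over $M \in \N$ then yields $\AllocRule_i = 0$ Lebesgue-almost everywhere on $\R_{\geq 0}^n$. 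Finally, Fubini together with monotonicity promotes this to pointwise vanishing: for a.e.\ $\vec{v}_{-i}$, the slice $v_i \mapsto \AllocRule_i(v_i, \vec{v}_{-i})$ is a nonnegative, monotone non-decreasing function that is zero almost everywhere, hence identically zero, which forces the mechanism to allocate with zero probability essentially everywhere.

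The main obstacle I anticipate is careful handling of the measure-zero caveats: \autoref{thm:MainReduction}\ref{Bul:2} only provides a.e.\ equalities (under whatever distribution is in play), so the argument must bridge to a fully pointwise zero-allocation statement via Fubini and the monotonicity of each $\AllocRule_i$ in its own coordinate. A secondary subtlety is producing distributions that simultaneously lie in $\RegDist$ (smooth and regular), have full support on $\R_{\geq 0}$, and have arbitrarily large monopoly reserve; exponentials handle this cleanly, but any family with diverging monopoly reserve and a continuous, everywhere-positive density on $\R_{\geq 0}$ would suffice.
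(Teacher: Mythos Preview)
Your proposal is correct and follows essentially the same approach as the paper: invoke condition~\ref{Bul:2} of \autoref{thm:MainReduction} against distributions with arbitrarily large monopoly reserve (the paper just asserts existence, you concretely use exponentials), and use monotonicity from on-chain user simplicity to pass between pointwise and a.e.\ statements. The only cosmetic difference is that the paper argues by contradiction (positive allocation at one point $\Rightarrow$ positive on an upward cone of positive measure $\Rightarrow$ violation of~\ref{Bul:2}), whereas you argue directly (a.e.\ zero for each $M$ $\Rightarrow$ Lebesgue-a.e.\ zero $\Rightarrow$ pointwise zero via monotone slices); your measure-theoretic bookkeeping is more explicit but the core idea is the same.
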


We prove \autoref{thm:NoAdvice} in \autoref{sec:ProofofNoAdvice}.

Second, we also rule out plain-text mechanisms that are simple to participate for a sufficiently rich class of distributions.
At a high level, suppose there exists a distribution $\Distr$ and a value profile $\vec{v}$ for which the miner is awarded a revenue larger than the block reward from building an empty block.
For another distribution $\hat{\Distr}$ with a monopoly reserve much larger than all values in $\vec{v}$, by condition \ref{Bul:2} from \autoref{thm:MainReduction}, all users must receive no allocation when their value profile equals $\vec{v}$.
However, the miner can deviate from the equilibrium behaviour, submit the advice $\adv^{\Distr}$ corresponding to $\Distr$ and trick the mechanism into believing that the users' values are drawn from $\Distr$.
The miner will then receive a revenue larger than the block reward, contradicting on-chain miner simplicity.

\begin{theorem} \label{thm:NoCrypto}
    Consider the class $\RegDist$ of smooth regular distributions $\Distr$ with no point masses, i.e, there exists no value $\hat v$ such that $\Pr_{v \sim \Distr}[v = \hat{v}] > 0$.
    Then, a simple-to-participate plain-text prior-independent mechanism for all $\Distr \in \RegDist$ must be the trivial mechanism that almost never allocates any user irrespective of the block capacity or the distribution $\Distr$.
\end{theorem}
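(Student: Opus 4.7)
The proof proceeds by contradiction. Suppose the mechanism is nontrivial, so that for some $\Distr_0 \in \RegDist$, the direct-revelation mechanism induced by the simple-to-participate equilibrium for $\Distr_0$ has an allocation rule $\OnCAlloc$ that is strictly positive on a set of positive $\Distr_0^n$-probability. The first step is to extract a pointwise witness: a specific bid profile $\vec{v}_0$ at which the miner's pointwise revenue $\sum_i \OnCPay_i(\vec{v}_0) - \OnCBurn(\vec{v}_0)$ is strictly positive. On-chain miner simplicity (via the ``censor all bids and submit a null advice'' deviation) forces the pointwise revenue to be non-negative at every profile, while non-triviality together with Myerson's lemma and the burn identity \autoref{thm:UtilityVersionMainReduction} gives an expected revenue of $\E{\sum_i \OnCPay_i - \OnCBurn} = \E{U^n(\vv(\vec{v}))} > 0$; hence, the pointwise revenue is strictly positive on a set of positive $\Distr_0^n$-measure, yielding the desired $\vec{v}_0$.

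The second step is to pick a distribution $\hat{\Distr} \in \RegDist$ whose monopoly reserve $p^*_{\hat{\Distr}}$ strictly exceeds $\max_i v_{0,i}$ (for example, a uniform on $[0,M]$ with $M$ sufficiently large) and whose support contains a neighborhood of $\vec{v}_0$. Under $\hat{\Distr}$, every coordinate of $\vec{v}_0$ has a strictly negative virtual value. Condition \ref{Bul:2} of \autoref{thm:MainReduction} applied to the $\hat{\Distr}$-equilibrium thus forces the allocation to vanish almost surely on the open set of negative-virtual-value profiles, and hence the $\hat{\Distr}$-equilibrium pointwise revenue at $\vec{v}_0$ is zero.

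The third step is the decisive deviation. Since the mechanism is prior-independent and plaintext, the block-building process $\bBuild(\adv, \vec{v})$ is a fixed, distribution-agnostic function of the advice and the bids. In the $\hat{\Distr}$-world, consider the miner's deviation of submitting the equilibrium advice $\adv^{\Distr_0}$ from the $\Distr_0$-world in place of her prescribed $\adv^{\hat{\Distr}}$. When users bid truthfully $\vec{v}_0$, this deviation realizes exactly the $\Distr_0$-equilibrium outcome $\bBuild(\adv^{\Distr_0}, \vec{v}_0)$, and in particular the strictly positive pointwise revenue of step one. This strictly exceeds the zero equilibrium revenue of step two, contradicting on-chain miner simplicity at the bid profile $\vec{v}_0$.

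The main obstacle is the almost-sure versus pointwise gap in step two: condition \ref{Bul:2} of \autoref{thm:MainReduction} only provides an exception set $E$ of $\hat{\Distr}^n$-measure zero on which the allocation may fail to vanish at negative virtual values, and we need $\vec{v}_0 \notin E$ in order to conclude zero equilibrium revenue at $\vec{v}_0$ pointwise. To exclude $E$, I would refine the selection of $\vec{v}_0$ as follows: by the no-point-mass assumption on $\Distr_0$, the set $S$ of profiles with strictly positive $\Distr_0$-pointwise revenue has positive Lebesgue measure; choosing $\hat{\Distr}$ whose density is positive on $S$ ensures $S$ also has positive $\hat{\Distr}^n$-measure, so $S \cap E^c$ is non-empty and any $\vec{v}_0$ in this intersection works. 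An auxiliary, easily-verified point is that submitting off-equilibrium advice is a legitimate on-chain miner deviation covered by on-chain miner simplicity.
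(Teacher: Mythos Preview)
Your approach is essentially the paper's: locate a profile $\vec v_0$ where the $\Distr_0$-equilibrium yields revenue strictly above some baseline, pick $\hat\Distr$ so that every coordinate of $\vec v_0$ has negative virtual value (forcing zero allocation at $\vec v_0$ under $\hat\Distr$), and then deviate in the $\hat\Distr$-world by submitting the advice $\adv^{\Distr_0}$. Your measure-zero handling in the last paragraph is also in the spirit of the paper.

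There is, however, a genuine gap: you silently take the baseline to be \emph{zero}, and this is not justified. In step~1, the ``censor all bids and submit a null advice'' deviation does not give the miner revenue $0$; it gives her $-\Burn(a,\emptyset)$ for whatever advice $a$ she submits, i.e.\ the block reward. Likewise, your claim $\E{U^n(\vv(\vec v))}>0$ should read $>{}$block reward: the burn identity only gives $U^n\ge U^0=-\OnCBurn(\emptyset)$, not $U^n\ge 0$. In step~2, zero allocation under $\hat\Distr$ forces zero \emph{payments}, but the burn $\Burn(\adv^{\hat\Distr},\vec v_0)$ need not vanish; the $\hat\Distr$-equilibrium revenue at $\vec v_0$ is the $\hat\Distr$-block reward $-\OnCBurn^{\hat\Distr}(\emptyset)$, not $0$. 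Consequently, your step-3 comparison reduces to showing that the $\Distr_0$-equilibrium revenue at $\vec v_0$ (which exceeds the $\Distr_0$-block reward) also exceeds the $\hat\Distr$-block reward --- and for this you need the two block rewards to coincide. The paper supplies exactly this missing step: with $n=0$ users, on-chain miner simplicity for each of $\Distr_0,\hat\Distr$ forbids the miner from profiting by swapping advice, which forces $-\OnCBurn^{\Distr_0}(\emptyset)=-\OnCBurn^{\hat\Distr}(\emptyset)$. Insert this $n=0$ argument (and replace ``$0$'' by ``block reward'' throughout) and your proof is complete.
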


We defer the proof to \autoref{sec:ProofNoCrypto}

\section[Prior-Dependent, OnC-US, OnC-MS, and OffCIP TFMs]{Prior-Dependent, On-Chain User and Miner Simple, Off-Chain Influence Proof, Plain-Text Mechanisms} \label{sec:ImpossibilitesandPossibilities}

In this section, we consider the case of prior-dependent mechanisms, i.e., we assume the block-building process $\bBuild$ knows the distribution $\Distr$ of user values. 
We focus on plain-text TFMs.
While the two main desiderata of \autoref{sec:PriorIndependent} are no longer relevant for this section,\footnote{
    \autoref{sec:PriorIndependent} considers whether the TFM needs to use miner advice, and whether it needs to use cryptography.
    A prior-dependent TFM does not require miner advice, since (for example) the block-building process itself can now know the right reserve price to set based on the distribution $\Distr$.
    Similarly, the simple-to-participate cryptographic variant of the second-price auction discussed by \citet{GaneshTW24} for the prior-independent setting can easily be modified for the prior-dependent setting --- by simulating the miner's advice for the distribution $\Distr$.
    So, we focus solely on the plain-text case.
} other considerations emerge from the analysis as important.
Specifically, we explore whether simple-to-participate mechanisms exist even for finite blocks, whether randomization is necessary for a TFM and finally, whether the distribution can have an arbitrary unbounded support.
While we uncover additional TFMs in some settings, our main finding is that \emph{posted-price mechanisms} (for an infinite block) as discussed in \autoref{sec:PostedPrice} are the only mechanisms that are deterministic, plain-text and simple to participate.

Throughout this section, we find it convenient to rank users $(1), \dots (n)$ in descending order of bids.
For notational convenience, we will denote the quantities corresponding to the $i$\textsuperscript{th} largest bid by $(i)$ (for example, $v^{(i)}$, $\OnCAlloc^{(i)} = \nabla_{(i)}^{\vv} U^n$, etc).

\subsection[A Sufficient Condition for Prior-Dependent OnCUS and OffCIP Mechanisms]{A Sufficient Condition for Prior-Dependent On-Chain User Simple and Off-Chain Influence Proof Mechanisms}
\label{sec:OnCUS-plus-OffCIP}

Before constructing simple-to-participate mechanisms, it would be useful to have a sufficient condition that ensures that a candidate mechanism $\DirReveal$ is indeed on-chain user simple and off-chain influence proof.
We will begin with such a sufficient condition very similar to the three properties in \autoref{thm:MainReduction}.

The three properties in \autoref{thm:MainReduction} do not characterize Myerson-in-Range mechanisms due to the following reason --- the three conditions do not automatically guarantee that, for some block-building process $\bBuild = (\AllocRule, \Pay, \Burn)$, $\vv(\vec{v}) \, \OnCAlloc(\vec{v}) - \OnCBurn(\vec{v}) \geq \vv(\vec{v}) \, \AllocRule - \BurnB$ for all $(\AllocRule, \BurnB) \in \Feasibility$.
It only ensures that $\vv(\vec{v}) \, \OnCAlloc(\vec{v}) - \OnCBurn(\vec{v}) \geq \vv(\vec{v}) \, \AllocRule - \BurnB$ for $(\AllocRule, \BurnB) = (\OnCAlloc(\vec{w}), \OnCBurn(\vec{w}))$ for some value profile $\vec{w}$.
However, suppose that the mechanism is on-chain user simple and $(\OnCAlloc(\vec{v}), \OnCPay(\vec{v}), \OnCBurn(\vec{v})) = (\AllocRule(\vec{v}), \Pay(\vec{v}), \Burn(\vec{v}))$ for all $\vec{v}$.\footnote{Remember that the mechanism is prior-dependent and we are assuming that the miner does not submit any advice to the block-building process $(\AllocRule, \Pay, \Burn)$. Thus, $\AllocRule$, $\Pay$ and $\Burn$ take as input a profile of bids and output the allocations, payments and the burn respectively.}
Then, $\vv(\vec{v}) \, \OnCAlloc(\vec{v}) - \OnCBurn(\vec{v}) \geq \vv(\vec{v}) \, \AllocRule - \BurnB$ for all $(\AllocRule, \BurnB) \in \Feasibility$.
In this section, we argue ensuring that
the direct-revelation mechanism realizes all feasible outcomes in $\Feasibility$, along with the three conditions from \autoref{thm:MainReduction}, is sufficient for the mechanism to be Myerson-in-Range.

\begin{lemma} \label{thm:MainReductionConverse}
    Let $\Distr$ be a smooth regular distribution.
    For a prior-dependent block-building process $\bBuild = (\AllocRule, \Pay, \Burn)$, suppose a direct-revelation mechanism $\DirReveal$ satisfies $\DirReveal = (\AllocRule, \Pay, \Burn)$.
    Further, let $\DirReveal$ satisfy the three conditions from \autoref{thm:MainReduction} for any number $n$ of users and for all value profiles belonging to $\supp(\Distr^n)$.
    \begin{enumerate}[(A)]
        \item (Optimal for $n$ users.) For all $\vec{v}, \vec{w} \in \supp(\Distr^n)$, we have
        $$\sum_{i = 1}^n \vv(v_i) \, \OnCAlloc_i(\vec{v}) - \OnCBurn(\vec{v}) \geq \sum_{i = 1}^n \vv(v_i) \, \OnCAlloc_i(\vec{w}) - \OnCBurn(\vec{w}).$$
        \item (Negative $\vv$'s are suboptimal.) For $\vec{v} \in \supp(\Distr^{n})$, $\OnCAlloc_i(\vec{v}) = 0$ whenever $\vv(v_i) < 0$.
        \item (No censoring or fabricating.) For $\vec{v} = (v_1, \dots, v_n, v_{n+1}, \dots, v_{n+t}) \in \supp(\Distr^{n}) \times \Distr_{\vv \leq 0}^t$ and $\vec{w} = (v_1, \dots, v_{n})$,
        $$\sum_{i = 1}^{n} \vv(v_i) \OnCAlloc_i(\vec{v}) \allowbreak - \OnCBurn(\vec{v}) \allowbreak = \sum_{i = 1}^{n} \vv(v_i) \OnCAlloc_i(\vec{w}) \allowbreak - \OnCBurn(\vec{w}).$$
        Then, $\DirReveal$ is Myerson-in-Range.
    \end{enumerate}    
\end{lemma}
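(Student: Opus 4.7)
The plan is to verify the three characterizing conditions of \autoref{thm:MyerImpliesVirtualUtilityOpt} for the mechanism $\DirReveal$: monotonicity of $\OnCAlloc$, the payment identity for $\OnCPay$, and pointwise maximization of $\sum_i \vv(v_i) \AllocRule_i - \BurnB$ over $\Feasibility$. Once these are checked, the claim follows directly from \autoref{thm:MyerImpliesVirtualUtilityOpt}.

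Two of these are short. The payment identity is immediate from \autoref{def:DirectRevMechanism}. For monotonicity, I apply hypothesis (A) to the pair $\vec{v}$ and $\vec{v}' = (v_i', \vec{v}_{-i})$ with $v_i' > v_i$, then reverse the roles, and add the two inequalities. The cancellation in indices $j \ne i$ (same values on both sides) leaves
\[
(\vv(v_i) - \vv(v_i'))(\OnCAlloc_i(\vec{v}) - \OnCAlloc_i(\vec{v}')) \ge 0,
\]
so strict monotonicity of $\vv$ (by regularity of $\Distr$) forces $\OnCAlloc_i(\vec{v}') \ge \OnCAlloc_i(\vec{v})$.

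The crux is pointwise optimality. Since $\DirReveal = (\AllocRule, \Pay, \Burn)$, every $(\AllocRule^*, \BurnB^*) \in \Feasibility$ arises as $(\OnCAlloc(\vec{b}), \OnCBurn(\vec{b}))$ for some bid profile $\vec{b}$, together with an anonymity-fixing injection $\sigma$ of the real users $[n]$ into positions of $\vec{b}$ (where $\sigma(i) = \bot$ denotes censoring and contributes $0$ to the virtual welfare). Fixing $\vec{v} \in \supp(\Distr^n)$ and any such $(\vec{b}, \sigma)$, the goal is
\[
\sum_{i=1}^n \vv(v_i)\, \OnCAlloc_i(\vec{v}) - \OnCBurn(\vec{v}) \;\ge\; \sum_{i:\sigma(i) \ne \bot} \vv(v_i)\, \OnCAlloc_{\sigma(i)}(\vec{b}) - \OnCBurn(\vec{b}).
\]
The key tool is a value $v_0 \in \supp(\Distr)$ with $\vv(v_0) = 0$, guaranteed by continuity of $\vv$ together with the smoothness hypothesis $\Pr_{v \sim \Distr}[\vv(v) \le 0] > 0$. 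The strategy is to pad both sides with $v_0$-slots---which is virtual-utility-preserving by condition (C), since $\vv(v_0) = 0$ places $v_0$ inside $\supp(\Distr_{\vv \le 0})$---until the two profiles share a common length $n+t$ in $\supp(\Distr^{n+t})$, at which point hypothesis (A) delivers the inequality directly. Censored real users with $\vv(v_i) < 0$ contribute $0$ to the truthful side by condition (B), matching their zero contribution in the deviation; the remaining censored slots are absorbed into the padding with $v_0$.

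The main obstacle is that fabricated bids (or deviated bids submitted on behalf of real users) may lie outside $\supp(\Distr)$, whereas hypothesis (A) only controls in-support profiles. I expect to resolve this by arguing that the supremum of $\sum_i \vv(v_i) \AllocRule_i - \BurnB$ over $\Feasibility$ is always attained at an in-support bid profile: any out-of-support fabricated bid with $\vv > 0$ can be replaced by an in-support bid whose virtual value is weakly larger (by monotonicity of $\vv$ and the structure of the support) without harming the deviator's objective, and any out-of-support bid with $\vv \le 0$ is absorbed by condition (C) together with $v_0$-padding. Combined with the padding argument above, this reduction verifies condition (c) of \autoref{thm:MyerImpliesVirtualUtilityOpt} and completes the proof.
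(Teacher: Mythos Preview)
Your approach is correct and follows the same core idea as the paper: establish pointwise virtual-utility maximization over $\Feasibility$ by padding with a value $v_0$ satisfying $\vv(v_0)=0$ (via condition (C)) and then invoking condition (A). The paper's proof differs only in that it unrolls your single application of (A) into an explicit two-stage chain: first pad $\vec v$ to $(\vec v,\vec v_{\mathsf{fake}})$, then compare via (A) to an intermediate profile $(\vec v_{\mathsf{coerce}},\vec v_{\mathsf{censor}},\vec v_{\mathsf{fake}})$ in which each censored user with $\vv(v_i)>0$ is replaced by the monopoly reserve, lower the censored users' virtual-value weights to zero pointwise, drop those coordinates by (C), and finally compare via (A) to the actual deviation $(\vec w_{\mathsf{coerce}},\vec w_{\mathsf{fake}})$. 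This makes rigorous what you gesture at with ``the remaining censored slots are absorbed into the padding with $v_0$'': a single application of (A) on padded profiles does not by itself zero out the censored users' contribution, because (A) compares allocations at the \emph{same} virtual-value weights, so you need the intermediate step that drops those weights.

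On your flagged obstacle about out-of-support fabricated bids: the paper simply does not treat it, implicitly taking $\Feasibility$ to range over $\vec b\in\supp(\Distr)^m$ in the prior-dependent setting. Your proposed fix---replacing an out-of-support bid by an in-support bid with weakly larger virtual value---is not justified as stated: changing one coordinate of $\vec b$ alters both $\OnCAlloc(\vec b)$ and $\OnCBurn(\vec b)$, and nothing in the hypotheses controls how the miner's objective moves under such a replacement. It is cleaner to adopt the paper's implicit convention and restrict $\Feasibility$ to in-support bid vectors.
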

We defer the proof to \autoref{sec:ProofofMainReductionConverse}.

Note that the direct-revelation mechanism $\DirReveal$ satisfying the requirements in \autoref{thm:MainReductionConverse} is on-chain user simple and off-chain influence proof.
For on-chain user simplicity, observe that the block-building process takes in bids and already implements the payment rule corresponding to the DSIC direct-revelation mechanism.
For off-chain influence proofness, once again, equilibrium payments are charged from the users by the payment rule $\Pay = \OnCPay$ on-chain, and thus, the miner will not have to make any extraneous off-chain transfers to steer users towards the underlying BNE.
Thus, the Myerson-in-Range mechanism $\DirReveal$ has a trivial off-chain component and by \autoref{thm:MyersonEnvelopeIffOffCIP}, the mechanism is off-chain influence proof.

Remember that the payment identity in the multi-item environment (\autoref{thm:Rochet}) is both necessary and sufficient for a mechanism to be DSIC for a monopsonist.
Applying \autoref{thm:Rochet} for a virtual utility maximizing miner, we get a partial converse to \autoref{thm:UtilityVersionMainReduction}.

\begin{theorem}[Partial converse to the burn identity] \label{thm:VirtualPartialConverse}
    For a prior-dependent block-building process $\bBuild = (\AllocRule, \Pay, \Burn)$ and a direct-revelation mechanism $\DirReveal = (\AllocRule, \Pay, \Burn)$, suppose there exists a family of virtual utility functions $\seq{U^n}{n \in \N}$ 
    such that 
    \begin{align*}
        \notag
        \OnCAlloc(\vec{v}) = \nabla^{\vv} U^n(\vv(\vec{v})), \hspace{0.3cm} \OnCPay_i(\vec{v}) = v_i \nabla^{\vv}_i U(\vv(\vec{v})) - \int_0^{v_i} \nabla^{\vv}_i U(\vv(\upsilon),& \vv(\vec{v}_{-i})) \,d \upsilon \\
        \text{ and } \hspace{0.15cm} \OnCBurn(\vec{v}) = \sum_{i = 1}^n \vv(v_i) \, \nabla^{\vv}_i U(\vv(\vec{v})) - U(\vv(\vec{v}))
    \end{align*}
    satisfying
\begin{enumerate}
    \item $U^n$ is convex and non-decreasing as a function of the virtual values of the bids,
    \item $\nabla_i^{\vv} U^n (\vv(\vec{v})) = 0$ whenever $\vv(v_i) < 0$, and,
    \item $U^n(\vv(\vec{v})) = U^{n+t}(\vv(\vec{v}), \vv(\hat{v}))$ for all $\hat{v} = (\hat{v}_{n+1}, \dots, \hat{v}_{\hat{n}})$ such that $\vv(\hat v_{n+i}) \leq 0$.
\end{enumerate}
for all $n \in \N$.
Then, $\DirReveal$ is on-chain user simple and off-chain influence proof.
\end{theorem}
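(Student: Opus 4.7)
The plan is to reduce this claim to \autoref{thm:MainReductionConverse}, which says any direct-revelation mechanism coinciding with the block-building process and satisfying conditions (A), (B), (C) is Myerson-in-Range. Since the hypotheses include $\DirReveal = (\AllocRule, \Pay, \Burn)$, off-chain influence proofness will follow from \autoref{thm:MyersonEnvelopeIffOffCIP} once we verify those three conditions. On-chain user simplicity is handled separately: the payment formula
$$\OnCPay_i(\vec{v}) = v_i \nabla^{\vv}_i U^n(\vv(\vec{v})) - \int_0^{v_i} \nabla^{\vv}_i U^n(\vv(\upsilon), \vv(\vec{v}_{-i}))\, d\upsilon$$
is exactly the Myerson payment identity applied to the (monotone) allocation rule $\OnCAlloc_i$, so by Myerson's lemma each user's truthful bid is dominant on-chain.

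For condition (A), I would invoke Rochet's theorem (\autoref{thm:Rochet}) in the monopsonist reinterpretation. Because $U^n$ is convex and non-decreasing in the virtual values $\vv(\vec{v})$, it is the utility function of some DSIC single-buyer mechanism on $n$ items with buyer value $\vv(\vec{v})$, and its induced allocation and payment rules are precisely $\nabla^{\vv} U^n$ and $\sum_i \vv(v_i) \nabla_i^\vv U^n - U^n$, which equal $\OnCAlloc$ and $\OnCBurn$ by hypothesis. The DSIC inequality in the multi-item single-buyer setting, written in terms of these primitives, is exactly condition (A). Condition (B) is immediate from hypothesis 2.

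For condition (C), with $\vec{v} = (v_1, \ldots, v_{n+t})$ and $\vec{w} = (v_1, \ldots, v_n)$ where $\vv(v_{n+j}) \le 0$, the terms $\vv(v_{n+j}) \OnCAlloc_{n+j}(\vec{v})$ vanish since $\OnCAlloc_{n+j}(\vec{v}) = \nabla_{n+j}^\vv U^{n+t}(\vv(\vec{v})) = 0$ by hypothesis 2. It therefore suffices to show $\nabla_i^\vv U^{n+t}(\vv(\vec{v})) = \nabla_i^\vv U^n(\vv(\vec{w}))$ for $i \le n$ and $U^{n+t}(\vv(\vec{v})) = U^n(\vv(\vec{w}))$. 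The latter is hypothesis 3 directly. For the former, hypothesis 3 makes $U^{n+t}$ coincide with $U^n$ on the entire slice of $\R^{n+t}$ consisting of $(\vv(\vec{w}), \vec{y})$ with $\vec{y} \le 0$; in particular the restriction of $U^{n+t}$ to the first $n$ virtual-value coordinates (with the last $t$ fixed in the non-positive region) equals $U^n$, so partial derivatives in those $n$ coordinates agree, yielding equality of both allocations $\OnCAlloc_i$ and burns.

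The main obstacle I anticipate is precisely this gradient-consistency step in (C): one must ensure that agreement of $U^{n+t}$ and $U^n$ on a slice translates to agreement of partial derivatives at exactly the points required, rather than merely agreement of values. The resolution is that hypothesis 3 is strong enough to give coincidence on an open (in the first $n$ coordinates) neighborhood within the slice, since the non-positive virtual value region has nonempty interior for a smooth regular distribution; differentiating along the slice then yields the needed equality. Once all three conditions are confirmed, \autoref{thm:MainReductionConverse} delivers Myerson-in-Range, \autoref{thm:MyersonEnvelopeIffOffCIP} delivers off-chain influence proofness, and the Myerson payment identity delivers on-chain user simplicity, completing the proof.
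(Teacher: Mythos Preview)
Your proposal is correct and follows essentially the same route the paper takes: apply Rochet's theorem (\autoref{thm:Rochet}) to translate the convexity hypothesis into condition (A), read off conditions (B) and (C) from hypotheses 2 and 3, invoke \autoref{thm:MainReductionConverse} to obtain Myerson-in-Range, and conclude via \autoref{thm:MyersonEnvelopeIffOffCIP} and the Myerson payment identity. Your explicit treatment of the gradient-consistency step for condition (C) is actually more careful than the paper, which simply asserts that the three hypotheses deliver the three conditions of \autoref{thm:MainReductionConverse} without spelling out the differentiation-along-a-slice argument.
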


\subsection{Deterministic Simple-to-Participate Mechanisms} \label{sec:NoDeterministicMechanisms}

In this section, we argue that the only deterministic plain-text mechanism that is simple to participate is the class of all uniform posted-price mechanisms.\footnote{Remember that, by definition, we require our TFMs to be anonymous. Strictly speaking, a deterministic, anonymous, virtual utility maximizing mechanism is not possible for a block with a finite capacity. For example, for a block of capacity $1$, if there are two users with equal values, both larger than the monopoly reserve, the mechanism either has to randomize or has to break ties in the favour of one bidder. However, we will consider mechanisms that are deterministic and anonymous up to tie-breaking. Our impossibility results are agnostic to how ties are broken, and do not rely on the edge-cases requiring tie-breaking.}

\begin{theorem} \label{thm:NoDeterministicMechanisms}
    Let $\Distr$ be a smooth regular distribution.
    Then, a deterministic simple-to-participate mechanism allocates all bids above some threshold $p$ at least the monopoly reserve, charges $p$ from each allocated user and burns $\vv(p) \geq 0$ per included bid apart from awarding a block reward to the miner independent of the contents of the created block.
\end{theorem}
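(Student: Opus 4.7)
The plan is to use the burn identity \autoref{thm:UtilityVersionMainReduction} to pin down the structural form of any candidate mechanism, then apply on-chain miner simplicity to eliminate everything except the posted-price schedule. By off-chain influence proofness and the burn identity, the mechanism admits smoothened virtual utility functions $U^n$ that are convex, non-decreasing, and (by anonymity) symmetric. Determinism forces $\nabla^{\vv} U^n \in \{0,1\}^n$ almost everywhere, so $U^n$ is piecewise linear. Combined with symmetry and condition~(3) of the burn identity (invariance under appending non-positive virtual-value bids), $U^n$ must take the form
\[
U^n(\vv(\vec{v})) = \max_{k \geq 0} \Bigl(\sum_{i=1}^{k} \vv(v^{(i)}) - \BurnB_k\Bigr)
\]
for a convex non-decreasing sequence $\BurnB_0 = 0 \leq \BurnB_1 \leq \BurnB_2 \leq \cdots$. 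The induced mechanism allocates the top $k^*$ users by virtual value, where $k^* = \argmax_k \bigl(\sum_{i=1}^k \vv(v^{(i)}) - \BurnB_k \bigr)$, burns $\BurnB_{k^*}$, and charges each allocated user their payment-identity threshold.

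Next, I would use on-chain miner simplicity---a pointwise best-response condition for the miner against fixed truthful-user bidding---to show that $\BurnB_k$ must be linear in $k$. Suppose for contradiction that $\BurnB_k$ is not linear; then by convexity there exists $k_0$ with marginal burn $\BurnB_{k_0+1} - \BurnB_{k_0}$ strictly greater than $\BurnB_{k_0} - \BurnB_{k_0-1}$. By smoothness of $\Distr$ and continuity of $\vv$, the set of bid profiles with $k^* = k_0$ and $\BurnB_{k_0} - \BurnB_{k_0-1} < \vv(v^{(k_0+1)}) < \BurnB_{k_0+1} - \BurnB_{k_0}$ has positive probability. At any such profile, the miner fabricates a single bid with virtual value $\gamma \in (\vv(v^{(k_0+1)}), \BurnB_{k_0+1} - \BurnB_{k_0})$. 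Because $\gamma < \BurnB_{k_0+1} - \BurnB_{k_0}$, the new $k^*$ after fabrication stays at $k_0$ and the fabricated bid is rejected, so the burn is unchanged at $\BurnB_{k_0}$; because $\gamma > \vv(v^{(k_0+1)})$, the fabricated bid now sits at the new $(k_0+1)$-th rank, raising the payment-identity threshold for each of the $k_0$ real allocated users from $v^{(k_0+1)}$ to $\vv^{-1}(\gamma)$. The miner's realized revenue thus strictly increases at this profile, contradicting OnC-MS.

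Therefore $\BurnB_k = k \cdot q$ for some $q$, with $q \geq 0$ ensured by condition~(2) of the burn identity forbidding allocation at negative virtual values. Setting $p := \vv^{-1}(q)$, so $p \geq$ the monopoly reserve since $\vv(p) = q \geq 0$, the mechanism allocates exactly the bids with $v_i \geq p$, charges each allocated user $p$ via the payment identity, and burns $\vv(p)$ per allocated user; the block reward is a constant additive term independent of block contents. The main obstacle is the fabrication argument in the second step: one must handle all top-$k_0$ users uniformly (their common threshold is set by $\vv(v^{(k_0+1)})$ precisely because $\vv(v^{(k_0+1)}) > \BurnB_{k_0} - \BurnB_{k_0-1}$, ensuring the binding constraint is the rank cutoff rather than the marginal-burn cutoff), carefully choose $\gamma$ to keep $k^*$ unchanged, and confirm positive density of the relevant profile set from smoothness. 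The key conceptual subtlety is that OnC-MS is pointwise against truthful users and strictly stronger than OffC-IP---which compares compliant only against full mechanism-BNE pairs and would tolerate the non-linear burn schedule in expectation, but not pointwise.
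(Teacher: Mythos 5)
There is a genuine gap: you assert that the cardinality-based burn schedule $\BurnB_k$ produced by the burn identity is \emph{convex} in $k$, and you use that assertion to reduce ``non-linear'' to ``strictly increasing marginal burn at some $k_0$.'' Nothing in \autoref{thm:UtilityVersionMainReduction} (or determinism, anonymity, or condition~(3)) forces convexity of $\BurnB_k$; the monopsonist reduction only says the mechanism posts an arbitrary price per cardinality and the miner picks the best bundle. Indeed the paper exhibits OffC-IP (even simple-to-participate, under side conditions) deterministic mechanisms with strictly \emph{decreasing} marginal burns, e.g.\ $\MargThreshold_1=4,\ \MargThreshold_2=3$ in \autoref{sec:Bounded} and $\MargThreshold_1 = 2+\varepsilon,\ \MargThreshold_t=0$ in \autoref{sec:Discontinuities}; these are exactly the schedules your convexity assumption silently rules out. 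Excluding decreasing marginal burns is the harder half of the paper's proof (\autoref{thm:IncreasingMargBurns}): it requires a profile with arbitrarily many users whose virtual values sit just below their marginal burns, a single fabricated bid that tips all of them into the block, and a divergence argument (via Bolzano--Weierstrass and continuity of $\vv$) showing the collected payments eventually exceed the total burn --- this is precisely where the unbounded-influx and smoothness hypotheses enter, and your proposal contains no substitute for it.

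Your remaining steps do track the paper: the structural reduction to ``allocate the top $t$ and burn $\Threshold_t$'' is \autoref{thm:CharacterizeDetMechanism} (though you skip the argument that $\Threshold_t$ cannot depend on the number $n$ of submitted bids, which the paper handles via coercion/fabrication within OffC-IP), and your fabrication argument against an increasing marginal burn is the same idea as \autoref{thm:DecreasingMargBurns}. Even there, the claim that every allocated user's critical bid equals $v^{(k_0+1)}$ before fabrication is not right in general; the paper avoids this by constructing profiles with $\vv(v^{(i)})$ pinned near $\MargThreshold_i$ for \emph{all} ranks $i\le k_0$ and then only arguing that critical bids of users $(1),\dots,(k_0-1)$ do not decrease while that of user $(k_0)$ strictly increases. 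So the proposal proves (modulo those details) one of the two needed monotonicity directions; the other direction, which is where the real work lies, is missing.
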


As a first step towards proving \autoref{thm:NoDeterministicMechanisms}, we will show that a deterministic off-chain influence proof mechanism has a simple burn rule parametrized only by the number of users included in the block.
Concretely, we will argue that there exists a sequence of burns $\seq{\Threshold_t}{t \in \N}$ such that $\Threshold_t$ is burnt whenever the miner builds a block containing $t$ users.
As a sanity check, note that the block reward equals the net revenue from building an empty block, which in turn equals $-\Threshold_0$.

\begin{lemma} \label{thm:CharacterizeDetMechanism}
    For a smooth regular distribution $\Distr$ and any deterministic off-chain influence proof mechanism, there exists a sequence of burns $\seq{\Threshold_t}{t \in \N}$ such that for a value profile $\vec{v} = (v^{(1)}, \dots, v^{(n)})$, the direct-revelation mechanism $(\OnCAlloc, \OnCPay, \OnCBurn)$ almost surely includes users $(1), \dots, (t)$ for
    $$t = \argmax_{\hat{t}} \sum_{i = 1}^{\hat{t}} \vv(v^{(i)}) - \Threshold_{\hat t}$$
    and burns $\Threshold_t$ for including exactly $t$ users.
\end{lemma}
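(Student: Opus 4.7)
The plan is to apply the burn identity (\autoref{thm:UtilityVersionMainReduction}) to the deterministic mechanism and then exploit anonymity of $\bBuild$. The burn identity provides a family of convex, non-decreasing virtual utility functions $\seq{U^n}{n \in \N}$ satisfying $\OnCAlloc(\vec{v}) = \nabla^{\vv} U^n(\vv(\vec{v}))$ and $\OnCBurn(\vec{v}) = \sum_i \vv(v_i) \nabla_i^{\vv} U^n(\vv(\vec{v})) - U^n(\vv(\vec{v}))$ almost surely. Since the mechanism is deterministic, this gradient is valued in $\{0,1\}^n$ almost everywhere; hence each $U^n$ is piecewise affine with pieces indexed by subsets $S \subseteq [n]$, and so $U^n$ admits the menu representation
\[
U^n(\vv(\vec{v})) = \max_{S \subseteq [n]} \Bigl( \sum_{i \in S} \vv(v_i) - c_S^n \Bigr)
\]
for some constants $c_S^n \in \R \cup \{+\infty\}$ (with $c_S^n = +\infty$ if $S$ is never allocated). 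The burn identity then yields $\OnCBurn(\vec{v}) = c_{S(\vec{v})}^n$, where $S(\vec{v})$ is the set of users allocated at $\vec{v}$.

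Next, I would invoke anonymity of $\bBuild$ to show that $c_S^n$ depends only on $|S|$. Permuting user labels by $\pi$ permutes the allocation and preserves the burn, so any two size-$t$ subsets are related by a permutation, forcing $c_S^n = c_T^n$ whenever $|S| = |T|$. Writing $\Threshold_t^n$ for this common value, the inner maximization collapses into a one-dimensional optimization: regularity of $\Distr$ makes $\vv$ monotone, so for any size $\hat t$ the best subset is simply the top $\hat t$ bidders. Hence
\[
U^n(\vv(\vec{v})) = \max_{\hat t} \Bigl( \sum_{i=1}^{\hat t} \vv(v^{(i)}) - \Threshold_{\hat t}^n \Bigr),
\]
and the mechanism almost surely includes the top $t^\star$ users where $t^\star$ attains this maximum, with burn $\Threshold_{t^\star}^n$.

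Finally, to obtain a universal sequence $\seq{\Threshold_t}{t \in \N}$ independent of $n$, I would invoke the third property of the burn identity: $U^n$ is invariant under appending bids of non-positive virtual value. Instantiating this identity at profiles where the optimal outcome is uniquely of size $t$ (for instance, $t$ bids with very large $\vv$ and the remaining bids with non-positive $\vv$, extended by additional non-positive-$\vv$ padding) pins down $\Threshold_t^n = \Threshold_t^{n+k}$, so a single sequence works.

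The main obstacle will be the piecewise-affine step: carefully arguing that a convex function with $\{0,1\}^n$-valued gradient admits the stated menu representation, and that anonymity of $\bBuild$ lifts to permutation symmetry of the constants $c_S^n$ rather than merely of the allocations themselves. Handling unreached subsets via the $+\infty$ convention, and upgrading the almost-sure virtual-utility equality from condition~\ref{Bul:3} into pointwise equality of thresholds across different $n$, require a bit of bookkeeping but do not introduce new ideas.
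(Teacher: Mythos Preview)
Your proposal is correct and mirrors the paper's proof closely: both apply the burn identity to obtain a piecewise-affine $U^n$ with size-indexed constants $\Threshold_t^n$, use anonymity (the paper implicitly, you explicitly) to reduce to bundle-size-dependent thresholds, and then argue the thresholds are independent of $n$. The only difference is in this last step---you invoke condition~3 of the burn identity (padding with non-positive-$\vv$ bids) to equate $\Threshold_t^n$ with $\Threshold_t^{n+k}$, whereas the paper argues directly from Myerson-in-Range via a coerce-and-fabricate deviation---but both routes are valid and closely related.
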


Intuitively, the proof argues that from the monopsonist lens, a deterministic, item-symmetric (i.e, anonymous) mechanism to sell multiple items to a single buyer can only post a price for each bundle of items, the price depending only on the number of items in the bundle.
We defer the proof to \autoref{sec:ProofofCharDetMech}.

For the remainder of the section, we will show that $\Threshold_t = t \, (\Threshold_1 - \Threshold_0) + \Threshold$ for simple-to-participate deterministic mechanisms.
In other words, the \emph{marginal burn} $\MargThreshold_t := \Threshold_t - \Threshold_{t-1}$, i.e, the burn for including the $t$\textsuperscript{th} user conditioned on including $t-1$ users, must be a constant $\MargThreshold$.
Proving a constant marginal burn will conclude the proof of \autoref{thm:NoDeterministicMechanisms} for $\vv(p) = \MargThreshold$ and $p = \vv^{-1}(\MargThreshold)$.

To begin, observe that we can assume without loss of generality that $\Threshold_t \leq t \, \sup \Distr + \Threshold_0$.
Supposing that $\Threshold_t > t \sup \Distr + \Threshold_0$, the mechanism never allocates exactly $t$ users.
The virtual utility from creating a block with $t$ users is dominated by that of creating an empty block.
$$\sum_{i = 1}^t \vv(v^{(i)}) - t \sup \Distr - \Threshold_0 \leq - \Threshold_0.$$
However, note that the virtual utility from the empty block continues to (weakly) dominate including exactly $t$ users even if we decrease $\Threshold_t$ to $t \sup \Distr + \Threshold_0$.
Thus, setting $\Threshold_t = t \sup \Distr + \Threshold_0$ does not change the allocation rule, and therefore, leaves the payment and the burn rules unchanged too.
As a consequence, we can assume that $\Threshold_t \in [\Threshold_0, t \sup \Distr + \Threshold_0]$.

To show that the marginal burn $\MargThreshold_t$ is a constant, we will first argue that the marginal burn is non-increasing in $t$  (\autoref{thm:DecreasingMargBurns}), and then, for a non-increasing sequence $\seq{\MargThreshold_t}{t \in \N}$, we will show that they cannot decrease either (\autoref{thm:IncreasingMargBurns}).
This is similar to the approach adopted by \citet{GafniY24Discrete} in proving  their Theorem 3.9 (which establishes a similar result for deterministic global strong collusion proof mechanisms). 
However, while these properties follow fairly easily in the setting of  \citet{GafniY24Discrete}, adapting them to our setting poses significant additional challenges. 
For example, to check whether a mechanism is on-chain miner simple, we will have to calculate the change in burn and the change in payments for various deviations by the miner.
While the burns are naturally expressed in virtual value space (\autoref{thm:UtilityVersionMainReduction}), they are to be compared to users' payments expressed in value space (\autoref{thm:PaymentIdentity}).

\begin{lemma} \label{thm:DecreasingMargBurns}
    Let $\Distr$ be a smooth regular distribution.
    Then, a deterministic TFM with marginal burns $\seq{\MargThreshold_t}{t \in \N}$ that is simple to participate must satisfy $\MargThreshold_t \geq \MargThreshold_{t+1}$.
\end{lemma}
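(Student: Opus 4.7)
The plan is to prove the contrapositive: assume for some $t \geq 1$ we have $\MargThreshold_t < \MargThreshold_{t+1}$, and exhibit a value profile together with a miner deviation that strictly improves her revenue, thereby contradicting on-chain miner simplicity. The construction uses $n = t$ real users each having the same virtual value $M > \MargThreshold_t$, together with a single fabricated bid whose virtual value $\gamma$ lies in the ``gap'' $(\MargThreshold_t, \MargThreshold_{t+1})$; such $M$ and $\gamma$ are achievable as virtual values by the continuity of $\vv$ guaranteed by smoothness of $\Distr$ (modulo the trivial edge case in which $\MargThreshold_t$ exceeds the range of $\vv$ entirely, where the mechanism's effective behavior is unchanged by capping the burn to that range).

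By \autoref{thm:CharacterizeDetMechanism}, honest play yields $\hat{t}^{*} = t$ since each marginal $M - \MargThreshold_{\hat t}$ for $\hat t \leq t$ is positive. Applying Myerson's payment identity to this top-$t$ inclusion rule, each real user's inclusion threshold is $\vv^{-1}(\MargThreshold_t)$: as she lowers her bid $\upsilon$ while the other $t-1$ users remain at $M$, she sits at rank $t$ and is included precisely while $\vv(\upsilon) > \MargThreshold_t$, because $\hat{t}^{*}$ switches from $t$ to $t-1$ at $\vv(\upsilon) = \MargThreshold_t$. Hence the honest revenue is $R^{H} = t \cdot \vv^{-1}(\MargThreshold_t) - \Threshold_t$.

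Under the miner's deviation of fabricating the single extra bid with virtual value $\gamma$, the marginal virtual utility from moving $\hat t$ from $t$ to $t+1$ becomes $\gamma - \MargThreshold_{t+1} < 0$, so the Myerson-in-Range outcome (again by \autoref{thm:CharacterizeDetMechanism}) still selects $\hat{t}^{*} = t$, with the fake sitting at rank $t+1$ and excluded. However, each real user's inclusion threshold is now raised to $\vv^{-1}(\gamma)$: if her bid drops below $\gamma$, she is relegated to rank $t+1$, and the allocated top $t$ then consists of the other $t-1$ real users plus the fake. The deviated revenue is therefore $R^{D} = t \cdot \vv^{-1}(\gamma) - \Threshold_t$ (the fake, being the miner's own bid, contributes no net transfer), and $R^{D} - R^{H} = t \left( \vv^{-1}(\gamma) - \vv^{-1}(\MargThreshold_t) \right) > 0$, contradicting on-chain miner simplicity. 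The main obstacle is the case analysis verifying that $\hat{t}^{*}$ remains equal to $t$ throughout the deviated regime as the real user's own bid $\upsilon$ varies over the critical ranges $\vv(\upsilon) \in (\gamma, M)$ and $\vv(\upsilon) < \gamma$; the hypothesis $\gamma < \MargThreshold_{t+1}$ is precisely what rules out Myerson-in-Range ever selecting $\hat{t} = t+1$ by including the fake.
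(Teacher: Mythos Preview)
Your approach is the paper's: place $t$ high bidders so that the Myerson-in-Range block includes exactly those $t$, then fabricate one bid with virtual value $\gamma \in (\MargThreshold_t, \MargThreshold_{t+1})$; since $\gamma < \MargThreshold_{t+1}$ the fake is never selected, so allocation and burn are unchanged, but user $(t)$'s critical bid rises. The paper's proof sketch is essentially identical.

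Two loose ends your version leaves that the full proof handles. First, the assertion ``each marginal $M - \MargThreshold_{\hat t}$ for $\hat t \le t$ is positive'' presumes that some $M > \max_{\hat t \le t} \MargThreshold_{\hat t}$ lies in the range of $\vv$, which can fail for bounded $\Distr$: after the without-loss capping $\Theta_{\hat t} \le \hat t\,\sup\Distr + \Theta_0$ one may have $\MargThreshold_1 = \cdots = \MargThreshold_k = \sup\Distr$ for some $k \ge 1$ (your parenthetical misdiagnoses this as $\MargThreshold_t$ being too large---the obstruction is $\MargThreshold_1$). The paper handles this by taking $t$ minimal (so $\MargThreshold_1 \ge \cdots \ge \MargThreshold_t$), placing the first $k$ bids slightly \emph{below} $\sup\Distr$ and the remaining ones slightly above $\vv^{-1}(\MargThreshold_i)$, with the perturbations balanced so the full block still beats the empty one; it then argues the fake weakly raises every user's critical bid and strictly raises user $(t)$'s, rather than computing thresholds exactly as you do. Second, on-chain miner simplicity is an expected-revenue best-response condition, so your single all-equal profile (a measure-zero event) does not by itself contradict it; the paper explicitly exhibits a positive-probability family of profiles witnessing the profitable deviation. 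Both points are patchable within your framework, but they do need addressing.
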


We give a pictorial representation of the proof in \autoref{fig:DecreasingMarginalBurns}.
We provide a proof sketch below and defer the full proof to \autoref{sec:ProofofDecreasingMargBurns}.

\begin{proof}[Proof sketch]
    Let $t$ be the smallest number for which $\MargThreshold_t < \MargThreshold_{t+1}$.
    Consider the value profile $\vec{v}$ such that $\vv(v^{(i)}) = \MargThreshold_i + \varepsilon$ for $\varepsilon > 0$.
    Choose $\varepsilon$ sufficiently small so that $\MargThreshold_t < \vv(v^{(t)}) < \MargThreshold_{t+1}$.
    We will show that the miner can increase her revenue by fabricating a bid $\hat{v}^{(t+1)}$ just smaller than $v^{(t)}$, with virtual value $\vv(\hat{v}^{(t+1)}) \in (\MargThreshold_t, \vv(v^{(t)}))$.

    Irrespective of whether the miner fabricates $\hat{v}^{(t+1)}$, note that the virtual utility maximizing allocation includes users $(1), \dots, (t)$.
    By construction, the virtual surplus $\vv(v^{(i)})$ from allocating user $(i)$ is at least the marginal burn $\MargThreshold_i$, and thus, the miner only increases her virtual utility by allocating all $t$ users.
    However, $\vv(\hat{v}^{(t+1)}) < \MargThreshold_{t+1}$, and thus, $\hat{v}^{(t+1)}$ is not included.
    The allocation and the burn remains constant before and after the miner inserts $\hat{v}^{(t+1)}$.

    However, observe that the payment charged to user $(t)$ strictly increases after the miner fabricates $\hat{v}^{(t+1)}$.
    With only $(t)$ bids, user $(t)$ is allocated as long as $v^{(t)} \geq \vv^{-1}(\MargThreshold_t)$.
    On the other hand, with $t+1$ bids, user $(t)$ is allocated precisely when $v^{(t)} \geq \hat{v}^{(t+1)} > \vv^{-1}(\MargThreshold_t)$.
    Therefore, the critical bid for user $(t)$ increases and as a result, the miner extracts a larger payment from $(t)$.
    In the full proof in \autoref{sec:ProofofDecreasingMargBurns}, we also argue that the critical bids of users $(1), \dots, (t-1)$ do not decrease after the miner fabricates $\hat{v}^{(t+1)}$ (intuitively, the $(t+1)$\textsuperscript{th} bid only increases the competition for the $t$ slots, and the critical bids will only increase).
    The miner strictly improves her revenue by inserting the fake bid $\hat{v}^{(t+1)}$ and thus, the mechanism cannot be on-chain miner simple.
\end{proof}

\begin{figure}
    {
    \centering
    \begin{subfigure}[b]{0.48\textwidth}
        \includegraphics[width=0.98\linewidth]{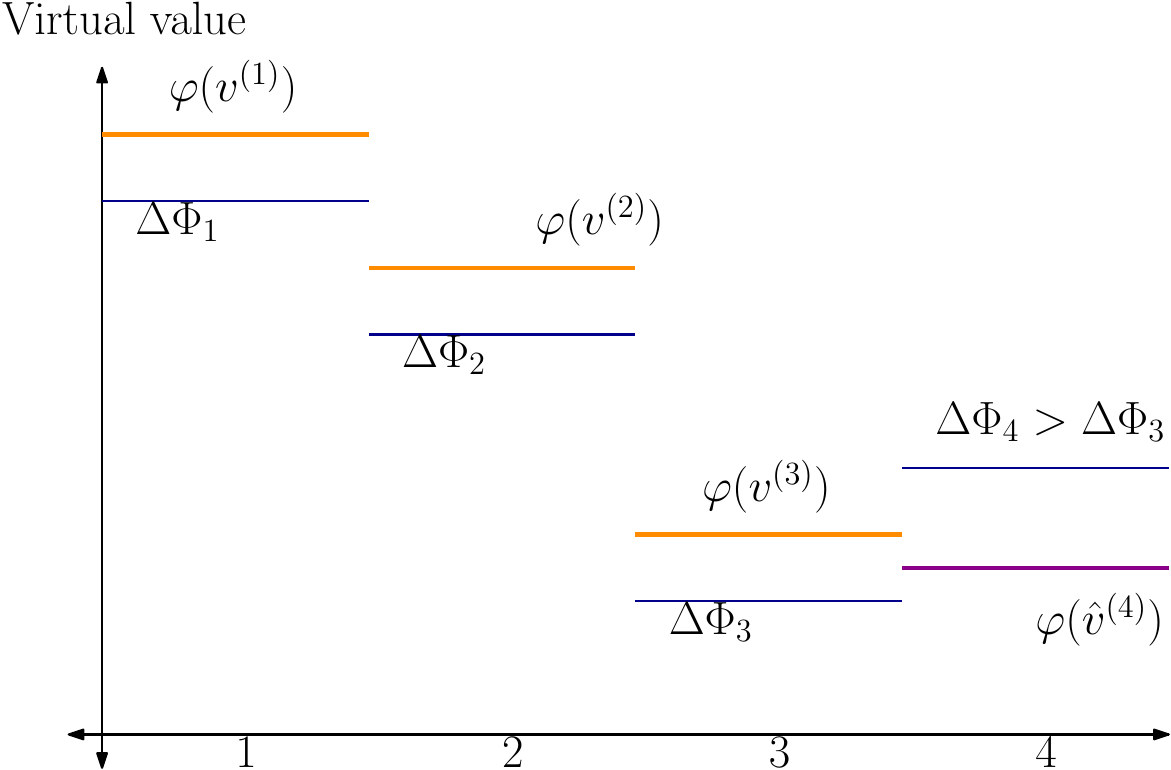}
        \caption{Illustration of the proof of \autoref{thm:DecreasingMargBurns}}
        \label{fig:DecreasingMarginalBurns}
    \end{subfigure}
        \begin{subfigure}[b]{0.48\textwidth}
        \includegraphics[width=0.98\linewidth]{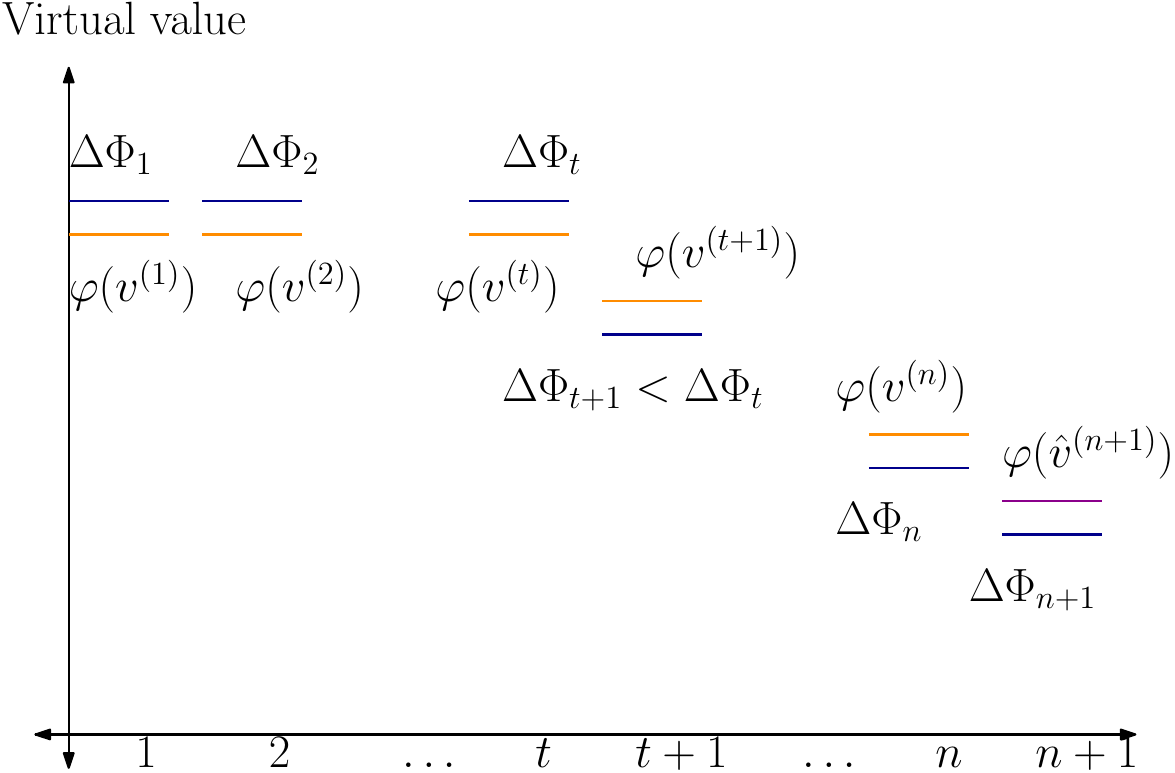}
        \caption{Illustration of the proof of \autoref{thm:IncreasingMargBurns}}
        \label{fig:IncreasingMarginalBurns}
    \end{subfigure}   
    \caption{ Main arguments used to prove \autoref{thm:NoDeterministicMechanisms}.}
    \label{fig:NoDeterministicMechanisms}
    }
    {\footnotesize \textbf{Notes:} The marginal burn $\seq{\MargThreshold_t}{t \in \N}$ must be a constant for a deterministic simple-to-participate mechanism. In \autoref{fig:DecreasingMarginalBurns}, we argue that $\MargThreshold_t \geq \MargThreshold_{t+1}$ for all $t \in \N$. For example, suppose $\MargThreshold_3 < \MargThreshold_4$. For a value profile $\vec{v}$ such that $\vv(v^{(i)}) > \MargThreshold_i$ for $1 \leq i \leq3$, the miner can increase her revenue by fabricating a bid $\hat{v}^{(4)}$ marginally smaller than $v^{(3)}$.
    Observe that with or without $\hat{v}^{(4)}$, all $3$ users will get allocated.
    Further, $\hat{v}^{(4)}$ is not allocated since $\vv(\hat{v}^{(4)}) < \MargThreshold_4$.
    Without $\hat{v}^{(4)}$, user $(3)$ gets included as long as $v^{(3)} \geq \vv^{-1}(\Threshold_3)$. Thus, user $(3)$'s critical bid equals $\vv^{-1}(\MargThreshold_3)$.
    Once the miner fabricates $\hat{v}^{(4)}$, $v^{(3)}$ needs to be larger than $\hat{v}^{(4)}$ to be included, and thus is charged $\hat{v}^{(4)} > \vv^{-1}(\MargThreshold_3)$, increasing the miner's revenue and thereby contradicting on-chain miner simplicity.
    In \autoref{fig:IncreasingMarginalBurns} we argue that the marginal burns cannot decrease either. Suppose $\MargThreshold_t > \MargThreshold_{t+1}$. Let $\vv(v^{(i)}) = \MargThreshold_i - \varepsilon$ for $1 \leq i \leq t$. For a very large $n$ and small $\delta > 0$, $\vv(v^{(i)}) = \MargThreshold_i + \delta$ for $t+1 \leq i \leq n$. Choose $\delta$ such that the virtual utility $\sum_{i = 1}^n \vv(v^{(i)}) - \sum_{i = 1}^n \MargThreshold_i - \Threshold_0$ from including all $n$ users is smaller than the virtual utility $- \Threshold_0$ from creating the empty block. None of the $n$ users will get allocated.
    Consider the miner fabricating $\hat{v}^{(n+1)}$ such that all $n+1$ bids get allocated, i.e, $\sum_{i = 1}^n \vv(v^{(i)}) + \vv(\hat{v}^{(n+1)}) - \sum_{i = 1}^{n+1} \MargThreshold_i - \Threshold_0 > - \Threshold_0$. For a large enough $n$, the payments collected from the $n$ users will outweigh the burn $\sum_{i = 1}^{n+1} \MargThreshold_i$, contradicting on-chain miner simplicity.
    \par}
\end{figure}

\begin{lemma} \label{thm:IncreasingMargBurns}
    Let $\Distr$ be a smooth regular distribution.
    A deterministic simple-to-participate TFM with a sequence of non-increasing marginal burns $\seq{\MargThreshold_t}{t \in \N}$ must satisfy $\MargThreshold_t = \MargThreshold$ for all $t \in \N$. 
\end{lemma}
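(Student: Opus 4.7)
My plan is to prove by contradiction: assume $\MargThreshold_t > \MargThreshold_{t+1}$ for some $t \in \N$ and exhibit a value profile together with a single-bid miner deviation that strictly increases revenue, violating on-chain miner simplicity. Following the strategy illustrated in \autoref{fig:IncreasingMarginalBurns}, I will pick small constants $\varepsilon, \delta > 0$ satisfying $(t+1)\varepsilon + \delta < \MargThreshold_t - \MargThreshold_{t+1}$, and a large $n$ with $(n-t)\delta < t\varepsilon$, then construct $n$ users with $\vv(v^{(i)}) = \MargThreshold_i - \varepsilon$ for $1 \le i \le t$ and $\vv(v^{(i)}) = \MargThreshold_i + \delta$ for $t+1 \le i \le n$. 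The ordering $v^{(i)} > v^{(i+1)}$ follows from $\MargThreshold$ being non-increasing (via \autoref{thm:DecreasingMargBurns}) and from $\varepsilon + \delta < \MargThreshold_t - \MargThreshold_{t+1}$ at the boundary $i=t$.

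Next, I will verify using \autoref{thm:CharacterizeDetMechanism} that the equilibrium creates the empty block and that the deviation allocation includes all $n+1$ bids. Substituting into $U(k) = \sum_{i=1}^k \vv(v^{(i)}) - \Threshold_k$ gives $U(k) = -\Threshold_0 - k\varepsilon$ for $k \le t$ and $U(k) = -\Threshold_0 - t\varepsilon + (k-t)\delta$ for $t < k \le n$, both strictly less than $U(0) = -\Threshold_0$ by the choice of $\delta$, so the equilibrium revenue is $-\Threshold_0$. For the deviation in which the miner fabricates $\hat v$ with $\vv(\hat v)$ arbitrarily large, direct computation of the gap $U(n+1) - U(k)$ under the $n+1$-bid instance shows: for $t+2 \le k \le n$ it equals $(\MargThreshold_k - \MargThreshold_{n+1}) + (n+1-k)\delta \ge 0$ by non-increasingness of $\MargThreshold$; for $1 \le k \le t+1$ it gains an extra $\MargThreshold_k - \MargThreshold_{n+1} \ge \MargThreshold_t - \MargThreshold_{t+1}$ that dominates the error $(t-k+1)\varepsilon \le (t+1)\varepsilon$ by the choice of $\varepsilon$; and $U(n+1) > U(0)$ holds because $\vv(\hat v)$ is large. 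Hence all $n+1$ bids are included in the deviation.

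The main obstacle is computing payments and closing the asymptotic contradiction. By DSIC each real user pays her critical bid, and I argue this equals $\vv^{-1}(\MargThreshold_{n+1})$ uniformly: decreasing any user's bid $b$ below the smallest other virtual value keeps $U(j)$ constant in $\vv(b)$ for $j \le n$ (since $\vv(b)$ drops out of the top $j$) while $U(n+1)$ decreases linearly in $\vv(b)$, so the binding constraint is $U(n+1) \ge U(n)$, i.e., $\vv(b) \ge \MargThreshold_{n+1}$, with all slacker $j$-constraints surviving under the same $\varepsilon$ choice. The miner's deviation revenue is therefore $n\,\vv^{-1}(\MargThreshold_{n+1}) - \Threshold_{n+1}$, so on-chain miner simplicity demands $n\,\vv^{-1}(\MargThreshold_{n+1}) \le \Threshold_{n+1} - \Threshold_0 = \sum_{i=1}^{n+1}\MargThreshold_i$. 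Since $\MargThreshold_i$ is non-increasing and non-negative, $\MargThreshold_i \to \MargThreshold^* \le \MargThreshold_{t+1}$, and by Ces\`aro $\tfrac{1}{n+1}\sum_{i=1}^{n+1}\MargThreshold_i \to \MargThreshold^*$ while $\vv^{-1}(\MargThreshold_{n+1}) \to \vv^{-1}(\MargThreshold^*)$. Any non-trivial mechanism has $\MargThreshold_1 < \sup \vv(\supp(\Distr))$, so $\MargThreshold^* \le \MargThreshold_1$ lies in the interior and the smooth regular identity $\vv(v) = v - (1-F(v))/f(v) < v$ yields the strict inequality $\vv^{-1}(\MargThreshold^*) > \MargThreshold^*$. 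The required inequality therefore fails for large $n$, giving the contradiction and forcing $\MargThreshold_t = \MargThreshold_{t+1}$ for every $t \in \N$.
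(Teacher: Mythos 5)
Your overall strategy is the paper's: the same value profile ($\vv(v^{(i)})=\MargThreshold_i-\varepsilon$ for $i\le t$, $\vv(v^{(i)})=\MargThreshold_i+\delta$ for $i>t$, with $(n-t)\delta<t\varepsilon$ so the honest outcome is the empty block and revenue $-\Threshold_0$), a single fabricated bid that tips the allocation to full inclusion, and an asymptotic payments-versus-burn comparison for large $n$. The genuine difference is where you put the fake bid. The paper fabricates the \emph{lowest} bid, with $\vv(\hat v^{(n+1)})=\MargThreshold_{n+1}+\delta_{n+1}$ tuned so that including all $n+1$ bids just beats the empty block; each user's critical bid is then roughly his own virtual value, and the contradiction comes from divergence of $\sum_i\big(\vv^{-1}(\MargThreshold_i)-\MargThreshold_i\big)$ (Bolzano--Weierstrass plus continuity of $\vv$). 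You fabricate a \emph{top} bid, which yields the cleaner uniform payment bound $n\,\vv^{-1}(\MargThreshold_{n+1})$ against the burn $\sum_{i=1}^{n+1}\MargThreshold_i$, closed by a Ces\`aro argument. Your inclusion computations ($U(n+1)\ge U(k)$ for all $k$, using $(t+1)\varepsilon+\delta<\MargThreshold_t-\MargThreshold_{t+1}$) check out.

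Two steps need repair. First, ``fabricate $\hat v$ with $\vv(\hat v)$ arbitrarily large'' is not available: a smooth regular distribution may be bounded, and for bids outside $\supp(\Distr)$ the characterization in \autoref{thm:CharacterizeDetMechanism} (which holds almost surely on the support) does not pin down the mechanism's behavior, so the deviation outcome cannot be computed there. The fix is easy: all you actually need is that $\hat v$ is the highest bid and that $U(n+1)>U(0)$, and any in-support bid with $\vv(\hat v)$ slightly above $\MargThreshold_1-\varepsilon$ satisfies both, since $\MargThreshold_1-\varepsilon>\MargThreshold_{n+1}+t\varepsilon-(n-t)\delta$ follows from your constraint on $\varepsilon,\delta$. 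Second, the claim that every user's critical bid equals $\vv^{-1}(\MargThreshold_{n+1})$, with ``all slacker $j$-constraints surviving,'' is not correct under the constraints you impose: for a user $(i)$ with $i\le t$ one finds $U(n)-U(j)=-(t-j)\varepsilon+(n-t)\delta$ for $i \le j\le t$, which is negative when $(n-t)\delta$ is well below $t\varepsilon$, so a constraint other than $U(n+1)\ge U(n)$ can bind and push the critical bid strictly above $\vv^{-1}(\MargThreshold_{n+1})$. Fortunately only the lower bound matters: any bid with virtual value below $\MargThreshold_{n+1}$ is necessarily the lowest of the $n+1$ bids and then $U(n+1)<U(n)$, so it is excluded; hence each user pays at least $\vv^{-1}(\MargThreshold_{n+1})$ and your final inequality survives. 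In the limit step, avoid continuity of $\vv^{-1}$ (not assumed): monotonicity gives $\vv^{-1}(\MargThreshold_{n+1})\ge\vv^{-1}(\MargThreshold^*)>\MargThreshold^*$, which together with Ces\`aro suffices. Finally, as the paper does, you should thicken the exact-value construction to small intervals of virtual values, so that the almost-sure characterization of the allocation and burn applies and the deviation is profitable on a set of positive probability, which is what contradicting on-chain miner simplicity requires.
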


We give a pictorial proof in \autoref{fig:IncreasingMarginalBurns}.
We sketch a proof here and defer the formal proof to \autoref{sec:ProofofIncreasingMargBurns}.

\begin{proof}[Proof sketch]
    Let $t \in \N$ be such that $\MargThreshold_t > \MargThreshold_{t+1}$.
    At a high level, we will construct a value profile with $n$ users for $n \gg t$ such that the virtual surplus from allocating any subset of users is slightly smaller than the burn to include them, resulting in the mechanism creating an empty block.
    However, the miner will be able to fabricate a single bid such that, from the mechanism's view, the virtual surplus from allocating all $n+1$ bids is larger than $\sum_{i = 1}^{n+1} \MargThreshold_i$ and thus, all $n+1$ bids are allocated.
    We will show that, for a sufficiently large $n$, the payments charged to the users from the above deviation is larger than the total burn, resulting in a larger revenue for the miner.

    Construct the value profile $\vec{v}$ as follows.
    Set $v^{(1)}, \dots, v^{(t)}$ so that $\vv(v^{(i)}) = \MargThreshold_i - \varepsilon$.
    Further, choose $v^{(t+1)}, \dots, v^{(n)}$ such that $\vv(v^{(i)}) = \MargThreshold_i + \delta$ for $(n-t) \, \delta$ slightly smaller than $t \, \varepsilon$.
    By construction, $$\sum_{i = 1}^n \vv(v^{(i)}) = \sum_{i = 1}^n \MargThreshold_i - t \varepsilon + (n-t) \, \delta < \sum_{i = 1}^n \MargThreshold_i$$ and thus, no users are allocated.
    The miner receives only the block reward $-\Threshold_0$.

    Now suppose that the miner fabricates a bid $\hat{v}^{(n+1)} < v^{(n)}$ such that $\vv(\hat{v}^{(n+1)}) = \MargThreshold_{n+1} + t \varepsilon - (n-t) \, \delta$.
    Once again, by construction, $\sum_{i = 1}^{n} \vv(v^{(i)}) + \vv(\hat{v}^{(n+1)}) = \sum_{i = 1}^{n+1} \MargThreshold_i$, and thus, all $n+1$ bids are allocated.

    Also, note that the critical bid for each user $(i)$ equals 
    $\vv^{-1}(\vv(v^{(i)}))$.
    This follows since for any virtual value $\vv(v) < \vv(v^{(i)})$, $$\sum_{j = 1, j \neq i}^{n+1} \vv(v^{(i)}) + \vv(\hat{v}^{(n+1)}) + \vv(v) < \sum_{i = 1}^{n} \vv(v^{(i)}) + \vv(\hat{v}^{(n+1)}) = \sum_{i = 1}^{n+1} \MargThreshold_i$$ and thus, the empty block is created, and in particular, user $(i)$ is not allocated.

    Thus, by inserting the fake bid $\hat{v}^{(n+1)}$, the miner collects $\sum_{i = 1}^n \vv^{-1}(\vv(v^{(i)})) = \sum_{i = 1}^t \vv^{-1}(\MargThreshold_i - \varepsilon) + \sum_{i = t+1}^n \vv^{-1}(\MargThreshold_i + \delta)$ as payments, but loses $\sum_{i = 1}^{n+1} \MargThreshold_i$ as burn.
    The miner's net revenue equals
    $$\sum_{i = 1}^t \Big(\vv^{-1}(\MargThreshold_i - \varepsilon) - \MargThreshold_i\Big) + \sum_{i = t+1}^n \Big(\vv^{-1}(\MargThreshold_i + \delta) - \MargThreshold_i\Big) - \MargThreshold_{n+1}.$$
    Remember that $\vv^{-1}(\MargThreshold_i) \geq \MargThreshold_i$ and intuitively, the revenue diverges as $n \xrightarrow{} \infty$.
    In \autoref{sec:ProofofIncreasingMargBurns}, we will show that for a sufficiently large $n$, the above revenue is strictly positive.
\end{proof}

\begin{remark}
When $\MargThreshold_t = \MargThreshold$ for all $t \in \N$, the above proof breaks down since we cannot construct $v^{(t)}$ such that $\vv(v^{(t)}) = \MargThreshold_t - \varepsilon$ and $v^{(t+1)}$ such that $\vv(v^{(t+1)}) = \MargThreshold_{t+1} + \delta$.
Then, $v^{(t+1)}$ is larger than $v^{(t)}$, which cannot be the case since we rank the values $v^{(1)} > \dots > v^{(n)}$ in descending order.
\end{remark}

\begin{remark}
    Note that we assume there can exist an arbitrary influx of users participating in the TFM to prove \autoref{thm:IncreasingMargBurns} and thereby, \autoref{thm:NoDeterministicMechanisms}.
    However, if there exists some $N \in \N$ such that the number of users $n$ is at most $N$, then, there can exist deterministic, simple-to-participate TFMs with non-constant marginal burns.
    We give such an example for the exponential distribution for $N = 2$ in \autoref{sec:Bounded}.
    
    Similarly, assuming that the virtual value function is continuous is important for \autoref{thm:NoDeterministicMechanisms}.
    In \autoref{sec:Discontinuities}, we construct a distribution with a discontinuous virtual value function for which there exists a deterministic mechanism that is simple to participate for a non-constant marginal burn $\seq{\MargThreshold_t}{t \in \N}$.
\end{remark}

For a block with a finite capacity, if the number of users with values more than the threshold $p$ from \autoref{thm:NoDeterministicMechanisms} is larger than the capacity of the block, the mechanism cannot allocate all the users feasibly.
Thus, there exists no deterministic mechanism that is simple to participate for a finite block.

\begin{corollary}
\label{thm:impossibility-deterministic-finite}
    When the block has a finite capacity, there exists no non-trivial deterministic simple-to-participate mechanism for a smooth regular distribution $\Distr$.
\end{corollary}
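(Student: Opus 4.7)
The plan is to derive this directly from \autoref{thm:NoDeterministicMechanisms}, arguing that the posted-price structure forced by that characterization becomes infeasible whenever the block has finite capacity $C$. Suppose for contradiction that there exists a non-trivial deterministic simple-to-participate mechanism. By \autoref{thm:NoDeterministicMechanisms}, this mechanism must include every bid above some threshold $p$ with $\vv(p) \geq 0$, charge $p$ from each included user, and burn $\vv(p)$ per included bid. Non-triviality means that users are sometimes included with positive probability, i.e., $q := \Pr_{v \sim \Distr}[v > p] > 0$.

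Next, I would exhibit a value profile on which the prescribed allocation is infeasible. Since the setting allows for an arbitrary influx of users (exactly as used in the proof of \autoref{thm:IncreasingMargBurns}), I can consider $n$ users with values drawn iid from $\Distr$ for $n$ sufficiently large. The number of users with $v_i > p$ is distributed as $\mathrm{Binomial}(n, q)$, so for any $n > C/q$ (in fact any $n$ with $\Pr[\mathrm{Binomial}(n,q) > C] > 0$, which holds for $n > C$) there is strictly positive probability that more than $C$ users have $v_i > p$. On any such profile $\vec v$, the mechanism described by \autoref{thm:NoDeterministicMechanisms} prescribes including every user with $v_i > p$, which requires allocating strictly more than $C$ users simultaneously, contradicting the block's finite capacity $C$.

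The only remaining subtlety is the measure-zero ``tie-breaking'' caveat in the characterization: \autoref{thm:NoDeterministicMechanisms} describes the outcome almost surely, so a priori the mechanism could deviate on a measure-zero set. The main obstacle is to rule out that the mechanism can hide in this measure-zero flexibility when capacity binds. This is handled by observing that the event $\{\#\{i : v_i > p\} > C\}$ has strictly positive measure under $\Distr^n$ for $n$ large, so the prescribed allocation must coincide with ``include all bids above $p$'' on a positive-measure subset of this event, which already suffices to force an infeasible allocation. Hence no such non-trivial deterministic simple-to-participate mechanism can exist under finite block capacity.
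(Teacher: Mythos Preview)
Your proposal is correct and follows essentially the same approach as the paper: invoke \autoref{thm:NoDeterministicMechanisms} to force the posted-price structure, then observe that with finite capacity there is positive probability that more than $C$ users exceed the threshold $p$, making the prescribed allocation infeasible. Your treatment of the measure-zero caveat is a bit more careful than the paper's one-line justification, but the argument is the same.
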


\subsection{Position Auctions with Burn}
\label{sec:rand-auction}

Position auctions are mechanisms where the allocation probability for a user does not depend on the exact bid submitted by the user, but only on the rank of the user's bid amongst all other submitted bids.
We consider position auctions where $\Threshold_t$ is burnt to allocate $t$ users.

\begin{definition}[Position auctions]
    For a regular distribution $\Distr$, a \emph{position auction} is given by a sequence of non-increasing allocation probabilities $\seq{x^{(i)}}{i \in \N}$ and \emph{marginal burn per unit allocation} $\seq{\MargThreshold_i}{i \in \N}$.
    Conditioned on allocating $t$ users, the users are allocated with probabilities $x^{(1)}, \dots, x^{(t)}$ in descending order of their bids and $\Threshold_t = \sum_{i = 1}^t \MargThreshold_i \, x^{(i)} + \Threshold_0$ is burnt.
    For $\vec{v} \sim \Distr^n$, the mechanism almost surely maximizes the miner's virtual utility
    $$\max_t \sum_{i = 1}^t \big( \vv(v^{(i)}) - \MargThreshold_i \big) \, x^{(i)} - \Threshold_0 = \max_t \sum_{i = 1}^t \vv(v^{(i)}) \, x^{(i)} - \Threshold_t.$$
\end{definition}

Similar to deterministic mechanisms, we will argue that the marginal burn per unit allocation $\MargThreshold_t$ must be a constant $\MargThreshold$, independent of the number of allocated users $t$.

\begin{lemma} \label{thm:ConstMargBurnPerUnit}
    Let $\Distr$ be a smooth regular distribution.
    Then, the marginal burn per unit allocation $\seq{\MargThreshold_i}{i \in \N}$ of a simple-to-participate position auction is a constant sequence $\seq{\MargThreshold}{i \in \N}$.
\end{lemma}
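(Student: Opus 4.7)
The proof plan mirrors the two-step structure of \autoref{thm:DecreasingMargBurns} and \autoref{thm:IncreasingMargBurns} from the deterministic case, now applied to position auctions. First I would show the marginal burns $\seq{\MargThreshold_t}{t \in \N}$ are non-increasing via a shill-bid argument, and then that a non-increasing sequence must actually be constant via a large-$n$ argument. Both steps derive contradictions with on-chain miner simplicity.

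\textbf{Step 1 (non-increasing).} Suppose $\MargThreshold_t < \MargThreshold_{t+1}$ for the smallest such $t$ (so that $\MargThreshold_1 \geq \cdots \geq \MargThreshold_t$ already). Build a profile of $t$ real users with each $\vv(v^{(i)})$ chosen sufficiently above $\max(\MargThreshold_i, \MargThreshold_{t+1})$ so that under honest play the optimum is $t^* = t$, user $(t)$'s critical bid is $\vv^{-1}(\MargThreshold_t)$, and her payment is $\vv^{-1}(\MargThreshold_t)\, x^{(t)}$. The miner then shill-bids $\hat{v} < v^{(t)}$ with $\vv(\hat{v}) \in (\MargThreshold_t, \MargThreshold_{t+1})$. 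The optimum remains $t^* = t$ because adding a $(t+1)$-th slot contributes $(\vv(\hat{v}) - \MargThreshold_{t+1})\, x^{(t+1)} < 0$, so the burn $\Threshold_t$ and the real users' allocation probabilities $x^{(1)}, \dots, x^{(t)}$ are unchanged. However, user $(t)$'s critical bid jumps to $\hat{v}$: any drop below $\hat{v}$ swaps her into the $(t+1)$-th position, where her marginal virtual value lies below $\MargThreshold_{t+1}$ and she is de-allocated. By the payment identity, her payment rises to $\hat{v}\, x^{(t)}$; the other real users' critical bids remain unchanged (adding a rank-$(t+1)$ bid below all of them does not relax any allocation threshold), so the miner's revenue strictly increases, contradicting on-chain miner simplicity.

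\textbf{Step 2 (constant).} Suppose $\MargThreshold_t > \MargThreshold_{t+1}$ and, for $n$ to be chosen, build a profile with $\vv(v^{(i)}) = \MargThreshold_i - \varepsilon$ for $i \leq t$ and $\vv(v^{(i)}) = \MargThreshold_i + \delta$ for $t < i \leq n$. Calibrate $\varepsilon, \delta > 0$ using the strict gap at $t$ so that (a) the $v^{(i)}$ are strictly decreasing, and (b) every nonempty allocation has strictly negative virtual utility, forcing honest play to build the empty block and earn $-\Threshold_0$. The miner then fabricates $\hat{v}^{(n+1)} < v^{(n)}$ with $\vv(\hat{v}^{(n+1)})$ just large enough to flip the optimum to allocating all $n+1$ users. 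By tightness of this flip, each real user $(i)$'s critical bid equals exactly $v^{(i)}$ (any drop re-empties the block and de-allocates $(i)$), so her payment equals $v^{(i)}\, x^{(i)}$. Using $\vv^{-1}(y) > y$ for $y$ in the support — a property of smooth regular distributions — the aggregate gain $\sum_{i=1}^n (\vv^{-1}(\MargThreshold_i) - \MargThreshold_i)\, x^{(i)}$ dominates the residual $\MargThreshold_{n+1}\, x^{(n+1)}$ for $n$ sufficiently large, so the deviation's revenue $\sum_{i=1}^n v^{(i)}\, x^{(i)} - \Threshold_{n+1}$ strictly exceeds $-\Threshold_0$, again contradicting on-chain miner simplicity.

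\textbf{Main obstacle.} The principal technical difficulty is Step 2, where one must coordinate the calibration of $\varepsilon$, $\delta$, and $\hat{v}^{(n+1)}$ with the allocation weights $x^{(i)}$ so that honest play remains pinned at the empty block while the deviation is strictly profitable. When the $x^{(i)}$ decay rapidly, or when the block has finite capacity $K$ so that $n$ is bounded by $K - 1$, the aggregate gap $\sum_{i=1}^n (\vv^{-1}(\MargThreshold_i) - \MargThreshold_i)\, x^{(i)}$ may not easily dominate the residual $\MargThreshold_{n+1}\, x^{(n+1)}$. A more quantitative argument — localized to the single index $t$ where the strict decrease $\MargThreshold_t > \MargThreshold_{t+1}$ occurs — is likely needed to close the contradiction within the feasible range of $n$.
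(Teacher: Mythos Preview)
Your two-step plan matches the paper's proof: a shill bid with virtual value in $(\MargThreshold_t,\MargThreshold_{t+1})$ to force $\MargThreshold_t \geq \MargThreshold_{t+1}$, then the large-$n$ ``empty-block-flipped-to-full'' construction to rule out $\MargThreshold_t > \MargThreshold_{t+1}$.

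The obstacle you flag is not real. First, in a position auction a finite block capacity $\Omega$ constrains $\sum_i x^{(i)} \le \Omega$, not the number $n$ of participating users, so you may still take $n$ arbitrarily large. Second, rapid decay of $x^{(i)}$ does not break the dominance: the aggregate gain already contains the single fixed term $(\vv^{-1}(\MargThreshold_{t+1})-\MargThreshold_{t+1})\,x^{(t+1)} > 0$ (positive since $\MargThreshold_{t+1} < \MargThreshold_t \le \sup\Distr$), which is independent of $n$, whereas the residual $\MargThreshold_{n+1}\,x^{(n+1)}$ tends to $0$ whenever $\sum_i x^{(i)}<\infty$ (so $x^{(n+1)}\to 0$) and is bounded otherwise while the gain diverges. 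Thus the contradiction closes for large $n$ without any further localization. One minor correction in Step~1: the other users' lowest thresholds also shift up from $\vv^{-1}(\MargThreshold_t)$ to $\hat v$ after the shill (they are not literally unchanged), but this only reinforces the revenue increase.
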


We provide a proof sketch in \autoref{sec:ProofofConstMargBurnPerUnit}.
The proof is extremely similar to \autoref{thm:DecreasingMargBurns} and \autoref{thm:IncreasingMargBurns}.

In \autoref{thm:NoPositionAuction}, we show that there cannot exist simple-to-participate position auctions that are not uniform posted-price mechanisms for unbounded distributions, even for a block with infinite capacity. 
On the other hand, \autoref{thm:BoundedPositionAuction} shows the existence of simple-to-participate position auctions beyond posted-price mechanisms for bounded smooth regular distributions and finite blocks, and further, characterizes all such mechanisms.

\begin{theorem} \label{thm:NoPositionAuction}
For an unbounded smooth regular distribution $\Distr$, consider a position auction given by the allocation rule $\seq{x^{(i)}}{i \in \N}$ and a marginal burn per unit allocation $\MargThreshold$.
If the mechanism is simple-to-participate, then, $x^{(i)} = x$ for a constant $0 \leq x \leq 1$ for all $i \in \N$.
\end{theorem}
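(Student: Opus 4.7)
The strategy is to prove the contrapositive by exhibiting a miner deviation that fabricates one additional bid. By \autoref{thm:ConstMargBurnPerUnit}, I may assume the marginal burn per unit is a constant $\MargThreshold$, so in the honest equilibrium the mechanism includes exactly the bids above $p \defeq \vv^{-1}(\MargThreshold)$, with the $i$-th highest such bid allocated with probability $x^{(i)}$. Suppose for contradiction that $x^{(k)} > x^{(k+1)}$ for some $k \geq 1$. I will show a positive-probability value profile on which the miner, by inserting one extra bid on-chain, strictly increases her net revenue. Since off-chain influence proofness permits the miner to run an off-chain mechanism that asks all users to bid truthfully while she separately forwards a fabricated bid to $\bBuild$, such a strict improvement contradicts off-chain influence proofness, which is one of the three conditions comprising simple-to-participate.

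\textbf{The deviation and its effect.} Condition on the positive-probability event that at least $k$ real users arrive and their top-$k$ values satisfy $v^{(k)} > p + M$ for a constant $M$ chosen below; this has positive probability for every $M$ because $\Distr$ is unbounded. The miner's deviation is to fabricate a single bid $\hat v$ with $p < \hat v < v^{(k)}$, close to $v^{(k)}$. All $k+1$ bids now exceed $p$, hence all are allocated; the fake bid lands at rank $k+1$ (the smallest bid above $p$), adding $\MargThreshold\, x^{(k+1)}$ to the burn. For each real user $j \in \{1, \dots, k\}$, I analyze their allocation as a function of their own bid $b_j$ with all other bids (including $\hat v$) held fixed: compared to the honest-case allocation curve, the only change is on the interval $[p, \hat v)$, where the curve takes value $x^{(k+1)}$ instead of $x^{(k)}$. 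Above $\hat v$ the rank of user $j$ among all bids coincides with the honest-case rank (since the fake bid is smaller), and below $p$ the allocation is zero in both cases. Applying the payment identity $P_j(b_j) = b_j\, \AllocRule_j(b_j) - \int_0^{b_j} \AllocRule_j(u)\, du$ at $b_j = v^{(j)}$, this single-interval perturbation of the allocation curve increases $P_j$ by exactly $(x^{(k)} - x^{(k+1)})(\hat v - p)$, independently of $j$.

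\textbf{Net revenue and main obstacle.} Summing the payment increase across the $k$ real users and subtracting the extra burn, the miner's net revenue change equals
\[
    k\,(x^{(k)} - x^{(k+1)})(\hat v - p)\;-\;\MargThreshold\, x^{(k+1)}.
\]
Since $x^{(k)} - x^{(k+1)} > 0$, taking $M$ large and then $\hat v$ close to $v^{(k)}$ makes this strictly positive, yielding the desired contradiction. The main obstacle is the bookkeeping in the payment-identity step: each real user's allocation curve depends on their rank among all other bids, so I must carefully verify that inserting $\hat v$ perturbs the curve only on the single interval $[p, \hat v)$ and hence yields the same incremental payment $(x^{(k)} - x^{(k+1)})(\hat v - p)$ for every $j \leq k$. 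The boundary case $x^{(k+1)} = 0$ is subsumed by the same formula: the fake bid then adds no burn, but each real user's payment still rises by $x^{(k)}(\hat v - p) > 0$, so the deviation remains strictly profitable for any $\hat v > p$.
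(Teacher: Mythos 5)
Your proof is correct and follows essentially the same route as the paper's: fabricate a single bid $\hat v$ just below the $k$-th highest bid on a positive-probability profile where the top $k$ values are very large, use the payment identity to show each of the top $k$ users' payments rises by exactly $(x^{(k)}-x^{(k+1)})(\hat v - p)$, and let the unboundedness of $\Distr$ make $k\,(x^{(k)}-x^{(k+1)})(\hat v - p)$ dominate the fixed extra burn $\MargThreshold\, x^{(k+1)}$. The only cosmetic differences are that the paper takes $k$ to be the \emph{smallest} index with $x^{(k)}>x^{(k+1)}$ (so all top-$k$ allocations coincide, simplifying the bookkeeping you do by hand), works with exactly $k$ users so no stray real bids can interfere with the fake bid's rank, and attributes the contradiction directly to on-chain miner simplicity rather than routing the pure on-chain fabrication through off-chain influence proofness.
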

\begin{proof}
    If possible, let $t$ be the smallest natural number such that $x^{(t)} > x^{(t+1)}$.
    Then $x^{(1)} = \dots = x^{(t)} = x$.
    At a high level, when the users have an extremely high value $v$, the miner will be able to fabricate a single bid just below $v$ and increase the payments made by all $t$ users by a substantial margin (this substantial margin increases as a function of $v$).
    On the other hand, the additional burn from fabricating a bid equals $\MargThreshold \, x^{(t+1)}$ and is independent of $v$.
    For a sufficiently large $v$, the increase in payments will outweigh the additional burn, incentivizing the miner to fabricate bids.
    
    Formally, for a large $v \in \R_{\geq 0}$ and small $\varepsilon > 0$, consider $v^{(1)}, \dots, v^{(t)}$ all belonging to $[v, v+\varepsilon]$ (i.e, draw $v_1, \dots, v_t$ and re-arrange them in descending order).
    Indeed, the probability of realizing a value profile $\vec{v}$ with each $v_i \in [v, v+\varepsilon]$ is positive (since the distribution is regular and unbounded).

    We will calculate the payments made by the users.
    Each user $(i)$ is allocated $x$ for $1 \leq i \leq t$, and is allocated as long as the virtual value $\vv(v^{(i)})$ is at least $\MargThreshold$.
    Thus, for all $1 \leq i \leq t$, user $(i)$ makes a payment $\vv^{-1}(\MargThreshold) \, x$ to be included, resulting in a net revenue $t \, \vv^{-1}(\MargThreshold) \, x - t \, \MargThreshold \, x$.

    Suppose the miner fabricates $\hat{v}^{(t+1)} \in (v^{(t)}, v^{(t)} - \varepsilon]$.
    An additional $\MargThreshold \, x^{(t+1)}$ is burnt to include the fabricated bid $\hat{v}^{(t+1)}$.
    However, the payments made by $(1), \dots, (t)$ also increase.
    The users are allocated $x^{(t+1)} < x$ whenever $\vv^{-1}(\MargThreshold)\leq v^{(i)} \leq \hat{v}^{(t+1)}$ and $x$ when $v^{(i)} > \hat{v}^{(t+1)}$.
    By the payment identity (\autoref{item:payment-identity} in \autoref{thm:myerson}), the users pay $\vv^{-1}(\MargThreshold) \, x^{(t+1)} + \hat{v}^{(t+1)} \, (x^{(t)} - x^{(t+1)})$
    each.
    Thus, the net revenue equals
    $$t \, \Big[\vv^{-1}(\MargThreshold) \, x^{(t+1)} + \hat{v}^{(t+1)} \, (x^{(t)} - x^{(t+1)})\Big] - (t x + x^{(t+1)}) \, \MargThreshold.$$

    The increase in revenue from fabricating $\hat{v}^{(t+1)}$ equals
    $$t \, (x - x^{(t+1)}) \, (\hat{v}^{(t+1)} - \vv^{-1}(\MargThreshold)) - \MargThreshold \, x^{(t+1)}.$$
    We have $\hat{v}^{(t+1)} \geq v - \varepsilon$.
    For a large enough $v$, the revenue increases from fabricating $\hat{v}^{(t+1)}$, contradicting on-chain miner simplicity.    
\end{proof}

\begin{theorem} \label{thm:BoundedPositionAuction}
    For any bounded smooth regular distribution $\Distr$ and a block with a finite capacity $\Omega$, a position auction given by the allocation probabilities $\seq{x^{(t)}}{t \in \N}$ and marginal burn per unit allocation $\MargThreshold$ is on-chain miner simple if and only if (and thereby simple-to-participate if and only if)
    $\sum_{t = 1}^{\infty} x^{(t)} \leq \Omega$ and
    \begin{equation} \label{eqn:SufficientOnMS}
    t \, (x^{(t)} - x^{(t+1)}) \, (\sup \Distr - \vv^{-1}(\MargThreshold)) < \MargThreshold \, x^{(t+1)}
\end{equation}
for all $t \in \N$.
\end{theorem}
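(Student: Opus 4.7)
The plan is to reduce the simple-to-participate characterization to on-chain miner simplicity plus feasibility, then separately establish necessity and sufficiency of \eqref{eqn:SufficientOnMS}. For the reduction, I would define the family of virtual utility functions $U^n(\vv(\vec v)) := \max_t \sum_{i=1}^t \vv(v^{(i)}) x^{(i)} - \Threshold_t$, which is convex and non-decreasing in the virtual values as a maximum of affine functions of $\vv(\vec v)$. Its subgradient at coordinate $i$ vanishes whenever $\vv(v_i) < 0$, since such an index would never enter the argmax given $\MargThreshold \geq 0$, and appending a bid with non-positive virtual value does not change the argmax. Thus $U^n$ satisfies the three conditions of \autoref{thm:VirtualPartialConverse}, so any position auction of the stated form is automatically on-chain user simple and off-chain influence proof. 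The feasibility bound $\sum_{t=1}^\infty x^{(t)} \leq \Omega$ is clearly equivalent to the allocation being supportable on a block of capacity $\Omega$ across all value profiles.

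For necessity of \eqref{eqn:SufficientOnMS}, I would adapt the single-fake-bid argument from the proof of \autoref{thm:NoPositionAuction} to the bounded regime. If \eqref{eqn:SufficientOnMS} fails at some index $t$, pick a value profile with $v^{(1)}, \dots, v^{(t)} \in [\sup \Distr - \varepsilon, \sup \Distr]$; such profiles occur with positive density by regularity and boundedness. Have the miner fabricate $\hat v^{(t+1)}$ just below $v^{(t)}$. By the payment identity applied to each user $(i)$ with $1 \leq i \leq t$, their payment rises by $(x^{(t)} - x^{(t+1)})(\hat v^{(t+1)} - \vv^{-1}(\MargThreshold))$, while the burn rises by $\MargThreshold \, x^{(t+1)}$. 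Taking $\hat v^{(t+1)} \to \sup \Distr$ and $\varepsilon \to 0$, the miner's net revenue gain approaches $t \, (x^{(t)} - x^{(t+1)})(\sup \Distr - \vv^{-1}(\MargThreshold)) - \MargThreshold \, x^{(t+1)}$, which is strictly positive by the failure of \eqref{eqn:SufficientOnMS}, contradicting on-chain miner simplicity.

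For sufficiency, I need to show that no miner deviation beats compliance whenever \eqref{eqn:SufficientOnMS} holds for all $t$. First, censoring a real bid with $\vv \geq \MargThreshold$ strictly decreases the miner's virtual utility (since the argmax had already chosen to include it and the mechanism optimizes virtual utility), and censoring or fabricating any bid with $\vv < \MargThreshold$ leaves both the allocation on real users and the burn unchanged. So the only candidate profitable deviation is to fabricate one or more bids with $\vv(\hat v) \geq \MargThreshold$. For a single fabrication landing at rank $(t+1)$, the increase in payments collected across users $(1), \dots, (t)$ is at most $t \, (x^{(t)} - x^{(t+1)})(\sup \Distr - \vv^{-1}(\MargThreshold))$ while the added burn is exactly $\MargThreshold \, x^{(t+1)}$, so by \eqref{eqn:SufficientOnMS} the change is strictly negative. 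For multiple fabrications, I would insert the fake bids one at a time and telescope. The main obstacle will be handling fabrications that land at high ranks and displace many real users downward; the key observation I would exploit is that such displacement only reduces each affected user's allocation probability (since $x^{(i+1)} \leq x^{(i)}$), which lowers the payment collected from them, so a high-rank fabrication is strictly dominated by a low-rank one and is therefore covered by the same bound.
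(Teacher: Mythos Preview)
Your necessity argument and the reduction to on-chain miner simplicity via \autoref{thm:VirtualPartialConverse} are correct and match the paper. The gap is in your sufficiency argument for a single fabrication that is \emph{not} the lowest bid. You assert that displacing a real user downward ``lowers the payment collected from them,'' but the payment identity does not support this: for user $(j)$ pushed from rank $j$ to $j+1$, we have $\OnCPay_{(j)} = v^{(j)}\,\OnCAlloc_{(j)}(v^{(j)}) - \int_0^{v^{(j)}} \OnCAlloc_{(j)}(z)\,dz$, and displacement lowers the allocation curve \emph{throughout} $[0,v^{(j)}]$. If the drop at $v^{(j)}$ itself is small (or zero) but the drop at lower values is larger, the integral shrinks and the payment \emph{rises}. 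Concretely, take $x^{(2)}=x^{(3)}>x^{(4)}$ and fabricate a bid $w$ just below $v^{(1)}$: user $(2)$'s allocation at $v^{(2)}$ is unchanged, yet her payment increases by $(v^{(3)}-\vv^{-1}(\MargThreshold))(x^{(3)}-x^{(4)})>0$. So ``high-rank dominated by low-rank'' cannot be deduced from your displacement heuristic.

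The paper closes this gap by an explicit computation. It writes the total revenue change from fabricating $w$ with $t$ real users above and $n-t$ below as $\mathsf{Core}-\mathsf{Residue}$, where $\mathsf{Core}$ collects all the positive $(v^{(i)}-v^{(i+1)})(x^{(i)}-x^{(i+1)})$-type terms (including those from displaced users) and $\mathsf{Residue}$ collects the rest. It then shows by a weighted-average argument over indices $t,\ldots,n$ that $\mathsf{Core}\le\max_{\hat t\ge t}\hat t\,(w-\vv^{-1}(\MargThreshold))(x^{(\hat t)}-x^{(\hat t+1)})$, and separately that $\mathsf{Residue}\ge\MargThreshold\,x^{(\hat t+1)}$ for the same $\hat t$. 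This reduces the general single-fabrication case to the ``all users above $w$'' case at some $\hat t$ (possibly larger than $t$), which is exactly what \eqref{eqn:SufficientOnMS} controls. Your one-at-a-time telescoping for multiple fabrications is fine and is essentially the paper's first reduction step, but each single step still needs this Core/Residue bound rather than the displacement heuristic.
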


As we will see in the proof, the left hand side will be the increase in the miner's revenue when there are $t$ users with a value $\sup \Distr$ and the miner fabricates a bid $\hat{v}^{(t+1)}$ just below $\sup \Distr$.
The right hand side is the excess burn from inserting the fake bid $\hat{v}^{(t+1)}$.
Intuitively, the miner's ability to increase her revenue by fabricating bids is the highest when all users submit bids close to the supremum of the distribution, i.e, all users are willing to pay extremely high amounts but the revenue is small due to a lack of competition.
We defer the proof to \autoref{sec:ProofofBoundedPositionAuction}.

For the remainder of the section, we will construct a position auction that satisfies the conditions in \autoref{thm:BoundedPositionAuction} for a block capacity $\Omega = 1$.

\begin{example}[Position auction for block capacity $1$.]
\label{ex:rand-auction}
We will construct a position auction such that
(a) $\sum_{t = 1}^{\infty} x^{(t)} = 1$ and $t \, (x^{(t)} - x^{(t+1)}) \leq 2 x^{(t+1)}$ and (b) $\MargThreshold$ satisfies $(\sup \Distr - \vv^{-1}(\MargThreshold)) < \MargThreshold/2$, so that the conditions from \autoref{thm:BoundedPositionAuction} hold.
The latter condition is easy to satisfy.
As $\MargThreshold \xrightarrow{} \sup \Distr$, we have $\sup \Distr - \vv^{-1}(\MargThreshold) \xrightarrow{} 0$ and $\MargThreshold \xrightarrow{} \sup \Distr$.
Thus, for some $\MargThreshold$ sufficiently close to $\sup \Distr$, we have $(\sup \Distr - \vv^{-1}(\MargThreshold)) < \MargThreshold/2$.

For the former, set $x^{(t)} = \frac{1}{t \, (t+1)} \times x^{(1)}$ for $t > 1$.
We have 
$$\sum_{t = 1}^{\infty} x^{(t)} = x^{(1)} \Big( 1 +  \sum_{t = 1}^{\infty} \frac{1}{t \, (t+1)}\Big) = 2 x^{(1)}.$$
For $x^{(1)} = 1/2$, we see that the TFM can allocate any number of users even if the block capacity is $1$.
Indeed, $$t \, (x^{(t)} - x^{(t+1)}) = t \left(\frac{1}{t\, (t+1)} - \frac{1}{(t+1) \, (t+2)}\right) \, x^{(1)} = 2 \times \frac{1}{(t+1) \, (t+2)} = 2 \, x^{(t+1)}.$$

This concludes the construction of a position auction that is simple to participate for a finite block capacity.
\end{example}

\subsection{Generalized Position Auctions}
So far, we have argued that the posted-price mechanism with a suitable burn is simple-to-participate for an infinite block.
In contrast, when the distribution is bounded, we showed that there exists position auctions with an allocation rule meaningfully different from that of the posted-price mechanism and are simple-to-participate even for a finite block.
In this section, we propose \emph{generalized position auctions} that are also meaningfully different from the posted-price mechanism, and are simple-to-participate for infinite blocks and value distributions with an unbounded support.
However, we show generalized position auctions cannot be simple to participate for finite blocks and thus, we will have to search in a much larger class of mechanisms to construct simple-to-participate mechanisms when the block capacity is bounded.

At a high level, the allocation rule for a user in a position auction depends only its rank when all bids are arranged in descending order.
This can be extended to a generalized position auction by choosing a class of single-agent allocation rules $\seq{x^{(t)}}{t \in \N}$, where user $(i)$ with the $i$\textsuperscript{th} largest bid and value $v^{(i)}$ is allocated with probability $x^{(i)}(v^{(i)})$.

\begin{definition}[Generalized position auctions]
    A \emph{generalized position auction} is given by a sequence of single-agent allocation rules $\seq{x^{(t)}}{t \in \N}$, where $x^{(t)}: \R_{\geq 0} \xrightarrow{} [0, 1]$.
    For a value profile $\vec{v}$, user $(i)$ receives an allocation $x^{(i)}(v^{(i)})$.
    The payment and the burn rules are chosen to satisfy the payment (\autoref{item:payment-identity} in \autoref{thm:myerson}) and burn identities (\autoref{thm:UtilityVersionMainReduction}) respectively.
\end{definition}

Note that for the allocation rule to be monotone, we require each $x^{(t)}$ to be monotone and further, $x^{(t)}(v) \geq x^{(t+1)}(v)$ at all values $v \in \R_{\geq 0}$.
Additionally, condition \ref{Bul:2} from \autoref{thm:UtilityVersionMainReduction} suggests that $x^{(t)}(v) = 0$ for all $t$ and values $v$ smaller than the monopoly reserve.

To begin, we will compute the payment and burn rules according to the payment and burn identities respectively so that the mechanism is on-chain user simple and off-chain influence proof.
For a value profile $\vec{v} \in \supp(\Distr^n)$, user $(i)$ is charged a payment
\begin{align}
    \OnCPay_i(\vec{v}) = v^{(i)} \, x^{(i)}(v^{(i)}) - \sum_{j = i}^n \int_{v^{(j+1)}}^{v^{(j)}} x^{(j)}(z) \,d z \label{eqn:PaymentGenPos}
\end{align}
(for convenience, we set $v^{(n+1)} = \vv^{-1}(0)$).
Similarly,
\begin{align}
    \OnCBurn(\vec{v}) = \sum_{i = 1}^n \vv(v^{(i)}) \, x^{(i)}(v^{(i)}) - \int_{0}^{\vv(v^{(i)})} x^{(i)}(z) \,d \vv(z) \label{eqn:BurnGenPos}
\end{align}
is burnt almost surely for $\vec{v} \sim \Distr^n$.
Thus, the miner's smoothened utility function is given by
\begin{equation}
    \notag
    \sum_{i = 1}^n \vv(v^{(i)}) \, x^{(i)}(v^{(i)}) - \Big(\sum_{i = 1}^n \vv(v_i) \, x^{(i)}(v^{(i)}) - \int_{0}^{\vv(v^{(i)})}x^{(i)}(z) \,d \vv(z) \Big) = \sum_{i = 1}^n \int_{0}^{\vv(v^{(i)})} x^{(i)}(z) \,d\vv(z).
\end{equation}
The partial derivative $\nabla^{\vv}_i \sum_{i = 1}^n \int_{0}^{\vv(v^{(i)})} x^{(i)}(z) \,d \vv(z) = x^{(i)}(v^{(i)})$, satisfying the burn identity.

Next, we will construct simple-to-participate generalized position auctions for an infinite block.
The strategy to show that the mechanism is indeed simple to participate is similar to \autoref{thm:BoundedPositionAuction}.

\begin{theorem} \label{thm:UnboundedPositionAuction}
    Let $\Distr$ be a smooth regular distribution with an unbounded virtual value function.
    Suppose that an on-chain user simple and off-chain influence proof generalized position auction with allocation rule $\seq{x^{(t)}}{t \in \N}$ satisfies
    \begin{align}
        t \, \Big(x^{(t)}(w) - x^{(t+1)}(w) \Big) \geq (t+1) \, \Big(x^{(t+1)}(w) - x^{(t+2)}(w) \Big) \label{eqn:AllocConverge}
    \end{align}
    for all $t \in \N$, and
    \begin{align}
        t \times \int_{\vv^{-1}(0)}^w \Big(x^{(t)}(z) - x^{(t+1)}(z) \Big) \,dz \leq \vv(w) \, x^{(t+1)}(w) - \int_0^{\vv(w)} x^{(t+1)}(z) \,d\vv(z) \label{eqn:GenPosIneq}
    \end{align}
    for all $t \in \N$ and $w \geq \vv^{-1}(0)$.
    Then, the mechanism is also on-chain miner simple.
\end{theorem}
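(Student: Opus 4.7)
The plan is to verify on-chain miner simplicity by directly computing the change in the miner's realized revenue under every available on-chain deviation, and then showing that conditions~\eqref{eqn:AllocConverge} and~\eqref{eqn:GenPosIneq} force this change to be non-positive. Since the generalized position auction is already on-chain user simple and off-chain influence proof, users bid truthfully and the payment and burn rules are given explicitly by~\eqref{eqn:PaymentGenPos} and~\eqref{eqn:BurnGenPos}; moreover, the mechanism is prior-dependent and takes no miner advice, so the miner's only remaining tools are to censor a subset of the submitted bids and to fabricate new ones.

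I would first treat the canonical case: the miner fabricates a single bid of value $w \geq \vv^{-1}(0)$ placed below all $t$ real bids $v^{(1)} > \cdots > v^{(t)}$, so that $w$ occupies rank $t+1$. Plugging into~\eqref{eqn:PaymentGenPos} and telescoping the integrals, each real user's payment increases by exactly $\int_{\vv^{-1}(0)}^{w}\bigl[x^{(t)}(z)-x^{(t+1)}(z)\bigr]\,dz$, because the only change in that user's allocation-as-a-function-of-their-own-bid happens on $[\vv^{-1}(0),w]$, where the allocation drops from $x^{(t)}$ to $x^{(t+1)}$. Using~\eqref{eqn:BurnGenPos}, the burn simultaneously increases by $\vv(w)\,x^{(t+1)}(w)-\int_0^{\vv(w)} x^{(t+1)}(z)\,d\vv(z)$. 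Summing the payment gain across the $t$ real users and subtracting the burn increase produces a net revenue change whose non-positivity is precisely the content of~\eqref{eqn:GenPosIneq}.

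I would then reduce every other fabrication pattern to this base case via~\eqref{eqn:AllocConverge}. For a single insertion at an interior rank $k+1$ (with $v^{(k)}>w>v^{(k+1)}$), real users at ranks $\le k$ incur the same kind of extra-integral gain, but now weighted only over the intervals below $w$; meanwhile each real user originally at rank $j>k$ shifts down one rank and has its allocation drop from $x^{(j)}$ to $x^{(j+1)}$. Telescoping these per-user changes, the total payment gain decomposes into a signed sum of terms of the form $j\cdot[x^{(j)}(z)-x^{(j+1)}(z)]$ integrated over the intervals between consecutive real bids, plus explicit allocation corrections on the affected ranks. Condition~\eqref{eqn:AllocConverge} is precisely the statement that $j\cdot[x^{(j)}(w)-x^{(j+1)}(w)]$ is non-increasing in $j$, which lets this telescoped sum be controlled by its analog at the bottom; one then applies~\eqref{eqn:GenPosIneq} to bound the result against the burn increase. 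Multiple fabrications follow by iterating the same monotonicity argument.

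Finally, censoring can be handled directly. Deleting real bid $v^{(t)}$ reduces the burn by its~\eqref{eqn:BurnGenPos} contribution and each remaining user's payment by $\int_{\vv^{-1}(0)}^{v^{(t)}}[x^{(t-1)}(z)-x^{(t)}(z)]\,dz$, and forfeits $\OnCPay_{(t)}(\vec v)$ itself; an integration by parts against $d\vv$, combined with $v-\vv(v) = (1-F(v))/f(v) \geq 0$ and the monotonicity of $x^{(\cdot)}$ in rank, shows this net effect is non-positive, with no invocation of either new condition. Combined deviations that mix censoring and fabrication reduce to a composition of the above analyses. The main obstacle I anticipate is the interior-insertion step: unlike the clean one-line identity at the bottom, the telescoping across the lower ranks is delicate, and pinning down the precise algebraic identity that lets~\eqref{eqn:AllocConverge} convert the interior case into a sum of bottom-case inequalities involving~\eqref{eqn:GenPosIneq} is where most of the technical work will sit.
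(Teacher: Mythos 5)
Your plan follows essentially the same route as the paper's proof: compute the explicit payment and burn changes from inserting a single fake bid, observe that insertion below all real bids is unprofitable exactly when \eqref{eqn:GenPosIneq} holds, and use the monotonicity of $t\,(x^{(t)}(w)-x^{(t+1)}(w))$ guaranteed by \eqref{eqn:AllocConverge} to reduce an interior insertion (in the paper, by truncating the profile to the bids above $w$) to that bottom case, with multiple fabrications reduced to a single one and censoring dispatched separately. The only imprecision is that the multiple-fabrication step is not just ``iterating the monotonicity argument'' but the sequential-insertion argument of Claim~\ref{thm:OneBid} (some single insertion has positive conditional gain, and earlier fakes are re-cast as real bids whose payments only rise), while your censoring argument via $\vv(z)\le z$ is in fact more explicit than the one-line dismissal the paper gives.
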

We defer the proof to \autoref{sec:ProofofUnboundedPositionAuction}.

In \autoref{ex:GenPos}, we construct a simple-to-participate generalized position auction for smooth regular distributions.

\begin{example}[Simple-to-participate generalized position auction for unbounded distributions] \label{ex:GenPos}
    Let $\Distr$ be a smooth regular distribution with a value $\Gamma$ such that $\vv(\Gamma) > 0$.
    Consider the generalized position auction given by
    $$x^{(t)}(w) = \begin{cases}
        0 & \text{ for } w \in [0, \Gamma), \\
        1 - \frac{1}{2e^{(w - \Gamma)}} \sum_{i = 1}^t \frac{1}{i \, (i+1)} & \text{ for } w \in [\Gamma, \infty).
    \end{cases}$$
    We will begin by verifying \autoref{eqn:AllocConverge}.
    The inequality obviously holds for $w < \Gamma$.
    For $w \geq \Gamma$,
    $$t \times \Big(x^{(t)}(w) - x^{(t+1)}(w)\Big) = \frac{1}{2e^{(w - \Gamma)}} \cdot \frac{t}{(t+1)(t+2)}$$
    which is indeed decreasing in $t \in \N$.

    To verify \autoref{eqn:GenPosIneq}, once again note that the inequality holds trivially for $w < \Gamma$.
    For $w \geq \Gamma$, the left hand side can be upper bounded by
    $$\int_{\Gamma}^{\infty} t \times \Big(x^{(t)}(w) - x^{(t+1)}(w)\Big) \, dz = \frac{t}{2(t+1)(t+2)} \int_{\Gamma}^{\infty} \frac{1}{e^{(w - \Gamma)}} \, dz = \frac{t}{2(t+1)(t+2)}.$$
    We can lower bound the right hand side as follows.
    $$\vv(w) \, x^{(t+1)}(w) - \int_0^{\vv(w)} x^{(t+1)}(z) \,d\vv(z) = \int_0^{\vv(w)} x^{(t+1)}(w) - x^{(t+1)}(z) \,dz \geq \vv(\Gamma) \times \frac{1}{2}.$$
    The final inequality follows since $x^{(t)}(w) \geq 1/2$ for $w \geq \Gamma$.
    Choose $\Gamma$ such that $\vv(\Gamma) \geq 2$ so that the right hand side is at least $1$, which is clearly larger than the left hand side.\footnote{For the construction in this example to work, there needs to exist $\Gamma$ such that $\vv(\Gamma) \geq 2$. However, it is not hard to modify our construction for arbitrary distributions. All we need is the existence of some $\Gamma$ such that $\vv(\Gamma) > 0$.}   
\end{example}

We conclude our discussion by arguing that the only simple-to-participate generalized position auction for a finite block and an unbounded distribution is the trivial mechanism.

\begin{theorem} \label{thm:GenPosFinite}
    Let $\Distr$ be a smooth regular distribution.
    Then, the only simple-to-participate generalized position auction for a finite block when user values are drawn from $\Distr$ is the trivial auction that almost never allocates any user.
\end{theorem}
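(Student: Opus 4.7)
The plan is to adapt the fabrication-based attack of \autoref{thm:NoPositionAuction} to the generalized setting: unboundedness of $\Distr$ supplies arbitrarily high-value profiles, while the finite block capacity $\Omega$ caps $\sum_t x^{(t)}(v) \leq \Omega$ uniformly in $v$, and these two constraints will turn out to be incompatible unless $x^{(t)} \equiv 0$. First, consider the deviation in which $n$ real users have values $v^{(1)} > \cdots > v^{(n)} > \hat{v} > \vv^{-1}(0)$ and the miner fabricates a single bid at $\hat{v}$, placing it at rank $n+1$. Working through the payment identity \autoref{eqn:PaymentGenPos}, the allocation seen by each real user as a function of their own bid switches from $x^{(n)}(z)$ to $x^{(n+1)}(z)$ on the interval $(\vv^{-1}(0),\hat{v})$, so each of the $n$ real users' payment rises by exactly $\int_{\vv^{-1}(0)}^{\hat{v}}\bigl[x^{(n)}(z) - x^{(n+1)}(z)\bigr]\,dz$, while the burn grows by the new single-rank term $\vv(\hat{v})\,x^{(n+1)}(\hat{v}) - \int_0^{\vv(\hat{v})} x^{(n+1)}(\vv^{-1}(\phi))\,d\phi$ contributed by the fabricated bid. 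On-chain miner simplicity rules out any net-positive deviation, yielding exactly inequality \autoref{eqn:GenPosIneq} for every $n \geq 1$ and every $\hat{v} > \vv^{-1}(0)$.

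Differentiating \autoref{eqn:GenPosIneq} in $\hat{v}$ (the RHS derivative equals $\vv(\hat{v})(x^{(n+1)})'(\hat{v})$ after cancellation of the $\vv'(\hat{v})x^{(n+1)}(\hat{v})$ terms) gives the pointwise bound $n\bigl[x^{(n)}(y) - x^{(n+1)}(y)\bigr] \leq \vv(y)\,(x^{(n+1)})'(y)$ for a.e.\ $y > \vv^{-1}(0)$ and every $n$. Define $X(y) := \sum_{n=1}^\infty x^{(n)}(y)$; finite capacity gives $X(y) \leq \Omega$, and because $\bigl(x^{(n)}(y)\bigr)_n$ is non-increasing and summable, $n\,x^{(n)}(y) \to 0$, so by Abel summation $\sum_n n\bigl[x^{(n)}(y) - x^{(n+1)}(y)\bigr] = X(y)$. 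Summing the pointwise inequality in $n$ (and dropping the non-negative contribution $\vv(y)(x^{(1)})'(y)$) produces the differential inequality $X(y) \leq \vv(y)\,X'(y)$ for a.e.\ $y > \vv^{-1}(0)$.

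To conclude, integrate. On any sub-interval where $X > 0$, we have $(\ln X)'(y) \geq 1/\vv(y)$; since $\Distr$ is regular and unbounded, $\vv(s) = s - (1-F(s))/f(s) < s$ for every finite $s$, and hence $\int_{y_0}^y ds/\vv(s) \geq \ln(y/y_0) \to \infty$ as $y \to \infty$. Integrating therefore gives $X(y) \geq X(y_0)\exp\bigl(\int_{y_0}^y ds/\vv(s)\bigr) \to \infty$, contradicting $X \leq \Omega$. Thus $X(y_0) = 0$ for every $y_0 > \vv^{-1}(0)$, and combined with condition (2) of the burn identity (\autoref{thm:UtilityVersionMainReduction}), which forces $x^{(t)}(v) = 0$ whenever $\vv(v) < 0$, we conclude $x^{(t)} \equiv 0$ for all $t$, i.e., the mechanism almost never allocates.

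The main technical obstacle is that the $x^{(t)}$ are only guaranteed to be bounded and monotone, hence merely a.e.\ differentiable, so the pointwise differentiation, the exchange of summation and differentiation, and the term-by-term manipulations are all a priori suspect. I would manage this by staying in the integrated form as long as possible: applying Tonelli to sum the integrated version of \autoref{eqn:GenPosIneq} over $n$ (every integrand is non-negative and partial sums are monotone in $n$), obtaining $\int_{\vv^{-1}(0)}^y X(z)\,dz \leq \int_{\vv^{-1}(0)}^y \vv(z)\,dX(z)$ with $dX$ interpreted as a non-negative Stieltjes measure, and then deriving the exponential lower bound on $X$ by a measure-theoretic Gronwall argument applied to this distributional inequality.
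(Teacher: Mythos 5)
Your proposal is essentially correct, and it closes the argument by a genuinely different route than the paper's. Both proofs launch from the same constraint---the single fabricated bid at $\hat v$ below $n$ users, whose non-profitability under on-chain miner simplicity is exactly \autoref{eqn:GenPosIneq} (restated as \autoref{eqn:OnMSGenPos} inside the paper's proof)---and both implicitly use unboundedness of $\supp(\Distr)$, which the theorem statement leaves tacit but the surrounding text and the paper's own proof require. The paper then works rank by rank: \autoref{thm:SublinearBurn} shows the right-hand side of that inequality grows sublinearly in $w$, hence for any $\delta>0$ there are arbitrarily large $w$ at which every gap $x^{(t)}(w)-x^{(t+1)}(w)$, $t\le T-1$, is below $\delta$; packing $T>2\Omega/x^{(1)}(W)$ users above such a $w$ then violates the capacity constraint. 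You instead aggregate over ranks: summing \autoref{eqn:GenPosIneq} over $n$, and using that $\bigl(x^{(n)}(y)\bigr)_n$ is non-increasing and summable (so $n\,x^{(n)}(y)\to 0$ and Abel summation applies, summability itself coming from the capacity bound $X(y)=\sum_n x^{(n)}(y)\le\Omega$ at values $y$ in the unbounded support), you reduce the whole family of constraints to a single integral inequality for $X$ and run a Gronwall-type blow-up using $\vv(z)\le z$. This buys a shorter and more quantitative contradiction (at-least-linear growth of $X$ against the uniform bound $\Omega$), avoiding the paper's ``find one value where all gaps are simultaneously small'' step; the paper's route, in exchange, never interchanges infinite sums with integrals or reasons about the Stieltjes measure $dX$.

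One step in your main text is not valid as written: an inequality between two functions cannot be differentiated, so the pointwise bound $n\bigl[x^{(n)}(y)-x^{(n+1)}(y)\bigr]\le \vv(y)\,(x^{(n+1)})'(y)$ does not follow from \autoref{eqn:GenPosIneq} (for instance $x^{(n+1)}$ may be locally constant where the gap is strictly positive, the integrated inequality holding on slack accumulated elsewhere). You flag this yourself, and the integrated repair you sketch is sound: summing \autoref{eqn:GenPosIneq} over $n$ via Tonelli gives $\int_{\vv^{-1}(0)}^{y}X(z)\,dz \le \vv(y)\bigl(X(y)-x^{(1)}(y)\bigr)-\int_{\vv^{-1}(0)}^{y}\bigl(X(z)-x^{(1)}(z)\bigr)\,d\vv(z)\le \int_{\vv^{-1}(0)}^{y}\vv(z)\,dX(z)$, and the ``measure-theoretic Gronwall'' step does close, in fact elementarily: by $\vv(z)\le z$ and Stieltjes integration by parts the right-hand side is at most $y\,X(y)-\int_{\vv^{-1}(0)}^{y}X(z)\,dz$, so $2\int_{\vv^{-1}(0)}^{y}X(z)\,dz\le y\,X(y)\le y\,\sup X$; bounding the left side below by $2(y-y_0)\,X(y_0)$ and letting $y\to\infty$ gives $2X(y_0)\le \sup X$ for every $y_0$, hence $\sup X=0$ since $\sup X\le\Omega<\infty$. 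Combined with condition (2) of the burn identity below the reserve, this yields the trivial mechanism, as you state.
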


We defer the proof to \autoref{sec:ProofofGenPosFinite}.

We leave as an open question whether there exist simple-to-participate mechanisms for unbounded distributions for blocks with a finite capacity.

\section{Discussion: Data-Driven Prior-Dependent Posted-Price Mechanisms}
\label{sec:discussion}

To summarize our discussion thus far, we have argued that simple-to-participate mechanisms need to be prior-dependent if it does not have access to heavy cryptographic machinery.
In this section, we consider various challenges in algorithmically estimating the prior $\Distr$ even for the straightforward posted-price mechanism for an infinite block.

For the posted-price mechanism with a burn informed by an algorithmic inference procedure to be off-chain influence proof, the distribution estimated empirically by the procedure and the miner's true beliefs should not be significantly different.
We discuss possible reasons for mismatch in the on-chain and off-chain information available that can cause large gaps in learning the prior.

As a baseline, we go over various considerations of Ethereum's incumbent transaction fee mechanism EIP-1559 \citep{ButerinRLP19, Roughgarden20}, which posts a price, burns all the payments collected from the users and leaves the miner with only the block reward.
In steady state, EIP-1559 aims at maintaining a high rate of block space utilization (or \emph{gas}, in Ethereum's case) subject to ensuring a low probability of demand exceeding supply (a trade-off formalized by \citealp{GaneshHSvM24}).
However, in case of extreme demand spikes such as immediately after the launch of an NFT collection, EIP-1559 is content with defecting to a first-price auction to determine inclusion in the block.

The prior estimation protocol can aim for a similar performance.
In steady state, when the drift in the distribution of user values between blocks is small, we should hope that the protocol computes a distribution that is $\varepsilon$-close to the true prior with high probability.
For example, if the goal is to implement the revenue optimal posted-price mechanism, \citet{DhangwatnotaiRY15} show an estimation procedure to learn a price that obtains a $(1-\varepsilon)$-approximation to the optimal revenue for $\varepsilon \xrightarrow{} 0$ as the number of available samples tends to infinity.
Alternatively, the mechanism could also shoot for maintaining a target allocation probability or utilizing a target fraction $\gamma$ of the block space (for example $\gamma = 1/2$ for EIP-1559).
We leave bounding the sample complexity of estimation procedures to learn the price and burn for both, achieving a target allocation probability and a target block space utilization, as an open problem.
During extreme events when the distribution shifts significantly between blocks, the data in previous blocks cannot be used to infer meaningful estimates for the current block, and thus, the mechanism must rely on other guardrails similar to the first-price auction of EIP-1559.

While deterring a miner from running mechanisms off-chain, using data from previous blocks to estimate prices and corresponding burns has the following downsides when compared to EIP-1559.
First, the price-adjustment mechanism of EIP-1559 is extremely simple --- the price for the next block is purely a function of the current price and the fraction $\gamma$ of block space utilized.
However, estimating virtual values is far more nuanced and might require (at least temporarily) storing transactions that were not included in the previous blocks.
Second, for prices strictly larger than the monopoly reserve (which require burning a strictly positive fraction of the payments collected), a naive inference procedure might not be robust against a far-sighted miner trying to reduce the burn for future blocks.
For example, for a distribution with  CDF $F$ and a PDF $f$, suppose the miner expects a price $p$ and a burn $\vv(p) = p - \frac{1-F(p)}{f(p)}$ for the next block.
The miner can then pay small bribes to users with values smaller than $p$ to shade their bids so that for the inferred distribution with a virtual value $\hat{\vv}$, $\hat{\vv}(p) = 0$.\footnote{It is possible to manipulate probability mass below $p$ such that the virtual value at $p$ of the manipulated distribution equals $0$. The monopoly reserve $r$ with virtual value $0$ of a regular distribution satisfies $r = \arg \max_{\hat{r}} \hat{r} \times (1 - F(\hat{r}))$ (see \citealp{HartlineBook}, for example). Shading bids smaller than $p$ so that the maximum of $\hat{r} \times (1 - F(\hat{r}))$ is attained at $p$ can trick a naive learning protocol to mistake $\hat{\vv}(p) = 0$.}
Users with values smaller than $p$ do not get allocated by the mechanism anyways, and should be happy to take any positive bribe offered by the miner.
Thus, the miner can now earn $p$ per included user, without having to burn any portion of the revenue.

We leave as an open question designing tamper-resistant inference procedures and more generally, mitigating potential reasons for mismatch between the on-chain protocol's estimates and data available off-chain.

 \bibliographystyle{apalike-three-authors}
\bibliography{Bib}

\appendix

\addtocontents{toc}{\protect\setcounter{tocdepth}{2}}%

\section{Mechanism Design Preliminaries} \label{sec:MechDesign}

To begin, we give a very brief review of mechanism design.
For an accessible introduction focused on the application to TFMs, see \cite{GaneshTW24};
For a complete treatment, see resources such as \citet{roughgarden2010algorithmic} and \citet{HartlineBook}.

We consider single-parameter quasi-linear Bayesian environments specified by some set $[n] = \{ 1,\ldots, n\}$ of users, a set of feasible allocations $\FeasAlloc\subseteq 2^{\{1,\ldots,n\}}$ of which users are allocated (i.e., included in a block), and distributions $\TypeDistr_1 \times \dots \times \TypeDistr_n$  of the users' valuations for being allocated; in this paper, we consider the i.i.d. environment with $\TypeDistr_i = \TypeDistr$ for all bidders.
An \emph{outcome} in such an environment is a tuple in $\Outcomes = \FeasAlloc \times \R^n$, where outcome $(X, \Pay_1,\ldots, \Pay_n)$ specifies a set $X$ of allocated users, and each user $i$ pays $P_i$ for a total utility of $v_i \cdot \1{ i \in X } - \Pay_i$.
Users are risk-neutral; i.e., their utility for a distribution of outcomes equals their expected utility.
A (sealed-bid) \emph{auction} (or a \emph{mechanism}) over some environment is any mapping $\mathcal{M} : \T^{n} \to \Delta(\Outcomes)$ from bids of each user to a distribution over outcomes.

We are often interested in incentive compatible mechanisms, i.e., ones where the user is incentivized to bid their true value.
For some mechanism $\mathcal{M}$ and a value profile $\vec v$, let $X_i(\vec v)$ denote the probability that $i$ is allocated and $P_i(\vec v)$ be his expected payment (where the expectation is taken only over the randomness in the mechanism); we call $X_i$ and $P_i$ the \emph{allocation rule} and \emph{payment rule} of $i$.
The mechanism $\mathcal{M}$ is \emph{DSIC} (dominant-strategy incentive compatible) if for each user $i$, bidding $v_i$ is $i$'s dominant strategy irrespective of the bids of the other users. 
Formally, for all 
$\vec v = (v_1, \dots, v_n)$ and $\widetilde{v}_i$,
\[
 v_i\cdot X_i(v_i, \vec v_{-i}) - P_i(v_i, \vec v_{-i}) 
\ge
v_i\cdot X_i(\widetilde{v}_i, \vec v_{-i}) - P_i(\widetilde{v}_i, \vec v_{-i}).
\]

\citet{Myerson81} argues that $\AllocRule$ can be the allocation rule of a DSIC mechanism if and only if $\AllocRule_i$ is monotone in user $i$'s value $v_i$.
Further, the payment rule $\Pay$ for which $(\AllocRule, \Pay)$ is DSIC can be characterized as a function of the allocation rule $\AllocRule$.

\begin{theorem}[Payment identity] \label{thm:PaymentIdentity}
    A mechanism $\mathcal{M}$ with an allocation and payment rule $\AllocRule$ and $\Pay$ respectively is DSIC if and only if:
    \begin{itemize}
        \item $\AllocRule_i(\cdot, \vec{v}_{-i})$ is monotone non-decreasing in $v_i$.
        \item The payment rule $\Pay$ satisfies, for some constant $c_i$,
        $$\Pay_i(v_i, \vec v_{-i}) = \int_0^{v_i} z \AllocRule_i'(z, \vec v_{-i}) \, dz + c_i.$$
    \end{itemize}
\end{theorem}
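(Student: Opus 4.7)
The plan is to prove both directions of the equivalence by treating $i$'s problem as a one-dimensional optimization problem with $\vec{v}_{-i}$ held fixed. Throughout, write $x(v) = \AllocRule_i(v, \vec v_{-i})$ and $p(v) = \Pay_i(v, \vec v_{-i})$ for brevity, and let $u(v) = v\, x(v) - p(v)$ denote the utility that a user with true value $v$ obtains by reporting truthfully.

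For the forward direction (DSIC implies monotonicity and the payment identity), first derive monotonicity by adding the two DSIC inequalities obtained by swapping the reports of $v$ and $v'$: this yields $(v'-v)(x(v')-x(v)) \ge 0$, so $x$ is non-decreasing. Next, observe that DSIC lets us rewrite $u(v) = \max_{\tilde v} \bigl(v\, x(\tilde v) - p(\tilde v)\bigr)$, which is the supremum of affine functions of $v$ and hence convex and therefore differentiable almost everywhere. An envelope-theorem argument (or a direct squeeze using the two incentive constraints around $v$ and $v+\varepsilon$, dividing by $\varepsilon$ and taking $\varepsilon\to 0^+$ and $\varepsilon\to 0^-$) gives $u'(v) = x(v)$ wherever $x$ is continuous. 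Integrating from $0$ yields $u(v) = u(0) + \int_0^v x(z)\,dz$, so $p(v) = v\, x(v) - \int_0^v x(z)\,dz - u(0)$, and an integration by parts (valid since $x$ is monotone and hence has bounded variation) converts this to $p(v) = \int_0^v z\, x'(z)\, dz - u(0)$, giving the stated identity with $c_i = -u(0)$.

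For the reverse direction, assume $x$ is monotone non-decreasing and $p$ is given by the stated formula. I need to show that for any $v$ and any alternative report $\tilde v$, $u(v) = v\, x(v) - p(v) \ge v\, x(\tilde v) - p(\tilde v)$. Substituting the payment identity and rearranging reduces this to showing $v\bigl(x(v) - x(\tilde v)\bigr) \ge \int_{\tilde v}^{v} z\, x'(z)\, dz$. When $\tilde v < v$, every $z$ in the integration range satisfies $z \le v$, and monotonicity gives $x'(z) \ge 0$ (in the Stieltjes sense), so $\int_{\tilde v}^{v} z\, x'(z)\, dz \le v \int_{\tilde v}^{v} x'(z)\, dz = v(x(v) - x(\tilde v))$. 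When $\tilde v > v$, reverse the limits: each $z$ now satisfies $z \ge v$, so $\int_{v}^{\tilde v} z\, x'(z)\, dz \ge v(x(\tilde v) - x(v))$, which is the same inequality in the other direction. Either way the DSIC constraint holds.

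The only mildly delicate point is the handling of general (not necessarily differentiable) monotone $x$: one should interpret $\int_0^v z\, x'(z)\, dz$ as a Lebesgue--Stieltjes integral with respect to the monotone function $x$, and treat the integration by parts and the envelope identity on the set of full measure where $x$ is differentiable (with the jumps contributing consistently on both sides). I do not expect this to be the main obstacle; it is the standard Myerson argument and the measure-theoretic details are routine provided one carefully states what the derivative $x'$ means when $x$ has flat portions or jumps.
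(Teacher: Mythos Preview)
Your argument is correct and is the standard Myerson proof. Note, however, that the paper does not actually prove this theorem: it is stated in the preliminaries appendix as a classical result attributed to \citet{Myerson81}, with no proof given, so there is nothing in the paper to compare your approach against.
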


For deterministic mechanisms, an equivalent version of the payment identity can be stated in terms of the \emph{critical bid}---a concept we use in our paper.
Given bids $\vec{v}_{-i}$ of the other users, the critical bid $\overline{v}_i$ for user $i$ is the smallest bid for which $i$ is allocated by the mechanism.
Then, the payment identity states that the DSIC mechanism $\mathcal{M}$ charges the critical bid $\overline{v}_i$ from user $i$ whenever he is allocated.

A weaker incentive compatibility constraint called \emph{Bayesian incentive compatibility (BIC)} is also considered in literature, which says that when user $i$ takes expectation over the bids of the other users, he maximizes his utility by bidding $v_i$.
That is, if we define the \emph{interim} allocation and payment rules to be
\[ x_i(v_i) = \Es{\vec v_{-i} \sim \TypeDistr_{-i}}{X_i(v_i, \vec v_{-i})} \text{\hspace{1cm} and \hspace{1cm}} p_i(v_i) = \Es{\vec v_{-i} \sim \TypeDistr_{-i}}{P_i(v_i, \vec v_{-i})},
\]
then the mechanism is BIC if for all $v_i, \widetilde{v}_i \in \R_{\ge 0}$,
\[
v_i\cdot x_i(v_i) - p_i(v_i) \ge v_i\cdot x_i(\widetilde{v}_i) - p_i(\widetilde{v}_i).
\]

We also consider non-incentive-compatible auctions in which users best-respond to each others' Bayesian strategies, i.e., users play a \emph{BNE} (Bayes-Nash equilibrium).
Such a mechanism $\mathcal{M}$ accepts bids from user $i$ drawn from some message space $\mathsf{Bid}_{i}$, and maps each input in 
$\mathsf{Bid}_{1}\times\ldots\times\mathsf{Bid}_{n}$ to a feasible allocation and payments for each user.
User $i$'s \emph{strategy} $s_i$ is a map from $i$'s value to her bid in $\mathsf{Bid}_i$.
A profile of strategies $(s_1, \dots, s_n)$ is a BNE if for all values $v_i$ and alternative bids $b_i$, we have 
\begin{align*}
 & \Es{\vec v_{-i}\sim \mathcal{T}_{-i}}{ v_i \cdot X_i(s_i(v_i), s_{-i}(\vec v_{-i})) - P_i(s_i(v_i), s_{-i}(\vec v_{-i})) }
\\ & \qquad \qquad \ge \Es{v_{-i}\sim \mathcal{T}_{-i}}{ v_i \cdot X_i(b_i, s_{-i}(\vec v_{-i})) - P_i(b_i, s_{-i}(\vec v_{-i})) }.
\end{align*}
The interim allocation and payment rules of a BNE are defined as in a BIC mechanism, i.e., 
    \[
    x_i(v_i) = \Es{\vec v_{-i} \sim \TypeDistr_{-i}}{X_i(s_i(v_i), s_{-i}(\vec v_{-i}))} \text{\hspace{1cm} and \hspace{1cm}} p_i(v_i) = \Es{\vec v_{-i} \sim \TypeDistr_{-i}}{P_i(s_i(v_i), s_{-i}(\vec v_{-i}))}.
    \]
The classic revelation principle says that for every BNE of every mechanism, there exists a different BIC mechanism with the same interim allocation and price rules.

A payment identity similar to DSIC mechanisms can be derived for BNEs as well.
Further, Bayesian mechanism design gives rich tools for characterizing BNEs and analyzing their revenue.
Much of this work uses the concept of a virtual value. 
When user $i$'s with value distribution $\TypeDistr_i$ has cdf $F_i(t) = \Pl{v_i \le t}$ and corresponding pdf $f_i$,
his virtual value is defined as $\varphi_i(v_i) = v_i - (1-F_{\TypeDistr}(v_i))/f_{\TypeDistr}(v_i)$.

\begin{theorem}[\citealp{Myerson81}]
\label{thm:myerson}
Consider any environment and any mechanism $\mathcal{M}$ with strategies $\big(s_i(\cdot)\big)_{i\in [n]}$ that defines interim allocation and payment rules $\big(x_i(\cdot)\big)_{i\in [n]}$ and $\big(p_i(\cdot)\big)_{i\in [n]}$.
We have:
\begin{enumerate}[(1)]
\item \label{item:payment-identity}
(\textbf{Payment identity.})
    The strategies $(s_i)_{i\in[n]}$ are a BNE if and only if the following holds:
    for each $i$, the allocation rule $x_i(\cdot)$ is monotonically non-decreasing, and the payment rule $p_i(v_i)$ satisfies $p_i(v_i) = \int_0^{v_i} zx_i'(z) \, dz + c$ for some constant $c$.
\item \label{item:revenue-equals-virtual-welfare}
(\textbf{Myerson's lemma / revenue equivalence.})
    The expected revenue (i.e., the sum of the payments) satisfies
    \[ \Es{v \sim \TypeDistr^n}{\sum_{i = 1}^n p_i(v_i)} 
    = \Es{v \sim \TypeDistr^n}{\sum_{i = 1}^n \varphi_i(v_i) \cdot x_i(v_i)}.
    \]
    In particular, every mechanism (or BNE) with interim allocation rules $(x_i)_{i\in[n]}$ has the same expected revenue.
\end{enumerate}
\end{theorem}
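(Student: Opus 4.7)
The plan is to prove the theorem in two stages: first establish the payment identity (part~\ref{item:payment-identity}), and then derive Myerson's lemma (part~\ref{item:revenue-equals-virtual-welfare}) from it by an integration-by-parts argument.

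For the payment identity, I would start with the forward direction. Fix a user $i$ and two values $v_i > v_i'$. Since $(s_j)_{j \in [n]}$ is a BNE, user $i$ of type $v_i$ has no incentive to mimic the equilibrium strategy of type $v_i'$, and vice versa. Applied to the interim rules, this gives the two inequalities $v_i \, x_i(v_i) - p_i(v_i) \ge v_i \, x_i(v_i') - p_i(v_i')$ and $v_i' \, x_i(v_i') - p_i(v_i') \ge v_i' \, x_i(v_i) - p_i(v_i)$. Adding them yields $(v_i - v_i')(x_i(v_i) - x_i(v_i')) \ge 0$, so $x_i$ is monotone non-decreasing. Rearranging them separately gives the sandwich $v_i' \, (x_i(v_i) - x_i(v_i')) \le p_i(v_i) - p_i(v_i') \le v_i \, (x_i(v_i) - x_i(v_i'))$. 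Letting $v_i' \to v_i$ (and using that monotone functions are differentiable almost everywhere) gives $p_i'(v_i) = v_i \, x_i'(v_i)$ a.e., so integrating yields the stated formula, up to the additive constant $c$. For the reverse direction, assume $x_i$ is monotone and $p_i(v) = \int_0^v z \, x_i'(z) \, dz + c$. Then the utility gain from truthful reporting over misreporting $v_i'$ equals $\int_{v_i'}^{v_i} (v_i - z) \, x_i'(z) \, dz$, which is $\ge 0$ by monotonicity when $v_i > v_i'$ and $\ge 0$ by the symmetric calculation when $v_i < v_i'$.

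For Myerson's lemma, I take expectation of the payment identity over $v_i \sim \mathcal{T}_i$ with density $f_i$, exchange the order of integration (Fubini), and then integrate by parts:
\begin{align*}
    \Es{v_i \sim \mathcal{T}_i}{p_i(v_i)}
    &= \int_0^\infty \left( \int_0^{v_i} z \, x_i'(z) \, dz \right) f_i(v_i) \, dv_i + c \\
    &= \int_0^\infty z \, x_i'(z) \bigl(1 - F_i(z)\bigr) \, dz + c \\
    &= \int_0^\infty \left( z - \frac{1 - F_i(z)}{f_i(z)} \right) x_i(z) \, f_i(z) \, dz + c',
\end{align*}
where the last step uses integration by parts, and the boundary terms vanish under the usual assumption that $v \cdot (1 - F_i(v)) \to 0$ as $v \to \infty$. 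Recognizing the integrand as $\varphi_i(v_i) \, x_i(v_i) \, f_i(v_i)$ and summing over $i$ gives the claimed expected revenue formula (with the constants absorbed into an IR-normalization $c = 0$ convention).

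The main obstacle is not conceptual but regulatory: $x_i$ is only guaranteed to be monotone, hence differentiable only almost everywhere, so the expression $\int_0^{v_i} z \, x_i'(z) \, dz$ should rigorously be read as a Riemann–Stieltjes integral $\int_0^{v_i} z \, dx_i(z)$, and the integration by parts must be justified in this more general setting (which is standard, e.g., by approximating $x_i$ by smooth monotone functions). One must also take care with the boundary term $z(1 - F_i(z)) \, x_i(z)$ at $z = \infty$, which requires a tail condition on $\mathcal{T}_i$ (finite mean suffices since $x_i$ is bounded).
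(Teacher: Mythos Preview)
Your proof is the standard textbook argument for Myerson's theorem and is correct. Note, however, that the paper does not actually prove this statement: it is stated in the preliminaries (\autoref{sec:MechDesign}) as a classical result with a citation to \citet{Myerson81}, so there is no ``paper's own proof'' to compare against. Your write-up would serve as a fine self-contained justification if one were needed.
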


Bayesian mechanism design often studies classes of value distributions such that revenue-maximization becomes more tractable.
    
\begin{definition}[Regular Distributions and Monopoly Reserve] \label{def:Regular}
    A \emph{regular} distribution $\TypeDistr$ is one in which the virtual value function $\varphi(\cdot)$ is monotonically non-decreasing.
    The \emph{monopoly reserve} of the regular distribution $\TypeDistr$ is the supremum of all values $\mathsf{r}$ such that $\varphi(\mathsf{r}) < 0$.
\end{definition}

For a regular distribution $\Distr$ and all $v < \sup \Distr$, $\Pr_{\hat{v} \sim \Distr}[\hat{v} = v] = 0$.
However, there exist distributions $\Distr$ for which there is a positive probability mass on the supremum $\sup \Distr$.
As a word of caution, note that if there are multiple values such that their virtual values equal zero, note that $\vv^{-1}(0)$ is not the monopoly reserve by our definition.
The monopoly reserve is the largest value whose virtual value equals zero.

Throughout the paper, we will follow the convention that $\vv^{-1}(\phi)$ is the supremum over all values $v$ such that $\vv(v) < \phi$.

\section{A Zoo of Posted-Price Mechanisms}
\label{sec:posted-zoo}

In this section, we review various designs of posted-price mechanisms for an infinite block and the properties the respective equilibria satisfy.
We consider different variations of the block-building process $\bBuild$, including whether $\bBuild$ sets the price or receives it as input from the miner, whether the mechanism is cryptographic or plain-text, and whether the payments are burnt or passed on to the miner.
Across all variants of the posted-price mechanism, we will consider the user BNE $\sigma^{\onCG}$ where users bid their values truthfully and the miner, (if applicable) sets a price equal to the monopoly reserve, and does not censor or fabricate any bids.
Unless specified otherwise, all the mechanisms discussed below are plain-text.

\paragraph{EIP-1559:} For a price $p$ set by the block-building process $\bBuild$, EIP-1559 includes all users bidding above $p$, charges $p$ from each included user and burns all revenue.
\citet{GaneshTW24} argue that $\sigma^{\onCG}$ is on-chain user simple, on-chain miner simple and \emph{strong collusion proof}, but not off-chain influence proof.

On-chain user simplicity is fairly straightforward.
A user with value greater than $p$ can only get himself unallocated by bidding below his value, and similarly, a user with value at most $p$ would not want to overbid, get allocated and pay $p$ to get included.
Similarly, on-chain miner simplicity follows since the miner can only make a non-positive revenue on-chain irrespective of its strategy, and it makes zero revenue in $\sigma^{\onCG}$.

$\sigma^{\onCG}$ is not off-chain influence proof since any off-chain mechanism $\Moff$ where the miner demands an entry fee to be paid off-chain on top of the payment $p$ made on-chain will increase the miner's expected revenue.
The miner receives no revenue in the equilibrium $\sigma^{\onCG}$.
On the other hand, suppose the miner demands an entry fee $\rho$ off-chain.
Then, users with a value greater than $(p + \rho)$ will want to pay the entry fee and get included in the block.
The miner will make an expected revenue $\rho \, \Pr_{v \sim \Distr}[v \geq p+\rho]$ from each user, which is strictly greater than zero as long as $\rho + p \in \supp(\Distr)$.

EIP-1559 satisfies a notion of collusion-resistance called strong collusion proofness defined by \citet{GaneshTW24}, based on side-contract proofness \citep{ChungS23}.
At a high level, the equilibrium $\sigma^{\onCG}$ is strong collusion proof if the miner and a subset of users cannot increase their joint utility by \emph{integrating} into a single entity --- truthfully exchange their values amongst themselves and deviate to a strategy that increases the joint utility of the colluding cartel.
Strong collusion proofness is a consequence of the following argument.
Colluding does not increase the net revenue the miner receives from users outside the cartel.
For a user $i$ in the cartel with a value $v_i$, the user and the miner jointly make a utility $(v_i-p)$ by including user $i$, which is positive exactly when $v_i$ exceeds $p$.
$\sigma^{\onCG}$ includes user $i$ exactly when $v_i \geq p$.
Thus, the cartel cannot increase their joint utility by deviating from their equilibrium strategies and $\sigma^{\onCG}$ is strong collusion proof.

\paragraph{Prior-dependent posted-price mechanism with no burns:}
Suppose that the block-building process $\bBuild$ is prior-dependent, i.e, has an estimation procedure built-in to estimate the distribution $\Distr$ of user values and posts the monopoly reserve $p$ of $\Distr$.
All payments made by the users are transferred to the miner.

By an argument identical to EIP-1559, $\sigma^{\onCG}$ is on-chain user simple for the prior-dependent posted-price mechanism.
To argue on-chain miner simplicity, note that fabricating bids does not change the allocation or payment charged to any user, and censoring a user $i$ can only decrease the payment collected from $i$.
Further, for an infinite block, posting the monopoly reserve is the revenue optimal mechanism, and thus, the miner cannot induce a different equilibrium off-chain that extracts a higher expected revenue.
As a consequence $\sigma^{\onCG}$ is off-chain influence proof.

However, $\sigma^{\onCG}$ is not strong collusion proof.
For a user $i$ with a value $v_i$ smaller than the monopoly reserve $p$, the miner can collude with user $i$ as follows --- ask $i$ to place a bid larger than $p$ and offer a rebate $(p - v_i) + \varepsilon$.
The miner increases her revenue by colluding since she receives an additional payment $v_i - \varepsilon$ from user $i$, and user $i$ has a higher utility since he is now included and pays $v_i - \varepsilon < v_i$.
The cartel has increased their joint utility by colluding.

\citet{GaneshTW24} show that conditioned on satisfying on-chain user and miner simplicity, it is impossible to simultaneously satisfy both strong collusion proofness and off-chain influence proofness.
Thus, choosing between EIP-1559 and the prior-dependent posted-price mechanism can depend on whether the blockchain's primary concern is collusion (specifically, strong collusion-proofness) or the miner defaulting to an off-chain communication channel (specifically, off-chain influence-proofness).

While the prior-dependent posted-price mechanism does not satisfy strong collusion proofness, it satisfies a weaker notion of collusion resistance, which  \citet{GaneshTW24} prove is implied by off-chain influence proofness, called \emph{trustless collusion proofness}.
Instead of the cartel integrating into a single entity and coordinating with \emph{purely} the cartel's interests in mind, consider a version of collusion where the colluding parties still prioritize their own interests.
In other words, collusion %
happens though a specific contract of profit-sharing between the cartel, and users best-respond to the contract and each other, and hence play an equilibrium.
In particular, the miner should never receive a higher expected reward in this equilibrium if the TFM is trustless collusion proof.
If $\sigma^{\onCG}$ is off-chain influence proof, the miner's expected revenue in $\sigma^{\onCG}$ is at least the expected revenue from any equilibrium induced by the cartel, and thus, $\sigma^{\onCG}$ is also trustless collusion proof.
For more details and discussion, see \citet{GaneshTW24}.

As proved in \autoref{thm:PostedPrice}, posting a price $p$ larger than the monopoly reserve and burning $\vv(p)$ per allocated user also yields an on-chain user and miner simple and off-chain influence proof mechanism for any regular distribution $\Distr$.

\paragraph{Prior-independent posted-price mechanism with no burns:}
Instead of the block-building process estimating the distribution $\Distr$ and posting a price equal to the monopoly reserve, the prior-independent mechanism lets the miner set the price on-chain via advice.

By arguments identical to the prior-dependent variant, $\sigma^{\onCG}$ is on-chain user simple, off-chain influence proof and is not strong collusion proof.

On-chain miner simplicity of $\sigma^{\onCG}$ depends on the exact implementation of the mechanism.
As in the model considered by \citet{GaneshTW24}, if users submit their bids first and the miner decides the price after seeing the bids, $\sigma^{\onCG}$ is not on-chain miner simple.
By setting a price $p$, the miner receives a revenue $\Rev(p, \vec{v}) = p \times (\#\text{users with a bid at least } p)$ and the miner will set a price $p = \argmax_p \Rev(p, \vec{v})$, which can vary with the set of submitted bids $\vec{v}$.

However, suppose the miner is first made to commit to a price $p$, and users respond by submitting their values.
Then, the price $p$ cannot depend on the users' bids.
The miner cannot increase its revenue by fabricating or censoring bids either.
Thus, $\sigma^{\onCG}$ is on-chain miner simple.

\paragraph{The cryptographic prior-independent posted price mechanism with no burns:}
Similar to the prior-independent mechanism, the miner sets the price $p$ and all payments received from the users are transferred to the miner.
However, users submit encrypted bids and reveal their bids only after the miner announces the price $p$.
Thus, the price $p$ must be independent of the contents of the bids.

Note that the cryptographic prior-independent posted-price mechanism is similar to the plain-text prior-independent version where the miner first sets the price and then users reveal their bids.
Cryptography enforces the time lag between when the miner posts the price and when she learns of the users' bids.
Thus, $\sigma^{\onCG}$ is on-chain user and miner simple, and off-chain influence proof, but not strong collusion proof.

We summarize our discussion in \autoref{tab:all-combinations-of-properties}.

\begin{table}[htb]
    \begingroup
    \renewcommand{\arraystretch}{1.5} 
    \begin{center}
    \begin{tabular}{ccccc}
    \toprule
    & \ \ \multirow{2}{*}{\makecell{Off-Chain\\}} & \multicolumn{2}{c}{On-Chain} & \ \ \multirow{2}{*}{\makecell{Strong\\}}
    \\
    \cmidrule{3-4}
    Posted-Price Variant &  \ \ \makecell{Influence \\Proof} & \makecell{User\\Simple} \ \ & \makecell{Miner\\Simple} & \ \ \makecell{Collusion \\ Proof} \\
    \midrule
     EIP-1559 & \xmark & \cmark & \cmark & \cmark \\
     Prior-dependent & \cmark & \cmark & \cmark & \xmark \\
     Prior-independent, miner responds to users' bids & \cmark & \cmark & \xmark & \xmark \\
     Prior-independent, users responds to miner's price & \cmark & \cmark & \cmark & \xmark \\
     Cryptographic prior-independent & \cmark & \cmark & \cmark & \xmark \\
    \bottomrule
    \end{tabular}
    \end{center}
    \endgroup
    
    \caption{Desiderata satisfied by variants of posted-price mechanisms %
    (and their respective equilibria).}
    
    {\footnotesize 
    \textbf{Notes:} The block-building process sets the price in both EIP-1559 and the prior-dependent posted-price mechanism, while the miner sets the price via advice in all the prior-independent variants. All payments collected from the users are transferred to the miner except in EIP-1559, where all the payments are burnt and the miner receives no revenue. All mechanisms are plain-text except for the cryptographic prior-independent mechanism.
    In the two plain-text versions of the prior-independent posted-price mechanism, the users first submit their bids and the miner sets a price as a function of the bids it sees in one and the miner sets a price first and users respond by placing bids in the other. \par}
    \label{tab:all-combinations-of-properties}
\end{table}

\section{Deterministic Simple-to-Participate Mechanisms for Bounded Number of Bidders and Virtual Values with Discontinuities}

\subsection{Bounded Number of Bidders} \label{sec:Bounded}

Consider a TFM that supports at most two bids when the distribution $\Distr$ equals the exponential distribution with expected value $1$.\footnote{The CDF and the density of the exponential distribution equal $1 - e^{-x}$ ad $e^{-x}$ respectively. The virtual value function is given by $\vv(v) = v - 1$, with a monopoly reserve $1$.}
Suppose that $\MargThreshold_1 = 4$ and $\MargThreshold_2 = 3$.
We plot the allocation rule of the virtual utility maximizing mechanism in \autoref{fig:TwoUser}.

\begin{figure}
    {
    \centering
    \includegraphics[width=0.6\linewidth]{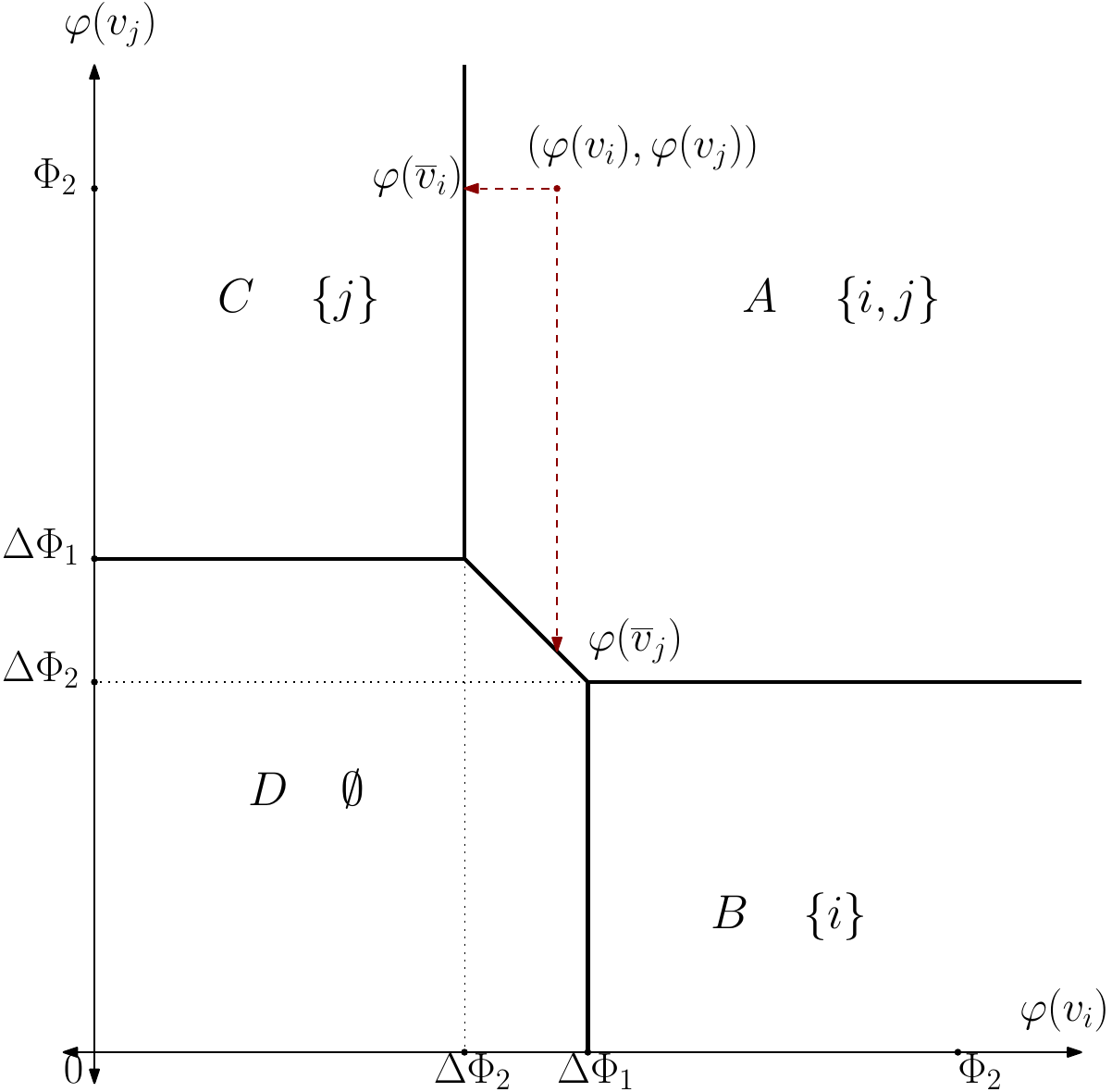}
    \caption{The allocation rule of a TFM supporting at most $2$ bids.}
    \label{fig:TwoUser}
    }
    
    {\footnotesize \textbf{Notes:} Each region contains the name of the region and the allocation when the virtual values of the bids lie in that region. In our example, we set $\MargThreshold_1 = 4$, $\MargThreshold_2 = 3$, and thus, $\Threshold_2 = 7$. For any value profile of the two users, their critical bids can be inferred from the plot. For example, for a profile $(v_i, v_j) \in A$, user $i$'s critical bid can be computed by finding the smallest value $\overline{v}_i$ for which $(\overline{v}_i, v_j)$ lies in the region $C$, where the user $i$ is not allocated (showcased by the horizontal red line).\par}
\end{figure}

By \autoref{thm:VirtualPartialConverse}, the mechanism is on-chain user simple and off-chain influence proof.
For each region on the plot, we separately argue that the miner cannot increase her revenue by censoring or fabricating bids.
\begin{enumerate}
    \item $(v_i, v_j) \in A$ and both users are allocated: To begin, note that each user must have a virtual value at least $\MargThreshold_2$ to get allocated.
    Thus, by not being strategic, the miner receives a total payment at least $2 \vv^{-1}(\MargThreshold_2) = 2 \, (\MargThreshold_2 + 1)$ (substituting the virtual value function of the exponential distribution).
    $\MargThreshold_1 + \MargThreshold_2$ is burnt to include both the users.
    The miner's net revenue is therefore larger than
    $$2 \, (\MargThreshold_2 + 1) - \MargThreshold_1 - \MargThreshold_2 = \MargThreshold_2 - \MargThreshold_1 + 2 = 1.$$
    We substitute $\MargThreshold_1 = 4$ and $\MargThreshold_2 = 3$ for the last equality.
    
    Now, suppose that the miner is strategic.
    The TFM does not support more than $2$ bids and therefore, the miner will have to censor one of them in order to fabricate one of its own.
    Suppose that the miner censors user $i$ and fabricates a fake bid $\hat{v}_i$.
    From \autoref{fig:TwoUser}, it can be observed that the miner can extract a payment of at most $\vv^{-1}(\MargThreshold_1) = \MargThreshold_1 + 1$ from user $j$, equality holding when $(\hat{v}_i, v_j) \in C$.
    The miner must include at least one user to obtain a positive revenue, and thus, at least $\MargThreshold_1$ is burnt.
    Therefore, the miner's net revenue is at most
    $$(\MargThreshold_1 + 1) - \MargThreshold_1 = 1,$$
    which, in turn, is smaller than the revenue from the compliant strategy that does not censor or fabricate bids.
    The miner cannot increase her revenue when $(v_i, v_j) \in A$.
    \item $(v_i, v_j) \in B$ and only user $i$ is allocated ($(v_i, v_j) \in C$ is analogous): User $i$ is charged a payment $\vv^{-1}(\MargThreshold_1)$ to be included.
    Note that over all $(v_i, v_j)$, the maximum payment charged to a user is exactly $\vv^{-1}(\MargThreshold_1)$.
    Thus, being strategic can only decrease the payment collected by the miner.
    If the miner fabricates a bid and ends up including two bids, the burn can only increase to further hurt the miner's revenue.
    Thus, the miner's revenue is maximized by being compliant and not fabricating any bids.
    \item $(v_i, v_j) \in D$ and no user is allocated:
    Suppose that the miner censors $i$ and fabricates $\hat{v}_i$.
    It cannot be the case that only user $j$ gets allocated, since otherwise, the mechanism would allocate $j$ even before user $i$ was censored.
    Thus, $\MargThreshold_1 + \MargThreshold_2 = 7$ is burnt.
    Further, as argued before, the maximum payment that can be extracted from user $j$ equals $\vv^{-1}(\MargThreshold_1) = 4 + 1 = 5$.
    Thus, the miner's net revenue is at most $5 - 7 < 0$.
    The miner is better off by being compliant.
\end{enumerate}

\subsection{Virtual Values with Discontinuities} \label{sec:Discontinuities}

In this section, we will construct a deterministic simple-to-participate auction for an infinite block that is not a posted-price mechanism for a distribution with a discontinuous virtual value function.

Consider the distribution $\Distr$ with a CDF $F$ given by
\begin{equation*}
    \notag
    F(v) = \begin{cases}
        1 & \text{ for } v = 2+\varepsilon, \\
        1 - \frac{\varepsilon}{2 \, (v-2+\varepsilon)} & \text{ for } v \in [2, 2+\varepsilon), \\
        \frac{v}{v+2} & \text{ for } v \in [0, 2).
    \end{cases}
\end{equation*}
The virtual value function for $\Distr$ equals
\begin{equation*}
    \notag
        \vv(v) = \begin{cases}
            2 + \varepsilon & \text{ for } v = 2+\varepsilon, \\
            2 - \varepsilon & \text{ for } v \in [2, 2+\varepsilon), \\
            - 2 & \text{ for } v \in [0, 2).
    \end{cases}
\end{equation*}

Now consider the mechanism determined by the sequence $\seq{\MargThreshold_t}{t \in \N}$ of marginal burns where
$\MargThreshold_1 = 2 + \varepsilon$ and $\MargThreshold_t = 0$ for all $t > 1$.
The mechanism allocates a single bidder with a value $2 + \varepsilon$, or every bidder with a positive virtual value if two or more bidders have a value at least $2$.
By \autoref{thm:VirtualPartialConverse}, the mechanism is on-chain user simple (we will be charging payments according to the payment identity, \autoref{item:payment-identity} in \autoref{thm:myerson}) and off-chain influence proof.

We will argue that the mechanism is also on-chain miner simple.
If there are more than two bidders with value at least $2$, the mechanism allocates all of them, charges $2$ from each of them and burns $2+\varepsilon$ in aggregate.
Fabricating bids will neither change the payments collected from the users nor reduce the burn.
However, censoring a bid will change the highest bidder's payment if there are only two users.
Suppose the highest bidder has a value $2+\varepsilon$ and there is one other user with a value at least $2$.
Censoring the other user will increase the payment collected from the highest bidder to $2+\varepsilon$.
However, the miner loses out on the $2$ that woud have been charged to the other user if he was not censored.
Thus, the miner cannot increase her revenue by fabricating or censoring bids.

Similarly, suppose that there is exactly one user with a value larger than the monopoly reserve.
If the value of the user equals $2+\varepsilon$, then the user is already included in the block, and fabricating or censoring bids does not increase the miner's revenue.
On the other hand, if the user has a virtual value $2-\varepsilon$, the miner could consider fabricating a bid with value $2$ and get both bids included.
The miner will earn a revenue $2$ from the user, but loses $2+\varepsilon$ to the burn.

Therefore, we conclude that the mechanism is on-chain miner simple and thereby, simple to participate.

\section{Reducing Global Strong Collusion Proofness to Maximal-in-Range Multi-Item Single-Buyer Mechanism} \label{sec:MIRC}

In this section, we change our focus from off-chain influence proofness to global strong collusion proofness.
We begin by reviewing prior work and highlighting our contributions.
A growing body of work has been looking at the design of collusion-resistant mechanism design 
(\citealp{Roughgarden20, ChungS23, ChungRS24, GafniY24}; \citealp{GafniY24Discrete}).
\citet{Roughgarden20} defines global strong collusion proofness (albeit, called off-chain agreement proofness), and \citet{ChungRS24} and \citet{GafniY24Discrete} design and prove impossibilities on the existence of on-chain simple and global strong collusion proof mechanisms.
We answer the following questions from previous works.
We characterize global strong collusion proofness (\autoref{thm:MIRCUtilityVersionMainReduction}), generalizing the characterization from \citet{GafniY24Discrete} for deterministic mechanisms (Theorem 3.1 in \citealp{GafniY24Discrete}).
We also design a mechanism that is on-chain simple and global strong collusion proof that yields the miner a strictly positive revenue (\autoref{ex:MIRCGenPos}).
Further, if the users' values have a bounded support, we design on-chain simple and global strong collusion proof mechanisms even for finite blocks (\autoref{ex:MIRC-rand-auction}).
\citet{GafniY24Discrete} also construct mechanisms for finite blocks that provide the miner a positive revenue.
However, they assume that the users' values are drawn from a discrete space, and the miner's revenue approaches zero as the discretization parameter approaches zero.

We adopt the model of randomness from \citet{GafniY24Discrete}, which differs from \citet{ChungRS24} as follows.
\citet{GafniY24Discrete} require the agents to decide their actions prior to learning the randomness in the mechanism.
In contrast, \citet{ChungRS24} allow users and the miner to act ex-post, after the randomness is realized.
\citet{ChungRS24} prove the impossibility of designing on-chain simple and global strong collusion proof mechanisms for a finite block in their model of randomness.

\subsection{Characterizing Global Strong Collusion Proof Equilibria}

We obtain a characterization of global strong collusion proof mechanisms through a reduction similar to \autoref{thm:UtilityVersionMainReduction}.

We start by formally defining global strong collusion proofness, which was briefly mentioned in \autoref{sec:MOdel}.
Recall that an equilibrium $\sigma^{\onCG}$ in the on-chain game of a TFM is global strong collusion proof if the \emph{global coalition} consisting of the miner and all users bidding in the TFM cannot increase their joint utility by deviating from $\sigma^{\onCG}$ for all value profiles $\vec{v}$ of the users.

To define global strong collusion proofness, we begin by specifying the strategy space $\mathcal{S}^{\onCG}_{\coal}$ of the global coalition.
Each strategy profile $\sigma^{\onCG}_{\coal} = (s_\mi^{\onCG, \coal}, s_{\usr, 1}^{\onCG, \coal},\allowbreak \dots, \allowbreak s_{\usr, n}^{\onCG, \coal}) \in \mathcal{S}^{\onCG}_{\coal}$ maps the values $\vec{v}$ realized by the users to the strategies played by each agent in the coalition.
In particular, unlike off-chain influence proofness where the users and the miner continue to remain strategic with each other, the colluding users truthfully reveal their values to each other and the miner.
Therefore, the agents' strategies can depend on the private information $\vec{v}$ of all $n$ users.
Since any payment made by the users are transferred to the miner, the global coalition's joint utility is the total welfare of the users minus the portion of the payments burnt by the TFM.

\begin{definition}[Global strong collusion proofness]
    For a distribution $\Distr$ of user values, an equilibrium $\sigma^{\onCG}$ of a TFM is global strong collusion proof if for all value profiles $\vec{v} \in \supp(\Distr^n)$ of the users and any strategy profile ${\sigma}_{\mathsf{coal}}^{\onCG} \in \mathcal{S}^{\onCG}_{\coal}$
    \[
    \sum_{i = 1}^n v_i \Alloc_i(\sigma^{\onCG}(\vec{v})) - \Burn(\sigma^{\onCG}(\vec{v}))b\geq \sum_{i = 1}^n v_i \Alloc_i(\sigma^{\onCG}_{\mathsf{coal}}(\vec{v})) - \Burn( \sigma^{\onCG}_{\mathsf{coal}}(\vec{v})).\footnote{Note the following key distinction between off-chain influence proofness and global strong collusion proofness. While the miner should be maximizing her expected revenue in the off-chain game by playing a strategy $\sigma^{\offCG}$ for the equilibrium to be off-chain influence proof, the global coalition should maximize their joint utility by playing $\sigma^{\onCG}$ for all values $\vec{v}$ for global strong collusion proofness. This difference between the definitions makes the results for global strong collusion proofness much simpler than the analogous results for off-chain influence proofness --- to prove that an equilibrium $\sigma^{\onCG}$ is not global strong collusion proof, it is sufficient if we find one valuation profile $\vec{v}$ for which the global coalition profits by deviating, as opposed to finding an entire collection of valuation profiles with a positive probability measure.} \textsuperscript{, } \footnote{The definition of global strong collusion proofness extends strong collusion proofness from \citet{GaneshTW24}, where the miner and the colluding users have fixed arbitrary valuations (chosen worst-case to break strong collusion proofness), while the valuations of other users are drawn from $\Distr$.
    Indeed, for the global coalition, all users are part of the coalition and thus, the value profile $\vec{v}$ is chosen worst-case.}
    \]
\end{definition}

For a block-building process $\bBuild = (\Alloc, \Pay, \Burn)$ and a strategy profile ${\sigma}_{\mathsf{coal}}^{\onCG} \in \mathcal{S}^{\onCG}_{\coal}$, we define the direct-revelation mechanism corresponding to ${\sigma}_{\mathsf{coal}}^{\onCG}$ to be $(\OnCAlloc, \OnCBurn) = \Big(\Alloc({\sigma}_{\mathsf{coal}}^{\onCG}), \Burn({\sigma}_{\mathsf{coal}}^{\onCG})\Big)$.
Note that we slightly abuse notation to overload the term direct-revelation mechanism.
The miner and the users are required to play an equilibrium in the off-chain game in the definition of the direct-revelation mechanism (\autoref{def:DirectRevMechanism}).
On the other hand, we allow strategy profiles ${\sigma}_{\mathsf{coal}}^{\onCG}$ that are not even valid strategies in the on-chain game since the agents' actions can depend on each others values.

Similar to the definition of Myerson-in-Range, we find it convenient to describe the allocations and burn as a function of the values realized by the users.
We say that the direct-revelation mechanism is \emph{Maximal-in-Range for Coalitions} if it is induced by a strategy profile $\sigma^{\onCG}_{\mathsf{coal}}$ that maximizes the global coalition's joint utility.

\begin{definition}[Maximal-in-Range for Coalitions (MIRC)]
    For a block-building process $\bBuild = (\Alloc, \Pay, \Burn)$ and a distribution $\Distr$ of user valuations, let $\sigma^{\onCG}_{\coal} \in \mathcal{S}^{\onCG}_{\coal}$ pointwise maximize the joint utility $\sum_{i = 1}^n v_i \Alloc_i(\sigma^{\onCG}_{\mathsf{coal}}(\vec{v})) \allowbreak - \Burn(\sigma^{\onCG}_{\mathsf{coal}}(\vec{v}))$ of the global coalition for all $\vec{v} \in \supp(\Distr^n)$.
    Then, the corresponding direct-revelation mechanism $(\OnCAlloc, \OnCBurn)$ is Maximal-in-Range for Coalitions corresponding to the on-chain game $\onCG$ and the distribution $\Distr$.
\end{definition}

\autoref{thm:MIRCGSCP} follows immediately from the definitions. An equilibrium $\sigma^{\onCG}$ is global strong collusion proof if and only if the corresponding direct revelation mechanism is MIRC.

\begin{lemma} \label{thm:MIRCGSCP}
    An on-chain equilibrium $\sigma^{\onCG}$ is global strong collusion proof for the block-building process $\bBuild = (\AllocRule, \Pay, \Burn)$ if and only if its direct-revelation mechanism $(\OnCAlloc, \OnCBurn)$ is MIRC.
\end{lemma}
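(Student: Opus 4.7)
The plan is as follows. The lemma is essentially a bookkeeping identity between two ways of stating the same pointwise-maximization property, and my proof will exploit this directly without introducing any new machinery. I will handle both directions in parallel after setting up a single key observation.

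First, I would record the key embedding: any on-chain strategy profile $\sigma^{\onCG} = (s_\mi^\onCG, s_{\usr,1}^\onCG, \ldots, s_{\usr,n}^\onCG)$ canonically belongs to the coalition strategy space $\mathcal{S}^{\onCG}_{\coal}$, since a strategy in which each user bids as a function of only their own value is a special case of a strategy where each user may condition on the full profile $\vec{v}$. Coupled with this, the decisive observation is that the joint utility $\sum_i v_i \AllocRule_i(\sigma(\vec{v})) - \Burn(\sigma(\vec{v}))$ at any profile $\vec{v}$ depends on $\sigma$ only through the induced direct-revelation mechanism $(\OnCAlloc, \OnCBurn)$, since by construction $(\OnCAlloc(\vec{v}), \OnCBurn(\vec{v})) = (\AllocRule(\sigma(\vec{v})), \Burn(\sigma(\vec{v})))$.

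Given these two observations, the forward direction is immediate. Assume $\sigma^{\onCG}$ is GSCP. Then for every $\vec{v} \in \supp(\Distr^n)$ and every $\sigma^{\onCG}_{\coal} \in \mathcal{S}^{\onCG}_{\coal}$, the joint utility under $\sigma^{\onCG}$ weakly dominates that under $\sigma^{\onCG}_{\coal}$. Viewing $\sigma^{\onCG}$ itself as an element of $\mathcal{S}^{\onCG}_{\coal}$ via the embedding above, this says precisely that \emph{some} coalition strategy inducing the direct-revelation mechanism of $\sigma^{\onCG}$ is a pointwise maximizer of joint utility, which is the defining property of MIRC.

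For the reverse direction, I would unpack MIRC to obtain some $\tilde{\sigma}^{\onCG}_{\coal}$ that both pointwise maximizes joint utility and induces the same pair $(\OnCAlloc, \OnCBurn)$ as $\sigma^{\onCG}$. Since joint utility is a function of $(\OnCAlloc, \OnCBurn)$ alone, $\sigma^{\onCG}$ attains the same maximal joint utility at every $\vec{v}$. Therefore no coalition deviation can strictly improve upon $\sigma^{\onCG}$, which is exactly GSCP. I do not anticipate any real obstacle; the only care needed is to be explicit that the existential quantifier hidden in MIRC is discharged by the joint-utility-equivalence of strategies sharing a direct-revelation mechanism. No continuity, measurability, or regularity hypothesis on $\Distr$ enters, since everything is pointwise over $\supp(\Distr^n)$.
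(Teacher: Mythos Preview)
Your proposal is correct and matches the paper's own treatment: the paper simply states that the lemma ``follows immediately from the definitions,'' and your argument is exactly the explicit unpacking of those definitions (the embedding of $\sigma^{\onCG}$ into $\mathcal{S}^{\onCG}_{\coal}$ and the fact that joint utility depends only on the induced $(\OnCAlloc,\OnCBurn)$). There is nothing to add.
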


As with off-chain influence proofness, we abuse notation to say that the direct-revelation mechanism satisfies global strong collusion proofness without explicitly specifying the underlying equilibrium $\sigma^{\onCG}$.

We derive an analog of \autoref{thm:MainReduction} for global strong collusion proofness.
For a given number $n$ of users with a valuation profile $\vec{v}$, the global coalition should not be able to increase its joint utility by misreporting their values as $\vec{w}$ to the direct-revelation mechanism (analogous to Condition \ref{Bul:1}).
Additionally, the global coalition should also not be able to increase their joint utility by censoring or fabricating bids (analogous to Condition \ref{Bul:3}).
Note that the analog to Condition \ref{Bul:2} holds trivially for a global strong collusion proof mechanism --- indeed, the users' values are always non-negative and thus, the mechanism does not have to ensure that users with negative values are never allocated.

\begin{theorem} \label{thm:MIRCMainReduction}
    Let $\Distr$ be a distribution of user valuations with a connected support such that $0 \in \supp(\Distr)$.
    Then, a direct revelation mechanism $(\OnCAlloc, \OnCBurn)$ is MIRC only if for any number $n$ of users:
    \begin{enumerate}[(A)]
        \item (Optimal for $n$ users.) For any value profiles $\vec{v}, \vec{w} \in \supp(\Distr)^n$,
        $$\sum_{i = 1}^n v_i \, \OnCAlloc_i(\vec{v}) - \OnCBurn(\vec{v}) \geq \sum_{i = 1}^n v_i \, \OnCAlloc_i(\vec{w}) - \OnCBurn(\vec{w}).$$
        \addtocounter{enumi}{1}
        \item (No censoring or fabricating) For all $\vec{v} = (v_1, \dots, v_n, v_{n+1}, \dots, v_{n+t}) \in \supp(\Distr)^n \times \{0\}^t$ and $\vec{w} = (v_1, \dots, v_{n})$, 
        $$\sum_{i = 1}^{n} v_i \OnCAlloc_i(\vec{v}) \allowbreak - \OnCBurn(\vec{v}) \allowbreak = \sum_{i = 1}^{n} v_i \OnCAlloc_i(\vec{w}) \allowbreak - \OnCBurn(\vec{w}).$$
    \end{enumerate}
    Further, if the mechanism is also on-chain user simple and $(\OnCAlloc, \OnCBurn) = (\AllocRule, \Burn)$, the above two conditions imply that the mechanism is global strong collusion proof.
\end{theorem}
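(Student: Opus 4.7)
I prove necessity by constructing explicit coalition deviations, and sufficiency by reducing any coalition deviation to the honest equilibrium through a short chain of applications of (A) and (C). The key observation driving everything is that the outcome produced by $\bBuild$ depends only on the strategy profile fed to it, not on the true user values; hence, the coalition can achieve the outcome $(\OnCAlloc(\vec{w}), \OnCBurn(\vec{w}))$ at \emph{any} true value profile $\vec{v}$ by replaying the MIRC-optimal strategy $\sigma^{\onCG}_{\coal}(\vec{w})$.

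\textbf{Necessity of (A) and (C).} For (A), fix $\vec{v}, \vec{w} \in \supp(\Distr)^n$ and consider the coalition deviation $\hat{\sigma}(\vec{v}) := \sigma^{\onCG}_{\coal}(\vec{w})$. This submits the same bids to $\bBuild$ as $\sigma^{\onCG}_{\coal}(\vec{w})$ would, so it achieves outcome $(\OnCAlloc(\vec{w}), \OnCBurn(\vec{w}))$, giving joint utility $\sum_{i=1}^n v_i\,\OnCAlloc_i(\vec{w}) - \OnCBurn(\vec{w})$ at true values $\vec{v}$. Since $\sigma^{\onCG}_{\coal}(\vec{v})$ is MIRC-optimal, this must be at most $\sum_{i=1}^n v_i\,\OnCAlloc_i(\vec{v}) - \OnCBurn(\vec{v})$, yielding (A). For (C), write $\vec{w} = (v_1, \ldots, v_n)$ and let $\vec{v}$ be the $(n{+}t)$-profile appending $t$ zeros. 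Starting from $n$ real users at $\vec{w}$, the coalition can have the miner fabricate the $t$ extra bids that $\sigma^{\onCG}_{\coal}(\vec{v})$ would submit at positions $n{+}1, \ldots, n{+}t$; this achieves joint utility $\sum_{i=1}^n v_i\,\OnCAlloc_i(\vec{v}) - \OnCBurn(\vec{v})$ (fabricated bids contribute no welfare), so MIRC-optimality at $\vec{w}$ gives $\sum_{i=1}^n v_i\,\OnCAlloc_i(\vec{w}) - \OnCBurn(\vec{w}) \geq \sum_{i=1}^n v_i\,\OnCAlloc_i(\vec{v}) - \OnCBurn(\vec{v})$. Symmetrically, starting from $n{+}t$ real users at $\vec{v}$, the coalition can censor the $t$ zero-valued users and mimic $\sigma^{\onCG}_{\coal}(\vec{w})$, which by MIRC-optimality at $\vec{v}$ yields the reverse inequality. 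Combining these two inequalities gives (C); the hypothesis $0 \in \supp(\Distr)$ is used only to ensure that $\vec{v}$ is a legitimate value profile.

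\textbf{Sufficiency.} Assume (A), (C), on-chain user simplicity, and $(\OnCAlloc, \OnCBurn) = (\AllocRule, \Burn)$. Under the honest equilibrium (users bid truthfully, miner compliant), $\bBuild$ receives exactly $\vec{v}$, so the equilibrium joint utility at true values $\vec{v}$ equals $\sum_{i=1}^n v_i\,\OnCAlloc_i(\vec{v}) - \OnCBurn(\vec{v})$; I show no coalition deviation exceeds this. Any deviation submits some bid profile $\vec{b}$ of length $m$ to $\bBuild$; by anonymity I reorder so positions $1, \ldots, s$ correspond to (renamed) uncensored real users $1, \ldots, s$ and positions $s{+}1, \ldots, m$ are miner-fabricated, so the deviation's joint utility is $\mathrm{JU} = \sum_{i=1}^s v_i\,\OnCAlloc_i(\vec{b}) - \OnCBurn(\vec{b})$. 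I chain four steps using the auxiliary profiles $\vec{v}^{(1)} := (v_1, \ldots, v_s, 0, \ldots, 0)$ of length $m$, $\vec{v}^{(2)} := (v_1, \ldots, v_s)$ of length $s$, and $\vec{v}^{(3)} := (v_1, \ldots, v_s, 0, \ldots, 0)$ of length $n$: (i) apply (A) at length $m$ with true values $\vec{v}^{(1)}$ and misreport $\vec{b}$ to get $\mathrm{JU} \leq \sum_{i=1}^s v_i\,\OnCAlloc_i(\vec{v}^{(1)}) - \OnCBurn(\vec{v}^{(1)})$; (ii) apply (C) to strip the trailing zeros, obtaining the same value at $\vec{v}^{(2)}$; (iii) apply (C) in reverse to pad with $n{-}s$ trailing zeros, obtaining the same value at $\vec{v}^{(3)}$; (iv) apply (A) at length $n$ with true values $\vec{v}$ and misreport $\vec{v}^{(3)}$ to get $\sum_{i=1}^n v_i\,\OnCAlloc_i(\vec{v}) - \OnCBurn(\vec{v}) \geq \sum_{i=1}^n v_i\,\OnCAlloc_i(\vec{v}^{(3)}) - \OnCBurn(\vec{v}^{(3)}) \geq \sum_{i=1}^s v_i\,\OnCAlloc_i(\vec{v}^{(3)}) - \OnCBurn(\vec{v}^{(3)})$, where the last step drops the nonnegative terms $v_i\,\OnCAlloc_i \geq 0$ for $i > s$. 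Combining yields $\mathrm{JU} \leq \sum_{i=1}^n v_i\,\OnCAlloc_i(\vec{v}) - \OnCBurn(\vec{v})$.

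\textbf{Main obstacle.} The main subtlety is the bookkeeping between bid profiles of different lengths and cleanly handling fabricated bids versus censored users. Anonymity of $\bBuild$ justifies the reorderings, and $0 \in \supp(\Distr)$ is precisely what allows me to freely insert or remove zero-valued entries via (C) while keeping the intermediate profiles inside $\supp(\Distr)^*$ so that (A) is applicable. On-chain user simplicity together with $(\OnCAlloc, \OnCBurn) = (\AllocRule, \Burn)$ is used only to guarantee that the honest equilibrium actually has $\bBuild$ receiving the truthful bid vector $\vec{v}$, so that the honest joint utility equals $\sum_{i=1}^n v_i\,\OnCAlloc_i(\vec{v}) - \OnCBurn(\vec{v})$ exactly.
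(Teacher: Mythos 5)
Your proof is correct and follows essentially the same route the paper intends: the paper skips this proof precisely because it is the value-space analogue of \autoref{thm:MainReduction} (necessity via replaying the coalition-optimal strategy for another profile, fabricating, or censoring zero-valued bids) and \autoref{thm:MainReductionConverse} (sufficiency by chaining conditions (A) and (C) across bid profiles of different lengths), and your necessity deviations and your $\vec{v}^{(1)}\to\vec{v}^{(2)}\to\vec{v}^{(3)}$ chain are exactly that adaptation, with zeros playing the role of the non-positive-virtual-value bids. The only shared wrinkle (present in the paper's converse as well) is that step (i) applies (A) with the deviation bid profile $\vec{b}$ as the misreport, which implicitly treats arbitrary bids as admissible profiles for condition (A); otherwise the bookkeeping is sound.
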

We skip the proof since the proofs of the forward direction and the (partial) converse are almost identical to \autoref{thm:MainReduction} and \autoref{thm:MainReductionConverse} respectively.

The reduction from designing global strong collusion proof mechanisms to designing DSIC multi-item single-buyer mechanisms is much simpler than \autoref{thm:UtilityVersionMainReduction}.
Remember that $\Feasibility$ is the set of all feasible outcomes supported by the block-building process $\bBuild$.
Upon realizing values $\vec{v}$, the global coalition plays an action so as to pick an outcome $(\AllocRule, \BurnB) \in \Feasibility$ so as to maximize $\sum_{i = 1}^n v_i \AllocRule_i - \BurnB$.
In other words, the global coalition picks an outcome $(\AllocRule, \BurnB)$ identically to a buyer with value $v_i$ for item $i$ in a multi-item auction with a menu of allocations and payments given by $\Feasibility$.

\autoref{thm:MIRCUtilityVersionMainReduction} follows by combining \autoref{thm:Rochet} with \autoref{thm:MIRCMainReduction}.
We skip the proof since it is almost identical to \autoref{thm:UtilityVersionMainReduction} and \autoref{thm:VirtualPartialConverse}.

\begin{theorem}[Burn identity for MIRC mechanisms] \label{thm:MIRCUtilityVersionMainReduction}
    Let $(\OnCAlloc, \OnCBurn)$ be a MIRC mechanism for a distribution $\Distr$ with a connected support containing $0$.
    Then, there exists a family of utility functions $\seq{U^n}{n \in \N}$, $U^n: \R^n \xrightarrow{} \R$, such that for all $n \in \N$,
    \begin{enumerate}
        \item $U^n$ is convex and non-decreasing as a function of the bids, and,
        \item for all $\vec{v} \in \supp(\Distr)^n$,
        $$\OnCAlloc(\vec{v}) = \nabla U^n(\vec{v}) \text{ and } \OnCBurn(\vec{v}) = \sum_{i = 1}^n v_i \, \nabla_i U(\vec{v}) - U(\vec{v}).$$
    \end{enumerate}
    Further, if $(\OnCAlloc, \OnCBurn) = (\AllocRule, \Burn)$, the existence of the family of utility functions satisfying the above properties implies that $(\OnCAlloc, \OnCBurn)$ is MIRC.
\end{theorem}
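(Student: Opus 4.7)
The plan is to combine \autoref{thm:MIRCMainReduction} with Rochet's theorem (\autoref{thm:Rochet}), in direct parallel with the proofs of \autoref{thm:UtilityVersionMainReduction} and \autoref{thm:VirtualPartialConverse}. The reduction is cleaner here than in the off-chain-influence-proof case because MIRC demands pointwise optimality over the entire support $\supp(\Distr)^n$ (not just almost surely), so no monopsonist-smoothening step is required; and because the coalition's joint utility is already expressed in value space, there is no need to pass through a virtual-value transformation.

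For the forward direction, I would first invoke \autoref{thm:MIRCMainReduction}(A), which says that for every $n$ and every pair $\vec v, \vec w \in \supp(\Distr)^n$,
\[
\sum_{i=1}^n v_i\, \OnCAlloc_i(\vec v) - \OnCBurn(\vec v) \;\ge\; \sum_{i=1}^n v_i\, \OnCAlloc_i(\vec w) - \OnCBurn(\vec w).
\]
This is exactly the statement that, viewing $(\OnCAlloc, \OnCBurn)$ as the allocation and payment rule of a multi-item mechanism for a single additive buyer with value $v_i$ for item $i$, the mechanism is DSIC. I would then define
\[
U^n(\vec v) \;:=\; \sup_{\vec w \in \supp(\Distr)^n} \Big( \sum_{i=1}^n v_i\, \OnCAlloc_i(\vec w) - \OnCBurn(\vec w) \Big),
\]
which agrees with $\sum_i v_i\, \OnCAlloc_i(\vec v) - \OnCBurn(\vec v)$ on the support by condition (A). As a supremum of affine functions $U^n$ is convex, and it is non-decreasing because $\OnCAlloc_i \in [0,1]$. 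Applying \autoref{thm:Rochet} (equivalently, the envelope theorem) yields the gradient identities $\OnCAlloc(\vec v) = \nabla U^n(\vec v)$ and $\OnCBurn(\vec v) = \vec v^{\,\intercal}\, \nabla U^n(\vec v) - U^n(\vec v)$ for all $\vec v \in \supp(\Distr)^n$.

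For the partial converse, given a family $\{U^n\}_{n\in\N}$ with the stated properties, \autoref{thm:Rochet} applied in the reverse direction guarantees that the mechanism with allocation $\nabla U^n$ and payment $\vec v^{\,\intercal} \nabla U^n - U^n$ is DSIC for the additive monopsonist. Under the hypothesis $(\OnCAlloc, \OnCBurn) = (\AllocRule, \Burn)$, the block-building process literally implements this direct-revelation mechanism on any submitted report, so the coalition's joint utility upon reporting $\vec w$ at true value $\vec v$ never exceeds that at $\vec v$. This is condition (A) of \autoref{thm:MIRCMainReduction}, which, combined with the partial-converse half of that theorem, yields MIRC.

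The only subtlety --- rather than a genuine obstacle --- is the coalition's ability to change the number of reports by censoring or fabricating value-$0$ bids. On the forward direction this is handled by condition (C) of \autoref{thm:MIRCMainReduction}, which forces the cross-$n$ consistency $U^n(\vec v) = U^{n+t}(\vec v, \vec 0)$; on the converse direction this same consistency plays the role of condition~3 in \autoref{thm:VirtualPartialConverse}, with the value $0$ replacing the zero-virtual-value threshold. Everything else is a routine translation of the off-chain-influence-proof argument from virtual-value space to value space.
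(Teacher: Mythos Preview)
Your proposal is correct and matches the paper's own approach exactly: the paper states that the result ``follows by combining \autoref{thm:Rochet} with \autoref{thm:MIRCMainReduction}'' and that the proof ``is almost identical to \autoref{thm:UtilityVersionMainReduction} and \autoref{thm:VirtualPartialConverse},'' which is precisely the route you take. Your observations that no monopsonist-smoothening is needed (since MIRC demands pointwise rather than almost-sure optimality) and that the argument lives in value space rather than virtual-value space are the right simplifications relative to the OffC-IP case.
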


\subsection{Constructing Global Strong Collusion Proof Mechanisms Beyond Posted-Price}

While \citet{ChungRS24} and \citet{GafniY24Discrete} consider the space of on-chain simple and global strong collusion proof randomized mechanisms, they do not provide an explicit construction for the same beyond posted-price mechanisms.
We use the characterization from \autoref{thm:MIRCUtilityVersionMainReduction} to construct such mechanisms.
Specifically, we construct position auctions with burns even for finite blocks when the value distribution $\Distr$ is bounded.
For unbounded distributions, we design generalized position auctions that satisfy all of the three properties listed above, albeit, for infinite blocks.

\subsubsection{MIRC Position Auctions with Burn for Bounded Distributions}

In this section, we design position auctions such that the resulting TFM is on-chain simple and global strong collusion proof.
Note that MIRC position auctions are slightly different from position auctions in that the global coalition will want to include every user with a value larger than the marginal burn.
In comparison, a miner looking to maximize revenue in the off-chain game will include a user only when his virtual value is larger than the marginal burn.

\begin{definition}[MIRC position auctions]
    A \emph{MIRC position auction} is given by a sequence of non-increasing allocation probabilities $\seq{x^{(i)}}{i \in \N}$ and \emph{marginal burns per unit allocation} $\seq{\MargThreshold_i}{i \in \N}$.
    Conditioned on allocating $t$ users, the users are allocated with probabilities $x^{(1)}, \dots, x^{(t)}$ in descending order of their bids and $\Threshold_t = \sum_{i = 1}^t \MargThreshold_i \, x^{(i)} + \Threshold_0$ is burnt.
    For $\vec{v} \in \R_{\geq 0}^n$, the mechanism maximizes the global coalition's utility
    $$\max_t \sum_{i = 1}^t \big( v^{(i)} - \MargThreshold_i \big) \, x^{(i)} - \Threshold_0 = \max_t \sum_{i = 1}^t v^{(i)} \, x^{(i)} - \Threshold_t.$$
\end{definition}

We construct an on-chain simple and global strong collusion proof MIRC position auction with a constant marginal burn $\MargThreshold_i = \MargThreshold$.
We skip the proof since it is almost identical to \autoref{thm:BoundedPositionAuction}.

\begin{theorem} \label{thm:BoundedMIRCPositionAuction}
    For any bounded distribution $\Distr$ and a block with a finite capacity $\Omega$, a position auction given by the allocation probabilities $\seq{x^{(t)}}{t \in \N}$ and a constant marginal burn per unit allocation $\MargThreshold$ is on-chain miner simple if and only if (and thereby simple-to-participate if and only if)
    $\sum_{t = 1}^{\infty} x^{(t)} \leq \Omega$ and
    \begin{equation} \label{eqn:MIRCSufficientOnMS}
    t \, (x^{(t)} - x^{(t+1)}) \, ( \sup \Distr - \MargThreshold) < \MargThreshold \, x^{(t+1)}
\end{equation}
for all $t \in \N$.
\end{theorem}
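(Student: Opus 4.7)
The plan is to follow the proof of \autoref{thm:BoundedPositionAuction} closely, substituting bare values for virtual values throughout: in the MIRC setting the coalition's utility-maximization compares $v^{(i)}$ (rather than $\vv(v^{(i)})$) to the constant marginal burn $\MargThreshold$, so a user is allocated exactly when $v^{(i)} \ge \MargThreshold$. By \autoref{thm:MIRCUtilityVersionMainReduction} combined with the payment identity (\autoref{item:payment-identity} of \autoref{thm:myerson}), the prescribed payment rule on top of this allocation yields on-chain user simplicity and global strong collusion proofness automatically, so only on-chain miner simplicity remains to be characterized. The feasibility condition $\sum_{t \ge 1} x^{(t)} \le \Omega$ is necessary and sufficient for the allocation rule to respect the block capacity $\Omega$ when arbitrarily many users bid above $\MargThreshold$.

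For necessity of \eqref{eqn:MIRCSufficientOnMS}, I would fix $t \in \N$ and consider the extremal profile of $t$ real bids all equal to a common value $v$ close to $\sup \Distr$. By the payment identity, when user $(i)$'s bid $z$ varies with the others fixed, she stays at position $(t)$ with allocation $x^{(t)}$ on the interval $[\MargThreshold, v]$ and thus pays $v\,x^{(i)} - x^{(t)}(v-\MargThreshold)$. If the miner fabricates one bid $\hat v^{(t+1)}$ just below $v$, each real user's allocation curve drops from $x^{(t)}$ to $x^{(t+1)}$ on $[\MargThreshold, \hat v^{(t+1)}]$, while the burn grows by $\MargThreshold\,x^{(t+1)}$. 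A telescoping computation yields a net revenue change of
\[
t\,\bigl(x^{(t)} - x^{(t+1)}\bigr)\bigl(\hat v^{(t+1)} - \MargThreshold\bigr) \;-\; \MargThreshold\,x^{(t+1)},
\]
which converges to the left-hand minus right-hand side of \eqref{eqn:MIRCSufficientOnMS} as $\hat v^{(t+1)}, v \to \sup \Distr$. Hence any violation of \eqref{eqn:MIRCSufficientOnMS} produces a strictly profitable miner deviation for $v$ close enough to $\sup \Distr$, contradicting on-chain miner simplicity.

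For sufficiency, assuming \eqref{eqn:MIRCSufficientOnMS} holds for every $t$, I would argue that the single-fake-bid extremal profile computed above is actually the worst case across all miner deviations. Given any deviation that injects a sequence of fabricated bids (possibly interleaved with censorings of real bids), the plan is to decompose the revenue change into increments corresponding to each inserted fake bid, interpreting each increment as a single-fake-bid contribution atop the residual profile; such an increment is bounded above by $k\,(x^{(k)} - x^{(k+1)})(v^{\max} - \MargThreshold) - \MargThreshold\,x^{(k+1)}$ for some $k$ and some $v^{\max} \le \sup \Distr$, which is strictly negative by \eqref{eqn:MIRCSufficientOnMS}. Summing these strictly negative increments (and handling censoring by a symmetric removal argument) gives the desired unprofitability. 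The main obstacle is the bookkeeping needed to confirm that insertions of fake bids at intermediate positions (not only just below the top cluster) yield revenue-change increments that are dominated by the extremal single-fake-bid bound, which is precisely the step that promotes the coordinate-wise condition \eqref{eqn:MIRCSufficientOnMS} to a globally sufficient one.
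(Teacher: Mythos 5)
Your proposal is correct and follows essentially the same route as the paper, which itself omits the proof because it is nearly identical to that of \autoref{thm:BoundedPositionAuction}: replace virtual values by values (so $\vv^{-1}(\MargThreshold)$ becomes $\MargThreshold$), verify feasibility via $\sum_t x^{(t)} \le \Omega$, get necessity from the single-fake-bid deviation at a profile of $t$ bids near $\sup\Distr$, and get sufficiency by reducing any deviation to a single fabricated bid below all real bids. The ``bookkeeping'' step you flag as the main obstacle is exactly the content of the paper's \autoref{thm:OneBid} and the Core/Residue comparison in \autoref{thm:PositionFakeBids}, which transfer verbatim to the MIRC setting.
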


We can easily modify \autoref{ex:rand-auction} to satisfy the conditions specified in \autoref{thm:BoundedMIRCPositionAuction}.

\begin{example}[MIRC position auction for block capacity $1$.]
\label{ex:MIRC-rand-auction}
Let $x^{(t)} = \frac{1}{2 \, t \, (t+1)}$ and $\MargThreshold = \tfrac{2}{3} \sup \Distr + \varepsilon$, for some small $\varepsilon > 0$.
We then have 
$$t \, (x^{(t)} - x^{(t+1)}) \, ( \sup \Distr - \MargThreshold) < \frac{1}{2 \, (t+1) \, (t+2)} \times \frac{2}{3} \sup \Distr < \MargThreshold \, x^{(t+1)},$$
satisfying \autoref{eqn:MIRCSufficientOnMS}.
Indeed, $\sum_{t = 1}^{\infty} x^{(t)} = 1$ and thus, the mechanism is feasible even for a block of capacity $1$.
The MIRC position auction corresponding to the allocation rule described above is both on-chain simple and global strong collusion proof.
\end{example}

\subsubsection{MIRC Generalized Position Auctions with Burn for Unbounded Distributions}

Finally, we move to designing on-chain simple and global strong collusion proof mechanisms beyond just posting a price.
Specifically, we consider the space of all generalized position auctions.
As with MIRC position auctions, we provide a slight modification to the burn rule so that the resulting mechanism is MIRC.

\begin{definition}[MIRC generalized position auctions]
    A \emph{MIRC generalized position auction} is given by a sequence of single-agent allocation rules $\seq{x^{(t)}}{t \in \N}$, where $x^{(t)}: \R_{\geq 0} \xrightarrow{} [0, 1]$, each $x^{(t)}$ is monotone and $x^{(t)}(v) \geq x^{(t+1)}(v)$ at all values $v \in \R_{\geq 0}$.
    For a value profile $\vec{v}$, user $(i)$ receives an allocation $x^{(i)}(v^{(i)})$.
    The payment and the burn rules are chosen to satisfy the payment (\autoref{item:payment-identity} in \autoref{thm:myerson}) and MIRC burn identities (\autoref{thm:MIRCUtilityVersionMainReduction}) respectively.
\end{definition}

We use the convention that $v^{(n+1)} = 0$.
Then, the payment charged to user $(i)$ is given by
\begin{align*}
    \OnCPay_i(\vec{v}) = v^{(i)} \, x^{(i)}(v^{(i)}) - \sum_{j = i}^n \int_{v^{(j+1)}}^{v^{(j)}} x^{(j)}(z) \,d z.
\end{align*}
Similarly, the burn rule can be calculated as follows.
\begin{align*}
    \OnCBurn(\vec{v}) = \sum_{i = 1}^n v^{(i)} \, x^{(i)}(v^{(i)}) - \int_{0}^{v^{(i)}} x^{(i)}(z) \,dz.
\end{align*}
The global coalition's utility upon realizing values $\vec{v}$ equals
\begin{equation}
    \notag
    \sum_{i = 1}^n v^{(i)} \, x^{(i)}(v^{(i)}) - \Big(\sum_{i = 1}^n v_i \, x^{(i)}(v^{(i)}) - \int_{0}^{v^{(i)}}x^{(i)}(z) \,d z \Big) = \sum_{i = 1}^n \int_{0}^{v^{(i)}} x^{(i)}(z) \,dz,
\end{equation}
whose gradient wrt $v^{(i)}$ promptly equals the allocation $x^{(i)}$ to user $(i)$, satisfying the MIRC burn identity.

First, we argue that the total payments collected from the users is larger than the burn, thereby ensuring a non-negative revenue to the miner.
\begin{align*}
\sum_{i = 1}^n \OnCPay_i(\vec{v}) &= \sum_{i = 1}^n v^{(i)} \, x^{(i)}(v^{(i)}) - \sum_{j = i}^n \int_{v^{(j+1)}}^{v^{(j)}} x^{(j)}(z) \,d z \\
&= \sum_{i = 1}^n v^{(i)} \, x^{(i)}(v^{(i)}) - \sum_{i = 1}^n i \times \int_{v^{(i+1)}}^{v^{(i)}} x^{(i)}(z) \,d z \\ 
&\geq \sum_{i = 1}^n v^{(i)} \, x^{(i)}(v^{(i)}) - \sum_{i = 1}^n \sum_{j = 1}^i \int_{v^{(i+1)}}^{v^{(i)}} x^{(j)}(z) \,d z \\
&= \sum_{i = 1}^n v^{(i)} \, x^{(i)}(v^{(i)}) - \int_{0}^{v^{(i)}} x^{(i)}(z) \,dz = \OnCBurn(\vec{v}).
\end{align*}
The inequality follows since $x^{(i)}(z) \leq x^{(j)}(z)$ for all $i \geq j$ and $z \geq 0$.

\autoref{thm:UnboundedMIRCPositionAuction} provides a sufficient condition for a generalized position auction to be on-chain miner simple alongside on-chain user simplicity and global strong collusion proofness.
We skip the proof since it is very similar to that of \autoref{thm:UnboundedPositionAuction}.

\begin{theorem} \label{thm:UnboundedMIRCPositionAuction}
    Let $\Distr$ be a distribution with $\supp(\Distr) = \R_{\ge 0}$.
    Suppose that an on-chain user simple and global strong collusion proof generalized position auction with allocation rule $\seq{x^{(t)}}{t \in \N}$ satisfies
    \begin{align}
        t \, \Big(x^{(t)}(w) - x^{(t+1)}(w) \Big) \geq (t+1) \, \Big(x^{(t+1)}(w) - x^{(t+2)}(w) \Big) \label{eqn:MIRCAllocConverge}
    \end{align}
    for all $t \in \N$, and
    \begin{align}
        t \times \int_{0}^w \Big(x^{(t)}(z) - x^{(t+1)}(z) \Big) \,dz \leq w \, x^{(t+1)}(w) - \int_0^{w} x^{(t+1)}(z) \,dz \label{eqn:MIRCGenPosIneq}
    \end{align}
    for all $t \in \N$ and $w \geq 0$.
    Then, the mechanism is also on-chain miner simple.
\end{theorem}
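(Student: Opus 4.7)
The plan is to mirror the argument of \autoref{thm:UnboundedPositionAuction} almost verbatim, under two translations: every virtual value $\vv(v)$ is replaced by the raw value $v$, and every appeal to the off-chain influence proof burn identity (\autoref{thm:UtilityVersionMainReduction}) is replaced by the MIRC burn identity (\autoref{thm:MIRCUtilityVersionMainReduction}). Since on-chain user simplicity and global strong collusion proofness of the candidate mechanism are already assumed, the payment rule already satisfies the payment identity and the burn rule already satisfies the MIRC burn identity; it therefore suffices to rule out profitable miner deviations that censor or fabricate bids.

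First I would analyze the prototypical deviation in which the miner fabricates a single extra bid $\hat v$ inserted at rank $k+1$, i.e.\ $v^{(k)} \geq \hat v \geq v^{(k+1)}$. Using the payment identity and the MIRC burn identity with the convention $v^{(n+1)}=0$, each of the top $k$ users' payments grows by $\int_{v^{(k+1)}}^{\hat v}\!\big(x^{(k)}(z) - x^{(k+1)}(z)\big)\,dz$, the strictly lower-ranked users' payments change only through relabeling of ranks (and telescope cleanly), and the burn increases by $\hat v\, x^{(k+1)}(\hat v) - \int_0^{\hat v}\! x^{(k+1)}(z)\,dz$ after the same telescoping. Assumption \autoref{eqn:MIRCGenPosIneq} applied at $t=k$, $w=\hat v$, asserts exactly that the burn increase dominates the aggregate payment increase, so the miner's net revenue weakly decreases.

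Second, I would lift this single-fake-bid case to all deviations. For multiple fabricated bids, I would induct on the number of fake bids using \autoref{eqn:MIRCAllocConverge}: the monotonicity $t(x^{(t)}(w) - x^{(t+1)}(w)) \geq (t+1)(x^{(t+1)}(w) - x^{(t+2)}(w))$ says precisely that each successive fake bid contributes a non-increasing marginal payment gain across the top users, while the marginal burn of each additional bid continues to be controlled by \autoref{eqn:MIRCGenPosIneq}. For censoring a real bid, I would appeal directly to Part (A) of \autoref{thm:MIRCMainReduction}: deleting a bid corresponds to the coalition reporting a profile with one fewer user, which can only lower $\sum_i v^{(i)}\OnCAlloc_i - \OnCBurn$; re-expressing this coalition-welfare inequality through the payment identity shows that the miner's net revenue likewise weakly decreases. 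A combined deviation decomposes into a sequence of single censoring or fabrication steps, each of which is weakly non-profitable by the above.

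The main obstacle will be the bookkeeping of rank shifts when the fake bid lies in the interior of the sorted profile rather than at the bottom: the payment integrals must be reapplied to the new sorted profile and the burn integral retaken over the new ranked sequence, and one must verify that after all the telescoping, what remains is precisely the quantity compared in \autoref{eqn:MIRCGenPosIneq}. Because $\supp(\Distr) = \R_{\geq 0}$, the analysis extends continuously across all ranks $\hat v$ can occupy, and the argument is indeed ``almost identical'' to that of \autoref{thm:UnboundedPositionAuction}.
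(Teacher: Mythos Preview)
Your high-level plan---mirror the proof of \autoref{thm:UnboundedPositionAuction} with $\vv(v)\rightsquigarrow v$---is exactly what the paper does. But you have mis-assigned the roles of the two hypotheses, and that creates a real gap.

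First, for a \emph{single} fabricated bid $\hat v$ inserted at interior rank $k+1$, the lower-ranked users' contributions do \emph{not} ``telescope cleanly'' into \eqref{eqn:MIRCGenPosIneq}. If you carry out the payment and burn computations, $\Delta\Rev$ retains extra terms of the form $\sum_{i>k} i\!\int_{v^{(i+1)}}^{v^{(i)}}\!\big(x^{(i)}(z)-x^{(i+1)}(z)\big)\,dz$ and $\sum_{i>k}\!\int_{0}^{v^{(i)}}\!\big(x^{(i)}(z)-x^{(i+1)}(z)\big)\,dz$. In the paper these are absorbed \emph{precisely} via \eqref{eqn:MIRCAllocConverge}: that monotonicity lets you bound the interior-rank $\Delta\Rev$ by the bottom-rank $\Delta\Rev$ for the truncated profile $(v^{(1)},\dots,v^{(k)})$, at which point \eqref{eqn:MIRCGenPosIneq} finishes the job. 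Conversely, the reduction from \emph{many} fabricated bids to one (the analog of \autoref{thm:OneBid}) does \emph{not} need \eqref{eqn:MIRCAllocConverge}; it is the elementary ``if the sum of increments is positive, some single increment is positive'' argument. So you have the two conditions swapped.

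Second, your censoring step is invalid as written. Part~(A) of \autoref{thm:MIRCMainReduction} controls the \emph{coalition's} utility $\sum_i v_i\OnCAlloc_i - \OnCBurn$, whereas on-chain miner simplicity is about the \emph{miner's} revenue $\sum_i \OnCPay_i - \OnCBurn$; these differ by the users' surplus, and you cannot pass from one inequality to the other through the payment identity. The paper instead handles censoring directly (as in the proof of \autoref{thm:BoundedPositionAuction}): removing a bid only reduces competition in a generalized position auction, which weakly lowers every remaining user's payment and hence the miner's revenue.
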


\autoref{ex:GenPos} can easily be modified to construct an on-chain simple and global strong collusion proof mechanism.

\begin{example}[On-chain simple and global strong collusion proof MIRC generalized position auction for unbounded distributions] \label{ex:MIRCGenPos}
    Let $\Distr$ be an unbounded distribution.
    For $\Gamma = 2 + \varepsilon$ for some small $\varepsilon > 0$, consider the generalized position auction given by
    $$x^{(t)}(w) = \begin{cases}
        0 & \text{ for } w \in [0, \Gamma), \\
        1 - \frac{1}{2e^{(w - \Gamma)}} \sum_{i = 1}^t \frac{1}{i \, (i+1)} & \text{ for } w \in [\Gamma, \infty).
    \end{cases}$$
    As argued in \autoref{ex:GenPos}, the above allocation rule satisfies \autoref{eqn:MIRCAllocConverge}.
    Verifying \autoref{eqn:MIRCGenPosIneq} is identical to \autoref{eqn:GenPosIneq} in \autoref{ex:GenPos} too.
    The inequality holds trivially for $w < \Gamma$.
    For $w \geq \Gamma$, the left hand side can be upper bounded by
    $$\int_{\Gamma}^{\infty} t \times \Big(x^{(t)}(w) - x^{(t+1)}(w)\Big) \, dz = \frac{t}{2(t+1)(t+2)} \int_{\Gamma}^{\infty} \frac{1}{e^{(w - \Gamma)}} \, dz = \frac{t}{2(t+1)(t+2)}.$$
    We can lower bound the right hand side as follows.
    $$w \, x^{(t+1)}(w) - \int_0^{w} x^{(t+1)}(z) \,dz = \int_0^{w} x^{(t+1)}(w) - x^{(t+1)}(z) \,dz \geq \Gamma \times \frac{1}{2}.$$
    The final inequality follows since $x^{(t)}(w) \geq 1/2$ for $w \geq \Gamma$.
    Since we choose $\Gamma > 2$, the right hand side is at least $1$, which is clearly larger than the left hand side.   
\end{example}

\section{Omitted Proofs}

\subsection{Omitted Proofs from \autoref{sec:MonopsonistLens}}

\subsubsection{Proof of \autoref{thm:MainReduction}} \label{sec:ProofMainReduction}

    Condition \ref{Bul:1} is a direct consequence of \autoref{thm:MyerImpliesVirtualUtilityOpt} --- if the direct-revelation mechanism  $(\OnCAlloc, \OnCPay, \OnCBurn)$ is Myerson-in-Range, it pointwise optimizes for its virtual utility except for a measure zero set of value profiles.

    The proof of condition \ref{Bul:2} is pretty similar to the intuition provided before \autoref{thm:MainReduction}.
    For each value profile $\vec{v} = (v_1, \dots, v_n)$ for which there exists $\vv(v_i) < 0$ such that the user is allocated with a positive probability, the miner can censor $v_i$ and replace it with an identical fabricated bid of its own.
    The allocation rule for the remaining users remains unchanged and the burn remains a constant too.
    The miner's virtual utility increases by $- \vv(v_i) \OnCAlloc_i(\vec{v}) > 0$ (the gain from not allocating user $i$).
    Thus, the set of such value profiles $\vec{v}$ for which a user with a strictly negative virtual value is allocated cannot have a positive probability measure.
    
    For condition \ref{Bul:3}, we will consider two value profiles $\vec{v} = (v_1, \dots, v_{n}, v_{n+1}, \dots, v_{n+t})$ and $\vec{w} = (v_1, \dots, v_n)$ such that $\vv(v_{n+1}), \dots, \vv(v_{n+t}) \leq 0$.
    Essentially, the miner should almost surely neither be incentivized to fabricate $v_{n+1}, \dots, v_{n+t}$ upon seeing the bids $\vec{w}$ nor want to censor $v_{n+1}, \dots, v_{n+t}$ when it sees $\vec{v}$.
    The former yields
    $$\sum_{i = 1}^{n} \vv(v_i) \OnCAlloc_i(\vec{w}) - \OnCBurn(\vec{w}) \geq \sum_{i = 1}^{n} \vv(v_i) \OnCAlloc_i(\vec{v}) - \OnCBurn(\vec{v})$$
    and the latter,
    $$\sum_{i = 1}^{n+t} \vv(v_i) \OnCAlloc_i(\vec{v}) - \OnCBurn(\vec{v}) \geq \sum_{i = 1}^{n} \vv(v_i) \OnCAlloc_i(\vec{w}) - \OnCBurn(\vec{w}).$$
    Applying condition \ref{Bul:2}, except with zero probability, we get $\vv(v_{n+i}) \, \OnCAlloc_n(\vec{v}) = 0$ for $1 \leq i \leq t$ since $\vv(v_{n+i}) \leq 0$.
    Combining the above two inequalities, we have
    $$\sum_{i = 1}^{n} \vv(v_i) \OnCAlloc_i(\vec{v}) - \OnCBurn(\vec{v}) = \sum_{i = 1}^{n} \vv(v_i) \OnCAlloc_i(\vec{w}) - \OnCBurn(\vec{w}).$$
    for $\vec{v} \sim \Distr^{n} \times \Distr_{\vv \geq 0}^t$.

\subsubsection{Proof of \autoref{thm:SmootheningVirtualUtilityMaximization}} \label{sec:ProofofSmoothening}

    At a high level, for all value profiles $\vec{v}$ for which the miner can increase her virtual utility by deviating from the equilibrium strategy, we will construct $(\widetilde{\AllocRule}, \widetilde{\BurnB})$ by taking the limit point of the direct-revelation mechanism $(\OnCAlloc, \OnCBurn)$ as the value profile tends to $\vec v$.
    The meat of the proof argues that such a limit point exists and further, incentive compatibility at other valuation profiles is not destroyed by modifying $(\OnCAlloc, \OnCBurn)$ at $\vec v$.

    For a fixed number of users $n$, let $\ICSet_n$ be the set of valuations $\vec v$ for which
    $$\sum_{i = 1}^n \vv(v_i) \, \OnCAlloc_i(\vec{v}) - \OnCBurn(\vec{v}) \geq \sum_{i = 1}^n \vv(v_i) \, \OnCAlloc_i(\vec{w}) - \OnCBurn(\vec{w}).$$
    for all $\vec{w} \in \supp(\Distr^n)$.
    
    To begin, for all points $\vec v \not \in \ICSet_n$, we will construct a sequence of valuation profiles $\vec v = \seq{\vec \upsilon_i}{i \in \N} \subset \ICSet_n$ that converges to $\vec v$ (we abuse notation to identify a sequence with its limit point).
    We can then define $(\widetilde{\AllocRule}, \widetilde{\BurnB})$ to be the limit points of $\seq{\OnCAlloc(\vec \upsilon_i)}{i \in \N}$ and $\seq{\OnCBurn(\vec \upsilon_i)}{i \in \N}$ respectively.
    Since the mechanism is Myerson-in-Range, the complement of $\ICSet_n$ in the support of $\Distr$ is measure zero.
    As the distribution $\Distr$ is regular, the support $\supp(\Distr)$ is a closed set. Thus, the closure of the set $\ICSet_n$ after deleting measure zero number of points from $\supp(\Distr^n)$ equals $\supp(\Distr^n)$.
    Therefore, for all points $\vec v \not \in \ICSet_n$, there exists a sequence $\vec v$ fully contained in $\ICSet_n$ that converges to $\vec v$.
    For convenience, let $\vec{v}$ be the constant sequence $\seq{\vec v}{i \in \N}$ for all $\vec v \in \ICSet$.

    Next, we argue that there exists some subsequence of $\vec v$ (indexed by $\mathcal{I}$) such that $\seq{\OnCAlloc(\vec \upsilon_i), \OnCBurn(\vec \upsilon_i)}{i \in \mathcal{I}}$ converges.
    If we prove that $\seq{\OnCAlloc(\vec \upsilon_i), \OnCBurn(\vec \upsilon_i)}{i \in \N}$ is bounded, then by the Bolzano–Weierstrass theorem (see, for example, \citealp{Rudin64}), there exists some convergent subsequence as required.
    Indeed, the allocation rule $\OnCAlloc$ belongs to the interval $[0, 1]$.
    Moreover, the burn rule cannot vary significantly in comparison to the norm $||\vec \upsilon_i||$ of the valuation profile, i.e, $\OnCBurn(\emptyset) \leq \OnCBurn(\vec \upsilon_i) \leq ||\vec \upsilon_i|| + \OnCBurn(\emptyset)$.
    Otherwise, if $\OnCBurn(\emptyset) > \OnCBurn(\vec \upsilon_i)$, the miner can increase her revenue when there are no users (which is a probability $1$ event when the number $n$ of users equals $0$) by fabricating the bids $\vec \upsilon_i$.
    On the other hand, if $ \OnCBurn(\vec \upsilon_i) > \OnCBurn(\emptyset) + ||\vec \upsilon_i||$, the miner generates a larger virtual utility by leaving out all $n$ users.
    Since the virtual value is at most the value for all points in the support of $\Distr$,
    $$\sum_{j = 1}^n \vv(\upsilon_{i, j}) \, \OnCAlloc(\vec \upsilon_i) - \OnCBurn(\vec \upsilon_i) \leq ||\upsilon_i|| - \OnCBurn(\vec{\upsilon_i}) < - \OnCBurn(\emptyset).$$    
    Hence, $ - \OnCBurn(\emptyset) \geq ||\vec \upsilon_i|| - \OnCBurn(\vec \upsilon_i) \geq  \vv(\vec \upsilon_i) \, \OnCAlloc(\vec \upsilon_i) - \OnCBurn(\vec \upsilon_i).$
    Since the sequence $\vec v$ converges, it is also bounded and thus, $\OnCBurn(\vec \upsilon_i) \in [\OnCBurn(\emptyset), \OnCBurn(\emptyset) + ||\vec \upsilon_i||]$ is also bounded. 

    Re-index the sequence $\vec{v}$ so that $\seq{\OnCAlloc(\vec \upsilon_i), \OnCBurn(\vec \upsilon_i)}{i \in \N}$ converges.
    We are now ready to define $\widetilde{\AllocRule}$ and $\widetilde{\BurnB}$.
    For the sequence $\vec v = \seq{\vec \upsilon_i}{i \in \N}$,
    $$\widetilde{\AllocRule}(\vec v) = \lim_{i \xrightarrow{} \infty} \OnCAlloc(\vec \upsilon_i) \text{ and } \widetilde{\BurnB}(\vec v) = \lim_{i \xrightarrow{} \infty} \OnCBurn(\vec \upsilon_i).$$

    Ir remains to verify that $(\widetilde{\AllocRule}, \widetilde{\BurnB})$ satisfies $\sum_{i = 1}^n \vv(v_i) \, \widetilde{\AllocRule}_i(\vec v) - \widetilde{\BurnB}(\vec v) \geq \sum_{i = 1}^n \vv(v_i) \, \widetilde{\AllocRule}_i(\vec{w}) - \widetilde{\BurnB}(\vec{w})$ for all $\vec v, \vec{w} \in \supp(\Distr^n)$.
    Let $\vec v = \seq{\vec \upsilon_i}{i \in \N}$ and let $\vec{w} = \seq{\vec{\omega}_i}{i \in \N}$.
    For all $i \in \N$, since $\vec \upsilon_i \in \ICSet_n$, we have
    \begin{equation} \label{eqn:CompareSeqi}
        \sum_{j = 1}^n \vv(\upsilon_{i, j}) \, \widetilde{\AllocRule}_j(\vec \upsilon_i) - \widetilde{\BurnB}(\vec \upsilon_i) \geq \sum_{j = 1}^n \vv(\upsilon_{i, j}) \, \widetilde{\AllocRule}_i(\vec{\omega}_i) - \widetilde{\BurnB}(\vec{\omega}_i)
    \end{equation}
    By construction of the sequence $\vec v$, $||[\sum_{j = 1}^n \vv(v_j) \, \widetilde{\AllocRule}_j (\vec v) - \widetilde{\BurnB} (\vec v)] - [\sum_{j = 1}^n \vv(\upsilon_{i, j}) \, \OnCAlloc_j(\vec \upsilon_i) - \OnCBurn(\vec \upsilon_i)]||$ tends to zero as $i \xrightarrow{} \infty$.
    Indeed, $\lim_{i \xrightarrow{} \infty } \vv(\vec \upsilon_i) = \vv(\vec{v})$ (by continuity of the virtual value function), $\lim_{i \xrightarrow{} \infty }\OnCAlloc(\vec \upsilon_i) = \widetilde{\AllocRule} (\vec v)$ and $\lim_{i \xrightarrow{} \infty }\OnCBurn(\vec \upsilon_i) = \widetilde{\BurnB} (\vec v)$ (by definition). 
    Both $\vec v$ and $\seq{\upsilon_i}{i \in \N}$ have a bounded norm, and thus, for a sufficiently large $i_{\varepsilon}$ and all $i \geq i_{\varepsilon}$
    \begin{equation} \label{eqn:VecSeqiTov}
        ||[\sum_{j = 1}^n \vv(v_j) \, \widetilde{\AllocRule}_j (\vec v) - \widetilde{\BurnB} (\vec v)] - [\sum_{j = 1}^n \vv(\upsilon_{i, j}) \, \OnCAlloc_j(\vec \upsilon_i) - \OnCBurn(\vec \upsilon_i)]|| \leq \varepsilon
    \end{equation}
    Similarly, for all $i \geq i'_{\varepsilon}$ for a large enough $i'_{\varepsilon}$,
    \begin{equation} \label{eqn:HatSeqiTov}
        ||[\sum_{j = 1}^n \vv(v_j) \, \widetilde{\AllocRule}_j (\vec{w}) - \widetilde{\BurnB} (\vec{w})] - [\sum_{j = 1}^n \vv(\upsilon_{i,j}) \, \OnCAlloc_j(\vec{\omega}_i) - \OnCBurn(\vec{\omega}_i)]|| \leq \varepsilon
    \end{equation}
    Combining \autoref{eqn:CompareSeqi}, \autoref{eqn:VecSeqiTov} and \autoref{eqn:HatSeqiTov}, we have 
    $$\sum_{j = 1}^n \vv(v_j) \, \widetilde{\AllocRule}_j (\vec v) - \widetilde{\BurnB} (\vec v) \geq  \sum_{j = 1}^n \vv(v_j) \, \widetilde{\AllocRule}_j (\vec{w}) - \widetilde{\BurnB} (\vec{w}) - 2 \varepsilon$$
    for all $\varepsilon > 0$.
    The result follows.    

\subsection{Omitted Proofs from \autoref{sec:PriorIndependent}}

\subsubsection{Proof of \autoref{thm:NoAdvice}} \label{sec:ProofofNoAdvice}

Suppose there exists a value profile $\vec{v}$ for which some user is allocated with a positive probability.
By monotonicity of the value allocation rule $\OnCAlloc$, some user must receive a positive allocation for all value profiles $\vec{w}$ coordinate-wise larger than $\vec{v}$.
Let $W$ be the set of all such value profiles.

Choose a smooth regular distribution $\Distr$ with a sufficiently large monopoly reserve such that $\{\vec{w} \in W | \vv(w_i) < 0 \text{ for all } 1 \leq i \leq n\}$ has a positive measure.
By on-chain user simplicity, users bid their values $\vec{w}$ truthfully irrespective of the distribution $\Distr$ and thus, are allocated with a positive probability.
Then, condition \ref{Bul:2} in \autoref{thm:MainReduction} is violated with a positive probability for $\Distr$ and thus, the direct-revelation mechanism $\DirReveal$ cannot be Myerson-in-Range and consequently, off-chain influence proof.

\subsubsection{Proof of \autoref{thm:NoCrypto}} \label{sec:ProofNoCrypto}

Suppose that for some distribution $\Distr$, there exists a collection $V$ of values profiles with a positive probability measure such that
the direct-revelation mechanism $(\OnCAlloc^{\Distr}, \OnCPay^{\Distr}, \OnCBurn^{\Distr})$ leaves the miner with a revenue strictly larger than the block reward $-\OnCBurn^{\Distr}(\emptyset)$ from building an empty block for all $\vec{v} \in V$.

Now consider another distribution $\hat{\Distr}$ with a large enough monopoly reserve so that the set $W = \{\vec{w} \in V | \vv^{\hat{\Distr}}(w_i) < 0 \text{ for all } 1 \leq i \leq n\}$ has a positive measure.\footnote{The set $W$ can have measure zero if the distribution $\Distr$ has a point mass. Suppose that, for the distribution $\Distr$ with a positive probability mass on its supremum $\sup \Distr$, the miner receives a revenue larger than the block reward exactly when the users' values equal $\vec{v} = (\sup \Distr, \dots, \sup \Distr)$. A regular distribution $\hat{\Distr}$ can only have a point mass on its supremum, and thus, the set $W$ will have a measure zero. However, assuming that $\Distr$ has no point masses solves the above issue. For any set $\hat{W}$ with a positive probability measure under the distribution $\Distr$, we can easily find another distribution $\hat{\Distr}$ with a sufficiently large monopoly reserve so that $\hat{W}$ also has a positive probability measure under $\hat{\Distr}$.}
By condition \ref{Bul:2} in \autoref{thm:MainReduction}, no user with a value profile $\vec{w} \in W$ can be allocated by the Myerson-in-Range mechanism $\DirReveal$, except with probability zero.
Thus, the miner receives only the block reward $-\OnCBurn^{\hat{\Distr}}(\emptyset)$ corresponding to the distribution $\hat{\Distr}$.

Supposing that we show $-\OnCBurn^{\hat{\Distr}}(\emptyset) = -\OnCBurn^{\Distr}(\emptyset)$, then the miner can increase her revenue for the value profiles $\vec{w} \in \hat{W}$ when the users' value are drawn from $\hat{\Distr}$.
The miner can trick the mechanism by submitting the advice $\adv^{\Distr}$ corresponding to $\Distr$.
By on-chain user simplicity, users bid their values truthfully irrespective of the value distribution.
Thus, just using the bids submitted by users, the mechanism cannot distinguish between values really being drawn from $\Distr$ and the miner tricking it by submitting $\adv^{\Distr}$.
Thus, for $\vec{w} \in W$, the miner is paid a revenue as if the distribution was $\Distr$, which is strictly larger than the revenue when the distribution equals $\hat{\Distr}$, contradicting on-chain user simplicity.

It only remains to argue $-\OnCBurn^{\hat{\Distr}}(\emptyset) = -\OnCBurn^{\Distr}(\emptyset)$, which follows straightaway through on-chain miner simplicity when the number of users $n = 0$.
If possible, let $-\OnCBurn^{\Distr}(\emptyset) > -\OnCBurn^{\hat \Distr}(\emptyset)$.
Then, when values are drawn from $\hat{\Distr}$, the miner can strictly increase her revenue by submitting $\adv^{\Distr}$, getting a revenue $-\OnCBurn^{\Distr}(\emptyset) > -\OnCBurn^{\hat \Distr}(\emptyset)$, thereby contradicting on-chain miner simplicity.

\subsection{Omitted Proofs from \autoref{sec:ImpossibilitesandPossibilities}}

\subsubsection{Proof of \autoref{thm:MainReductionConverse}} \label{sec:ProofofMainReductionConverse}

    For $\vec{v} \in \supp(\Distr^n)$, we will argue that the miner cannot increase her virtual utility by (a) coercing some users to bid $w_i$ instead of $v_i$, (b) censoring some users and finally, (c) fabricating some bids $\hat{w}$.

    For a value profile $\vec{v} = (v_1, \dots, v_n)$, suppose that the miner coerces users $1, \dots, t$ to bid $\vec{w}_{\mathsf{coerce}} = (w_1, \dots, w_t)$, censors users $t+1, \dots, n$ and fabricates bids $\vec{w}_{\mathsf{fake}} = (\hat w_{n+1}, \dots, \hat w_{\hat{n}})$.
    Now consider the value profile $(\vec{v}, \vec{v}_{\mathsf{fake}}) = (v_1, \dots, v_n, \hat{v}_{n+1}, \dots, \hat{v}_{\hat{n}})$ where $\vv(\hat{v}_{n+1}) = \dots = \vv(\hat{v}_{\hat{n}}) = 0$ (remember that there exists such a value $\hat{v}_i$ from the smoothness of $\Distr$).

    We then have
    \begin{align*}
        \notag
        \sum_{i = 1}^n \vv(v_i) \, \OnCAlloc_i(\vec{v}) - \OnCBurn(\vec{v}) &= \sum_{i = 1}^n \vv(v_i) \, \OnCAlloc_i(\vec{v}, \vec{v}_{\mathsf{fake}}) + \sum_{i = n+1}^{\hat n} \vv(\hat v_i) \, \OnCAlloc_i(\vec{v}, \vec{v}_{\mathsf{fake}}) - \OnCBurn(\vec{v}, \vec{v}_{\mathsf{fake}}) \tag*{(Condition \ref{Bul:3})}
    \end{align*}

    Next, consider the profile $(\vec{v}_{\mathsf{coerce}}, \vec{v}_{\mathsf{censor}}, \vec{v}_{\mathsf{fake}}) = (v_1, \dots, v_t, \hat{v}_{t+1}, \dots, \hat{v}_{\hat{n}})$ obtained by replacing $v_{t+1},\dots, v_n$ by $\hat{v}_{t+1}, \dots, \hat{v}_n$ defined as follows --- $\hat{v}_i = v_i$ if $\vv(v_i) < 0$; otherwise, $\hat{v}_i$ equals the monopoly reserve.
    Let $(\vec{v}_{\mathsf{coerce}}, \vec{v}_{\mathsf{fake}})$ be obtained by censoring $\vec{v}_{\mathsf{censor}}$ from $(\vec{v}_{\mathsf{coerce}}, \vec{v}_{\mathsf{censor}}, \vec{v}_{\mathsf{fake}})$.
    Then,

\begin{align*}
    \notag
    &\sum_{i = 1}^n \vv(v_i) \, \OnCAlloc_i(\vec{v}, \vec{v}_{\mathsf{fake}}) + \sum_{i = n+1}^{\hat n} \vv(\hat v_i) \, \OnCAlloc_i(\vec{v}, \vec{v}_{\mathsf{fake}}) - \OnCBurn(\vec{v}, \vec{v}_{\mathsf{fake}}) \\
    &\qquad \geq \sum_{i = 1}^n \vv(v_i) \, \OnCAlloc_i(\vec{v}_{\mathsf{coerce}}, \vec{v}_{\mathsf{censor}}, \vec{v}_{\mathsf{fake}}) + \sum_{i = n+1}^{\hat n} \vv(\hat v_i) \, \OnCAlloc_i(\vec{v}_{\mathsf{coerce}}, \vec{v}_{\mathsf{censor}}, \vec{v}_{\mathsf{fake}}) - \OnCBurn(\vec{v}_{\mathsf{coerce}}, \vec{v}_{\mathsf{censor}}, \vec{v}_{\mathsf{fake}}) \tag*{(Condition \ref{Bul:1})} \\
    &\qquad \geq \sum_{i = 1}^t \vv(v_i) \, \OnCAlloc_i(\vec{v}_{\mathsf{coerce}}, \vec{v}_{\mathsf{censor}}, \vec{v}_{\mathsf{fake}}) + \sum_{i = t+1}^n \vv(\hat v_i) \, \OnCAlloc_i(\vec{v}_{\mathsf{coerce}}, \vec{v}_{\mathsf{censor}}, \vec{v}_{\mathsf{fake}}) \\
    &\qquad \qquad + \sum_{i = n+1}^{\hat n} \vv(\hat v_i) \, \OnCAlloc_i(\vec{v}_{\mathsf{coerce}}, \vec{v}_{\mathsf{censor}}, \vec{v}_{\mathsf{fake}}) - \OnCBurn(\vec{v}_{\mathsf{coerce}}, \vec{v}_{\mathsf{censor}}, \vec{v}_{\mathsf{fake}}) \\
    &\qquad =  \sum_{i = 1}^t \vv(v_i) \, \OnCAlloc_i(\vec{v}_{\mathsf{coerce}}, \vec{v}_{\mathsf{fake}}) + \sum_{i = n+1}^{\hat n} \vv(\hat v_i) \, \OnCAlloc_i(\vec{v}_{\mathsf{coerce}}, \vec{v}_{\mathsf{fake}}) - \OnCBurn(\vec{v}_{\mathsf{coerce}}, \vec{v}_{\mathsf{fake}}) \tag*{(Condition \ref{Bul:3})}
\end{align*}
The inequality in the third line follows since we replace all $\vv(v_i) > 0$ by $\vv(\hat{v}_i) = 0$ for $t+1 \leq i \leq n$.
Finally, applying condition \ref{Bul:1} for the value profiles $(\vec{v}_{\mathsf{coerce}}, \vec{v}_{\mathsf{fake}})$ and $(\vec{w}_{\mathsf{coerce}}, \vec{w}_{\mathsf{fake}})$, we get
\begin{align*}
    \notag
    &\sum_{i = 1}^t \vv(v_i) \, \OnCAlloc_i(\vec{v}_{\mathsf{coerce}}, \vec{v}_{\mathsf{fake}}) + \sum_{i = n+1}^{\hat n} \vv(\hat v_i) \, \OnCAlloc_i(\vec{v}_{\mathsf{coerce}}, \vec{v}_{\mathsf{fake}}) - \OnCBurn(\vec{v}_{\mathsf{coerce}}, \vec{v}_{\mathsf{fake}}) \\
    &\qquad \geq \sum_{i = 1}^t \vv(v_i) \, \OnCAlloc_i(\vec{w}_{\mathsf{coerce}}, \vec{w}_{\mathsf{fake}}) + \sum_{i = n+1}^{\hat n} \vv(\hat v_i) \, \OnCAlloc_i(\vec{w}_{\mathsf{coerce}}, \vec{w}_{\mathsf{fake}}) - \OnCBurn(\vec{w}_{\mathsf{coerce}}, \vec{w}_{\mathsf{fake}}) \\
    &\qquad = \sum_{i = 1}^t \vv(v_i) \, \OnCAlloc_i(\vec{w}_{\mathsf{coerce}}, \vec{w}_{\mathsf{fake}}) - \OnCBurn(\vec{w}_{\mathsf{coerce}}, \vec{w}_{\mathsf{fake}})
\end{align*}
The final equality follows since $\vv(\hat{v}_i) = 0$.

In summary, for any value profile $\vec{v}$ and any deviation $(\vec{w}_{\mathsf{coerce}}, \vec{w}_{\mathsf{fake}})$ by the miner,
$$\sum_{i = 1}^n \vv(v_i) \, \OnCAlloc_i(\vec{v}) - \OnCBurn(\vec{v}) \geq \sum_{i = 1}^t \vv(v_i) \, \OnCAlloc_i(\vec{w}_{\mathsf{coerce}}, \vec{w}_{\mathsf{fake}}) - \OnCBurn(\vec{w}_{\mathsf{coerce}}, \vec{w}_{\mathsf{fake}})$$
establishing that $\DirReveal$ is Myerson-in-Range.

\subsubsection{Proof of \autoref{thm:CharacterizeDetMechanism}} \label{sec:ProofofCharDetMech}

    We will compute the smoothened virtual utility function of a deterministic off-chain influence proof mechanism and then deduce the allocation and burn rules via the burn identity (\autoref{thm:UtilityVersionMainReduction}).

    Consider a value profile $\vec{v} \sim \Distr^n$ for which the $t$ largest bids are allocated.
    By the burn identity, we have $\OnCAlloc^{(i)} = \nabla_{(i)}^{\vv} U^n (\vv(\vec{v})) = 1$ for $1 \leq i \leq t$ and $\OnCAlloc^{(i)} = \nabla_{(i)}^{\vv} U^n (\vv(\vec{v})) = 0$ for $t+1 \leq i \leq n$ with probability $1$.
    Re-constructing the smoothened virtual utility $U$ from its gradient, we see that $U(\vv(\vec{v})) = \sum_{i = 1}^t \vv(v^{(i)}) - \Threshold_t(n)$ for some $\Threshold_t(n)$, possibly depending on the number $n$ of users, but independent of $\vec{v}$.
    Once again from the burn identity, the burn to include $t$ users equals
    $$\OnCBurn(\vec{v}) = \sum_{i = 1}^n \vv(v^{(i)}) \, \nabla^{\vv}_{(i)} U(\vv(\vec{v})) - U(\vv(\vec{v})) = \Threshold_t(n).$$
    Thus, the allocation rule $\OnCAlloc$ includes users $(1), \dots, (t)$ such that
    $$t = \argmax_{\hat{t}} \sum_{i = 1}^{\hat{t}} \vv(v^{(i)}) - \Threshold_{\hat t}(n).$$
    It only remains to argue that there exists $\Threshold_t$ such that $\Threshold_t(n) = \Threshold_t$ for all $t, n \in \N$.

    Suppose that for $t \in \N$, there exists $n \in \N$ such that for $\vec{v} \sim \Distr^n$, the probability of exactly $t$ users getting allocated is strictly positive (if no such $n$ exists, we can without loss of generality set $\Threshold_t(n) = \infty$ for all $n$ without changing the allocation rule).
    We first show that $\Threshold_t(n) = \Threshold_t$ for each such number $n$ of users.
    We will then argue that $\Threshold_t(\hat n) = \Threshold_t$ for all $\hat{n} = \N$.

    Consider $n \in \N$ such that with a positive probability, exactly $t$ users are allocated when $\vec{v} \sim \Distr^n$.
    We will show that $\Threshold_t(n) \leq \Threshold_t(\hat{n})$ for any $\hat{n}$ for which there exists some $\hat{v} = (\hat{v}^{(1)}, \dots, \hat{v}^{(\hat{n})}) \in \supp(\Distr^{\hat{n}})$ such that exactly $t$ users are allocated.
    Otherwise, if $\Threshold_t(n) > \Threshold_t(\hat n)$, the miner can increase her virtual utility by (a) coercing users $(1), \dots, (t)$ into reporting $\hat{v}^{(1)}, \dots, \hat{v}^{(t)}$ and (b) herself fabricating bids $\hat{v}^{(t+1)}, \dots, \hat{v}^{(\hat{n})}$.
    Exactly $t$ users are allocated, but only $\Threshold_t(\hat{n})$ is burnt instead of $\Threshold_t(n)$.
    As an immediate consequence, we must therefore have $\Threshold_t(n) = \Threshold_t(\hat{n}) = \Threshold_t$ for all $n, \hat{n} \in \N$ for which the probability of exactly $t$ users getting allocated is non-zero.

    Now consider $\hat n \in \N$ such that the probability of including exactly $t$ users when $\hat{v} \sim \Distr^{\hat n}$ equals zero.
    We will show that setting $\Threshold_t(\hat n) = \Threshold_t$ will not violate off-chain influence proofness.
    Assume for contradiction that the mechanism was originally off-chain influence proof, but updating $\Threshold_t(\hat n)$ to $\Threshold_t$ violates off-chain influence proofness.
    The set of all value profiles $\vec{w} \in \supp(\Distr^{\hat{n}})$ for which exactly $t$ users are allocated has a measure zero and thus, the miner potentially deviating from its equilibrium strategy upon seeing $\vec{w}$ will not violate off-chain influence proofness.
    It should rather be the case that the miner wants to induce the outcome $(\OnCAlloc(\vec{w}), \OnCBurn(\vec{w}))$ upon seeing a different value profile $\vec{v}$.
    We assumed the existence of $n \in \N$ such that when $\overline{v} \sim \Distr^n$, the probability of exactly $t$ users getting allocated is positive for which $\Threshold_t(n) = \Threshold_t$.
    Then, the miner could have included exactly $t$ users and burnt $\Threshold_t$ by having users $(1), \dots, (t)$ bid $\overline{v}^{(1)}, \dots, \overline{v}^{(t)}$ and by fabricating $\overline{v}^{(t+1)}, \dots, \overline{v}^{(n)}$.
    Thus, any deviation that increases the miner's virtual utility existed even prior to updating $\Threshold_t(\hat{n})$, contradicting off-chain influence proofness of the original mechanism.
    
    Thus, $\Threshold_t(n) = \Threshold_t$ for all $n \in \N$, as required. 

\begin{remark}
    Suppose that a deterministic plain-text mechanism is simple-to-participate.
    In the proof of \autoref{thm:CharacterizeDetMechanism}, we argue that updating $\Threshold_t(\hat{n})$ to $\Threshold_t$ for all $\hat{n} \in \N$ for which the probability of allocating exactly $t$ users is zero for $\hat{v} \sim \Distr^{\hat{n}}$ does not violate off-chain influence proofness.
    It is not hard to see that it does not violate on-chain user and miner simplicity too.
    Updating $\Threshold_t(\hat{n})$ to $\Threshold_t$ keeps the allocation rule invariant, and thus, the payments remain unchanged too.
\end{remark}

\subsubsection{Proof of \autoref{thm:DecreasingMargBurns}} \label{sec:ProofofDecreasingMargBurns}

We will have to fill in the following details in the proof sketch:
(a) replicate the same proof even when $\MargThreshold_i = \sup \Distr$ for some $i$ and thus, there cannot exist $\vv(v_i) > \MargThreshold_i$ as required by our construction, (b) verify that fabricating $\hat{v}^{(t+1)}$ does not decrease the critical bids for users $(1), \dots, (t-1)$ and finally, (c) show the existence of a set of value profiles with a positive probability measure for which the miner can increase her revenue by fabricating a bid $\hat{v}^{(t+1)}$.

As in the proof sketch, let $t$ be the smallest index such that $\MargThreshold_t < \MargThreshold_{t+1}$.
Further, let $k$ be the largest index for which $\MargThreshold_1 \geq \dots \geq \MargThreshold_k \geq \sup \Distr$.
Remember that $\Threshold_k = \sum_{i = 1}^k \MargThreshold_i + \Threshold_0 \leq t \, \sup \Distr + \Threshold_0$, and thus, $\MargThreshold_1 = \dots = \MargThreshold_k = \sup \Distr$.
Further, $t > k$, since otherwise, $\MargThreshold_{t+1} > \MargThreshold_t = \sup \Distr$ and $\Threshold_{t+1} > t \, \sup \Distr + \Threshold_0$.

We will construct a family $V$ of value profiles for which on-chain miner simplicity will be violated.
For a small enough $\varepsilon$, consider values $v^{(1)}, \dots, v^{(k)}$ a little smaller than $\vv^{-1}(\MargThreshold_i) = \sup \Distr$, i.e, $\vv(v^{(i)}) \in [\MargThreshold_i - \tfrac{i}{k \,t} \varepsilon, \MargThreshold_i - \tfrac{i-1}{k \,t} \varepsilon]$.
For $k < i \leq t$, consider values $v^{(i)}$ such that $\vv(v^{(i)})$ is slightly larger than $\MargThreshold_i$, i.e, $\vv(v^{(i)}) \in [\MargThreshold_i + (1 + \frac{i-k-1}{t-k}) \varepsilon, \MargThreshold_i + (1 + \frac{i-k}{t-k}) \, \varepsilon]$.
Note that the probability of $v^{(1)}, \dots, v^{(t)}$ belonging to the corresponding intervals is strictly positive, which follows since $\Distr$ is a smooth regular distribution, and the virtual value function is continuous.

Next, we will argue that the virtual utility optimal allocation includes all $t$ users.
By construction, the virtual utility from allocation at most $k$ users is strictly smaller than building the empty block.
Conditioned on allocating more than $k$ users, the increase $\vv(v^{(i)})$ to the virtual surplus from allocating user $(i)$ is greater than the marginal burn $\MargThreshold_i$ for $k <  i \leq t$.
Thus, the allocation rule either includes all $t$ users or builds the empty block.
We have constructed the intervals for $\vv(v^{(1)}), \dots, \vv(v^{(t)})$ so that
$\sum_{i = 1}^t \Big(\vv(v^{(i)}) - \MargThreshold_i\Big) - \Threshold_0 > - \Threshold_0$, and thus, an off-chain influence proof allocation rule will allocate all $t$ users except with zero probability.

Now consider the miner fabricating a bid $\hat{v}^{(t+1)}$ just smaller than $v^{(t)}$, i.e, $\vv(\hat{v}^{(t+1)}) \in (\MargThreshold_t, \vv(v^{(t)}))$.
Note that $\vv(\hat{v}^{(t+1)}) < \MargThreshold_{t+1}$, and thus, the virtual utility optimal allocation will continue to include only the users $(1), \dots, (t)$ with virtual values $\vv(v^{(1)}), \dots, \vv(v^{(t)}) > \vv(\hat{v}^{(t+1)})$ even after the miner fabricates $\hat{v}^{(t+1)}$.
Therefore, the allocation rule and the total burn do not change as a consequence of fabricating $\hat{v}^{(t+1)}$. 

Finally, we argue that fabricating $\hat{v}^{(t+1)}$ will not decrease the critical bids of users $(1), \dots, (t-1)$ while strictly increasing the critical bid for user $(t)$.
Fix the values $\vec{v}_{-(i)}$ of users other than $(i)$.
Now suppose that user $(i)$ is included in the block for some value $v < v^{(i)}$ after the miner fabricates $\hat{v}^{(t+1)}$.
We will show that user $(i)$ will be included for the value profile $(v, \vec{v}_{-i})$ even before the miner fabricated $\hat{v}^{(t+1)}$, and thus, the critical bid cannot reduce from fabricating $\hat{v}^{(t+1)}$.
Note that the virtual utility from allocation all $(t+1)$ bids is dominated by including the $t$ largest bids for the value profile $(v, v_{-(i)}, \hat{v}^{(t+1)})$.
Thus, at most $t$ bids are included by the virtual utility maximizing allocation.
User $(i)$ is one such included user, by assumption and it cannot be the case that $\hat{v}^{(t+1)}$ is prioritized over a user $(j)$ since $\vv(v^{(j)}) > \vv(\hat{v}^{(t+1)})$.
The virtual utility optimal allocation is a subset of users $(1), \dots, (t)$, and thus, the same allocation is optimal even without the fabricated bid $\hat{v}^{(t+1)}$.
Hence, user $(i)$ is allocated for the value profile $(v, v_{-(i)})$, as required.

Arguing that user $(t)$'s critical bid strictly increases is much simpler.
Without $\hat{v}^{(t+1)}$, user $(t)$ is allocated whenever $v^{(t)} \geq \vv^{-1}(\MargThreshold_t)$.
However, once $\hat{v}^{(t+1)}$ is fabricated, user $(t)$'s value needs to be at least $\hat{v}^{(t+1)} > \vv^{-1}(\MargThreshold_t)$ for inclusion.
Thus, the critical bid and the payment charged from user $(t)$ strictly increase, contradicting on-chain miner simplicity.

To summarize, whenever there exists some $t$ for which the marginal burn is strictly increasing, the miner can increase her revenue with a positive probability by fabricating a bid after seeing the user's bids.
Thus, the sequence of marginal burns $\seq{\MargThreshold_t}{t \in \N}$ satisfies $\MargThreshold_t \geq \MargThreshold_{t+1}$ for a deterministic, simple-to-participate mechanism.

\subsubsection{Proof of \autoref{thm:IncreasingMargBurns}} \label{sec:ProofofIncreasingMargBurns}

We would have to fill in the following details from the proof sketch --- (a) there exists a set of values with a positive probability measure that satisfies the construction sketched above (for example, we will have to ensure details like $\hat{v}^{(n+1)} < v^{(n)}$ for each of the value profiles that we consider) and (b) the sum $\sum_{t+1}^n \vv^{-1}(\MargThreshold_i + \delta) - \MargThreshold_i$ indeed diverges.

Let $t \in \N$ be such that $\MargThreshold_t > \MargThreshold_{t+1}$.
Construct values $v^{(1)}, \dots, v^{(t)}$ as follows.
For a sufficiently small $\varepsilon$ and $\hat \varepsilon$, uniformly sample $t$ times from the interval $[\varepsilon, \varepsilon + \hat{\varepsilon}]$ and arrange them in ascending order to obtain $\varepsilon_1, \dots, \varepsilon_t$.
Choose $v^{(1)}, \dots, v^{(t)}$ so that $\vv(v^{(i)}) = \MargThreshold_i - \varepsilon_i$.
By continuity of the virtual value function, there exists such values and the probability measure of the set of such values is strictly positive.

For any $n \in \N$, we will next draw values $v^{(t+1)}, \dots, v^{(n)}$ as follows.
Similar to $\seq{\varepsilon_i}{1 \leq i \leq t}$, sample $(n-t)$ times uniformly from the interval $[\delta, \delta + \hat{\delta}]$ and arrange them in descending order to get $\delta_{t+1}, \dots, \delta_{n}$.
Define $v^{(i)}$ to have a virtual value $\vv(v^{(i)}) = \MargThreshold_i + \delta_i$.

We choose $\varepsilon, \hat \varepsilon, \delta$ and $\hat{\delta}$ such that (a) $(n-t) \, (\delta + \hat{\delta}) < t \varepsilon$ and (b) $t \, (\varepsilon + \hat{\varepsilon}) < (n-t) \, (\delta -\hat{\delta}) + \delta$.
It is not hard to see the existence of $\varepsilon, \hat \varepsilon, \delta$ and $\hat{\delta}$ that simultaneously satisfy both (a) and (b).
Specifically, choose $\hat \delta << \delta$ and $\hat \varepsilon << \varepsilon$ so that $t \varepsilon - (n-t) \, \delta$ is orders of magnitude smaller than both $\varepsilon$ and $\delta$.  

Condition (a) ensures that the virtual utility optimal allocation with bids $v^{(1)}, \dots, v^{(n)}$ is the empty block, since
\begin{align*}
    \notag
   \sum_{i = 1}^n \vv(v^{(i)}) &= \sum_{i = 1}^t \MargThreshold_i - \varepsilon_i + \sum_{i = t+1}^n \MargThreshold_i + \delta_i \\
   &\leq \sum_{i = 1}^n \MargThreshold_i - t \varepsilon + (n-t) \, (\delta + \hat{\delta}) \\
   &< \sum_{i = 1}^n \MargThreshold_i.
\end{align*}
The virtual utility $\sum_{i = 1}^n \vv(v^{(i)}) - \sum_{i = 1}^n \MargThreshold_i - \Threshold_0$ from including all $n$ users is at most the block reward $-\Threshold_0$ from creating an empty block. 

Now, consider the miner fabricating a bid $\hat{v}^{(n+1)}$ with a virtual value $\vv(\hat{v}^{(n+1)}) = \MargThreshold_{n+1} + \delta_{n+1}$ for $\delta_{n+1} \in \big( t \, (\varepsilon + \hat{\varepsilon}) - (n-t) \, \delta, t \, (\varepsilon + \hat{\varepsilon}) - (n-t) \, \delta + (n-t) \hat{\delta} \big)$.
Indeed, from condition (b), $\vv(\hat{v}^{(n+1)}) < \MargThreshold_{n+1} + t \, (\varepsilon + \hat{\varepsilon}) - (n - t) \, \delta + (n-t) \, \hat{\delta} \leq \MargThreshold_n + \delta \leq \vv(v^{(n)})$ and thus, $\hat{v}^{(n+1)}$ is the $n+1$\textsuperscript{th} largest bid.

The allocation rule should almost surely include all $n+1$ bids once the miner fabricates $\hat{v}^{(n+1)}$.
To prove this, first note that the virtual utility from building the empty block strictly dominates building a block with $\hat{n}$ users for $\hat{n} < n+1$.
Thus, it is sufficient to argue that including all $n+1$ bids dominates the empty block.
However, we have specifically chosen $\varepsilon_1, \dots, \varepsilon_t$ and $\delta_{t+1}, \dots, \delta_{n+1}$ such that allocating all $n+1$ bids has a higher virtual utility greater than the empty block.
\begin{align*}
    \notag
    \sum_{i = 1}^{n+1} \vv(v^{(i)}) - \sum_{i = 1}^{n+1} \MargThreshold_i - \Threshold_0 &= -\sum_{i = 1}^t (\varepsilon + \hat{\varepsilon} ) + \sum_{i = t+1}^n \delta + \delta_{n+1} - \Threshold_0 \\
    &> -\sum_{i = 1}^t (\varepsilon + \hat{\varepsilon} ) + \sum_{i = t+1}^n \delta + \big(t \, (\varepsilon + \hat{\varepsilon}) - (n-t) \, \delta\big) - \Threshold_0 \\
    &= - \Threshold_0.
\end{align*}

To summarize the proof thus far, the miner earns strictly positive payments from all $n$ users after fabricating $\hat{v}^{(n+1)}$, as opposed to the status quo, where the empty block was built by the TFM.
It only remains to argue that the miner's net revenue is larger than the block reward after fabricating $\hat{v}^{(t+1)}$.

We will start by calculating the critical bid of each user.
The critical bid $\vv(\overline{v}^{(i)})$ of user $(i)$ is the smallest value for which the virtual utility conditioned on allocating user $(i)$ is larger than the virtual utility conditioned on excluding user $(i)$.
We will show that the former allocation is the set of all $n+1$ bids and the latter is the empty block.
Consider any block which includes $\hat{n} < n$ users.
The virtual utility of such a block is maximized by allocating the $\hat{n}$ largest virtual values $\vv(v^{(1)}), \dots, \vv(v^{(\hat{n})})$.
But, by construction, we have $\sum_{j = 1}^{\hat{n}} \vv(j) < \sum_{i = 1}^{\hat{n}} \MargThreshold_j$ and thus, including $\hat{n} < n+1$ users is strictly dominated by creating an empty block.
Therefore, conditioned on excluding $(i)$, the empty block is built.
Conditioned on including $(i)$, unless all $(n+1)$ bids are included, building the empty block will dominate including user $(i)$.
Thus, it is sufficient to compare the virtual utility from including all $n+1$ bidders against that of the empty block to calculate bidder $(i)$'s critical bid.
User $(i)$'s critical bid satisfies
$$\sum_{j = 1, j \neq i}^{n} \vv(v^{(j)}) + \vv(\overline{v}^{(i)}) + \vv(\hat{v}^{(n+1)}) - \sum_{i = 1}^{n+1} \MargThreshold_j - \Threshold_0 = -\Threshold_0.$$

Substituting $\vv(v^{j}) = \MargThreshold_j - \varepsilon_j$ for $1 \leq j \leq t$ and $\vv(v^{j}) = \MargThreshold_j + \delta_j$ for $t+1 \leq j \leq n+1$, we get $\vv(\overline{v}^{(i)}) = \MargThreshold_i + \sum_{j = 1, j \neq i}^t \varepsilon_j - \sum_{j = t+1, j \neq i}^{n+1} \delta_j$.
Further, for $i > t$, 
\begin{align*}
    \notag
    \vv(\overline{v}^{(i)}) &\geq \MargThreshold_i +  \sum_{j = 1}^t \varepsilon - \sum_{j = t+1, j \neq i}^{n+1} (\delta + \hat{\delta}) - \delta_{n+1} \\
    &\geq \MargThreshold_i +  t \varepsilon - (n-t-1) (\delta + \hat{\delta}) - \Big(t \, (\varepsilon + \hat{\varepsilon}) - (n - t) \, (\delta - \hat{\delta})\Big) \\
    &\geq  \MargThreshold_i + \delta - t \hat{\varepsilon} - 2(n-t) \hat{\delta}.
\end{align*}
Remember that $\hat \varepsilon << \varepsilon$ and $\hat{\delta} << \delta$.
For such choices of $\delta, \hat \delta, \varepsilon$ and $\hat{\varepsilon}$, we have $\delta - t \hat{\varepsilon} - 2(n-t)\hat{\delta} > 0$, and thus, $\vv(\overline{v}^{i}) \geq \MargThreshold_i$ for $t+1 \leq i \leq n$.

Finally, we argue that for a large enough $n$, the payments received from the $n$ users after fabricating $\hat{v}^{(n+1)}$ compensates for the burn $\sum_{i = 1}^{n+1} \MargThreshold_i$, i.e,
$$\sum_{i = 1}^n \overline{v}^{(i)} > \sum_{i = 1}^{n+1} \MargThreshold_i.$$
In other words, we need
$$\sum_{i = 1}^n (\overline{v}^{(i)} - \MargThreshold_i) > \MargThreshold_{n+1}.$$

We will show that the series $\sum_{i = t+1}^{\infty} (\overline{v}^{(i)} - \MargThreshold_i)$ diverges.
Remember that $\overline{v}^{i}$ is at least $\vv^{-1}(\MargThreshold_i)$ equality holding if and only if $\MargThreshold_i = \sup \Distr$.
Also, the sequence of marginal burns $\seq{\MargThreshold}{i \in \N}$ is non-increasing, with $\sup \Distr \geq \MargThreshold_t > \MargThreshold_{t+1}$.
Thus, the sequence $\seq{\vv^{-1}(\MargThreshold_i) - \MargThreshold_i}{i \geq t+1}$ is strictly positive.
If this sequence is unbounded, then the series will diverge as required.
Supposing that the sequence is bounded, we will show that there exists a subsequence that does not converge to zero, which would automatically imply a divergent series.

When $\seq{\vv^{-1}(\MargThreshold_i) - \MargThreshold_i}{i \geq t+1}$ is bounded, by the Bolzano-Weierstrass theorem \citep{Rudin64}, there exists a convergent subsequence.
Relabel the index set of this subsequence to call it $\seq{\vv^{-1}(\MargThreshold_i) - \MargThreshold_i}{i \geq t+1}$.
The sequences $\seq{\MargThreshold_i}{i \geq t+1}$ and $\seq{\vv^{-1}(\MargThreshold_i)}{i \geq t+1}$ are both monotone non-increasing (the first follows since the marginal burns are non-increasing and we use the fact that the distribution $\Distr$ is regular for the second) bounded below by $\vv^{-1}(0)$ and $0$ respectively.
Thus, both the sequences must converge.
Let $\seq{\vv^{-1}(\MargThreshold_i)}{i \geq t+1}$ converge to $\upsilon$.
Since $\vv$ is continuous, $\seq{\MargThreshold_i}{i \geq t+1} = \seq{\vv \circ \vv^{-1}(\MargThreshold_i)}{i \geq t+1}$ must converge to $\vv(\upsilon)$.
Therefore, the sequence $\seq{\vv^{-1}(\MargThreshold_i) - \MargThreshold_i}{i \geq t+1}$ converges to $\upsilon - \vv(\upsilon) > 0$, as $\upsilon < \sup \Distr$.

To conclude, $\sum_{i > t+1} \vv^{-1}(\MargThreshold_i) - \MargThreshold_i$ diverges, and for a sufficiently large $n$,
$$\sum_{i = 1}^n (\overline{v}^{(i)} - \MargThreshold_i) \geq \sum_{i = 1}^t (\overline{v}^{(i)} - \MargThreshold_i) + \sum_{i = t+1}^n \vv^{-1}(\MargThreshold_i) - \MargThreshold_i)  > \MargThreshold_{n+1},$$
contradicting on-chain miner simplicity.
Thus, there cannot exist $t$ for which $\MargThreshold_t > \MargThreshold_{t+1}$, which in turn implies a constant marginal burn, as required.

\subsubsection{Proof Sketch of \autoref{thm:ConstMargBurnPerUnit}} \label{sec:ProofofConstMargBurnPerUnit}

In this proof sketch, we will locate a value profile $\vec{v} \in \supp(\Distr^n)$ for some $n \in \N$ for which the miner can increase her revenue by fabricating bids if $\MargThreshold_t$ is not a constant.
Ideas similar to the proofs of \autoref{thm:DecreasingMargBurns} and  \autoref{thm:IncreasingMargBurns} can be used to generalize our construction to obtain a set of value profiles with a positive probability measure.

\begin{lemma} \label{thm:DecreasingMarginalPosition}
    For a smooth regular distribution $\Distr$, a simple-to-participate position auction must satisfy $\MargThreshold_t \geq \MargThreshold_{t+1}$ for all $t \in \N$. 
\end{lemma}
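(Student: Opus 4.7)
The plan is to directly adapt the proof of \autoref{thm:DecreasingMargBurns} to the position-auction setting, the only substantive change being that the ``critical bid'' analysis now must track changes in the expected interim allocation, to which the payment identity (\autoref{item:payment-identity} in \autoref{thm:myerson}) applies. Suppose for contradiction that $\MargThreshold_t < \MargThreshold_{t+1}$ for some $t \in \N$, and pick the smallest such $t$, so that $\MargThreshold_1 \geq \cdots \geq \MargThreshold_t$. Fix $\varepsilon \in (0, \MargThreshold_{t+1} - \MargThreshold_t)$. By continuity of $\vv$ and smoothness of $\Distr$, a positive-measure family of $t$-user profiles $\vec{v}$ exists with $\vv(v^{(i)}) \in (\MargThreshold_i, \MargThreshold_i + \varepsilon)$ for every $i \leq t$; on every such profile the virtual-utility maximizer allocates exactly the top $t$ bids, with allocation probabilities $x^{(1)}, \ldots, x^{(t)}$ and burn $\Threshold_t$.

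Next, I consider the miner fabricating a single bid $\hat{v}^{(t+1)}$ with $\hat{v}^{(t+1)} < v^{(t)}$ and $\vv(\hat{v}^{(t+1)}) \in (\MargThreshold_t, \vv(v^{(t)}))$. Because $\vv(\hat{v}^{(t+1)}) < \MargThreshold_t + \varepsilon < \MargThreshold_{t+1}$, filling a $(t+1)$-st slot strictly decreases virtual utility, so the mechanism still allocates the original top $t$ users with unchanged probabilities and burn. The revenue-relevant effect is therefore entirely through the payments. Holding $\vec{v}_{-(t)}$ and the fabricated bid fixed and varying user $(t)$'s value $w$, their interim allocation equals $x^{(t)}$ when $\vv(w) \geq \MargThreshold_t$ and $0$ otherwise in the no-fabrication case; after fabrication it equals $x^{(t)}$ for $w > \hat{v}^{(t+1)}$ and $0$ for $w \leq \hat{v}^{(t+1)}$ (in the latter range user $(t)$ drops to rank $t+1$, but $\vv(w) < \MargThreshold_{t+1}$ rules out the mechanism filling $t+1$ slots). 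By the payment identity, user $(t)$'s expected payment strictly increases by $x^{(t)} \cdot (\hat{v}^{(t+1)} - \vv^{-1}(\MargThreshold_t)) > 0$, while for each $j < t$ an analogous interim analysis shows that fabrication only tightens competition for higher-ranked slots and hence weakly increases the payment of user $(j)$. Summing, the miner strictly gains revenue on a positive-probability event, contradicting on-chain miner simplicity.

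The main obstacle I anticipate is verifying rigorously that for every $j < t$ the expected payment of user $(j)$ does not strictly decrease under the fabrication. The formal argument will parallel the corresponding step in the proof of \autoref{thm:DecreasingMargBurns}: with $\vec{v}_{-(j)}$ and $\hat{v}^{(t+1)}$ fixed, one analyzes user $(j)$'s interim allocation as a step function of $w$ and shows it is pointwise weakly dominated above by the no-fabrication curve. The minimality of $t$ is crucial here---it guarantees that $\MargThreshold_1, \ldots, \MargThreshold_t$ are non-increasing, so the virtual-utility-optimal number of allocated users behaves monotonically under the rank-shifts induced by varying $w$, and the comparison between the two interim curves is clean.
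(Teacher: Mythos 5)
Your proposal is correct and takes essentially the same route the paper intends: the paper omits this proof, stating it is almost identical to that of \autoref{thm:DecreasingMargBurns}, and your adaptation (fabricate $\hat v^{(t+1)}$ with $\vv(\hat v^{(t+1)})\in(\MargThreshold_t,\vv(v^{(t)}))$, observe the allocation and burn are unchanged, then use the payment identity to show user $(t)$'s payment rises by $x^{(t)}\,(\hat v^{(t+1)}-\vv^{-1}(\MargThreshold_t))$ while higher-ranked users' payments weakly rise because their interim allocation curves are pointwise weakly lowered) is precisely that adaptation. When writing it out, just carry over the two boundary details treated in the full proof of \autoref{thm:DecreasingMargBurns}: assume w.l.o.g.\ $x^{(i)}>0$, and handle the case where some $\MargThreshold_i$ with $i\le t$ is at least $\sup\Distr$, where one cannot choose $\vv(v^{(i)})>\MargThreshold_i$ and must instead place these values just below it.
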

We skip the proof for \autoref{thm:DecreasingMarginalPosition} since it is almost identical to \autoref{thm:DecreasingMargBurns}.

\begin{lemma}
    Let $\Distr$ be a smooth regular distribution.
    Then, a simple-to-participate position auction with non-increasing marginal burn per unit allocation must satisfy $\MargThreshold_t = \MargThreshold$ for all $t \in \N$.    
\end{lemma}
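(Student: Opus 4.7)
The plan is to adapt the argument of Lemma \ref{thm:IncreasingMargBurns} verbatim, inserting the allocation weights $x^{(i)}$ into every virtual-utility and payment computation. Assume for contradiction that $\MargThreshold_t > \MargThreshold_{t+1}$ for some $t$. The goal is, for a sufficiently large $n$, to exhibit a value profile on which the mechanism builds the empty block in equilibrium, yet the miner can fabricate a single additional bid to force the mechanism to allocate all $n+1$ bids and thereby collect a revenue that strictly exceeds the block reward $-\Threshold_0$.

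\textbf{Construction.} First I would pick small parameters $\varepsilon, \hat{\varepsilon}, \delta, \hat{\delta}$ with $\hat{\varepsilon} \ll \varepsilon$ and $\hat{\delta} \ll \delta$, and set virtual values $\vv(v^{(i)}) = \MargThreshold_i - \varepsilon_i$ for $i \le t$ with $\varepsilon_i \in [\varepsilon, \varepsilon+\hat{\varepsilon}]$, and $\vv(v^{(i)}) = \MargThreshold_i + \delta_i$ for $t < i \le n$ with $\delta_i \in [\delta, \delta+\hat{\delta}]$. Since $\MargThreshold_i$ is non-increasing and $\delta \ll \varepsilon$, this profile is consistent with a descending ordering in $\supp(\Distr^n)$ and occurs with positive probability under $\Distr^n$ (smoothness of $\Distr$ and continuity of $\vv$ give the needed room). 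The parameters will be chosen so that for every prefix $\hat{t} \le n$,
\[
\sum_{i=1}^{\hat t}\bigl(\vv(v^{(i)}) - \MargThreshold_i\bigr)\,x^{(i)} \;=\; -\sum_{i=1}^{\min(\hat t,t)} \varepsilon_i x^{(i)} + \sum_{i=t+1}^{\hat t} \delta_i x^{(i)} \;<\; 0,
\]
so the virtual-utility-maximizing action at $\vec{v}$ is to build the empty block. This can be arranged by fixing $n$ first and then choosing $\delta$ (as a function of $n$) small enough that $\sum_{i=t+1}^{n}\delta_i x^{(i)} < \sum_{i=1}^{t}\varepsilon_i x^{(i)}$, which is possible since $\sum_{i=1}^{t} x^{(i)}$ is a fixed positive constant.

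\textbf{Fabrication and revenue.} Next the miner fabricates $\hat v^{(n+1)}$ with $\vv(\hat v^{(n+1)})$ strictly above $\MargThreshold_{n+1}$ but below $\vv(v^{(n)})$, chosen so that $\sum_{i=1}^{n+1}(\vv(v^{(i)}) - \MargThreshold_i)x^{(i)} > 0$; this forces the mechanism to include all $n+1$ bids. Then I would compute each user $(i)$'s payment via the payment identity (\autoref{item:payment-identity} in \autoref{thm:myerson}) for the step-function allocation rule of the position auction. For users $i > t$, the critical bid is essentially $\vv^{-1}(\MargThreshold_i)$ (as in the deterministic case, the analogous condition $\sum_{j \ne i} \vv(v^{(j)}) x^{(j)} + \vv(\overline v^{(i)}) x^{(i)} = \sum_j \MargThreshold_j x^{(j)}$ yields $\vv(\overline v^{(i)}) \approx \MargThreshold_i$ up to $O(\varepsilon + \delta)$ corrections). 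The miner's net revenue from the fabrication is then approximately
\[
\sum_{i=1}^{n}\bigl(\vv^{-1}(\MargThreshold_i) - \MargThreshold_i\bigr)\, x^{(i)} \;-\; \MargThreshold_{n+1}\, x^{(n+1)} \;-\; O(\varepsilon+\delta),
\]
plus the block reward $-\Threshold_0$.

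\textbf{Divergence and conclusion.} The final step is to argue that the above quantity is strictly positive for a sufficiently large $n$. Since $\MargThreshold_i \le \MargThreshold_{t+1} < \MargThreshold_t \le \sup\Distr$ for all $i > t$, regularity and continuity of $\vv$ give a constant $c > 0$ with $\vv^{-1}(\MargThreshold_i) - \MargThreshold_i \ge c$ for all $i > t$ (the argument is identical to the one closing the proof of \autoref{thm:IncreasingMargBurns}: the monotone bounded sequences $\MargThreshold_i$ and $\vv^{-1}(\MargThreshold_i)$ cannot both converge to the same limit). Consequently $\sum_{i=t+1}^{\infty}(\vv^{-1}(\MargThreshold_i) - \MargThreshold_i)\, x^{(i)} \ge c \sum_{i=t+1}^{\infty} x^{(i)}$ is a series of positive terms, whose partial sums eventually dominate $\MargThreshold_{n+1} x^{(n+1)}$ (which tends to $0$ in the finite-supply case and is swallowed by the divergent sum in the infinite-supply case) and the fixed losses from users $i \le t$. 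For such $n$, the miner strictly profits by fabricating $\hat v^{(n+1)}$, contradicting on-chain miner simplicity. The main obstacle is the bookkeeping in Step 3: tracking how user $(i)$'s rank and hence allocation change as his bid sweeps from $0$ to $v^{(i)}$ in the presence of the fabricated bid, and verifying that the ``all or nothing'' behavior of the position auction near user $(i)$'s critical bid yields the expected critical value $\vv^{-1}(\MargThreshold_i)$ up to negligible $O(\varepsilon+\delta)$ corrections.
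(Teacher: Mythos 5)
Your proposal follows essentially the same route as the paper's own (sketched) proof: it ports the construction of \autoref{thm:IncreasingMargBurns} to the weighted setting, builds the profile with $\vv(v^{(i)})=\MargThreshold_i\mp$ small perturbations so that the empty block is optimal, has the miner fabricate one lowest bid to tip the full prefix positive, bounds the critical bids by $\vv^{-1}(\MargThreshold_i)$ (up to small corrections), and closes with the same limit argument showing $\vv^{-1}(\MargThreshold_i)-\MargThreshold_i\ge c>0$ so that the weighted gain $c\sum_{i>t}x^{(i)}$ eventually outweighs $\MargThreshold_{n+1}x^{(n+1)}$ and the $O(\varepsilon)$ losses. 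The details you flag (rank/critical-bid bookkeeping, ordering of the perturbations, and fitting the fabricated bid below $v^{(n)}$) are exactly the ones the paper also treats at sketch level, so your plan is correct at the same level of rigor.
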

\begin{proof}[Proof sketch]
    We follow a strategy similar to \autoref{thm:IncreasingMargBurns}.
    For contradiction, let $t$ be the smallest index such that $\MargThreshold_t > \MargThreshold_{t+1}$.
    Consider a value profile $\vec{v} = (v^{(1)}, \dots, v^{(t)}, v^{(t+1)}, \dots, v^{(n)})$ for $n >> t$ such that
    $\vv(v^{(i)}) = \MargThreshold_i - \varepsilon_i$ for $1 \leq i \leq t$ and $\vv(v^{(i)}) = \MargThreshold_i + \delta_i$ for $t < i \leq n$.
    The virtual utility from allocating all $n$ bids equals
    $$\sum_{i = 1}^n \vv(v^{(i)}) \, x^{(i)} - \sum_{i = 1}^n \MargThreshold_i \, x^{(i)} - \Threshold_0 = - \sum_{i = 1}^t \varepsilon_i \, x^{(i)} +\sum_{i = t+1}^n \delta_i \, x^{(i)} - \Threshold_0.$$
    Choosing $- \sum_{i = 1}^t \varepsilon_i \, x^{(i)} +\sum_{i = t+1}^n \delta_i \, x^{(i)} < 0$, the virtual utility from allocation all $n$ users is dominated by the block reward from building an empty block.
    The virtual utility optimal allocation includes no user and the miner receives a revenue $- \Threshold_0$.

    Suppose that, for $\delta_{n+1} \, x^{(t+1)}$ very slightly larger than $\sum_{i = 1}^t \varepsilon_i \, x^{(i)} - \sum_{i = t+1}^n \delta_i \, x^{(i)}$, the miner fabricates $\hat{v}^{(n+1)} < v^{(n)}$ such that $\vv(\hat{v}^{(n+1)}) = \MargThreshold_{n+1} + \delta_{n+1}$.\footnote{It is fairly straightforward to choose $\varepsilon_1, \dots, \varepsilon_t, \delta_{t+1}, \dots, \delta_n$ such that $\hat{v}^{(n+1)}$ is the $n+1$\textsuperscript{th} largest bid. For example, choose $\varepsilon_1 = \dots, \varepsilon_t = \varepsilon$ and $\delta_{t+1} = \delta_{n} = \delta$. If $\delta_{n+1} > \delta$, decrease $\varepsilon$ and increase $\delta$ until $\delta_{n+1} < \delta$. $\varepsilon_1, \dots, \varepsilon_t, \delta_{t+1}, \dots, \delta_n, \delta_{n+1}$ can later be made arbitrarily small if necessary by scaling all of them by the same multiplicative constant.}\textsuperscript{,}\footnote{Note that such a choice of $\delta_{n+1}$ is not possible if $x^{(t+1)} = 0$. However, we can without loss of generality assume that $x^{(i)} > 0$ for all $i$. Setting $x^{(i)} = 0$ is similar to choosing a marginal threshold $\MargThreshold_i = \infty$ for any $x^{(i)} > 0$. The allocation rule will never allocate more than $i$ bids to avoid the exceptionally high marginal burn.}
    By construction, the virtual utility from allocating all $n+1$ bids is slightly larger than the block reward from the empty block.
    Formally,
    \begin{align*}
        \notag
        \sum_{i = 1}^n &\vv(v^{(i)}) \, x^{(i)} + \vv(\hat{v}^{(n+1)}) \, x^{(n+1)} - \sum_{i = 1}^{n+1} \MargThreshold_i \, x^{(i)} - \Threshold_0 \\
        &= \sum_{i = 1}^t (\MargThreshold_i - \varepsilon_i) \, x^{(i)} + \sum_{i = t+1}^n (\MargThreshold_i + \delta_i) \, x^{(i)} + (\MargThreshold_{n+1} + \delta_{n+1}) \, x^{(n+1)} - \sum_{i = 1}^{n+1} \MargThreshold_i \, x^{(i)} - \Threshold_0 \\
        &> -\Threshold_0
    \end{align*}
    Thus, the mechanism includes all $n$ users and the fabricated bid.
    
    By reasoning about critical bids similar to the proof of \autoref{thm:IncreasingMargBurns}, we can show that
    the critical bids $\overline{v}^{(i)}$ for users $1 \leq (i) \leq t$ satisfies $\vv(\overline{v}^{(i)}) \approx \MargThreshold_i - \varepsilon_i$ and for users $t+1 \leq (i) \leq n$, satisfies $\vv(\overline{v}^{(i)}) \geq \MargThreshold_i$.
    The $n$ users are charged a payment at least 
    \begin{align*}
        \notag
        \sum_{i = 1}^t \vv^{-1}(\MargThreshold_i - \varepsilon_i) \, x^{(i)} + \sum_{i = t+1}^n \vv^{-1}(\MargThreshold_i) \, x^{(i)}
    \end{align*}
    while $\sum_{i = 1}^{n+1} \MargThreshold_i \, x^{(i)}$ is burnt.
    The miner's net revenue, therefore, is greater than
    \begin{align*}
        \notag
        \sum_{i = t+1}^n \Big(\overline{v}^{(i)} - \Threshold_i \Big) &\, x^{(i)} + \sum_{i = t+1}^n \Big(\vv^{-1}(\MargThreshold_i) - \MargThreshold_i \Big) \, x^{(i)} - \MargThreshold_{n+1} \, x^{(n+1)} \\
        &\geq \sum_{i = t+1}^n \Big(\vv(\overline{v}^{(i)}) - \MargThreshold_i \Big) \, x^{(i)} + \sum_{i = t+1}^n \Big(\vv^{-1}(\MargThreshold_i) - \MargThreshold_i \Big) \, x^{(i)} - \MargThreshold_{n+1} \, x^{(n+1)} \\
        &\approx -\sum_{i = t+1}^n \varepsilon_i \, x^{(i)} + \sum_{i = t+1}^n \Big(\vv^{-1}(\MargThreshold_i) - \MargThreshold_i \Big) \, x^{(i)} - \MargThreshold_{n+1} \, x^{(n+1)}.
    \end{align*}
    For a sufficiently small $\varepsilon_1, \dots, \varepsilon_t$ and a sufficiently large $n$, $\sum_{i = t+1}^n \varepsilon_i \, x^{(i)}$ becomes arbitrarily close to zero while $\sum_{i = t+1}^n \Big(\vv^{-1}(\MargThreshold_i) - \MargThreshold_i \Big) \, x^{(i)}$ strictly dominates $\MargThreshold_{n+1} \, x^{(n+1)}$, contradicting on-chain miner simplicity.
\end{proof}

\subsubsection{Proof of \autoref{thm:BoundedPositionAuction}} \label{sec:ProofofBoundedPositionAuction}

As a first step towards the ``if'' direction, we will argue that if the miner can increase her revenue by fabricating a collection of bids for some value profile $\vec{v}$, then, there exists a value profile for which the miner can increase her revenue by fabricating exactly one bid smaller than all other bids.

\begin{lemma} \label{thm:PositionFakeBids}
    For a position auction specified by $\seq{x^{(t)}}{t \in \N}$ and $\MargThreshold$, suppose there exists a value profile for which fabricating bids increases the miner's revenue.
    Then, there exists another value profile for which the miner can increase her revenue by fabricating exactly one bid smaller than all bids submitted by the users.
    In such a scenario, if the miner fabricates a bid $w$ with $t$ users bidding greater than $w$, the miner's net revenue increases by
    $$t \, (w - \vv^{-1}(\MargThreshold)) \, (x^{(t)} - x^{(t + 1})) - \MargThreshold x^{(t+1)}.$$
\end{lemma}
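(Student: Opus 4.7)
My approach is to first derive the formula (second statement) by direct application of the payment identity, then establish the reduction (first statement) via a contrapositive argument using an incremental decomposition. For the formula, set up $t$ real users with bids $v^{(1)} > \dots > v^{(t)}$, each satisfying $v^{(i)} \geq \vv^{-1}(\MargThreshold)$, along with a single fabricated $w \in (\vv^{-1}(\MargThreshold), v^{(t)})$. Since $w$ lies below all real bids but is still allocated, the mechanism adds one slot at rank $t+1$, so the burn increases by $\MargThreshold \, x^{(t+1)}$. For each real user $(i)$, I would write down the interim allocation rule $X_{(i)}(z)$ as a step function of their own bid (others held fixed) and note that the addition of $w$ only changes $X_{(i)}$ on the interval $z \in (\vv^{-1}(\MargThreshold), w)$, where the rank of user $(i)$ shifts from $t$ to $t+1$ and the allocation drops from $x^{(t)}$ to $x^{(t+1)}$. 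The payment identity then yields a payment increase of exactly $(w - \vv^{-1}(\MargThreshold))(x^{(t)} - x^{(t+1)})$ per real user; summing over the $t$ real users and subtracting the additional burn produces the claimed formula.

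\textbf{Reduction via contrapositive.} I would prove the contrapositive: if $t(w - \vv^{-1}(\MargThreshold))(x^{(t)} - x^{(t+1)}) \leq \MargThreshold \, x^{(t+1)}$ for all $t \in \N$ and all $w \in (\vv^{-1}(\MargThreshold), \sup \Distr)$, then no fabrication in any value profile is profitable. Given any $\vec{v}$ and fabrication $F$ (WLOG every fake exceeds $\vv^{-1}(\MargThreshold)$, since lower fakes are inert), sort the combined pool of bids as $b_1 > \dots > b_m$ with $m = n + |F|$, and decompose the overall revenue incrementally by building up the pool one bid at a time in decreasing order of value. At step $j$, appending $b_j$ at the bottom of the current pool $\{b_1, \dots, b_{j-1}\}$ increases each existing bid's payment by $(b_j - \vv^{-1}(\MargThreshold))(x^{(j-1)} - x^{(j)})$ and raises the burn by $\MargThreshold \, x^{(j)}$, by the derivation above. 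Since the miner collects payment increments only from the $n_{j-1} \leq j-1$ real bids in $\{b_1, \dots, b_{j-1}\}$, the marginal revenue contribution of a fake $b_j$ is bounded by $(j-1)(b_j - \vv^{-1}(\MargThreshold))(x^{(j-1)} - x^{(j)}) - \MargThreshold \, x^{(j)} \leq 0$ by the standing hypothesis. Summing the fake marginal contributions, together with a careful accounting for how fake-induced rank shifts alter the contributions of real bids compared to the no-fake baseline, shows that the miner's total gain $\Delta$ is non-positive, contradicting the assumed profitability.

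\textbf{Main obstacle.} The key technical difficulty is the accounting for the rank shifts that interspersed fakes induce on the real bids: a real bid at a lower rank in the combined pool yields a smaller self-payment $\vv^{-1}(\MargThreshold) \, x^{(j)}$ but also a smaller burn contribution $\MargThreshold \, x^{(j)}$, and its multiplier on the downstream payment increments from later bids changes from $x^{(j'-1)} - x^{(j')}$ to $x^{(j-1)} - x^{(j)}$ with $j \geq j'$. The plan is to show that these real-bid discrepancies are themselves non-positive in aggregate, exploiting the monotonicity $\vv^{-1}(\MargThreshold) \geq \MargThreshold$ for regular distributions together with the telescoping structure of the payment-identity decomposition, so that the bound $\Delta \leq 0$ carries through. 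Formalizing this cancellation is the bulk of the remaining technical work.
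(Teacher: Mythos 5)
Your derivation of the displayed formula is correct and is essentially the paper's own computation (cf.\ \autoref{thm:NetRevSingleBid}): with the fake $w$ below all $t$ user bids, each user's interim allocation changes only on $(\vv^{-1}(\MargThreshold), w)$, giving a per-user payment increase of $(w-\vv^{-1}(\MargThreshold))(x^{(t)}-x^{(t+1)})$ against an extra burn of $\MargThreshold\, x^{(t+1)}$. Your incremental decomposition of the combined pool is also a valid identity, and it correctly bounds each fake bid's marginal contribution by $(j-1)(b_j-\vv^{-1}(\MargThreshold))(x^{(j-1)}-x^{(j)})-\MargThreshold\, x^{(j)}\le 0$ under your standing hypothesis.

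The gap is exactly the step you defer. Your plan is to show that the ``real-bid discrepancies'' --- the difference between a real bid's contribution at its shifted rank $j$ in the combined pool and at its rank $i$ in the fake-free pool --- are non-positive in aggregate, using only $\vv^{-1}(\MargThreshold)\ge\MargThreshold$ and telescoping. That claim is false in general. The discrepancy of a real bid $a$ pushed from rank $i$ to rank $j>i$ equals $(i-1)(a-\vv^{-1}(\MargThreshold))\bigl[(x^{(j-1)}-x^{(j)})-(x^{(i-1)}-x^{(i)})\bigr]+(\vv^{-1}(\MargThreshold)-\MargThreshold)(x^{(j)}-x^{(i)})$, and since the lemma allows an arbitrary non-increasing sequence $\seq{x^{(t)}}{t\in\N}$, the gaps $x^{(t)}-x^{(t+1)}$ need not be monotone, so the bracket can be strictly positive and dominate the (non-positive) correction term even under your hypothesis: e.g.\ $x^{(1)}=1$, $x^{(2)}=0.99$, $x^{(3)}=0.5$, $\MargThreshold=0.6$, $\vv^{-1}(\MargThreshold)$ slightly above $\MargThreshold$, one fake inserted between two real bids with values about $0.3$ above $\vv^{-1}(\MargThreshold)$ satisfies your hypothesis yet gives a strictly positive aggregate real-bid discrepancy (the total gain is still negative only because the fake's own term is very negative). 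So the needed inequality is not a cancellation among the real bids alone; the positive discrepancies must be charged against the fakes' negative contributions via the hypothesis, which is precisely what the paper's proof accomplishes through a different route: first reduce to a single fake (\autoref{thm:OneBid}, by converting previously inserted fakes into genuine users), then bound the gain via the $\mathsf{Core}$/$\mathsf{Residue}$ split by $\max_{\hat t\ge t}\hat t\,(w-\vv^{-1}(\MargThreshold))(x^{(\hat t)}-x^{(\hat t+1)})-\MargThreshold\, x^{(\hat t+1)}$, where the maximum over $\hat t\ge t$ is exactly the device that handles non-monotone gaps. Until you supply an argument of this kind, your contrapositive does not close; and, as a minor point, to recover the lemma's literal existential statement you must also exhibit a profile in $\supp(\Distr^{\hat t})$ with $\hat t$ users above the chosen $w$.
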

\begin{proof}
    First, we will argue that the miner can increase her revenue by fabricating exactly one bid and next, we will argue that there exists some value profile for which the miner can increase her revenue by fabricating a bid smaller than all the users' bids.

\begin{claim} \label{thm:OneBid}
    For a position auction specified by $\seq{x^{(t)}}{t \in \N}$ and $\MargThreshold$, suppose there exists a value profile for which fabricating bids increases the miner's revenue.
    Then, there exists another value profile for which the miner can increase her revenue by fabricating exactly one bid.    
\end{claim}
\begin{proof}
    Suppose that, for a value profile $\vec{v}$ with users $(1), \dots, (n)$, the miner increases here revenue by fabricating bids $\vec{w} = (w^{(1)}, w^{(2)}, \dots, w^{(\hat{n})})$.
    Imagine that the miner fabricates the bids $w^{(1)}, w^{(2)}, \dots, w^{(\hat{n})}$ one-by-one.
    We will consider the change in the miner's net revenue when she inserts $w^{(i)}$ after already inserting $w^{(1)}, \dots, w^{(i-1)}$.
    Note that the increase in the miner's net revenue upon fabricating all $\hat{n}$ of the bids in $\vec{w}$ is positive, and thus, there exists some $(i)$ for which the miner's revenue increases upon fabricating $w^{(i)}$ conditioned on already fabricating $w^{(1)}, w^{(2)}, \dots, w^{(i-1)}$.
    Call this increase $\Delta \Rev$.    

    Now, suppose there are $n + (i-1)$ users, $(1), \dots, (n)$ with values $\vec{v}$ and $(\overline{1}), \dots, (\overline{i-1})$ with values $(w^{(1)}, \dots, w^{(i-1)})$.
    Consider the miner fabricating $w^{(i)}$.
    Note that the change in payments made by users $(1), \dots, (n)$ minus the marginal burn from fabricating $w^{(i)}$ is exactly equal to $\Delta \Rev$.
    Remember that the bids $w^{(1)}, \dots, w^{(i-1)}$ are no longer fabricated bids and belong to the users $(\overline{1}), \dots, (\overline{i-1})$.
    The payments made by these users will also change when the miner fabricates $w^{(i)}$.
    However, we have $w^{(j)} \geq w^{(i)}$ for $1 \leq j < i$, and fabricating $w^{(i)}$ only increases the payments made by the users $(\overline{1}), \dots, (\overline{i-1})$.

    Thus, the miner can increase her revenue by fabricating the single bid $w^{(i)}$ on seeing the value profile $(\vec{v}, w^{(1)}, \dots, w^{(i-1)})$.
\end{proof}

    Next, suppose there exists a value profile $\vec{v} \in \supp(\Distr^n)$ for which the miner increases her revenue by fabricating a single bid $w$.
    Without loss of generality, assume all values are at least $\vv^{-1}(\MargThreshold)$.
    Values smaller than $\vv^{-1}(\MargThreshold)$ will anyways not be allocated, and thus, can be dropped without changing the miner's revenue.
    We will show the existence of a different value profile with values all greater than $w$ for which the miner can increase her revenue by fabricating $w$.

    To compute the change in the miner's revenue from fabricating $w$, we will separately calculate the increase in payments from users $(1), \dots, (t)$ and the users $(t+1), \dots, (n)$, where $t$ is the number of users with a value at least $w$.
    The former are the set of users who will be allocated the same quantity irrespective of whether $w$ is fabricated and the latter are the users whose rank will decrease by $1$ due to $w$ being fabricated, thereby receiving a smaller allocation.

    For convenience, for the remainder of the section, we will define $v^{(n+1)} := \vv^{-1}(\MargThreshold)$.

\begin{claim} \label{thm:LargerThanw}
    For a position auction given by an allocation rule $\seq{x^{(i)}}{i \in \N}$ and marginal burn per allocation $\MargThreshold$, consider a profile of users' values $\vec{v}$.
    Suppose the miner inserts a fake bid $w$ with $t$ users having a value at least $w$.
    Then, the increase in the payments collected from users $(1), \dots, (t)$ equals
    \begin{align*}
        \notag
        \Delta \Pay^{\leq (t)} = t \times \Bigg[(w - v^{(t+1)}) \, (x^{(t)} - x^{(t+1)}) + \sum_{i = t+1}^{n} (v^{(i)} - v^{(i+1)}) \, (x^{(i)} - x^{(i+1)}) \Bigg].
    \end{align*}
\end{claim}
\begin{proof}
We will calculate the payments charged to users $(1), \dots, (t)$ with and without the fake bid $w$ using the payment identity (\autoref{item:payment-identity} in \autoref{thm:myerson}).
The allocation rules before and after fabricating $w$ are sketched in \autoref{fig:PositionAuction}.

\begin{figure}
    \centering
    \begin{subfigure}[b]{0.4\textwidth}
        \includegraphics[width=0.8\linewidth]{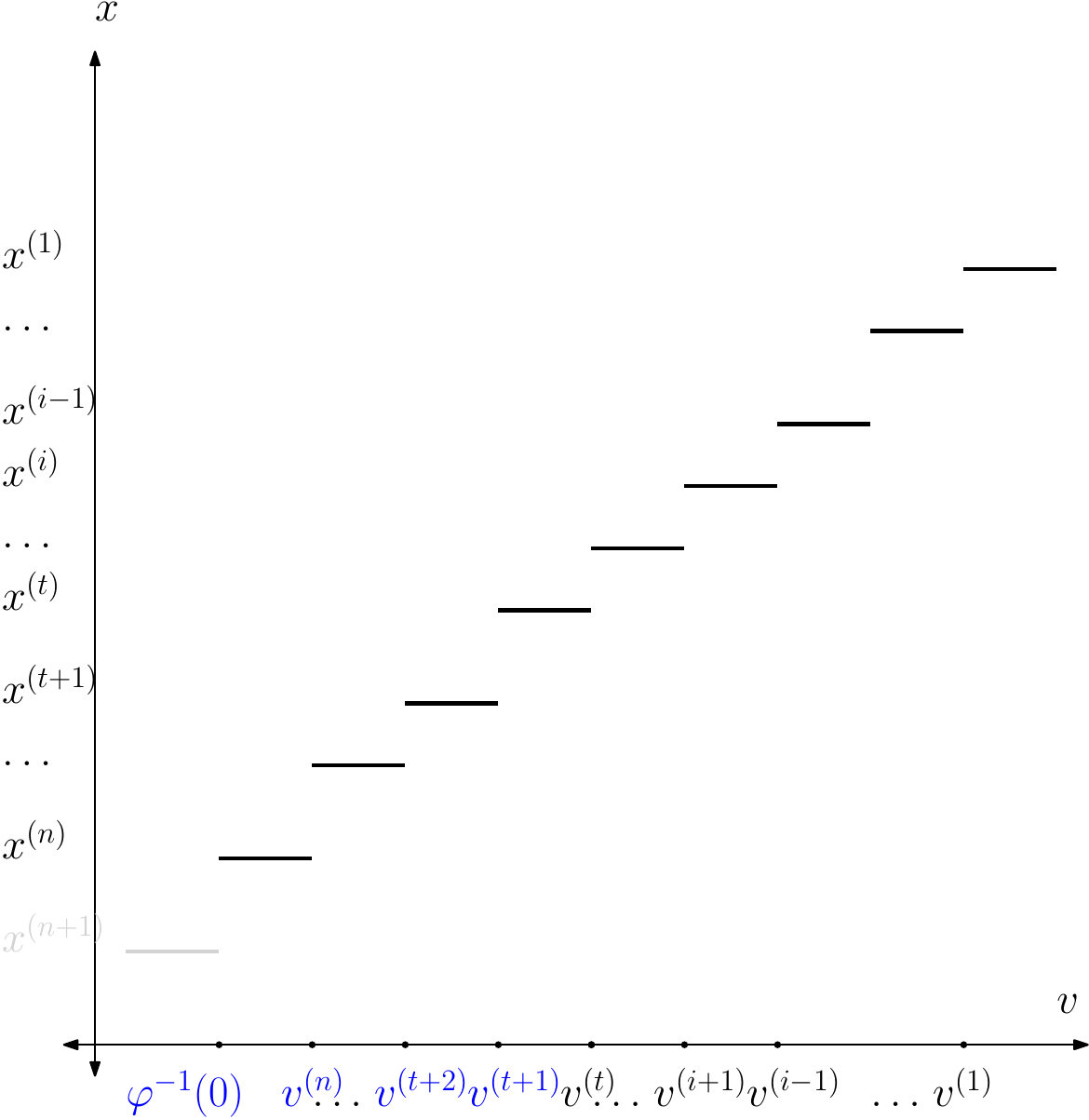}
        \caption{}
        \label{fig:PositionAuctionWithoutw}
    \end{subfigure}
        \begin{subfigure}[b]{0.4\textwidth}
        \includegraphics[width=0.8\linewidth]{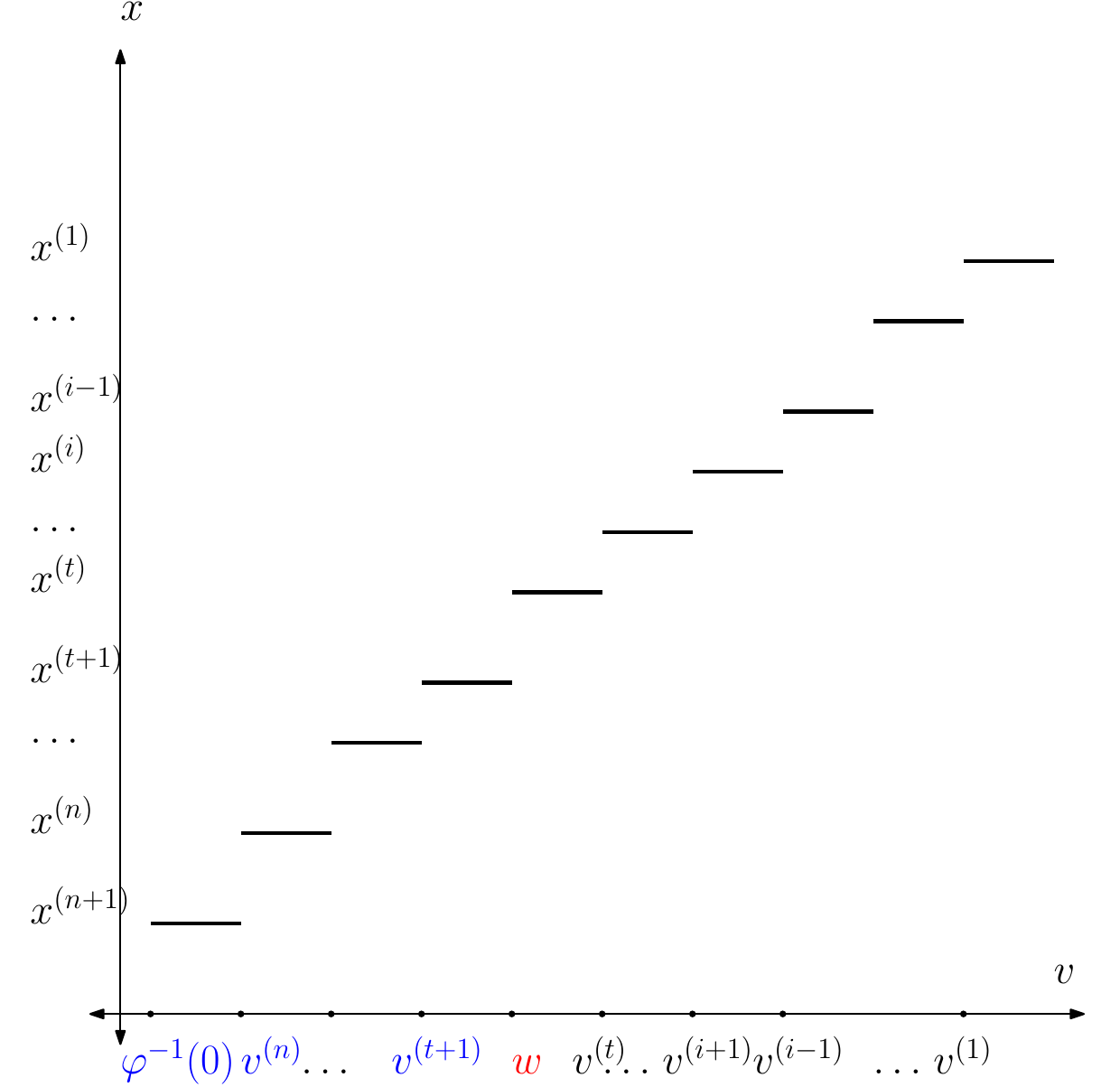}
        \caption{}
        \label{fig:PositionAuctionWithw}
    \end{subfigure}   
    \caption{\footnotesize The allocation rule for user $(i)$ in a position auction with and without a fabricated bid $w$. For a value profile $\vec{v}_{-(i)}$, \autoref{fig:PositionAuctionWithoutw} denotes the allocation to user $(i)$ as a function of his bid $v$. For $w$ such that $v^{(t+1)} \leq w \leq v^{(t)}$, \autoref{fig:PositionAuctionWithw} denotes the allocation rule for user $(i)$ after the miner fabricates a bid $w$. Note the change in allocation probabilities when user $(i)$ bids $v \in [\vv^{-1}(0), v^{(t)}]$.}
    \label{fig:PositionAuction}
\end{figure}

Prior to the miner inserting $w$, user $(j)$'s allocation rule for values $v \leq v^{(j)}$ is given by
\begin{equation*}
    \notag
    \OnCAlloc_{(j)}(v, \vec{v}_{-(j)}) = \begin{cases}
        0 & \text{ for } v \leq \vv^{-1}(\MargThreshold) = v^{(n+1)}, \\
        x^{(i)} & \text{ for } v \in [v^{(i+1)}, v^{(i)}), i \geq j.
    \end{cases} 
\end{equation*}
By the payment identity, user $(j)$ is charged a payment
\begin{align}
    \OnCPay_{(j)}(v, \vec{v}_{-(j)}) &= v^{(n+1)} \, x^{(n)} + \sum_{i = j}^{n-1} v^{(i+1)} \, (x^{(i)} - x^{(i+1)}) \label{eqn:OnMSPayment1}\\
    &= v^{(n+1)} \, x^{(n+1)} + \sum_{i = j}^{n} v^{(i+1)} \, (x^{(i)} - x^{(i+1)}). \label{eqn:OnMSPayment2}
\end{align}

After the bid $w$ is fabricated, the allocation rule for user $(j)$ changes as follows.
\begin{align*}
    \notag
    \OnCAlloc_{(j)}(v, \vec{v}_{-(j)}, w) = \begin{cases}
        0 & \text{ for } v \leq \vv^{-1}(\MargThreshold) = v^{(n+1)}, \\
        x^{(i)} & \text{ for } v \in [v^{(i)}, v^{(i-1)}), i > t+1. \\
        x^{(t+1)} & \text{ for } v \in [v^{(t+1)}, w), \\
        x^{(t)} & \text{ for } v \in [w, v^{(t)}), \\
        x^{(i)} & \text{ for } v \in [v^{(i+1)}, v^{(i)}), j \leq i \leq t-1.
    \end{cases} 
\end{align*}
Applying the payment identity once again, we get
\begin{align*}
    \notag
    \OnCPay_{(j)}(v, \vec{v}_{-(i)}, w) &= v^{(n+1)} \, x^{(n+1)} + \sum_{i = t+1}^n v^{(i)} \, (x^{(i)} - x^{(i+1)}) + w \, (x^{(t)} - x^{(t+1)}) + \sum_{i = j}^t v^{(i+1)} \, (x^{(i)} - x^{(i+1)}).
\end{align*}

Subtracting the two payments, the payment charged to user $(j)$ increases by
\begin{align*}
    \notag
    \OnCPay_{(j)}(v, \vec{v}_{-(j)}, w) - \OnCPay_{(j)}(v, \vec{v}_{-(j)}) &= (w - v^{(t+1)}) \, (x^{(t)} - x^{(t+1)}) + \sum_{i = t+1}^n (v^{(i)} - v^{(i+1)}) \, (x^{(i)} - x^{(i+1)}).
\end{align*}

Thus, the increase in payments from the users $(1), \dots, (t)$ with values larger than $w$ equals $t \times \big(\OnCPay_{(j)}(v, \vec{v}_{-(j)}, w) - \OnCPay_{(j)}(v, \vec{v}_{-(j)})\big)$, as required.
\end{proof}

\begin{claim} \label{thm:SmallerThanw}
    For a position auction given by an allocation rule $\seq{x^{(i)}}{i \in \N}$ and marginal burn per allocation $\MargThreshold$, consider a profile of users' values $\vec{v}$.
    Suppose the miner inserts a fake bid $w$ with $t$ users having a value at least $w$.
    Then, the (possibly negative) change in the payments collected from users $(t+1), \dots, (n)$ equals
    \begin{align*}
        \notag
        \Delta \Pay^{> (t)} = \sum_{i = t+2}^{n} (i - t- 1) \, (v^{(i)} - v^{(i+1)}) \, (x^{(i)} - x^{(i+1)}) - \sum_{i = t+1}^n v^{(i+1)} \, (x^{(i)} - x^{(i+1)}).
    \end{align*}
\end{claim}
\begin{proof}
    Similar to \autoref{thm:LargerThanw}, we will compute the payments for each user $(j)$ for $j > t$ with and without the fabricated bid $w$ using the payment identity (\autoref{item:payment-identity} in \autoref{thm:myerson}).

    By \autoref{eqn:OnMSPayment2}, the payment charged to user $(j)$ before fabricating $w$ equals
    \begin{align*}
        \notag
        \OnCPay_{(j)}(v, \vec{v}_{-(j)}) &= v^{(n+1)} \, x^{(n+1)} + \sum_{i = j}^{n} v^{(i+1)} \, (x^{(i)} - x^{(i+1)}).
    \end{align*}
    Post fabricating $w$, user $(j)$'s rank shifts below by $1$.
    Thus, the allocation rule becomes
    \begin{equation*}
    \notag
    \OnCAlloc_{(j)}(v, \vec{v}_{-(j)}, w) = \begin{cases}
        0 & \text{ for } v \leq \vv^{-1}(\MargThreshold) = v^{(n+1)}, \\
        x^{(i+1)} & \text{ for } v \in [v^{(i+1)}, v^{(i)}), i \geq j.
    \end{cases} 
    \end{equation*}
    User $(j)$'s payment then becomes (note that we use \autoref{eqn:OnMSPayment1} here)
    \begin{align*}
        \notag
        \OnCPay_{(j)}(v, \vec{v}_{-(j)}, w) &= v^{(n+1)} \, x^{(n+1)} + \sum_{i = j+1}^{n} v^{(i)} \, (x^{(i)} - x^{(i+1)}).
    \end{align*}

    Thus, the change in user $(j)$'s payment equals
    \begin{align*}
        \notag
        \OnCPay_{(j)}(v, \vec{v}_{-(j)}, w) - \OnCPay_{(j)}(v, \vec{v}_{-(j)}) &= \sum_{i = j+1}^n (v^{(i)} - v^{(i+1)}) \, (x^{(i)} - x^{(i+1)}) - v^{(j+1)} \, (x^{(j)} - x^{(j+1)}).
    \end{align*}

    While summing over all users $(j)$ for $t+1 \leq j \leq n$, we separately collect the $(v^{(i)} - v^{(i+1)}) \, (x^{(i)} - x^{(i+1)})$ terms (which features exactly once in the summation for all $i \geq j+1$) and the $- v^{(j+1)} \, (x^{(j)} - x^{(j+1)})$ terms separately. 
    Thus,
    \begin{align*}
        \notag
        \Delta \Pay^{> (t)} = \sum_{i = t+2}^{n} (i - t- 1) \, (v^{(i)} - v^{(i+1)}) \, (x^{(i)} - x^{(i+1)}) - \sum_{i = t+1}^n v^{(i+1)} \, (x^{(i)} - x^{(i+1)}).
    \end{align*}
\end{proof}

To summarize our progress so far, we have computed the changes in the users' payments from fabricating $w$.
We can combine \autoref{thm:LargerThanw} and \autoref{thm:SmallerThanw} to calculate the change $\Delta \Rev$ in the miner's net revenue.
We also subtract $\MargThreshold x^{(n+1)}$, the excess burn for including the fake bid $w$.
\begin{align}
    \notag
    \Delta \Rev &= \underbrace{t \, (w - v^{(t+1)}) \, (x^{(t)} - x^{(t+1)}) + \sum_{i = t+1}^n (i - 1) \, (v^{(i)} - v^{(i+1)}) \, (x^{(i)} - x^{(i+1)})}_{\mathsf{:= Core}} \label{eqn:CombinedRev}\\
    & \qquad - \underbrace{\Bigg[\sum_{i = t+1}^n v^{(i+1)} \, (x^{(i)} - x^{(i+1)}) + \MargThreshold x^{(n+1)}\Bigg]}_{\mathsf{:=Residue}}
\end{align}

We would like to compare $\Delta \Rev$ against the change in revenue $\Delta \Rev(\hat{t})$ when the miner fabricates a bid $w$ upon seeing $\hat{t}$ bids larger than $w$ for some $\hat{t} \geq t$.
We will start by explicitly computing $\Delta \Rev(\hat{t})$.

\begin{claim} \label{thm:NetRevSingleBid}
    Consider a position auction given by the allocation rule $\seq{x^{(i)}}{i \in \N}$ and marginal burn per allocation $\MargThreshold$.
    Then, the miner's change in revenue $\Delta \Rev(\hat{t})$ from fabricating $w$ on seeing $\hat{t}$ bids all larger than $w$ equals
    $$\hat{t} \, (w - \vv^{-1}(\MargThreshold)) \, (x^{(\hat{t})} - x^{(\hat{t} + 1})) - \MargThreshold \, x^{(\hat{t} + 1)}.$$
\end{claim}
\begin{proof}
Follows directly from \autoref{eqn:CombinedRev} setting $t = \hat{t} = n$.
\end{proof}

We are ready to prove \autoref{thm:PositionFakeBids}.
\autoref{thm:PositionFakeBids} reduces to proving 
$$\max_{\hat t \geq t} \Delta \Rev(\hat t) \geq \Delta \Rev.$$

We will start by lower bounding $\mathsf{Residue}$ by $\MargThreshold \, x^{(\hat{t} + 1)}$ for all $\hat{t}$.
\begin{align*}
    \notag
    \mathsf{Residue} &= \sum_{i = t+1}^n v^{(i+1)} \, (x^{(i)} - x^{(i+1)}) + \MargThreshold x^{(n+1)} \\
    &\geq \sum_{i = t+1}^n v^{(n+1)} \, (x^{(i)} - x^{(i+1)}) + \MargThreshold x^{(n+1)} \\
    &\geq \sum_{i = t+1}^n \MargThreshold \, (x^{(i)} - x^{(i+1)}) + \MargThreshold x^{(n+1)} \\
    &= \MargThreshold x^{(t+1)}.
\end{align*}
The third inequality follows since $v^{(n+1)} = \vv^{-1}(\MargThreshold) \geq \MargThreshold$.
Since $\hat{t} \geq t$ and $\seq{x^{(i)}}{i \in \N}$ is a decreasing sequence,
\begin{align}
    \mathsf{Residue} \geq \MargThreshold x^{(t+1)} \geq \MargThreshold x^{(\hat t+1)}. \label{eqn:Residue}
\end{align}

We will next prove that $\mathsf{Core}$ is at most $\max_{\hat{t} \geq t} \hat{t} \, (w - \vv^{-1}(\MargThreshold)) \, (x^{(\hat{t})} - x^{(\hat{t} + 1}))$.
Observe that
\begin{align*}
    \notag
    \mathsf{Core} &= t \, (w - v^{(t+1)}) \, (x^{(t)} - x^{(t+1)}) + \sum_{i = t+1}^{n} (i-1) \, (v^{(i)} - v^{(i +1)}) \, (x^{(i)} - x^{(i+1)}) \\
    &\leq t \, (w - v^{(t+1)}) \, (x^{(t)} - x^{(t+1)}) + \sum_{i = t+1}^{n} i \, (v^{(i)} - v^{(i +1)}) \, (x^{(i)} - x^{(i+1)}) \\
    &= (w - v^{(t+1)}) \times \Big(t \, (x^{(t)} - x^{(t+1)}) \Big) + \sum_{i = t+1}^{n} (v^{(i)} - v^{(i+1)}) \times \Big(i \, (x^{(i)} - x^{(i+1)}) \Big)
\end{align*}
Consider maximizing the right hand side.
Imagine we have a total weight $(w - v^{(n+1)})$, and want to distribute it amongst the terms in the sequence $\seq{i \, (x^{(i)} - x^{(i+1)})}{t \leq i \leq n}$ so as to maximize the weighted sum.
We choose $v^{(t)}, \dots, v^{(n)}$ so that the coefficient of $i \, (x^{(i)} - x^{(i+1)})$ in the weighted sum will be equal to $(v^{(i)} - v^{(i+1)})$ (or $w - v^{(t+1)}$ if $i =t$).
The right hand side is maximized by putting all of the $(w - v^{(n+1)})$ on $\max_{\hat{t} \geq t} \hat{t} \, (x^{(\hat{t})} - x^{(\hat{t} + 1}))$ and no weight on the rest, in which case, we get
\begin{align}
    \mathsf{Core} \leq \max_{\hat{t} \geq t} \hat{t} \, (w - \vv^{-1}(\MargThreshold)) \, (x^{(\hat{t})} - x^{(\hat{t} + 1})). \label{eqn:Core}
\end{align}
Subtracting \autoref{eqn:Residue} from \autoref{eqn:Core} concludes the proof of \autoref{thm:PositionFakeBids}.

To conclude, we have identified a value profile for which the miner can increase her revenue by fabricating exactly one bid smaller than all bids submitted by the users.
\end{proof}

We have proved that if there exists some value profile for which the miner will want to fabricate bids, then there exists a profile for which the miner will want to fabricate exactly one bid.
To prove on-chain miner simplicity, we will also have to make sure the miner does not censor any bids.
However, it is fairly straightforward to see that censoring a bid will only decrease the competition to get allocated and thus, only reduce the miner's revenue.

From \autoref{thm:PositionFakeBids} and \autoref{thm:NetRevSingleBid},
the mechanism is on-chain miner simple if 
$$t \, (x^{(t)} - x^{(t+1)}) \, (w - \vv^{-1}(\MargThreshold)) - \MargThreshold \, x^{(t+1)} \leq 0$$
for all $w \in \supp(\Distr)$ and $t \in \N$.
Indeed, the above is true exactly when
$$t \, (x^{(t)} - x^{(t+1)}) \, (\sup \Distr - \vv^{-1}(\MargThreshold)) - \MargThreshold \, x^{(t+1)} \leq 0,$$
concluding the ``if'' direction.

The ``only if'' direction is straightforward from \autoref{thm:NetRevSingleBid}.
If $t \, (x^{(t)} - x^{(t+1)}) \, (\sup \Distr - \vv^{-1}(\MargThreshold)) > \MargThreshold \, x^{(t+1)}$ for some $t \in \N$, consider a value profile with $t$ users with values approximately $\sup \Distr$.
The miner can increase her revenue by fabricating a bid $\hat{v}^{(t+1)}$ just below $\sup \Distr$.

\subsubsection{Proof of \autoref{thm:UnboundedPositionAuction}} \label{sec:ProofofUnboundedPositionAuction}

By an argument similar to \autoref{thm:OneBid} in the proof of \autoref{thm:BoundedPositionAuction}, if there exists a value profile for which the miner can increase her revenue by fabricating bids, then there exists a value profile for which the miner can increase her revenue by fabricating exactly one bid.

Similar to \autoref{thm:PositionFakeBids}, if the generalized position auction satisfies the conditions in \autoref{thm:UnboundedPositionAuction}, we will show that there must also exist a value profile $\vec{v}$ for which the miner can increase her revenue by fabricating a bid smaller than all values in $\vec{v}$.

\begin{claim}
    Suppose that a generalized position auction with an allocation rule $\seq{x^{(t)}}{t \in \N}$ satisfies
    $$t \, \Big(x^{(t)}(w) - x^{(t+1)}(w) \Big) \geq (t+1) \, \Big(x^{(t+1)}(w) - x^{(t+2)}(w) \Big)$$
    for all $t \in \N$.
    Then, if there exists a value profile $\vec{v}$ such that the miner can increase her revenue by fabricating a single bid, then there exists a value profile for which the miner can increase her revenue by fabricating a bid smaller than the values of all the users that have submitted a bid.
\end{claim}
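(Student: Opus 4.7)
The plan is to mirror the structure of the analogous reduction in the bounded position auction proof (the proof of \autoref{thm:PositionFakeBids}), adapting each step to the generalized position auction setting where allocation rules are value-dependent. By the argument preceding the claim we may assume the miner fabricates exactly one bid $w$, which sits between consecutive order statistics: $v^{(t+1)} \le w < v^{(t)}$ (so $t$ real bids exceed $w$). The goal is to exhibit a different value profile $\vec{v}'$ on which fabricating a bid smaller than every value in $\vec{v}'$ still increases the miner's revenue. Concretely I would keep at most the top $t$ users and move their values upward to strengthen the deviation.

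First I would compute $\Delta \Rev$ explicitly using the payment identity (\autoref{eqn:PaymentGenPos}) and the burn identity (\autoref{eqn:BurnGenPos}). Users $(1),\dots,(t)$ retain rank but see their allocation function change on the interval $[v^{(t+1)}, w)$ from $x^{(t+1)}(\cdot)$ to $x^{(t)}(\cdot)$ on a small segment and pick up an extra piece around $w$; users $(t+1),\dots,(n)$ have their rank shifted down by one, so their allocation function changes from $x^{(i)}(\cdot)$ to $x^{(i+1)}(\cdot)$ on every sub-interval below them. Summing the per-user payment differences and subtracting the excess burn, $\Delta \Rev$ decomposes as $\mathsf{Core} - \mathsf{Residue}$, where $\mathsf{Core}$ collects the gains from positions $(1),\ldots,(t)$ together with telescoping gap contributions of the form
\[
\big(v^{(i)} - v^{(i+1)}\big) \cdot i \cdot \big(x^{(i)}(v^{(i)}) - x^{(i+1)}(v^{(i)})\big) \quad (i \ge t+1),
\]
while $\mathsf{Residue}$ collects the excess burn $\vv(w)\, x^{(t+1)}(w) - \int_{0}^{\vv(w)} x^{(t+1)}(z)\, d\vv(z)$ and a nonnegative spillover from the payment redistribution, which by the monotonicity $v^{(i+1)} \ge \vv^{-1}(0)$ is lower-bounded independent of the particular realization of $v^{(t+1)},\ldots,v^{(n)}$.

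Next I would use the hypothesis (\autoref{eqn:AllocConverge}) that $i \mapsto i \, \big(x^{(i)}(w) - x^{(i+1)}(w)\big)$ is non-increasing in $i$ for every fixed $w$. Think of $\mathsf{Core}$ as a weighted combination of the coefficients $i \cdot (x^{(i)}(\cdot) - x^{(i+1)}(\cdot))$ with total ``weight'' equal to $w - v^{(n+1)}$ (obtained by telescoping the $v^{(i)} - v^{(i+1)}$ gaps and the leading $(w - v^{(t+1)})$ term). By the monotonicity assumption, transferring all of this weight to the single coefficient indexed by the largest term (namely $i = t$) can only increase $\mathsf{Core}$. This corresponds exactly to collapsing the profile so that all $t$ users bid a common value $V$ slightly above $w$, while $\mathsf{Residue}$ does not grow under this collapse (since $\mathsf{Residue}$ depends on the tail $v^{(t+1)},\ldots,v^{(n)}$ only through nonnegative terms that vanish when the tail disappears).

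Finally, by unboundedness of $\Distr$, the modified profile with $t$ users all at value $V$ lies in $\supp(\Distr^t)$, and on it the miner fabricates a bid $w$ strictly below every user value while $\Delta \Rev$ remains at least as large as on $\vec{v}$, in particular positive. The main obstacle will be the bookkeeping in Step~1: carefully identifying which terms in the payment identity change, over which sub-intervals of each user's domain, and verifying that the spillover bits of the payment redistribution really do sit on the nonnegative side of $\mathsf{Residue}$ rather than boosting $\mathsf{Core}$. Once that decomposition is clean, the monotonicity hypothesis does the rest in a single line.
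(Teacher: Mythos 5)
Your proposal follows essentially the same route as the paper's proof: decompose the revenue change from the single fabricated bid $w$ into a Core and a Residue via the payment and burn identities, use the hypothesis that $t\,\bigl(x^{(t)}(z)-x^{(t+1)}(z)\bigr)$ is non-increasing in $t$ pointwise in $z$ to show the Core is dominated by the payment gain on the profile truncated to the $t$ users above $w$, and observe that the Residue only shrinks because its tail terms are nonnegative and the fabricated bid's burn term is unchanged. The only cosmetic differences are that the paper simply keeps the original top-$t$ values unchanged (so the collapse to a common value $V$ and the appeal to unboundedness of $\Distr$ are unnecessary), and the ``weight transfer'' is carried out as a pointwise comparison of integrands over $[\vv^{-1}(0),w]$ rather than as a literal redistribution of constant coefficients.
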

\begin{proof}
For a value profile $\vec{v}$ with $t$ bids larger than $w$, we can compute the change in the payments charged to the users upon fabricating a bid $w$.
\begin{align*}
    \OnCPay(\vec{v}, w) - \OnCPay(\vec{v}) &= \int_{v^{(t+1)}}^w t \, \Big(x^{(t)}(z) - x^{(t+1)}(z)\Big) \,dz + \sum_{i = t+1}^n \int_{v^{(i+1)}}^{v^{(i)}} t \, \Big(x^{(i)}(z) - x^{(i+1)}(z)\Big) \,dz \\
    &\qquad \qquad - \sum_{i = t+1}^n v^{(i)} \, \Big(x^{(i)}(v^{(i)}) - x^{(i+1)}(v^{(i)})\Big).
\end{align*} 
Remember that we follow the convention $v^{(n+1)} = \vv^{-1}(0)$.
Similarly, the increase in burn from fabricating $w$ equals
\begin{align*}
    \OnCBurn(\vec{v}, w) - \OnCBurn(\vec{v}) &= \vv(w) \, x^{(t+1)}(w) - \int_{0}^{\vv(w)} x^{(t)}(z) \, d\vv(z) - \sum_{i = t+1}^n \vv(v^{(i)}) \, \Big(x^{(i)}(v^{(i)}) - x^{(i+1)}(v^{(i)})\Big) \\
    &\qquad \qquad + \sum_{i = t+1}^{n} \int_{\vv(v^{(i+1)})}^{\vv(v^{(i)})} \Big(x^{(i)}(z) - x^{(i+1)}(z)\Big) \, d\vv(z)
\end{align*}
The net change in the miner's revenue equals
\begin{align*}
    \Delta \Rev(\vec{v}, w) &:= \underbrace{\int_{v^{(t+1)}}^w t \, \Big(x^{(t)}(z) - x^{(t+1)}(z)\Big) \,dz + \sum_{i = t+1}^n \int_{v^{(i+1)}}^{v^{(i)}} t \, \Big(x^{(i)}(z) - x^{(i+1)}(z)\Big) \,dz}_{:= \mathsf{Core}} \\
    &- \Bigg[\vv(w) \, x^{(t+1)}(w) - \int_{0}^{\vv(w)} x^{(t)}(z) \, d\vv(z) + \sum_{i = t+1}^n \Big(v^{(i)} - \vv(v^{(i)}) \Big) \, \Big(x^{(i)}(v^{(i)}) - x^{(i+1)}(v^{(i)})\Big) \\
    & \underbrace{\qquad \qquad \qquad \qquad \qquad + \sum_{i = t+1}^{n} \int_{\vv(v^{(i+1)})}^{\vv(v^{(i)})} \Big(x^{(i)}(z) - x^{(i+1)}(z)\Big) \, d\vv(z)\Bigg]. \hspace{3cm}}_{:= \mathsf{Residue}}
\end{align*}
We will compare $\Delta \Rev(\vec{v}, w)$ against the increase in the miner's revenue from fabricating a bid $w$ for the value profile $\vec{v}|^t = (v^{(1)}, \dots, v^{(t)})$.
Remember that we chose $\vec{v}$ and $w$ so that exactly $t$ users had a value larger than $w$.
The increase in burn $\OnCBurn(\vec{v}|^t, w) - \OnCBurn(\vec{v}|^t)$ from fabricating $w$ equals
$$\vv(w) \, x^{(t+1)}(w) - \int_{0}^{\vv(w)} x^{(t)}(z) \, d\vv(z) \leq \mathsf{Residue}.$$
The increase in payments $\OnCPay(\vec{v}|^t, w) - \OnCPay(\vec{v}|^t)$ equals
\begin{align*}
    \int_{v^{(n+1)}}^w t \, \Big(x^{(t)}(z) - x^{(t+1)}(z)\Big) \,dz &\geq
    \int_{v^{(t+1)}}^w t \, \Big(x^{(t)}(z) - x^{(t+1)}(z)\Big) \,dz + \sum_{i = t+1}^n \int_{v^{(i+1)}}^{v^{(i)}} t \, \Big(x^{(i)}(z) - x^{(i+1)}(z)\Big) \,dz,
\end{align*}
where the inequality follows since 
$t \, \Big(x^{(t)}(w) - x^{(t+1)}(w) \Big) \geq \hat{t} \, \Big(x^{(\hat{t})}(w) - x^{(\hat{t})}(w) \Big)$ for all $\hat{t} \geq t$.

To summarize, for the value profile $\vec{v}$, if the miner can increase her revenue by fabricating a bid $w$, then, she can also increase her revenue by fabricating $w$ for the value profile $\vec{v}|^t$, in which case, $w$ will be the smallest bid submitted to the mechanism.
\end{proof}

\autoref{thm:UnboundedPositionAuction} follows immediately.
The mechanism is on-chain miner simple if the increase in the payments collected by the miner from fabricating a bid $w$ is at most the increase in burn, i.e,
$$t \times \int_{\vv^{-1}(0)}^w \Big(x^{(t)}(z) - x^{(t+1)}(z) \Big) \,dz \leq \vv(w) \, x^{(t+1)}(w) - \int_0^{\vv(w)} x^{(t+1)}(z) \,d\vv(z)$$
for all $t \in \N$ and $w \geq \vv^{-1}(0)$.

\subsubsection{Proof of \autoref{thm:GenPosFinite}} \label{sec:ProofofGenPosFinite}
    For cleanliness, we provide a proof sketch assuming the burn rule as given in \autoref{eqn:BurnGenPos} holds for all value profiles $\vec{v}$, and not just with probability $1$.
    However, it is fairly easy to extend our proof even when the burn identity is satisfied almost surely.

    If possible, consider a non-trivial simple-to-participate generalized position auction for a block with a capacity $\Omega < \infty$.
    Since the mechanism is non-trivial, for some $W \in \R$, $x^{(1)}(w) > 0$ for all $w \geq W$.
    In our proof, we will show that for $T \in \N$ much larger than $\frac{\Omega}{x^{(1)}(W)}$, there exists $w \geq W$ such that $x^{(T)}(w) \approx x^{(1)}(w) \geq x^{(T)}(W)$.
    Then, for a value profile $\vec{v} \in \supp(\Distr^T)$ for which all $T$ bids are larger than $w$, by monotonicity, the users receive a total allocation of at least 
    $$\sum_{t = 1}^T x^{(t)}(w) \geq T \times x^{(T)}(w) \approx T \times x^{(1)}(w) > \Omega,$$ violating the feasibility constraint of the block.

    To find $w$ such that $x^{(1)}(w) \approx x^{(T)}(w)$, we will use on-chain miner simplicity to show that for any small $\delta > 0$, there exists a large $w \geq W$ such that 
    \begin{align}
        \Big(x^{(t)}(w) - x^{(t+1)}(w)\Big) < \delta, \label{eqn:GenPosLimit}
    \end{align}
    for all $1 \leq t \leq T-1$.
    In other words, $x^{(T)}(w) = x^{(1)}(w) - \sum_{t = 1}^{T-1} \Big(x^{(t)}(w) - x^{(t+1)}(w)\Big) \approx x^{(1)}(w)$.

    Consider a value profile with $t$ users, all of them with a value greater than $w \geq W$.
    Suppose that the miner inserts a fake bid equal to $w$.
    From the calculations in \autoref{thm:UnboundedPositionAuction}, the above deviation is not profitable if and only if
    \begin{align}
        t \int_{v^{(t+1)}}^w \Big(x^{(t)}(z) - x^{(t+1)}(z)\Big) \,dz \leq \vv(w) \, x^{(t+1)}(w) - \int_{0}^{\vv(w)} x^{(t+1)}(z) \,d \vv(z). \label{eqn:OnMSGenPos} 
    \end{align}
    Remember that we denote $\vv^{-1}(0)$ by $v^{(t+1)}$.

    We will prove the claim in \autoref{eqn:GenPosLimit} by showing that the left hand side in \autoref{eqn:OnMSGenPos} is a sub-linear function in $w$, and thus, the derivative $t \, \big(x^{(t)}(z) - x^{(t+1)}(z)\big)$ has to be arbitrarily small for some $z$.

\begin{claim} \label{thm:SublinearBurn}
    For any constant $\eta > 0$, there exists $w_{\eta} > 0$ such that for all $w \geq w_{\eta}$,
    $$t \int_{v^{(t+1)}}^w \Big(x^{(t)}(z) - x^{(t+1)}(z)\Big) \,dz < w \times \eta.$$    
\end{claim}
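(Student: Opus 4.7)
The plan is to prove the claim by chasing through the on-chain miner simplicity inequality \autoref{eqn:OnMSGenPos}: since the left-hand side of the display in \autoref{thm:SublinearBurn} is exactly the LHS of \autoref{eqn:OnMSGenPos}, it suffices to show that the RHS,
\[
R(w) \;:=\; \vv(w)\,x^{(t+1)}(w) \;-\; \int_{0}^{\vv(w)} x^{(t+1)}(z)\,d\vv(z),
\]
grows \emph{sublinearly} in $w$. I would first rewrite $R(w)$ in a more tractable form by integration by parts (in the Stieltjes sense), treating $x^{(t+1)}(\cdot)$ as a bounded, monotone non-decreasing function of the value. A direct computation (substituting $\phi = \vv(z)$, integrating by parts, and cancelling the $\vv(w)\,x^{(t+1)}(w)$ terms) yields the cleaner identity
\[
R(w) \;=\; \int_{\vv^{-1}(0)}^{w} \vv(z)\,d\bigl[x^{(t+1)}(z)\bigr].
\]

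The key observation is then that $x^{(t+1)}$ is bounded in $[0,1]$ and non-decreasing, so the Stieltjes measure $d x^{(t+1)}$ is a finite positive measure on $[\vv^{-1}(0),\infty)$ with total mass $L := \lim_{z\to\infty} x^{(t+1)}(z) \le 1$. This is the engine that drives sublinearity: even though $\vv(z) \le z$ grows linearly, the mass of $dx^{(t+1)}$ outside any given compact set can be made arbitrarily small. I would then split $R(w)$ at a threshold $w^\star$:
\[
R(w) \;\le\; \underbrace{\vv(w^\star)\cdot 1}_{\text{bounded piece}} \;+\; \underbrace{\vv(w)\cdot\bigl[x^{(t+1)}(w) - x^{(t+1)}(w^\star)\bigr]}_{\text{tail piece}} \;\le\; w^\star \;+\; w\cdot\bigl[L - x^{(t+1)}(w^\star)\bigr].
\]

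Given $\eta > 0$, I would pick $w^\star$ large enough that $L - x^{(t+1)}(w^\star) < \eta/2$ (possible since $x^{(t+1)}(w^\star) \to L$) and then set $w_\eta := 2w^\star/\eta$, so that for all $w \ge w_\eta$ we have $w^\star \le (\eta/2)\,w$ and hence $R(w) \le \eta w$. Combined with \autoref{eqn:OnMSGenPos}, this establishes the claim.

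The main technical obstacle is making the integration-by-parts step rigorous, since $x^{(t+1)}$ is only assumed monotone (not differentiable) and $\vv$ is only assumed continuous. I would handle this by writing everything as a Riemann--Stieltjes integral against the finite positive measure $dx^{(t+1)}$, where integration by parts holds for any continuous function of bounded variation (both $\vv$ and $x^{(t+1)}$ are of bounded variation on compact intervals, and $\vv$ is continuous by the smoothness assumption on $\Distr$). The remaining steps — the tail argument and the threshold choice — are then purely calculational and should go through verbatim.
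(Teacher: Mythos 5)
Your proposal is correct and follows essentially the same route as the paper: both bound the left-hand side via the miner-simplicity inequality \autoref{eqn:OnMSGenPos} and then exploit that $x^{(t+1)}$ is bounded and monotone (hence convergent) through a head/tail split, with the split point chosen so the tail contributes at most $(\eta/2)\,w$ and the head only a constant. Your integration-by-parts rewriting $R(w)=\int_{\vv^{-1}(0)}^{w}\vv(z)\,d x^{(t+1)}(z)$ is a clean repackaging that additionally lets you avoid the paper's case distinction between bounded and unbounded $\vv(w)$, but the underlying argument is the same.
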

\begin{proof}
Supposing that $\vv(w)$ is bounded, the claim follows immediately from \autoref{eqn:OnMSGenPos}.
$t \int_{v^{(t+1)}}^w \big(x^{(t)}(w) - x^{(t+1)}(z)\big) \,dz$ is bounded above by a constant, and is thus at most $w \times \eta$ for a sufficiently large $w$.

Now suppose that $\vv(w)$ is unbounded.
We will then prove
$$\vv(w) \, x^{(t+1)}(w) - \int_{0}^{\vv(w)} x^{(t+1)}(z) \,d \vv(z) = \int_{0}^{\vv(w)} \Big(x^{(t+1)}(w) - x^{(t+1)}(z) \Big) \,d \vv(z) < \vv(w) \times \eta,$$
and the claim is a direct consequence from \autoref{eqn:OnMSGenPos} and since $\vv(w) \leq w$.

Remember that $x^{(t+1)}(\cdot)$ is a monotone non-decreasing function and is bounded above by $1$, and must converge as $w \xrightarrow{} \infty$.
Let the limit point be $x^{(t+1)} \in [0, 1]$.
We will write $\vv(w) \times \eta$ as
\begin{align*}
        \vv(w) \times [x^{(t+1)} - (x^{(t+1)} - \eta)] = \int_0^{\vv(w)} x^{(t+1)} - (x^{(t+1)} - \eta) \,d\vv(z)
\end{align*}
By monotonicity, $x^{(t+1)} \geq x^{(t+1)}(z)$ for all $z$.
Further, for sufficiently large $\hat z$, $x^{(t+1)}(z) > x^{(t+1)} - \eta/2$ for all $z \geq \hat{z}$.
Thus,
\begin{align*}
    \int_0^{\vv(w)} x^{(t+1)} - (x^{(t+1)} - \eta) \,d\vv(z) &\geq \int_{\vv(\hat{z})}^{\vv(w)} \Big(x^{(t+1)}(w) - x^{(t+1)}(z) + \eta/2 \Big) \, dz \\
    &\geq \int_{\vv(\hat{z})}^{\vv(w)} \Big(x^{(t+1)}(w) - x^{(t+1)}(z) \Big) \, dz + \int_{0}^{\vv(\hat{z})} \Big(x^{(t+1)}(w) - x^{(t+1)}(z) \Big) \, dz 
\end{align*}
for a large enough $\vv(w)$.
Thus,
$$\vv(w) \, x^{(t+1)}(w) - \int_{0}^{\vv(w)} x^{(t+1)}(z) \,d \vv(z) < \vv(w) \times \eta.$$

\end{proof}

We are ready to prove the claim in \autoref{eqn:GenPosLimit}.

\begin{claim}
For any $\delta > 0$, there exists an arbitrarily large $w$ such that
$$\Big(x^{(t)}(w) - x^{(t+1)}(w)\Big) < \delta$$
for all $1 \leq t \leq T-1$.
\end{claim}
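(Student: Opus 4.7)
The plan is to proceed by contradiction, leveraging the sublinear integral bound of Claim~\ref{thm:SublinearBurn}. Set $g_t(w) := x^{(t)}(w) - x^{(t+1)}(w)$ for $t = 1, \dots, T-1$; these functions are non-negative by the pointwise monotonicity $x^{(t)}(w) \geq x^{(t+1)}(w)$ imposed in the definition of a generalized position auction. Suppose the claim fails. Then there exists a threshold $M$ such that for every $w \geq M$, at least one index $t \in \{1, \dots, T-1\}$ satisfies $g_t(w) \geq \delta$. A one-line pigeonhole argument then gives $\sum_{t=1}^{T-1} g_t(w) \geq \delta$ for all $w \geq M$.

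Integrating this pointwise inequality over $[M, w]$ yields the linear-in-$w$ lower bound
\[
\sum_{t=1}^{T-1} \int_{M}^{w} g_t(z) \, dz \;\geq\; \delta\,(w - M).
\]
On the other hand, Claim~\ref{thm:SublinearBurn}, applied with any $\eta > 0$, gives $\int_{\vv^{-1}(0)}^{w} g_t(z) \, dz < \eta w$ for each fixed $t$ and all sufficiently large $w$. Since $0 \leq g_t \leq 1$ and the interval $[\vv^{-1}(0), M]$ has fixed finite length, swapping the lower limit from $\vv^{-1}(0)$ to $M$ alters each integral by a constant independent of $w$, so summing over $t$ yields the upper bound $(T-1)\,\eta\, w + C$ for some constant $C$ depending only on $M$.

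Choosing $\eta < \delta/(2(T-1))$ forces the upper bound $(T-1)\eta w + C$ to grow strictly slower than the lower bound $\delta(w-M)$, producing a contradiction for all $w$ sufficiently large. This proves the claim and hence closes the last gap in the proof of Theorem~\ref{thm:GenPosFinite}: once a common $w$ exists at which every $g_t(w)$ is below $\delta$, the allocation $x^{(T)}(w) = x^{(1)}(w) - \sum_{t=1}^{T-1} g_t(w)$ exceeds $x^{(1)}(w) - (T-1)\delta$, which can be kept close to $x^{(1)}(W) > 0$, so that $\sum_{t=1}^T x^{(t)}(w) \geq T\, x^{(T)}(w)$ overshoots the block capacity $\Omega$ as soon as $T > \Omega/x^{(1)}(W)$. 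I expect no genuine obstacle: the heavy lifting was done by Claim~\ref{thm:SublinearBurn} (which extracted sublinearity from on-chain miner simplicity via the single-fake-bid incentive inequality), and the remaining step is a textbook averaging-plus-pigeonhole argument, with the only mild technicality being the constant shift absorbing the change of integration limit from $\vv^{-1}(0)$ to $M$.
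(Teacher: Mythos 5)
Your proof is correct and follows essentially the same route as the paper: negate the claim, observe that the hypothesis forces linear growth of $\sum_{t}\int g_t$, and contradict the sublinearity from Claim~\ref{thm:SublinearBurn}. The only (harmless) difference is bookkeeping: you sum the pointwise bound $\sum_{t=1}^{T-1} g_t(w)\ge\delta$ over all $t$ and integrate, whereas the paper pigeonholes a single index $t$ on which $g_t\ge\delta$ holds on a $1/T$ fraction of the interval $[\hat W,10\hat W]$; both reduce to the same contradiction with the same key lemma.
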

\begin{proof}
    Assume otherwise, that for some $\hat{W}$ and all $w \geq \hat{W}$, $\big(x^{(t)}(w) - x^{(t+1)}(w)\big) \geq \delta$ for some $t \in [1, T-1]$.
    Then, there must exist some $t$ for which $\big(x^{(t)}(w) - x^{(t+1)}(w)\big) \geq \delta$ in more than $\frac{1}{T}$ fraction of values in the interval $[\hat{W}, 10 \hat{W}]$.
    Therefore,
    \begin{align*}
        t \int_{v^{(t+1)}}^w \Big(x^{(t)}(z) - x^{(t+1)}(z)\Big) \,dz &\geq t \int_{\hat{W}}^{10 \hat{W}} \Big(x^{(t)}(z) - x^{(t+1)}(z)\Big) \,dz \\
        &\geq \hat{W} \times \frac{9t}{T}\delta.
    \end{align*}
    The above inequality holds for some $1 \leq t \leq T-1$ for any sufficiently large $\hat{W}$.
    In particular, for $\eta = \frac{9t}{T}\delta$, \autoref{thm:SublinearBurn} is violated for some $t$.
    Contradiction.
    Thus, there must exist some large $w$ for which
    $\big(x^{(t)}(w) - x^{(t+1)}(w)\big) < \delta$
    for all $1 \leq t \leq T-1$.
\end{proof}
We are ready to conclude the proof of \autoref{thm:GenPosFinite}.
For any small $\delta > 0$ and $T > \frac{\Omega}{x^{(1)}(W)}$, where $\Omega$ is the capacity of the block and $W$ is some value which receives a positive allocation,
there exists $w > W$ such that $\big(x^{(t)}(w) - x^{(t+1)}(w)\big) < \delta$ for all $t \in [1, T-1]$.
Thus,
$$x^{(T)}(w) = x^{(1)}(w) - \sum_{t = 1}^{T-1} \big(x^{(t)}(w) - x^{(t+1)}(w)\big) > x^{(1)}(w) - (T-1) \delta.$$
By monotonicity, we have
$$\sum_{t = 1}^T x^{(t)}(w) \geq T \times x^{(T)}(w) \geq T \times \Big(x^{(1)}(w) - (T-1) \, \delta \Big).$$
Setting $\delta = \frac{x^{(1)}(w)}{2(T-1)}$ and $T > \frac{2\Omega}{x^{(1)}(w)}$, we have
$$\sum_{t = 1}^T x^{(t)}(w) > \Omega,$$
violating the capacity constraint.
This is a contradiction; hence, the auction must be trivial.

\end{document}